\newcommand{\red}[1]{{\color{red} #1}}
\newcommand{\blue}[1]{{\color{blue} #1}}
\definecolor{myGreen}{RGB}{83,113,55}
\def\dd{\mathrm{d}}
\newcommand{\pd}{\partial}
\newcommand{\E}{\mathcal{E}}
\newcommand{\M}{\mathcal{M}}
\renewcommand{\O}{\mathcal{O}}
\newcommand{\R}{\mathcal{R}}
\renewcommand{\S}{\mathcal{S}}
\newcommand{\bbE}{\mathbb{E}}
\newcommand{\bbN}{\mathbb{N}}
\newcommand{\bbQ}{\mathbb{Q}}
\newcommand{\bbR}{\mathbb{R}}
\newcommand{\bfm}{\boldsymbol{m}}
\newcommand{\ud}[2]{^{#1}{}_{#2}}
\newcommand{\du}[2]{_{#1}{}^{#2}}
\newcommand{\ce}{\coloneqq}
\DeclareMathOperator{\rank}{rank}
\DeclareMathOperator{\class}{class}
\DeclareMathOperator{\csize}{size}
\newcommand{\lie}{\mathcal{L}}
\newcommand{\PD}[2]{\ensuremath{\frac{\partial #1}{\partial #2}}}
\renewcommand{\boxed}[1]{\text{\fboxsep=.3em\fbox{\m@th$\displaystyle#1$}}}
\newcommand{\Boxed}[2][A]{\fbox{\eqmakebox[#1][c]{$\mathstrut#2$}}}
\title{Counting Degrees of Freedom:\\ A Method Applicable from Scalars to $f(\bbQ)$ Gravity and Beyond}
\author{Lavinia Heisenberg}
\affiliation{Institut f\"{u}r Theoretische Physik, Philosophenweg 16, 69120 Heidelberg, Germany}
\emailAdd{L.Heisenberg@thphys.uni-heidelberg.de}
\abstract{We present a clear, step-by-step method for counting degrees of freedom and identifying constraints in general field theories. This approach, grounded in the works of Einstein, Hilbert, Cartan, Kuranishi, and, more recently, Seiler, is neither Lagrangian nor Hamiltonian in nature. Instead, it applies directly to the field equations. We offer a transparent physical interpretation of the method, establish new results, and explore a broad set of examples to demonstrate its power and generality. Notably, we apply the method to $f(\bbQ)$ gravity, where traditional techniques such as the Dirac-Bergmann algorithm are ineffective, and obtain seven propagating degrees of freedom.}
\keywords{Degrees of freedom, field theory, Cartan-Kuranishi algorithm}
\begin{document}
	\allowdisplaybreaks[1]
	\maketitle
	\flushbottom

\newtheorem{definition}{Definition}[section]
\newtheorem{example}{Example}[section]
\newtheorem{theorem}{Theorem}[section]
\newtheorem{corollary}{Corollary}[section]
\newtheorem{lemma}{Lemma}[section]

\section{Introduction}\label{sec:Introduction}
The concept of degrees of freedom is fundamental to our understanding of dynamical systems, encompassing areas such as classical mechanics, statistical physics, quantum field theory, effective field theory, and modified gravity. In essence, the degrees of freedom correspond to the number of independent variables needed to fully characterize the state of a dynamical system. The presence of constraints and gauge symmetries complicates this task, necessitating a systematic and mathematically rigorous approach. Misidentifying the propagating degrees of freedom can lead to inconsistencies, including the emergence of unphysical modes or the misclassification of interactions. These issues have far-reaching consequences in high-energy physics, cosmology, and condensed matter theory, where the accurate enumeration of degrees of freedom directly impacts renormalization procedures, anomaly cancellations, and the theoretical consistency of proposed models. This point is especially critical: errors in counting can lead to mistaken conclusions about the presence of ghosts, the stability of solutions, or the internal consistency of a new theoretical framework.

In this work, we present a mathematically robust methodology for systematically determining the number of independent degrees of freedom in field theories. Our approach is rooted in first principles and is broadly applicable—from classical field models to quantum gauge theories. By providing a clear and unambiguous framework for counting, we aim to sharpen our understanding of fundamental physics and offer a practical tool for ongoing research in high-energy physics, gravitation, and beyond.

Several established methods for counting degrees of freedom are widely used in the literature. Chief among them are the Dirac–Bergmann algorithm~\cite{Dirac:1950, Bergmann:1951, DiracBook, Wipf:1993} and the covariant phase space method~\cite{Crnkovic:1987, Lee:1990, HenneauxBook}. However, both approaches face technical challenges and can be difficult to implement in practice. In particular, the Dirac–Bergmann algorithm is known to break down in certain contexts~\cite{SundermeyerBook, Seiler:1995, Seiler:2000, Blagojevic:2020}.

A notable example of this failure arises in teleparallel theories of gravity. When applied to $f(\bbQ)$ gravity, the Dirac–Bergmann method incorrectly predicted eight propagating degrees of freedom~\cite{Hu:2022, Tomonari:2023}. This error was promptly identified and corrected in~\cite{DAmbrosio:2023}, where it was shown that the mistake stemmed from a flawed assumption: namely, that the time evolution of primary constraints yields a system of \emph{linear equations} for the Lagrange multipliers. In fact, in gravity theories involving torsion~\cite{BeltranJimenez:2019, Blixt:2018, Blixt:2019, Blixt:2020} or non-metricity~\cite{BeltranJimenez:2019, DAmbrosio:2020a, Dambrosio:2020b}—as well as in other contexts~\cite{SundermeyerBook, Wipf:1993}—this assumption fails. Instead, the time evolution produces partial differential equations for the Lagrange multipliers, obstructing completion of the Dirac–Bergmann procedure. Entire classes of teleparallel theories suffer from this issue.

In~\cite{DAmbrosio:2023}, an upper bound of seven degrees of freedom was established for $f(\bbQ)$ gravity, correcting the earlier result. This was later confirmed through independent, though labor-intensive, perturbative methods~\cite{Heisenberg:2023b}, reinforcing the conclusion that the theory indeed propagates seven degrees of freedom.

This work is motivated by the need for a reliable, broadly applicable method that avoids the limitations of the Dirac–Bergmann approach while remaining rooted in solid mathematical foundations.

The method we develop here is neither Lagrangian nor Hamiltonian in nature. Instead, it deals directly with the field equations. Its origins trace back to the work of Albert Einstein~\cite{EinsteinBook}, \'Elie Cartan~\cite{Cartan:1930, CartanBook}, Masatake Kuranishi~\cite{Kuranishi:1957}, and, more recently Werner Seiler~\cite{Seiler:1995, Seiler:1995b, Seiler:2000, SeilerBook}. The core idea---due to Einstein---is to use formal power series to analyze how strongly field equations constrain the fields. While Einstein’s original formulation is too cumbersome for practical use, refinements by Cartan and Kuranishi made the method tractable. Further modern developments, particularly by Seiler~\cite{Seiler:1995, Seiler:1995b, Seiler:2000, SeilerBook}, have rendered it into a powerful analytical framework.

The structure of this paper is as follows: In Section~\ref{sec:BasicIdeas}, we review existing methods for counting degrees of freedom and discuss their limitations. Section~\ref{sec:JetBundleApproach} introduces the new framework and illustrates its utility through explicit examples. In Section~\ref{sec:FormalPSSExtnsionToGT}, we examine its implications for various classes of field theories. We conclude in Section~\ref{sec:Conclusion} with a summary and outlook on future directions.

\section{Basic Ideas on Counting Degrees of Freedom}\label{sec:BasicIdeas}
What is a degree of freedom? This question typically arises in classical mechanics, where it is explored in the context of point particles and rigid bodies. However, a deeper examination of this concept---especially its adaptation in gauge and field theories---provides valuable insight, particularly for understanding the method we illustrate here. In the following subsections, we will develop the notion of degrees of freedom step by step, introducing progressively more refined definitions. The fundamental ideas presented here will later reappear in different mathematical forms in Sections~\ref{sec:JetBundleApproach} and~\ref{sec:FormalPSSExtnsionToGT}.

\subsection{Degrees of Freedom in Classical Mechanics}\label{ssec:DOFinCM}
Let us begin with an informal definition in the context of classical mechanics. A free particle is said to have three degrees of freedom because it can move independently in any of the three spatial dimensions. If the particle is constrained to a surface, it has only two degrees of freedom, as its motion is restricted to that surface. This idea is intuitive and straightforward. However, this informal and somewhat imprecise notion does not generalize easily to field theory. Unlike a particle, a field is a physical entity that exists at every point in space and time\footnote{Or at least within a compact spacetime region, which defines the system’s extent, duration, and the limits of our observation.}. Consequently, there is no meaningful way to say that ``the field can freely move in the $x$-direction''.

A more flexible notion, which remains meaningful in field theories, is based on counting the initial data required to \textit{uniquely} determine the time evolution of the physical system under consideration. As a simple example, consider the harmonic oscillator, described by the second-order differential equation
\begin{equation}
\ddot{q}(t) + \omega^2 q(t) = 0\,, \quad \text{with } \omega > 0\,.
\end{equation}
The general solution to this equation is well known:
\begin{equation}
q(t) = A \cos(\omega t) + B \sin(\omega t)\,,
\end{equation}
where $A$ and $B$ are arbitrary real constants. To determine a \textit{unique} time evolution, we must specify two initial conditions that fix the values of $A$ and $B$. Concretely, we need to provide the initial position, $q_0 \ce q(t_0)$, and the initial velocity, $\dot{q}_0 \ce \dot{q}(t_0)$, at some initial time $t = t_0$. Since we are free to choose any values for $q_0$ and $\dot{q}_0$, we say that the harmonic oscillator has two \textit{phase space degrees of freedom}. The \textit{configuration space degrees of freedom} are simply half the number of phase space degrees of freedom.

Moreover, one can see that counting the freely specifiable initial data aligns with our earlier, more intuitive notion that the particle can only move in certain directions, but not in others. For instance, if a particle is constrained to a surface, we are free to specify its starting position on the surface and its initial velocity, provided the velocity is tangential to the surface. However, we \textit{cannot} place the particle ``above'' or ``below'' the surface, nor can we assign it an initial velocity that is orthogonal to the surface. Thus, we can freely specify two pieces of initial data for position and two for velocity, yielding four phase space degrees of freedom, or equivalently, two configuration space degrees of freedom---just as expected from our intuition.

All of the above is straightforward. However, the reason for explaining this idea in detail is its usefulness in field theories, as we shall discuss in the next subsection.

\subsection{Degrees of Freedom in Field Theory}
The second definition of degrees of freedom, introduced in the previous subsection, is based on two key ideas: (\textit{i}) equations of motion require a certain amount of initial data to yield a unique solution, and (\textit{ii}) constraints can restrict our freedom in specifying this initial data. These principles extend naturally to field theories with only minor modifications.

To illustrate this point while keeping the discussion simple, we shall temporarily assume a four-dimensional spacetime manifold $\M$, tensorial fields $\Psi$ of unspecified index structure, second-order field equations $\bbE = 0$, and no gauge freedom. In setting up a well-posed initial value problem, the first step is to select a three-dimensional, spacelike Cauchy hypersurface $\Sigma$. On this surface, we must prescribe two fields:
\begin{equation}
\Psi_{\Sigma} \ce \left. \Psi \right|_{\Sigma} \quad \text{and} \quad
\dot{\Psi}_\Sigma \ce \left. \left(\lie_n \Psi \right) \right|_{\Sigma},
\end{equation}
where $\lie_n$ denotes the Lie derivative along the unit timelike normal vector $n$ to $\Sigma$ (see Figure~\ref{fig:IVP}).

\begin{figure}[htb!]
	\centering
	\includegraphics[width=0.7\linewidth]{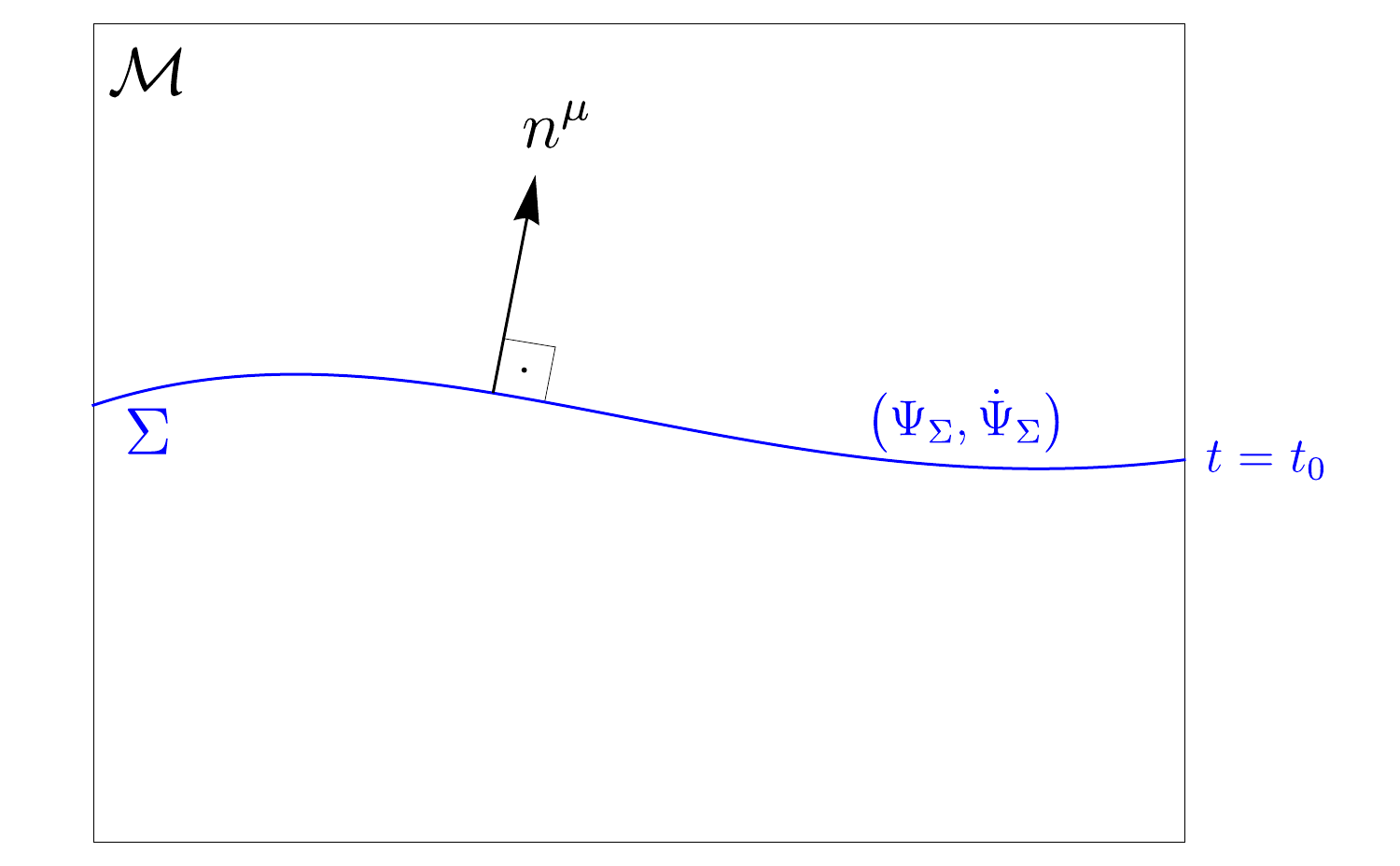}
	\caption{\textit{To formulate a well-posed initial value problem, one must specify the field $\Psi$ and its time derivative, $\dot{\Psi}$, on a Cauchy surface $\Sigma$. The time direction is determined by the normal vector $n$ to $\Sigma$, and the Cauchy surface itself corresponds to the instant of time $t = t_0$.}}
	\label{fig:IVP}
\end{figure}

Intuitively, the hypersurface $\Sigma$ can be understood as an ``instant of time $t=t_0$'', with $\Psi_{\Sigma}$ representing the field configuration at that moment (analogous to the initial position in classical mechanics). Similarly, $\dot{\Psi}_{\Sigma}$ describes the change of $\Psi_{\Sigma}$ over time, evaluated at the instant defined by $\Sigma$. In simpler terms, $\dot{\Psi}_{\Sigma}$ serves as the ``initial velocity'' of $\Psi$.

These are the minimal requirements for setting up an initial value problem. In the absence of gauge symmetries and constraint equations, we can now conclude that a theory described by the field equations $\bbE = 0$ and the field content $\Psi$ propagates $2m$ phase space degrees of freedom, where $m$ is the number of algebraically independent field components contained in~$\Psi$.

The counting changes when not all equations in $\bbE = 0$ are dynamical. If $\bbE = 0$ includes equations that are only first order or even zeroth order in time derivatives, we classify them as constraints rather than dynamical equations. More important than this terminology is the fact that such equations impose restrictions on the fields $\Psi_\Sigma$ and $\dot{\Psi}_\Sigma$ that can be specified on $\Sigma$. Consequently, the number of phase space degrees of freedom is reduced by the number of constraints on $\Psi_\Sigma$ and $\dot{\Psi}_{\Sigma}$.

If, in addition to constraints, we also allow for gauge symmetries\footnote{It is well known that whenever gauge symmetries are present, they are accompanied by constraint equations.}, we must account for the fact that gauge freedom allows us to fix some of the field components of $\Psi$ arbitrarily. This means that analyzing only the dynamical equations, constraint equations, and initial values of the fields does not necessarily reveal the true physical degrees of freedom when gauge symmetries are present. Instead, this approach gives us a mixture of physical and gauge degrees of freedom without distinguishing between them. To properly count the physical degrees of freedom, we must also impose gauge-fixing conditions.

All of this is well known and conceptually clear. However, there is value in explicitly spelling out these ideas because, despite their intuitive nature, they can be challenging to implement in practice. When encountering a new field theory---one that, more often than not, possesses some gauge symmetry---we naturally ask: How many degrees of freedom does it propagate? In light of our discussion so far, we can refine this question as follows: Given a set of field equations, how can we determine how much free data must be specified and how many gauge modes need to be fixed in order to obtain a unique solution?

In some cases, such as electromagnetism, this question can be answered directly through simple considerations. Of course, systematic methods exist for addressing this type of question across large classes of theories. The most prominent approach is the Dirac-Bergmann algorithm. However, this method has its limitations~\cite{SundermeyerBook, Seiler:1995, DAmbrosio:2023}, and recent studies have shown that entire classes of teleparallel gravity theories---whether based on torsion or non-metricity---\textit{cannot} be treated using the Dirac-Bergmann method~\cite{DAmbrosio:2023}. The reason why the Dirac-Bergmann approach can fail is a seemingly innocuous assumption that is tacitly present in the formalism: it presupposes that the time-evolution of primary constraints gives rise to \emph{linear equations} for the Lagrange multipliers. However, in $f(\bbQ)$ gravity and a wide range of other theories~\cite{SundermeyerBook, Wipf:1993, DAmbrosio:2023} this assumption is not valid. Rather, one finds first order partial differential equations for the Lagrange multipliers, causing the method to collapse, except in very fortunate cases~\cite{Wipf:1993}.

This is where an alternative method, motivated by works of Einstein, Cartan, Kuranishi, and Seiler, proves more effective. The core idea dates back to Einstein, who used it as a guiding principle in his search for a unified field theory of gravity and electromagnetism. Subsequent refinements by the mathematicians Cartan, Kuranishi, and Seiler transformed Einstein’s initial insights into a powerful tool for analyzing partial differential equations in physical theories. In particular, this method provides a way to count degrees of freedom without encountering the same limitations as the Dirac-Bergmann algorithm. To begin with, there are no Lagrange multipliers involved. Furthermore, the whole method is tailored to deal with partial differential equations.

In the next subsection, following Einstein’s book~\cite{EinsteinBook}, we introduce the basic ideas of the method by using the relativistic wave equation of a scalar field as an example. The concepts we introduce along the way will re-emerge in later subsections. Einstein's simple method is thus a good way for building up intuition and a deeper understanding for the formalism that will be developed in these subsections.

\subsection{Einstein's Way of Counting}\label{ssec:EinsteinsMethod}
In his quest for a unified field theory, Einstein encountered the fact that field equations do not completely determine the fields. As we have seen earlier, there is always some free data that must be specified. Moreover, when testing different field equations for the same set of fields, one finds that some equations require less free data than others. In this sense, one can say that these equations determine the field ``stronger'' than others, which led Einstein to introduce the concept of \textit{strength}~\cite{EinsteinBook}.

The core idea behind the concept of strength is to expand the field into a formal Taylor series. This series is then substituted into the field equations. By performing an order-by-order analysis, one can determine which Taylor coefficients are fixed by the field equations and which ones need to be specified by hand. The fewer the number of freely specifiable Taylor coefficients, the stronger the field is determined by the field equations (i.e., the fewer the degrees of freedom).

To illustrate this concept in practice and clarify its relation to the number of degrees of freedom, we follow Einstein by considering the simple case of a single scalar field $\Phi$ that obeys the relativistic wave equation
\begin{align}\label{eq:WaveEquation}
    \Phi_{tt} - \Phi_{xx} - \Phi_{yy} - \Phi_{zz} = 0\,.
\end{align}
We use the shorthand notation $\Phi_{\mu\nu}$ to indicate second order derivatives instead of the more traditional but bulkier $\partial_\mu\partial_\nu\Phi$. We now assume that $\Phi$ is analytic in the point $p$ with coordinates $p^\mu=(p^t, p^x, p^y, p^z)$, allowing us to expand it as Taylor series around $p$:
\begin{align}\label{eq:TaylorExpansionPhi}
    \Phi(x) &= \Phi(p) + \Phi_\mu(p)\,(x^\mu-p^\mu) + \frac12 \Phi_{\mu\nu}(p)\,(x^\mu-p^\mu)(x^\nu-p^\nu)\notag\\
    &\phantom{=}+ \frac{1}{3!}\Phi_{\mu\nu\rho}(p)\,(x^\mu-p^\mu)(x^\nu-p^\nu)(x^\rho-p^\rho) + \dots\,,
\end{align}
where $\Phi(p)$, $\Phi_\mu(p)$, $\Phi_{\mu\nu}(p)$, and $\Phi_{\mu\nu\rho}(p)$ are the zeroth, first, second, and third-order Taylor coefficients evaluated at $p$. Taking into account that partial derivatives commute, we find that there are four coefficients $\Phi_\mu(p)$, ten coefficients $\Phi_{\mu\nu}(p)$, and twenty coefficients $\Phi_{\mu\nu\rho}(p)$. In general, at order $n$ in the Taylor expansion, there are
\begin{align}\label{eq:SquareBracket4D}
    \left[\begin{matrix}
    4 \\
    n
    \end{matrix}\right]
\end{align}
coefficients, where we defined the square bracket as
\begin{align}
    \left[\begin{matrix}
    m \\
    n
    \end{matrix}\right]
    \ce \begin{pmatrix}
        m + n - 1 \\
        n
    \end{pmatrix}
    = \frac{(m+n-1)!}{n!(m-1)!}\,.
\end{align}
Observe that~\eqref{eq:SquareBracket4D} also makes sense for $n = 0$. In this case, it simply gives $1$, which corresponds to the single Taylor coefficient at order zero. If we did not have the wave equation at our disposal, we would need to specify all these Taylor coefficients to fully determine $\Phi$. However, the wave equation~\eqref{eq:WaveEquation} implies that there are relations between the different Taylor coefficients. By substituting the expansion~\eqref{eq:TaylorExpansionPhi} into the wave equation~\eqref{eq:WaveEquation} and analyzing it order by order, we find that the zeroth and first-order terms are annihilated. This is expected since the wave equation is second-order. As a result, the zeroth and first-order Taylor coefficients are completely unconstrained by the wave equation. For the second-order term, however, we find the following single relation:
\begin{align}\label{eq:Rel1}
    \Phi_{tt}(p) - \Phi_{xx}(p) - \Phi_{yy}(p) - \Phi_{zz}(p) &= 0\,.
\end{align}
At third order, four distinct relations emerge:
\begin{align}\label{eq:Rel2}
    \Phi_{ttx}(p) - \Phi_{xxx}(p) - \Phi_{xyy}(p) - \Phi_{xzz}(p) &= 0 \notag\\
    \Phi_{tty}(p) - \Phi_{xxy}(p) - \Phi_{yyy}(p) - \Phi_{yzz}(p) &= 0 \notag\\
    \Phi_{ttz}(p) - \Phi_{xxz}(p) - \Phi_{yyz}(p) - \Phi_{zzz}(p) &= 0 \notag\\
    \Phi_{ttt}(p) - \Phi_{txx}(p) - \Phi_{tyy}(p) - \Phi_{tzz}(p) &= 0\,.
\end{align}
It is clear how to systematically construct all possible relations at any order in the Taylor expansion: The relation~\eqref{eq:Rel1} corresponds to evaluating the wave equation at the point $p$, while the relation~\eqref{eq:Rel2} is obtained by differentiating the wave equation with respect to $x$, $y$, $z$, and $t$, and then evaluating the differentiated equations at the point $p$. Similarly, the relations for the fourth-order Taylor coefficients are obtained by differentiating the wave equation twice and then evaluating at $p$. This procedure can be continued \textit{ad infinitum}, to any order of the Taylor expansion. From elementary combinatorial considerations, one can then infer that at order $n$, the wave equation imposes precisely
\begin{align}\label{eq:NumberOfRelations}
   \left[\begin{matrix}
        4\\
        n-2
    \end{matrix}\right]
    = \frac16 (n-1)\,n\,(n+1)
\end{align}
relations between the $n$-th order Taylor coefficients. Observe that~\eqref{eq:NumberOfRelations} also makes sense for $n=0$ and $n=1$. In both cases one simply finds zero, as we had already established. Each relation between Taylor coefficients has to be considered as a constraint imposed on us by the wave equation. However, many Taylor coefficients remain unconstrained. The number of these free Taylor coefficients at order $n$ is measured by
\begin{align}
    z \ce 
    \begin{pmatrix}
        \text{number of} \\
        \text{Taylor coefficients}\\
        \text{at order }n
    \end{pmatrix}
    -
    \begin{pmatrix}
        \text{number of}\\
        \text{relations}\\
        \text{at order }n
    \end{pmatrix}
    =
    \left[\begin{matrix}
        4 \\
        n
    \end{matrix}\right]
    -
    \left[\begin{matrix}
        4\\
        n-2
    \end{matrix}\right]\,.
\end{align}
The first half of the above equation is the general definition of $z$, which is also applicable to other field theories (excluding gauge symmetry for the time being). The second half simply represents what this definition amounts to in the example of the scalar field described by the relativistic wave equation.

It is easy to see that, in our case, the number $z$ is positive for all $n$. This has an important implication: We can fix the coefficients of the Taylor expansion order by order, while being certain that the $n$-th order relations do not impose constraints on the coefficients of order less than $n$ respectively \footnote{To see this, note that $z$ can only be negative if the number of independent relations at a given order exceeds the number of Taylor coefficients at that order. However, this can only occur if there is at least one relation that involves a Taylor coefficient of order less than $n$.}. In general, $z$ can become negative, which prevents us from determining the Taylor coefficients order by order. This limitation of Einstein's method was later overcome by Cartan and Kuranishi. We will discuss their solution in Subsection~\ref{ssec:CKAlgorithm}.

At this point, we recall our definition of degrees of freedom, which relies on counting the free data we must specify on a Cauchy surface to generate a unique solution from the field equations. This is not the same as counting the Taylor coefficients that are not constrained by the field equations. In fact, there are infinitely many free Taylor coefficients. However, $z$ still provides information about the free data. To illustrate this, we first give a qualitative argument, which we will later turn into a more mathematically precise statement: Since the wave equation is second-order, we must specify two functions in order to obtain a unique $\Phi$ as a solution to these equations. Thus, we can think of $\Phi$ as depending on two freely specifiable functions. Given that the field equations do not determine all Taylor coefficients of $\Phi$, it follows that some of these free Taylor coefficients represent the free functions of the initial value problem.

Our reasoning suggests the following strategy: We compare the number of free Taylor coefficients $z$ to the number of Taylor coefficients of a single function of four spacetime coordinates. The latter is simply given by~\eqref{eq:SquareBracket4D}. Clearly, both numbers diverge as $n \to \infty$. However, we expect $z$ to diverge at a faster but constant rate, since it contains, according to our reasoning, Taylor coefficients of at least two unknown functions. Thus, the ratio of $z$ to~\eqref{eq:SquareBracket4D} should yield a finite number at all orders of $n$. Indeed, we find
\begin{align}\label{eq:EinsteinExpansion}
    z &=
    \left[\begin{matrix}
        4\\
        n
    \end{matrix}\right]
    -
    \left[\begin{matrix}
        4\\
        n-2
    \end{matrix}\right]
     = 
     \left[\begin{matrix}
        4\\
        n
    \end{matrix}\right]
     -
    \left[\begin{matrix}
        4\\
        n
    \end{matrix}\right]
    \frac{6(n+1)}{(n+2)(n+3)} \notag\\
    &= \left[\begin{matrix}
        4\\
        n
    \end{matrix}\right]
    \left(z_0 - \frac{z_1}{n} + \mathcal{O}(1/n^2)\right)\,,
\end{align}
where, in the last step, we Taylor-expanded $\frac{(n-1)n}{(n+2)(n+3)}$ around $n = \infty$, and $z_0$, $z_1$ are the zeroth and first-order Taylor coefficients of this expansion, respectively. Following Einstein's nomenclature, we refer to $z_0$ as the \textit{compatibility coefficient} and to $z_1$ as the \textit{strength}. In the case of the wave equation, these coefficients turn out to be
\begin{align}\label{eq:zValues}
    z_0 &= 0\,, & z_1 &= 6\,.
\end{align}
Consistent with our expectation, we find that $z$ divided by~\eqref{eq:SquareBracket4D} has a finite limit as $n \to \infty$. From our discussion, we also expect that $z$ grows at a rate proportional to the number of free functions present in the general solution $\Phi$. Therefore, we need to interpret the numbers $z_0$ and $z_1$, as they determine the rate at which $z$ grows. To do this, we must clarify what it means for $\Phi$ to depend on unknown functions. In the most general case, $\Phi$ can depend on $f_1$ functions of one coordinate, $f_2$ functions of two coordinates, $f_3$ functions of three coordinates, and $f_4$ functions of four coordinates. If we Taylor-expand $\Phi$ and account for these dependencies, we obtain a total of
\begin{align}\label{eq:T}
    T \ce \sum_{k=1}^{4} f_k \begin{pmatrix}
        k  + n -1\\
        n
    \end{pmatrix} 
\end{align}
Taylor coefficients. This number must equal the number of Taylor coefficients left undetermined by the wave equation. This is true because the only remaining freedom in specifying $\Phi$ is the freedom to choose initial data, i.e., the freedom to choose two functions of three coordinates. We are therefore led to the condition
\begin{align}
    z &\overset{!}{=} T &&\Longrightarrow & \left(z_0-f_4\right) + \left(z_1 - 3 f_3\right)\frac{1}{n} + \O\left(\frac{1}{n^2}\right) &= 0\,.
\end{align}
On the right-hand side, we only kept terms up to first order in $1/n$. Solving this equation order by order gives us 
\begin{align}
    f_4 &= z_0\, &&\text{and} & f_3 &= \frac13 z_1\,.
\end{align}
In combination with~\eqref{eq:zValues}, we thus conclude that $\Phi$ contains $f_4 = 0$ functions of four coordinates and $f_3 = 2$ functions of three coordinates. This is in perfect agreement with our expectations, and it clarifies the meaning of the terms \textit{compatibility coefficient} and \textit{strength}: If $z_0$ is not zero, the general solution would contain arbitrary functions of all four coordinates, causing the equations to fail to be deterministic. Equations compatible with classical determinism must satisfy $z_0 = 0$. The strength $z_1$ then tells us how strongly a solution is determined by the field equations and how much freedom remains in choosing initial data. At this point, we conclude that the relativistic wave equation propagates two phase space degrees of freedom (i.e., one configuration space degree of freedom).

This simple example illustrates how a general procedure for extracting the number of physical degrees of freedom, which is applicable to more general types of fields and equations, can be developed. In broad terms, one proceeds as follows: 
\begin{enumerate} 
    \item Given a field $\Psi$, expand it in a Taylor series around some point $p$ and determine the number of Taylor coefficients at order $n$. Note that this number depends on the number of algebraically independent field components contained in $\Psi$, so it will, in general, differ from~\eqref{eq:SquareBracket4D}. 
    \item Plug the Taylor series into the field equations $\bbE = 0$ and evaluate them at $p$. Then determine all relations for the Taylor coefficients, order by order. 
    \item Determine the number $z$ of free Taylor coefficients by subtracting the number of independent relations at order $n$ from the number of Taylor coefficients at the same order. This step may require taking into account additional constraints on the fields $\Psi$ and possibly also gauge redundancy. 
    \item Assume that the general solution $\Psi$ to the field equations $\bbE$ can depend on $f_i$ functions of $i$ coordinates. This leads to an expression similar to~\eqref{eq:T}, which needs to be equated to $z$. 
    \item Finally, solve the equation $z = T$ for $f_4$ and $f_3$. For equations compatible with classical determinism, one should always find $f_4 = 0$, while $f_3$ directly gives the number of phase space degrees of freedom. 
\end{enumerate} 
In his book~\cite{EinsteinBook}, Einstein applies the steps outlined above to electromagnetism and general relativity. He finds that in both cases, $f_4 = 0$ and $f_3 = 4$, which reproduces the expected results. However, it also becomes clear that this method becomes more and more cumbersome as the field equations become more complicated and when gauge symmetries need to be taken into account.

In the next subsection, we highlight a few more limitations of this method and provide a sketch of how they can be overcome. The main part of this paper will then be devoted to fully developing this method and illustrating its use in various field theories.

\subsection{Limitations of Einstein's Method and Cartan's Refinements}\label{ssec:Limitations}

At the end of the previous subsection, we outlined the steps required to count degrees of freedom using Einstein’s method. However, this approach has a potential point of failure: it assumes that all relations among Taylor coefficients imposed by the equations of motion can be systematically determined order by order. If, for some reason, we cannot predict the number of independent relations at each order $n$, then step 3---determining the number $z$ of free Taylor coefficients---becomes problematic. In particular, if higher-order relations impose conditions on lower-order coefficients, previous results must be re-evaluated, potentially leading to inconsistencies or rendering the method unusable.  

Additionally, Einstein’s method does not explicitly account for gauge theories, and its connection to the initial value formulation of partial differential equations remains somewhat vague. The only clear intersection occurs when $T$ (the expected number of free functions) is introduced and equated to $z$.  

These concerns were central to Cartan’s correspondence with Einstein on this topic~\cite{DebeverBook}. In response, Cartan refined and formalized the approach, ultimately developing a mathematically rigorous theory of partial differential equations (PDEs)~\cite{Cartan:1930, CartanBook}. In its modern formulation using jet bundles, this theory will be discussed in detail in the next section. For now, we highlight its key result: the Cartan-Kuranishi theorem (see Subsection~\ref{ssec:CKAlgorithm}).  

This theorem states that, under mild assumptions, any system of PDEs can be transformed into an equivalent system that allows to systematically construct formal power series solutions order by order. Crucially, this result holds independently of spacetime dimension, PDE order, or whether the equations describe a topological field theory, a gauge theory, or any other type of field theory.  

Most importantly, the theorem ensures that the number of relations at every order can be predicted, allowing $z$ to be determined consistently at all orders as a well-defined, positive quantity. This foundational result eliminates the ambiguities in Einstein’s original method and provides a systematic framework for counting degrees of freedom. The theorem was then further used by Seiler~\cite{Seiler:1995, Seiler:1995b, Seiler:2000, SeilerBook}. In this work, we build on these results, introduce refinements, offer a more transparent physical interpretation, and present numerous examples and case studies.

\subsection{Conventions and Roadmap to the Main Result}
In what follows, we are loosely guided by Einstein's method for analyzing field theories. Our first step is to shift our perspective on partial differential equations (PDEs). Rather than viewing them as equations to be solved through integration, we reinterpret them as (nonlinear) constraint equations between fields and their derivatives. This transition begins in Subsection~\ref{ssec:BasicsOfVectorJetBundles}, where we introduce vector bundles and jet bundles. These concepts allow us to treat fields and their derivatives as independent entities living on a manifold—distinct from the spacetime manifold—known as the jet bundle. In Subsection~\ref{ssec:JetBundlePDEs}, we leverage this insight to interpret PDEs as equations that impose relations among these a priori independent fields and their derivatives. In other words, PDEs define submanifolds within the jet bundle\footnote{This situation is similar to what happens in the Dirac-Bergmann approach to constrained Hamiltonian systems. In that framework, the fields and their conjugate momenta are initially treated as independent variables that together define the phase space. However, the existence of constraints imposes relations among the fields and momenta. These relations restrict the physically allowed states to a submanifold of the phase space, often referred to as the constraint surface. It is on this submanifold that the true physical dynamics unfolds.}.

Next, we introduce the concepts of prolongations and projections. Recall that in Subsection~\ref{ssec:EinsteinsMethod}, we took derivatives of the wave equation and evaluated them at $p$, the point around which the formal Taylor series of the scalar field was expanded. This process generated new relations of higher order among the Taylor coefficients of $\Phi$. A prolongation achieves the same goal: it systematically generates new relations among fields and their higher-order derivatives.

Projection, on the other hand, is a concept not explicitly encountered in our discussion of Einstein’s method. However, we will quickly see its usefulness: it enables us to uncover hidden integrability conditions. This is a crucial step, as it helps overcome a key limitation of Einstein’s method. If integrability conditions exist but are not accounted for, the order-by-order construction of a formal power series solution fails\footnote{An example where hidden integrability conditions arise, obstructing an order-by-order construction \`{a} la Einstein, is Proca's theory of massive photons. Later, we will see how to overcome this issue systematically by employing the Cartan-Kuranishi algorithm.}.

In Subsection~\ref{ssec:Symbol}, we introduce another essential tool absent from Einstein’s method: the symbol of a PDE. The symbol provides valuable information about the highest-order derivatives appearing in a PDE. Some physicists may be familiar with the concept of a kinetic matrix, which governs the second-order time derivatives of a system. The symbol generalizes this idea to arbitrary orders of differentiation and to all types of derivatives---not just time derivatives. Moreover, the symbol allows us to check for constraints and integrability conditions, enabling us to explicitly construct these equations.

From our study of symbols, we are naturally led to the concept of involutive equations. Qualitatively speaking, these are the best-behaved equations, permitting the unobstructed, order-by-order construction of formal power series solutions. When Einstein developed his method, he was fortunate to study only involutive equations (even though he was unaware of the concept). However, not every physical equation is involutive, and we provide several examples throughout the text, particularly in Subsection~\ref{ssec:Examples}.

In Subsection~\ref{ssec:CKAlgorithm}, we present the first major result: the Cartan-Kuranishi algorithm (cf. Algorithm~\ref{alg:CK}). This algorithm takes any system of PDEs (under mild technical assumptions) and, if the system is not already involutive, systematically transforms it into an equivalent involutive system—one that preserves the original solution space. This procedure ensures that any PDE system can be completed into an involutive form, allowing for a systematic order-by-order construction of formal power series solutions.

In Subsection~\ref{ssec:OrderByOrder}, we demonstrate how such solutions can be explicitly constructed and emphasize the crucial role of working with involutive equations. Subsection~\ref{ssec:HilbertPolynomial} then introduces tools to quantify the size of the solution space, which will later be essential for counting degrees of freedom.

Before proceeding with this count, we must first discuss how gauge symmetries manifest in the jet bundle formalism. This is the focus of Subsection~\ref{ssec:GaugeTheory}. With these foundations in place, we arrive at a concrete algorithm for determining the number of independent degrees of freedom. To illustrate the method, we provide a variety of examples (see Subsection~\ref{ssec:Examples}).\medskip

\paragraph{\underline{Conventions:}}
Throughout this work, we adopt the following conventions:
\begin{itemize}
    \item The spacetime manifold is denoted by $\M$. It is $n$-dimensional, with coordinates $x^\mu = \{x^1, x^2, \dots, x^n\}$. Note that the index $\mu$ runs from $1$ to $n$, \textit{not} from $0$ to $n-1$ as is more common in the physics literature. When necessary, $x^n$ serves as the time coordinate, while $\{x^1, x^2, \dots, x^{n-1}\}$ are spatial coordinates.
    \item Lowercase Greek indices such as $\mu$, $\nu$, and $\rho$ are always reserved for spacetime indices.
    \item Multi-indices are denoted by lowercase boldface Roman letters, e.g., $\bfm$ (see Definition~\ref{def:multiindex}).
    \item Fields, or collections of fields, are denoted by $v^A$, where the index $A$ labels the components. For example, in a theory with a scalar field $\Phi$ and a vector field $A^\mu$ in $n=4$ dimensions, we write  
    \begin{align*}  
        v^A = (v^1, v^2, v^3, v^4, v^5) = (A^1, A^2, A^3, A^4, \Phi).  
    \end{align*}
    \item The letters $q$, $r$, and $s$ have fixed meanings:
    \begin{itemize}
        \item $q$ refers to the order of a PDE (see Definition~\ref{def:Rq})
        \item $r$ counts the number of times a PDE has been prolonged (see Definition~\ref{def:Prolongation})
        \item $s$ counts the number of times a PDE has been prolonged and then projected back (see Definition~\ref{def:Projection})
    \end{itemize}
    \item Unless stated otherwise, $m$ denotes the number of algebraically independent components of $v^A$. By ``algebraically independent'', we mean that symmetries of the tensor components have been taken into account. For example:
    \begin{itemize}
        \item A metric tensor generally has 16 components, but only 10 are independent due to its symmetry.
        \item The electromagnetic field strength tensor $F_{\mu\nu}$, which satisfies $F_{\mu\nu} = - F_{\nu\mu}$, has only six independent components due to its antisymmetry in $\mu$ and $\nu$.
    \end{itemize}
\end{itemize}

\section{The Jet Bundle Approach to Differential Equations}\label{sec:JetBundleApproach}

\subsection{Basics of Vector Bundles and Jet Bundles}\label{ssec:BasicsOfVectorJetBundles}

We aim to describe field equations for scalars, vectors, metrics, and other tensorial fields using a precise mathematical framework. To achieve this, we introduce the language of fiber bundles and jet bundles, which provide a powerful, coordinate-independent description of such equations and help formalize the method of counting degrees of freedom. For an accessible introduction to fiber bundles, we refer to Baez and Muniain~\cite{BaezBook}, while our discussion of jet bundles and the formal theory of partial differential equations follows, in part, Seiler~\cite{SeilerBook}.

While fiber and jet bundles might initially seem abstract, they are, in fact, highly practical tools. We will introduce only as much mathematical detail as needed, always grounding the discussion in physical examples. Readers familiar with electromagnetism, Proca's equation, general relativity, and related theories will find it easy to understand the concepts that follow.

To build intuition, let us first outline the fundamental ideas behind fiber and jet bundles and explain their relevance. Fiber bundles allow for an algebraic, coordinate-free description of tensor fields. Jet bundles take this a step further, treating \textit{derivatives of tensor fields} as independent algebraic variables. By combining these two frameworks, we gain a new perspective: instead of viewing field equations as differential equations to be \textit{integrated}, we can reinterpret them as non-linear algebraic constraints among independent variables---spacetime coordinates, fields, and their derivatives up to any given order.

To illustrate this shift, consider the relativistic wave equation~\eqref{eq:WaveEquation}, previously discussed in Subsection~\ref{ssec:EinsteinsMethod}. Traditionally, we solve it by integrating to find the field~$\Phi$. In the jet bundle formulation, however, we treat the spacetime coordinates $(t, x, y, z)$, the field~$\Phi$, and its derivatives as independent variables in a larger space---the jet bundle. The wave equation then acts as a constraint: for example, the second time derivative $\Phi_{tt}$ cannot be specified freely but must be related to the spatial derivatives $\Phi_{xx}$, $\Phi_{yy}$, and $\Phi_{zz}$. This viewpoint aligns precisely with our earlier discussion of Einstein's method for counting degrees of freedom.

A jet bundle of order $n$ provides a space where a tensor field and all its derivatives up to order $n$ are treated as independent entities. This naturally lends itself to the study of formal power series, such as Taylor expansions, without concerns about convergence, as we always work at a finite order.

In summary, jet bundles offer the ideal mathematical setting for constructing formal power series solutions to differential equations while treating these equations as constraints on the coefficients of such series. This is precisely the approach we employed in Subsection~\ref{ssec:EinsteinsMethod} when analyzing the relativistic wave equation.

To clarify these basic ideas, we begin by considering a vector field $V$ on an $n$-dimensional manifold $\M$. From the traditional differential-geometric perspective, a vector field assigns a tangent vector to each point $p \in \M$. Specifically, at each point $p \in \M$, the vector field $V$ is represented as an element of the tangent space $T_p \M$. Since we can define a tangent space $T_p \M$ at every point of $\M$, we can think of $\M$ as being covered by its tangent spaces, as illustrated in Figure~\ref{fig:TpS1}. In this figure, the tangent spaces appear to intersect or overlap. However, these spaces are independent of each other, and the intersections are merely a result of our visual representation.
\begin{figure}[ht]
    \centering
    \includegraphics[width=0.5\linewidth]{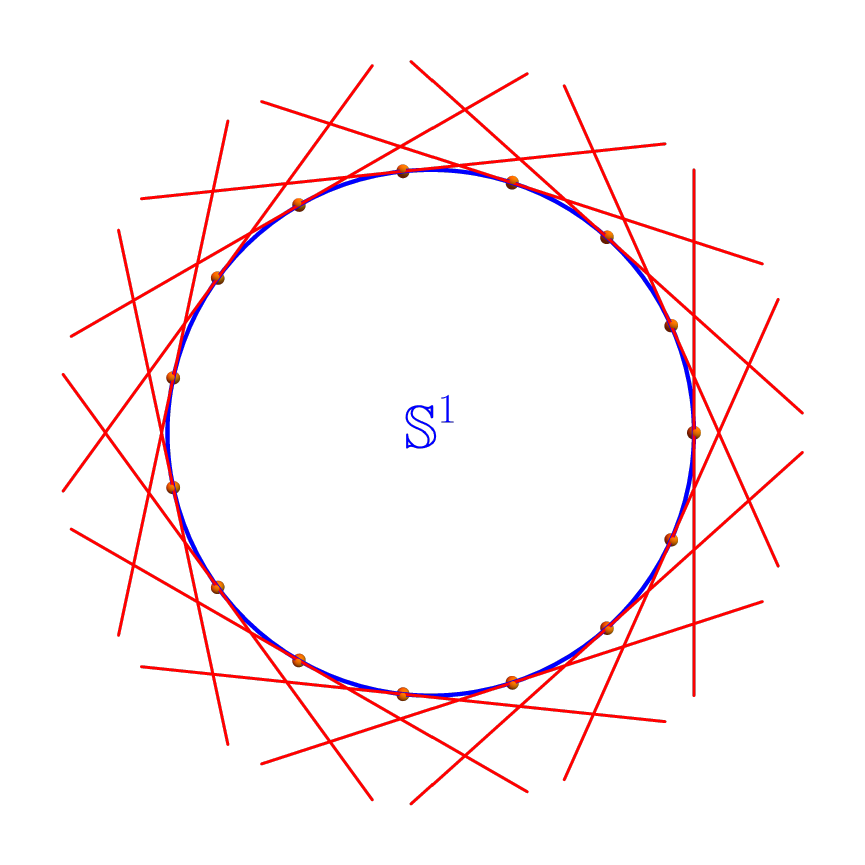}
    \caption{The tangent space to the circle $\mathbb{S}^1$ consists of the collection of all lines, which are tangent to $\mathbb{S}^1$. Here we only show $15$ such lines.}
    \label{fig:TpS1}
\end{figure}

To resolve this issue, we switch to a more accurate representation, as shown in Figure~\ref{fig:TpS1Fibered}. Here, we still see how $\mathbb{S}^1$ is covered by its tangent spaces, but the spaces are now aligned like fibers, which do not intersect. The collection of all these tangent spaces forms the tangent bundle $T\M$. To correctly reflect the idea that the tangent spaces are independent and do not overlap, we take the \textit{disjoint} union of all tangent spaces. This is expressed as
\begin{align}
    T\M &\ce \bigsqcup_{p \in \M} T_p \M = \bigcup_{p \in \M} \{ p \} \times T_p \M\,,
\end{align}
where $\bigsqcup$ denotes the disjoint union. Alternatively, this can be written as the union of spaces of the form $\{ p \} \times T_p \M$, which represents the union of each point of $\M$ with its corresponding tangent space. The tangent bundle $T\M$ is our first and most significant example of a fiber bundle. We refer to the tangent space $T_p \M$ associated with each point $p \in \M$ as the fiber of $T\M$ over $p$. In Figure~\ref{fig:TpS1Fibered}, this corresponds to a line.
\begin{figure}[ht]
    \centering
    \includegraphics[width=0.5\linewidth]{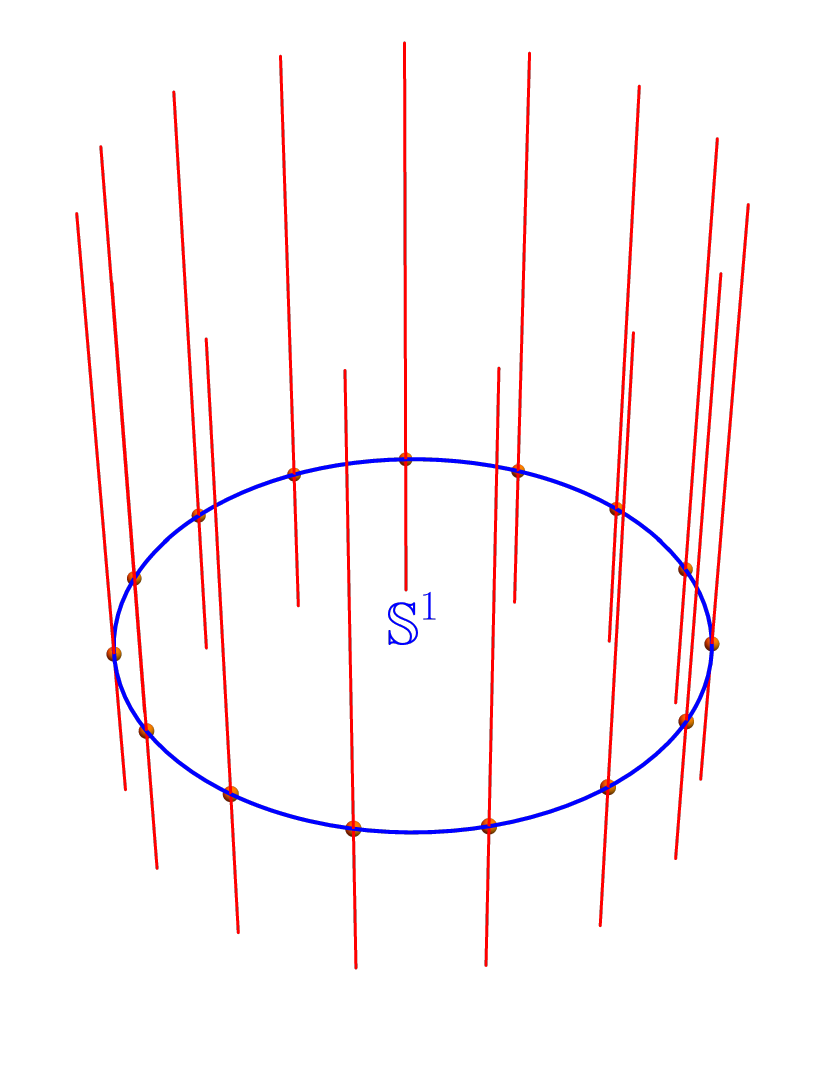}
    \caption{In this visual representation of the tangent bundle $T\mathbb{S}^1$ the tangent spaces are aligned like fibers, which no longer intersect.}
    \label{fig:TpS1Fibered}
\end{figure}

Given the tangent bundle $T\M$, there is a natural projection map $\pi : T\M \to \M$, which acts on a generic element $\{ y \} \times T_y \M$ of $T\M$ as follows:
\begin{align}
    \{ p \} \times T_p \M \quad &\mapsto \quad \pi(\{ p \} \times T_p \M) = p\,.
\end{align}
In other words, $\pi$ projects each fiber onto its base point $p$. In Figure~\ref{fig:TpS1Fibered}, this corresponds to mapping each red line to a single point on the blue circle. Thus, $\pi$ is a surjective map, but not injective. In practice, this means that the inverse map $\pi^{-1}$ takes each point $p$ to the entire tangent space $T_p \M$, not just a single point.

Additionally, we can construct another map that allows us to view the vector field $V$ as a map from $\M$ to $T\M$. This map, denoted by $s : \M \to T\M$, is called a \textbf{section}, and it represents the idea that a vector field assigns a tangent vector to each point of $\M$. To ensure this works properly, we require that $\pi \circ s = \textsf{id}_{\M}$, where $\textsf{id}_{\M}$ is the identity map on $\M$. In simpler terms, if we first map a point $p$ from $\M$ to $T\M$ and then project it back to $\M$, we should end up at the original point $p$. This condition holds only if $s$ maps each point $p$ to a vector in the tangent space $T_p \M$ associated with that point, not to a vector in some other tangent space $T_{p'} \M$ with $p' \neq p$. Thus, a section $s$ is required to map each point $p$ to a tangent vector in $T_p \M$. This is what we mean when we say that ``a vector field assigns a tangent vector to each point of $\M$''. Finally, this leads us to the formal definition of a fiber bundle.\medskip

\begin{definition}[Fiber bundle]
    A \textbf{fiber bundle} of dimension $m$ is a quadruple $(B, \E, \pi, F)$ consisting of a \textbf{base space} $B$, a \textbf{total space} $\E$, a surjective \textbf{projection} $\pi:\E\to B$, and a \textbf{fiber} $F$. The inverse of $\pi$ is required to map every point $p$ of $B$ to an $m$-dimensional real vector space, i.e.,
    \begin{align*}
        F \ce \pi^{-1}(p) \simeq \mathbb{R}^m\,.
    \end{align*}
\end{definition}\medskip

This definition captures the essence of a fiber bundle: a space $\E$ that locally looks like the product of $B$ and $F$, but may have a more intricate global structure. A particularly important class of fiber bundles arises when the fibers themselves are vector spaces, leading to the notion of \textbf{vector bundles}. 

Given a vector bundle $(B, \E, \pi, F)$, we can construct new bundles known as \textbf{jet bundles}, which encode not only the field values but also their derivatives. The \textbf{first-order jet bundle} $(B, J_1\E, \pi, \pi^1_0)$ retains $B$ as its base space, but its total space is now enlarged to $J_1\E$. The coordinates on this space include the vector field components $v^A$ and their first-order derivatives,
\begin{align}
    p^A_\mu \ce \PD{v^A}{x^\mu}\,.
\end{align}
Additionally, there are now two surjective projection maps: the original $\pi: \E \to B$ and a new projection $\pi^1_0: J_1\E \to B$.

More generally, we define the \textbf{$q$-th order jet bundle} over the vector bundle $(B, \E, \pi, F)$. The fibers of this bundle are coordinatized by the vector field $v^A$, its first-order derivatives $\partial_\mu v^A$, its second-order derivatives $\partial_\mu\partial_\nu v^A$, and so on up to the $q$-th order derivatives. The structure of the jet bundle is further organized by a sequence of surjective projection maps:
\begin{align}
    \pi^{q+r}_q: J_{q+r}\E \to J_q \E\,,
\end{align}
where we naturally identify $J_0\E$ with $\E$ and $\pi^0_0$ with $\pi$. For simplicity, we often refer to $J_q\E$ as the $q$-th order jet bundle, rather than writing out the full tuple $(B, J_q\E, \pi, \dots)$.

To better understand the role of jet bundles, let us now examine a concrete example: a second-order jet bundle.
\begin{example}[Second-order jet bundle]\label{ex:J2E}
    As a base space, we take $B = \mathbb{R}^2$ with Cartesian coordinates $(x, y)$. We consider a single scalar field $\Phi: B \to \mathbb{R}$, which also serves as a coordinate for the fibers of the total space $\E$. The fibers of the second-order jet bundle $J_2\E$ are then coordinatized by
    \begin{align*}
        (\Phi, \partial_x\Phi, \partial_y\Phi, \partial_x\partial_x\Phi, \partial_x\partial_y \Phi, \partial_y\partial_y\Phi)\,.
    \end{align*}
    Thus, $J_2\E$ is a six-dimensional vector space, naturally spanned by the elements listed above. A typical vector in this space can be expressed as a linear combination:
    \begin{align*}
        a_1\, \Phi + a_2\, \partial_x \Phi + a_3\, \partial_y\Phi + a_4 \, \partial_x\partial_x \Phi + a_5\, \partial_x \partial_y \Phi + a_6\, \partial_y\partial_y \Phi\,,
    \end{align*}
    where the coefficients $a_i \in \mathbb{R}$ are constants.

    Notice that by appropriately choosing these constants, we can reconstruct the second-order Taylor expansion of $\Phi$ around any chosen point. This highlights the key role of jet bundles in encoding the local behavior of fields and their derivatives. In particular, this formulation directly connects to the example of the relativistic wave equation discussed in Subsection~\ref{ssec:EinsteinsMethod}.
\end{example}\medskip

The above example illustrates how jet bundles provide a framework for discussing \textit{formal} Taylor expansions of functions and, more generally, tensor fields. Additionally, determining the dimension of the fiber $F$ of a $q$-th order jet bundle $J_q\E$ over an $n$-dimensional base space $B$ is a straightforward combinatorial exercise. Specifically, the number of derivatives of order at most $q$ for a single function of $n$ variables is given by
\begin{align}
    \begin{pmatrix}
        n+q\\
        q
    \end{pmatrix}.
\end{align}
Since we typically deal with $m$-component tensor fields, we multiply this count by $m$ to obtain the \textbf{fiber dimension}:
\begin{align}\label{eq:FiberDimension}
    \dim_F J_q\E = m\binom{n+q}{q}\,.
\end{align}
The subscript $F$ serves as a reminder that this is the dimension of the \textit{fiber} of $J_q\E$, \textit{not} the total space $J_q\E$ itself. 

As a quick verification, inserting $n = 2$, $q = 2$, and $m = 1$ into the formula confirms the result for Example~\ref{ex:J2E}, yielding $\dim_F J_2\E = 6$, as expected. Thus, the fiber dimension effectively counts the number of terms in the formal Taylor expansion of our vector field~$v^A$.

Beyond their role in Taylor expansions, fiber bundles provide a natural framework for defining partial differential equations and for understanding how to differentiate such equations in a systematic way.

To fully harness the power of jet bundles, we must introduce additional notation and definitions. We begin with the concept of \textit{multi-indices}, which significantly streamline our notation.\medskip

\begin{definition}[Multi-index $\bfm$]\label{def:multiindex}
    A \textbf{multi-index} is an $n$-tuple $\bfm\ce [m_1, \dots, m_n]$ of non-negative integers $m_i\in\bbN_0$. The \textbf{length} of a multi-index is defined as 
    \begin{align*}
        |\bfm|\ce m_1 + \cdots + m_n\,.    
    \end{align*}
    Multi-indices can be added or subtracted component-wise, provided all resulting entries remain non-negative. We also introduce the shorthand notation
    \begin{align*}
        \bfm \pm a_i \ce [m_1, \dots, m_i \pm a, \dots, m_n]
    \end{align*}
    for any $a \in \bbN_0$ such that $m_i \pm a \geq 0$ (where the index $i$ of $a_i$ refers to the specific position at which $a$ is added to the multi-index).
\end{definition}

We now introduce the field variables and jet variables, which serve as the coordinates of the $q$-th order jet bundles $J_q\E$.\medskip

\begin{definition}[Field variables $v^A$ and jet variables $p^A_{\bfm}$]\label{def:JetVariables}
    The fibers of the $q$-th order jet bundle $J_q\E$ are coordinatized by the \textbf{field variables} $v^A$ and their derivatives with respect to the base space variables $(x^1, \dots, x^n) \in B$ up to order $q$. These derivatives, known as \textbf{jet variables}, are denoted as
    \begin{equation*}
        p^A_{\bfm} \ce \frac{\pd^{|\bfm|}v^A}{\pd(x^1)^{m_1} \cdots \pd(x^n)^{m_n}}, \quad \text{for} \quad 0 < |\bfm| \leq q.
    \end{equation*}
\end{definition}

To illustrate the newly introduced concept of jet variables, we turn to an example from electromagnetism.

\begin{example}[Jet variables $p^A_{\bfm}$ in electromagnetism]
    Throughout this paper, both classical electromagnetism and Proca's modified version will serve to illustrate different concepts.

    In both cases, the theory is described by a vector field $A^\mu$ on Minkowski space. In the language of fiber bundles, $A^\mu$ is a section of the tangent bundle. More importantly for our purposes, the equations governing $A^\mu$ are of second order, meaning we require the second order jet bundle $J_2\E$ to formulate the theory. Here, the field variables are identified as
    \begin{align*}
        v^A \equiv A^\mu\,,
    \end{align*}
    where the index $A$ (ranging over the components of the tensor field) corresponds to the spacetime index $\mu$. In a four-dimensional spacetime, both indices take values from $1$ to $4$. 

    To determine the jet variables $p^A_{\bfm}$, we proceed as follows: The index $A$ is still identified with $\mu$, and the length of the multi-index $\bfm$ satisfies $0<|\bfm| \leq 2$, as we consider the second order jet bundle ($q=2$). This allows us to distinguish between first and second order jet variables.

    The first order jet variables are simply
    \begin{align*}
        p^\mu_1 &\equiv \PD{A^\mu}{x^1}\,, & p^\mu_2 &\equiv \PD{A^\mu}{x^2}\,, & p^\mu_3 &\equiv \PD{A^\mu}{x^3}\,, & p^\mu_4 &\equiv \PD{A^\mu}{x^4}\,.
    \end{align*}
    The second order jet variables yield ten independent expressions:
    \begin{align*}
        p^\mu_{11} &\equiv \frac{\partial^2 A^\mu}{\partial x^1 \partial x^1}\,, & p^\mu_{12} &\equiv \frac{\partial^2 A^\mu}{\partial x^1 \partial x^2}\,, & p^\mu_{13} &\equiv \frac{\partial^2 A^\mu}{\partial x^1 \partial x^3}\,, & p^\mu_{14} &\equiv \frac{\partial^2 A^\mu}{\partial x^1 \partial x^4}\,,\\
        p^\mu_{22} &\equiv \frac{\partial^2 A^\mu}{\partial x^2 \partial x^2}\,, & p^\mu_{23} &\equiv \frac{\partial^2 A^\mu}{\partial x^2 \partial x^3}\,, & p^\mu_{24} &\equiv \frac{\partial^2 A^\mu}{\partial x^2 \partial x^4}\,,\\
        p^\mu_{33} &\equiv \frac{\partial^2 A^\mu}{\partial x^3 \partial x^3}\,, & p^\mu_{34} &\equiv \frac{\partial^2 A^\mu}{\partial x^3 \partial x^4}\,,\\
        p^\mu_{44} &\equiv \frac{\partial^2 A^\mu}{\partial x^4 \partial x^4}\,.
    \end{align*}
    Together, the field variables, first order jet variables, and second order jet variables define the coordinates of the jet bundle $J_2\E$. Since we left the index $\mu$ unspecified, there is a total of $4\times 4 = 16$ first order jet variables and $4\times 10 = 40$ second order jet variables. By adding the four zeroth order jet variables (i.e., the field components $A^\mu$), we get a total of $4 + 16 + 40 = 60$ variables which coordinatize the jet bundle $J_2\E$. This is the same number one obtains from the fiber dimension~\eqref{eq:FiberDimension}: $\dim_F J_2\E = 4 \binom{4+2}{2} = 60$.
\end{example}

As suggested in the example above, jet bundles provide a natural framework for describing field equations. The next subsection explores how they achieve this.

\subsection{Jet Bundle Description of PDEs}\label{ssec:JetBundlePDEs}
The jet bundle approach to partial differential equations (PDEs) introduces a fundamental shift in perspective. Traditionally, PDEs are viewed as equations that require integration to obtain solutions. In general, solving PDEs is a highly nontrivial task, and explicit solutions are often available only in special cases. For instance, the general solution of Einstein's field equations remains unknown, with only a few exact solutions---typically possessing some degree of symmetry---being derivable through direct integration.

The jet bundle perspective, however, offers a radically different viewpoint. Instead of treating a PDE as an equation that must be integrated, we can reinterpret it as a system of non-linear algebraic equations that impose constraints on the fiber coordinates of the jet bundle. This viewpoint naturally leads to the fundamental question: How many fiber coordinates can be freely specified, and how many are constrained by the PDE? 

This question is central to determining the degrees of freedom described by a PDE, aligning closely with Einstein's strategy for analyzing physical theories.

To formalize these ideas, we treat the total space $J_q\E$ of the $q$-th order jet bundle as a manifold, while PDEs define a submanifold $\R_q \subset J_q\E$. Moreover, since the laws of nature should not depend on the choice of coordinate system, we exclude PDEs that impose constraints on the coordinate system itself. This motivates the following definition.

\begin{definition}[Fibered submanifold of $J_q\E$]\label{def:FiberedSubmanifold}
    A \textbf{fibered submanifold} $\R_q$ of the total space $J_q\E$ of a $q$-th order jet bundle is a subspace of $J_q\E$ which can be projected onto the base space $B$.
\end{definition}

We can visualize $\R_q$ as a hypersurface embedded into a larger space, which represents $J_q\E$. The condition that $\R_q$ can be projected back to $B$ is harder to visualize. However, we will shortly see an example where this projection-condition is violated because $\R_q$ imposes restrictions on the coordinates of the base space $B$. Before that, however, we introduce the jet-bundle-theoretic definition of a PDE.\medskip

\begin{definition}[Differential equation $\R_q$ of order $q$]\label{def:Rq}
    A (non-linear) system of partial differential equations (PDEs) of order $q$ is a fibered submanifold $\R_q$ of the jet bundle $J_q\E$. Locally, the differential equation can be represented by the map
    \begin{align*}
        \bbE: J_q\E &\to \E'\\
        (x^\mu, v^A, p^A_{\bfm}) &\mapsto \bbE^\tau(x^\mu, v^A, p^A_{\bfm})\,,
    \end{align*}
    where $\E'$ is another bundle over $B$, such that $\R_q$ is the kernel of this map. In shorthand notation, we write:
    \begin{align*}
       \R_q : \left\{\bbE^\tau(x^\mu, v^A, p^A_{\bfm}) = 0\right..
   \end{align*}
    This expression represents the local form of the $q$-th order PDE.
\end{definition}\medskip

Throughout this work, we will often refer to $\R_q$ as a differential equation or PDE, rather than explicitly using $\bbE^\tau$. However, it is important to keep in mind that $\bbE^\tau$ represents a system of PDEs, organized as a vector. This is indicated by the index $\tau$, which runs from $1$ to $\ell$, where $\ell$ denotes the number of algebraically independent equations.

While this definition may appear more abstract than the conventional approach in physics, we will clarify its meaning and explore its implications through three examples of increasing complexity.

\begin{example}[The wave equation from the jet bundle perspective]\label{ex:WaveEqJetBundle}
    For simplicity, we work in $(1+1)$-dimensional Minkowski space $\M$, which serves as the base space, i.e., $B = \M$. We use the standard coordinates $(t, x)$ and consider a scalar field $\Phi:B\to \bbR$ satisfying the wave equation:
    \begin{align*}
        \Phi_{tt} - \Phi_{xx} = 0\,.
    \end{align*}
    Since this is a second-order PDE ($q=2$), the total space $J_2\E$ of the jet bundle is coordinatized by
    \begin{align*}
        (\Phi, \Phi_t, \Phi_x, \Phi_{tt}, \Phi_{tx}, \Phi_{xx})\,,
    \end{align*}
    as established in Example~\ref{ex:J2E}. We now define the map
    \begin{align*}
        \bbE:J_2\E &\to \E'\\
        (t, x, \Phi, p^A_{\bfm}) &\mapsto \bbE^\tau(t, x, \Phi, p^A_{\bfm}) = \Phi_{tt} - \Phi_{xx}\,,
    \end{align*}
    where the jet variables are $p^A_{\bfm} = (\Phi_t, \Phi_x, \Phi_{tt}, \Phi_{tx}, \Phi_{xx})$, and the index $\tau=1$ since there is only one equation. 

    By definition, $\R_2$ is the submanifold of $J_2\E$ where $\bbE$ vanishes, i.e., the hypersurface satisfying $\Phi_{xx} = \Phi_{tt}$. This allows us to coordinatize $\R_2$ by
    \begin{align*}
        (\Phi, \Phi_t, \Phi_x, \Phi_{tt}, \Phi_{tx})\,.
    \end{align*}
    Thus, $\R_2$ has one dimension less than $J_2\E$, yet it imposes no constraints on the base space coordinates $(t, x)$. This confirms that $\R_2$ is a fibered submanifold of $J_2\E$.
\end{example}\medskip

In our next example, we illustrate how a submanifold $\R_q$ can fail to be fibered.
\begin{example}[Non-fibered submanifold]\label{ex:NonFiberedSubmanifold}
    As the base space, we take the two-dimensional Euclidean plane $B=\bbR^2$, equipped with standard coordinates $(x,y)$. We consider a scalar field $\Phi:B\to\bbR$ satisfying the system of PDEs:
    \begin{align*}
        \left(\Phi_x-1\right)\Phi_{xx} &= 0,\\
        \Phi_{xy} - \Phi_{yy} &= 0\,.
    \end{align*}
    Since this is a second-order system ($q=2$), the total space $J_2\E$ is coordinatized by six variables, as in the previous examples. The system of PDEs imposes algebraic constraints on these variables, which we express via the map
    \begin{align*}
        \bbE:J_2\E &\to \E',\\
        (x,y, \Phi, p^A_{\bfm}) &\mapsto \bbE^\tau(x,y,\Phi, p^A_{\bfm}) = \begin{pmatrix}
            \left(p_x-1\right)p_{xx} \\
            p_{xy} - p_{yy}
        \end{pmatrix},
    \end{align*}
    where we have introduced the shorthand notation $p_{x} = \Phi_x$, $p_{xx} = \Phi_{xx}$, $p_{xy} = \Phi_{xy}$, and $p_{yy} = \Phi_{yy}$, following Definition~\ref{def:JetVariables}. 

    Unlike Example~\ref{ex:WaveEqJetBundle}, where $\tau=1$, here we have $\tau=1,2$ since the system consists of two equations for a single field. While $\R_2$ is still a subspace of $J_2\E$, it is \textit{not} a fibered submanifold because its dimension is not constant. 

    To see this, we analyze the equation $\left(p_x-1\right)p_{xx} = 0$. If $p_x - 1 \neq 0$, then we must have $p_{xx} = 0$, while the second equation implies $p_{xy} = p_{yy}$. In this case, $\R_2$ forms a four-dimensional hypersurface within the six-dimensional space $J_2\E$. However, if $p_x = 1$, the first equation becomes trivial, imposing no constraint on $p_{xx}$, meaning $\R_2$ can have a different dimension in certain regions of the Euclidean plane. 

    The key issue is that the dimension of $\R_2$ depends on the value of $(x,y)$. For regions where $p_x - 1 \neq 0$, $\R_2$ has dimension four, but this condition restricts the base space coordinates $(x,y)$. Consequently, $\R_2$ fails to be a fibered submanifold of $J_2\E$.
\end{example}

As a final example, we consider Einstein's field equations in General Relativity (GR). Traditionally, these equations are expressed as a system of ten partial differential equations (PDEs) governing the ten components of the metric tensor. In this formulation, they are often written in matrix form. However, in the jet bundle framework, Einstein's equations are reformulated as ten nonlinear algebraic equations for the second-order jet variables of the metric tensor. In this perspective, the equations are naturally organized as a vector.

\begin{example}[Einstein's Field Equations in the Jet Bundle Perspective]\label{ex:FieldEqOfGR}
    In the standard formulation of General Relativity, Einstein's field equations are a system of ten coupled second-order PDEs for the ten independent components of the metric tensor $g_{\mu\nu}$. Explicitly, they are given by
    \begin{align*}
        R_{\mu\nu} - \frac{1}{2} R g_{\mu\nu} + \Lambda g_{\mu\nu} = 8\pi G\, T_{\mu\nu}\,,
    \end{align*}
    where $R_{\mu\nu}$ is the Ricci curvature tensor, $R \ce g^{\mu\nu}R_{\mu\nu}$ is the Ricci scalar, $\Lambda$ is the cosmological constant, and $T_{\mu\nu}$ is the energy-momentum tensor. 

    In the jet bundle formulation, we reinterpret these equations as algebraic constraints on the coordinates of the second-order jet bundle $J_2\E$. The total space $J_2\E$ is coordinatized by the independent components of the metric tensor, their first derivatives (which form the Christoffel symbols), and their second derivatives (which determine the Riemann curvature tensor):
    \begin{align*}
        (g_{\mu\nu}, p^A_{\bfm}) \equiv (g_{\mu\nu}, \partial_\lambda g_{\mu\nu}, \partial_\sigma\partial_\lambda g_{\mu\nu})\,.
    \end{align*}
    The field equations are more conveniently written as
    \begin{align*}
        \E_{\mu\nu} \ce R_{\mu\nu} - \frac12 R g_{\mu\nu} + \Lambda g_{\mu\nu} - 8\pi G\, T_{\mu\nu} = 0\,.
    \end{align*}
    In the traditional tensor-language, $\mathcal{E}_{\mu\nu}$ would be a symmetric tensor, which in a given chart can be written as matrix of the form
   \begin{align*}
        \begin{pmatrix}
            \E_{11} & \E_{12} & \E_{13} & \E_{14} \\
            \E_{12} & \E_{22} & \E_{23} & \E_{24} \\
            \E_{13} & \E_{23} & \E_{33} & \E_{34} \\
            \E_{14} & \E_{24} & \E_{34} & \E_{44}
        \end{pmatrix}
    \end{align*}
    In the jet bundle approach, however, the field equations define a fibered submanifold $\R_2 \subset J_2\E$, given locally by the kernel of the map
    \begin{align*}
        \bbE: J_2\E &\to \E' \\
        (x^\mu, g_{\mu\nu}, p^\alpha_{\bfm}) &\mapsto \bbE^\tau(x^\mu, g_{\mu\nu}, p^\alpha_{\bfm}) = \begin{pmatrix}
            \E_{11} \\
            \E_{12} \\
            \E_{13} \\
            \E_{14} \\
            \E_{22} \\
            \E_{23} \\
            \E_{24} \\
            \E_{33} \\
            \E_{34} \\
            \E_{44}
        \end{pmatrix}\,.
    \end{align*}
    Here, the index $\tau$ runs from $1$ to $10$, corresponding to the ten algebraically independent Einstein equations. As anticipated, we see that the Einstein field equations are now organized in a vector. The order in which these equations appear in the vector is irrelevant. The only important thing is that all algebraically independent equations are represented in the vector. 

    Lastly, we point out that the jet bundle $J_2 \E$ is coordinatized by the metric $g_{\mu\nu}$, its first derivatives $\partial_\lambda g_{\mu\nu}$, and its second order derivatives $\partial_\sigma\partial_\lambda g_{\mu\nu}$. Taking into account the symmetry of the metric and the fact that partial derivatives commute, this gives us a total of $10 + 4\times 10 + 10\times 10 = 150$ jet variables. Again, this number could also have been obtained using the fiber dimension~\eqref{eq:FiberDimension}: $\dim_F J_2\E = 10 \binom{4+2}{2} = 150$.
\end{example}\medskip

Now that we have established how to describe differential equations in the jet bundle framework, we turn our attention to special operations that can be performed on these equations. In particular, we focus on \textit{prolongations} and \textit{projections}, which play a crucial role in extracting meaningful information from PDEs.

\subsection{Prolongations and Projections}\label{ssec:ProlongationsProjections}
When discussing Einstein's method for determining the number of degrees of freedom in Subsection~\ref{ssec:EinsteinsMethod}, we analyzed derivatives of the wave equation. This approach allowed us to extract information about higher-order Taylor coefficients. In the jet bundle framework, the idea of differentiating equations can be naturally translated, leading to the following definition.\medskip
\begin{definition}[Formal derivative operator $D_\mu$]\label{def:Di}
    A local representation of a $q$-th order differential equation can be formally differentiated. The \textbf{formal derivative operator} acting on $\bbE^\tau$ is defined as
    \begin{align*}
        D_\mu \bbE^\tau \ce \PD{\bbE^\tau}{x^\mu} + \sum_{A = 1}^\ell \PD{\bbE^\tau}{v^A}p^A_\mu + \sum_{A = 1}^\ell \sum_{0<|\bfm|\leq q} \PD{\bbE^\tau}{p^A_{\bfm}}p^A_{\bfm + 1_\mu}
    \end{align*}
    with $p^A_\mu\ce \PD{v^A}{x^\mu}$ and for all $\mu\in\{1,\dots, n\}$ and $\tau\in\{1, \dots, \ell\}$.
\end{definition}\medskip
We recall that the notation $\bfm + 1_\mu$ indicates that the $\mu$-th entry of the multi-index $\bfm$ is increased by one (see Definition~\ref{def:multiindex}). 

At first glance, the formula for the formal derivative may seem intricate, but in practice, it is straightforward to work with and produces the expected results. Essentially, it corresponds to taking a derivative with respect to the coordinate $x^\mu$, but reformulated in the language of jet bundles. To illustrate this, we compute the formal derivative of the Euler-Lagrange equation as an example.\medskip

\begin{example}[Formal derivative of the Euler-Lagrange equation]
    For simplicity, we consider the one-dimensional Euler-Lagrange equation for a function $x(t)$, where $t$ is the time coordinate:
    \begin{align*}
        \underbrace{\PD{L}{x} - \frac{\dd}{\dd t}\PD{L}{\dot{x}}}_{= \bbE} = 0\,.
    \end{align*}
    We assume the standard case where the Lagrangian $L$ depends only on $x$ and $\dot{x}$, without higher-order time derivatives. This results in a second-order differential equation for $x$, and the jet bundle $J_2\E$ is coordinatized by $(x, \dot{x}, \ddot{x})$. 

    Applying the definition of the formal derivative, we compute:
    \begin{align*}
        D_t\bbE &= \PD{\bbE}{t} + \sum_{A=1}^{\ell}\PD{\bbE}{x}p^{A}_\mu + \sum_{A=1}^\ell \sum_{0<|\bfm|\leq 2}\PD{\bbE}{p^{A}_{\bfm}}p^{A}_{\bfm + 1_\mu} \\
        &= \underbrace{\PD{\bbE}{t}}_{=0} + \PD{\bbE}{x}\dot{x} + \PD{\bbE}{\dot{x}} \ddot{x} + \PD{\bbE}{\ddot{x}}\dddot{x} \\
        &= \PD{}{x}\left(\PD{L}{x} - \frac{\dd}{\dd t}\PD{L}{\dot{x}}\right)\dot{x} + \PD{}{\dot{x}}\left(\PD{L}{x} - \frac{\dd}{\dd t}\PD{L}{\dot{x}}\right)\ddot{x} - \PD{}{\ddot{x}}\left(\frac{\dd}{\dd t}\PD{L}{\dot{x}}\right)\dddot{x}\,.
    \end{align*}
    Here, we used that $L$ has no explicit time dependence, i.e., $\PD{L}{t} = 0$, and that $\PD{L}{\ddot{x}} = 0$. Moreover, since we have a single-component PDE ($\ell=1$), the sum over $A$ collapses. However, the sum over the multi-index $\bfm$ contributes the third and fourth terms in the second line. 

    Notably, $D_t\bbE$ is a third-order equation, exactly matching the result obtained by taking the total time derivative of the Euler-Lagrange equation.
\end{example}

Equipped with the formal derivative operator as a new tool, we now introduce \textbf{prolonged equations}. These consist of the original PDE along with its formal derivatives. Naturally, this means that prolonged equations are of higher order than the original PDE.

\begin{definition}[Prolongation]\label{def:Prolongation}
    The \textbf{prolongation} of a differential equation $\R_q$ is a differential equation $\R_{q+1}\subset J_{q+1}\E$, where the order has been increased by one. In a local representation we can write the prolongation of $\R_q$ as
    \begin{align*}
        \R_{q+1}:
        \begin{cases}
            \bbE^\tau = 0\\
            D_\mu \bbE^\tau = 0
        \end{cases}
    \end{align*}
    for all $\mu\in\{1,\dots, n\}$ and for all $\tau\in\{1, \dots, \ell\}$. A differential equation can be prolonged several times, giving rise to an equation $\R_{q+r}$ which consists of the original equation and its first $r$ formal derivatives,
    \begin{align*}
        \R_{q+r}:
        \begin{cases}
            \bbE^\tau = 0\\
            D_{\mu_1}\bbE^\tau = 0\\
            \phantom{D_{\mu_1}}\vdots \\
            \underbrace{D_{\mu_r} D_{\mu_{r-1}}\cdots D_{\mu_1}}_{r-\text{times}} \bbE^\tau = 0
        \end{cases}\,.
    \end{align*}
    We call the integer number $r>0$ the \textbf{prolongation order}.
\end{definition}\medskip
The concept of prolongation provides a systematic way to generate higher-order equations by iteratively applying the formal derivative operator. To see this in action, we consider an important example from physics: the prolongation of Maxwell’s equations.

\begin{example}[Prolongation of Maxwell's equations]\label{ex:ProlongationMaxwell}
    We consider the base space $B=\bbR^n$, the total space $\E = TB$ (tangent bundle to $B$) and the vector field $A^\mu$, which is defined as section of $\E$. This vector field is subjected to Maxwell's equations
    \begin{align*}
        \R_2 : 
      \underbrace{  \begin{cases}
            \partial_\mu \left(\partial^\mu A^\nu - \partial^\nu A^\mu\right) 
        \end{cases}}_{= \bbE^\nu}=0\,,
    \end{align*}
    where the indices have been raised with the Minkowski metric. The formal derivative of Maxwell's equations is simply
    \begin{align*}
        D_\rho \bbE^\nu = \partial_\rho \partial_\mu \left(\partial^\mu A^\nu - \partial^\nu A^\mu \right) = 0\,.
    \end{align*}
    The prolongation thus reads
    \begin{align*}
        \R_3 : 
        \begin{cases}
                \partial_\mu \left(\partial^\mu A^\nu - \partial^\nu A^\mu\right) = 0 \\
                &\\
                \partial_\rho \partial_\mu \left(\partial^\mu A^\nu - \partial^\nu A^\mu \right) = 0
        \end{cases}\,.
    \end{align*}
\end{example}\medskip

In the previous subsection, we introduced the concept of a fibered submanifold, a fundamental component in our definition of a PDE. One might naturally expect that prolonging a PDE that forms a fibered submanifold would result in another fibered submanifold. However, this is not always the case. The following example demonstrates how the prolongation of a fibered submanifold can lead to a non-fibered submanifold.

\begin{example}[Prolongation to a non-fibered submanifold]\label{ex:NonFiberedProlongation}
We consider a two-dimensional manifold $\M$ with coordinates ${x, y}$ and a second-order PDE system for a scalar field $\Phi$ given by
\begin{align*}
    \R_2:
    \begin{cases}
        \Phi_{xx} - \frac{1}{2} (\Phi_{yy})^2 &= 0 \\
        \Phi_{yy} - \Phi_{xy} &= 0
    \end{cases}\,.
    \end{align*}
    To obtain the prolonged system, we compute the $x$- and $y$-derivatives of both equations and include them in the original system. This yields the third-order system
    \begin{align*}
        \R_3:
    \begin{cases}
        \Phi_{xx} - \frac{1}{2} (\Phi_{yy})^2 &= 0 \\
        \Phi_{yy} - \Phi_{xy} &= 0 \\
        \\
        \Phi_{xxx} - \Phi_{xyy} \Phi_{yy} &= 0 \\
        \Phi_{xyy} - \Phi_{xxy} &= 0 \\
        \Phi_{xxy} - \Phi_{yyy} \Phi_{yy} &= 0 \\
        \Phi_{yyy} - \Phi_{xyy} &= 0
    \end{cases}\,.
\end{align*}
A fundamental issue arises when we add the fourth, fifth, and sixth equations (counting from the top). After basic algebraic manipulations, we obtain
\begin{align*}
    (\Phi_{yy} - 1) \Phi_{yyy} = 0.
\end{align*}
This equation has the same pathological structure as the one encountered in Example~\ref{ex:NonFiberedSubmanifold}. Applying the same reasoning, we conclude that $\R_3$ is a non-fibered submanifold of $J_3\E$.
\end{example}

The problem with non-fibered submanifolds is that they require case distinctions. In the example above, we must separately analyze the cases $\Phi_{yy} - 1 = 0$ and $\Phi_{yyy} = 0$, as the PDE behaves differently in each scenario. Such distinctions complicate our goal of systematically determining all constraint equations between the Taylor coefficients of a formal power series solution. To avoid these complications, we restrict our attention to so-called \textit{regular equations}.

\begin{definition}[Regular equation]\label{def:RegularEquation}
A differential equation $\R_q$ of order $q$, as defined in Definition~\ref{def:Rq}, is called \textbf{regular} if all its prolongations $\R_{q+r}$ form fibered submanifolds for all $r \geq 0$.
\end{definition}

In practice, a non-regular equation can often be made regular by imposing suitable restrictions on $\R_q$. Therefore, focusing exclusively on regular equations is a relatively mild assumption that does not significantly limit the generality of our approach.

Next, we introduce the concept of projecting prolonged equations. This technique proves to be particularly useful in uncovering hidden integrability conditions and constraint equations.

\begin{definition}[Projection]\label{def:Projection}
    A prolonged differential equation of order $q+1$ can be projected back to a $q$-th order equation via the surjective map
    \begin{align*}
        \pi^{q+1}_q: J_{q+1}\E \to J_q\E.
    \end{align*}
    The first projection of $\R_{q+1}$ is given by
    \begin{align*}
        \R^{(1)}_q \ce \pi^{q+1}_q(\R_{q+1}) \subset J_q\E.
    \end{align*}
    In practical terms, $\R^{(1)}_q$ is obtained from $\R_{q+1}$ by removing all equations of order $q+1$ while retaining those of order $q$ and lower. This process extends naturally to cases where an equation $\R_q$ has been prolonged $r+s$ times and subsequently projected $s$ times:
    \begin{align*}
        \R^{(s)}_{q+r} \ce \pi^{q+r+s}_{q+r}(\R_{q+r+s}) \subset J_{q+r}\E.
    \end{align*}
    We refer to the integer $s \geq 0$ as the \textbf{projection order}.
\end{definition}

At first glance, this definition may seem abstract. Let’s break it down: Starting with an equation $\R_q$, we know that it can be prolonged to $\R_{q+1}$. While $\R_q$ defines a submanifold in the jet bundle $J_q\E$, the prolonged equation $\R_{q+1}$ defines a submanifold in the higher-order jet bundle $J_{q+1}\E$. The projection process then maps $\R_{q+1}$ back into $J_q\E$, yielding a new equation $\R^{(1)}_q$.

Here, the superscript $(1)$ denotes that we have performed a single projection, while the subscript $q$ indicates that the resulting equation remains of order $q$. Importantly, we write $\R^{(1)}_q$ instead of simply $\R_q$ to emphasize a key fact: in general,
\begin{align*}
\R^{(1)}_q \neq \R_q.
\end{align*}
That is, the projected equation typically defines a \textit{different} submanifold in $J_q\E$ than the original one! This distinction plays a crucial role in analyzing the structure of differential equations and their hidden constraints.

We illustrate this phenomenon with an example. Before doing so, we note that the concept of projection, as described in the definition above, naturally extends to higher orders of prolongation. In practice, projection is straightforward: when mapping from $J_{q+r+s}\E$ to $J_{q+r}\E$, we simply ``forget'' all \textit{independent} equations of order $q+r+1$ or higher. The following example demonstrates this process.

\begin{example}[Projection of the prolonged Maxwell equations]\label{ex:ProjectionMaxwell}
    In Example~\ref{ex:ProlongationMaxwell}, we encountered the prolonged Maxwell equations. For convenience, we recall them here:
    \begin{align*}
    \R_3 :
        \begin{cases}
            \partial_\mu \left(\partial^\mu A^\nu - \partial^\nu A^\mu\right) = 0 \\
            \\
            \partial_\rho \partial_\mu \left(\partial^\mu A^\nu - \partial^\nu A^\mu \right) = 0
        \end{cases}.
    \end{align*}
    These equations define a submanifold in the jet bundle $J_3\E$. Applying the projection $\pi^{3}_2: J_3\E \to J_2\E$, we obtain $\R^{(1)}_2$. Since projection involves discarding third-order equations while retaining those of second order and lower, we find
    \begin{align*}
        \R^{(1)}_2 :
        \begin{cases}
            \partial_\mu \left(\partial^\mu A^\nu - \partial^\nu A^\mu\right) = 0
        \end{cases}.
    \end{align*}
    Clearly, this is just the original Maxwell equation, implying that in this case, $\R^{(1)}_2 = \R_2$.
\end{example}

Projection might initially appear to be the inverse operation of prolongation, and the previous example may have reinforced this impression. However, as we have already indicated, this is by no means always the case! In fact, situations where $\R^{(1)}_q$ \textit{does not} coincide with the original system $\R_q$ are of particular interest. In such cases, the process of prolongation followed by projection can reveal so-called \textit{integrability conditions}\footnote{In the language of physics, these are often referred to as constraint equations.}. As a concrete example, we now examine Proca’s equations.

\begin{example}[$\R^{(1)}_2$ is \textbf{not} equal to $\R_2$ for Proca's equations]\label{ex:R1qNotEqualR1}
    Proca’s equations are structurally similar to Maxwell’s equations, differing only by the presence of a mass term. To set the stage, we assume that $\M$ is an $n$-dimensional manifold equipped with the Minkowski metric $\eta_{\mu\nu} = \text{diag}(-1, +1, \dots, +1)$, which we use to raise and lower indices. In natural units, Proca’s equations take the form
    \begin{align*}
        \R_2:
    \begin{cases}
        \partial_\mu \left(\partial^\nu A^\mu - \partial^\mu A^\nu\right) + m^2 A^\nu = 0
    \end{cases},
    \end{align*}
    where $m > 0$ denotes the photon mass and $A^\mu$ is the vector potential. This system consists of $n$ equations. The prolonged system is obtained immediately:
    \begin{align*}
        \R_3:
    \begin{cases}
        \partial_\mu \left(\partial^\nu A^\mu - \partial^\mu A^\nu\right) + m^2 A^\nu = 0 \\
        \\
        \partial_\rho\partial_\mu \left(\partial^\nu A^\mu - \partial^\mu A^\nu\right) + m^2 \partial_\rho A^\nu = 0
    \end{cases},
    \end{align*}
    where $\rho$ ranges from $1$ to $n$, giving a total of $n + n \times n$ equations that define a submanifold in $J_3\E$. At first glance, projecting $\R_3$ seems straightforward: the top $n$ equations are second order, while the bottom $n^2$ equations are third order. It would thus appear that projection simply returns the original second-order system. However, this conclusion is incorrect. The mistake lies in projecting the system before simplifying it and identifying the \textit{independent} equations.

    To rectify this, we divide the third-order equations into two groups: one where $\rho = \nu$, and another where $\rho \neq \nu$. Summing over the $n$ equations with $\rho = \nu$, we obtain
    \begin{align*}
        \partial_\rho \partial_\mu \left(\partial^\rho A^\mu - \partial^\mu A^\nu\right) + m^2 \partial_\rho A^\rho = 0.
    \end{align*}
    Since $\partial_\rho \partial_\mu$ is symmetric in $\mu$ and $\rho$, whereas $\partial^\rho A^\mu - \partial^\mu A^\nu$ is  antisymmetric, this equation simplifies to
    \begin{align*}
        \partial_\rho A^\rho = 0.
    \end{align*}
    Thus, we have uncovered a first-order equation hidden within the third-order equations! The simplified prolonged system now reads
    \begin{align*}
        \R_3:
        \begin{cases}
            \partial_\mu \left(\partial^\nu A^\mu - \partial^\mu A^\nu\right) + m^2 A^\nu = 0 \\
            \\
            \partial_\rho A^\rho = 0 \\
            \partial_\rho\partial_\mu \left(\partial^\nu A^\mu - \partial^\mu A^\nu\right) + m^2 \partial_\rho A^\nu = 0
        \end{cases}.
\end{align*}

    The projection $\pi^3_2 : J_3\E \to J_2 \E$ instructs us to ``forget'' all third-order equations and retain only those of second order or lower. Consequently, we find
    \begin{align*}
        \R^{(1)}_2:
    \begin{cases}
        \partial_\mu \left(\partial^\nu A^\mu - \partial^\mu A^\nu\right) + m^2 A^\nu = 0 \\
        \\
        \partial_\rho A^\rho = 0
    \end{cases}
    =
    \begin{cases}
    \partial_\mu \partial^\mu A^\nu - m^2 A^\nu = 0 \\
    \\
    \partial_\rho A^\rho = 0
    \end{cases}.
    \end{align*}

    As anticipated, in the case of Proca’s equations, we find that $\R^{(1)}_2 \neq \R_2$. This confirms that prolongation followed by projection can expose hidden constraint equations.
\end{example}

What we have uncovered here is a well-known fact about Proca’s equations. Typically, $\R_2$ and $\R^{(1)}_2$ are considered equivalent, but our analysis shows that $\R^{(1)}_2 \neq \R_2$, meaning they define distinct submanifolds in $J_2\E$.

This discrepancy is not a cause for concern. In Subsection~\ref{ssec:CKAlgorithm}, where we discuss the Cartan-Kuranishi theorem, we will resolve this apparent tension. The key idea is that $\R^{(1)}_2$ and $\R_2$ are equivalent in the sense that they possess the same solution space. Consequently, if our goal is to integrate the equations and find specific solutions, it does not matter which system we use. However, if we follow Einstein’s procedure for determining the degrees of freedom, the distinction between $\R^{(1)}_2$ and $\R_2$ becomes significant.

In this sense, $\R^{(1)}_2$ and $\R_2$ are \textit{not} equivalent: they define different submanifolds that behave differently when we perform Einstein’s analysis. Using $\R_2$ leads to complications, as the hidden constraint equation interferes with our attempt to determine all constraints on the Taylor coefficients. This issue does not arise when working with $\R^{(1)}_2$, where the constraint $\partial_\rho A^\rho = 0$ is made explicit from the outset\footnote{In fact, as we will see in Subsection~\ref{ssec:Examples}, it is $\R^{(2)}_2$, rather than $\R^{(1)}_2$, that fully resolves this issue.}.

To demonstrate that $\R^{(1)}_2$ and $\R_2$ define distinct subspaces, we introduce the concept of the dimension of a submanifold $\R_q$. This notion will also be useful in later discussions. The idea is straightforward: the jet bundle $J_q\E$ is coordinatized by $m\binom{n+q}{q}$ jet variables, where $m$ is the number of field components, $n$ is the spacetime dimension, and $q$ is the order of the jet bundle. Each algebraically independent equation in $\R_q$ eliminates one coordinate in the jet bundle, meaning that one coordinate can be expressed in terms of others. Thus, the dimension of the hypersurface $\R_q$ in $J_q\E$ is given by the difference between the number of jet variables and the number of algebraically independent equations.

\begin{definition}[Dimension of $\R_q$]\label{def:DimRq}
    The dimension of the submanifold determined by $\R_q$ is defined as
    \begin{align*}
        \dim \R_q \coloneqq m \binom{n+q}{q} - \text{\# of algebraically independent equations in $\R_q$} \equiv \dim_F J_q\E - \ell\,,
    \end{align*}
    where $\ell$ is the number of all algebraically independent equations of order $q$ or less.
\end{definition}\medskip

For the original Proca equations, we have $m = n$ vector field components (where $m$ should not be confused with the photon mass!), $q = 2$, and $n$ algebraically independent equations. This yields
\begin{align*}
\dim \R_2 = \frac{1}{2} n^2 (n+3).
\end{align*}
In the case of $\R^{(1)}_2$, we have one additional algebraically independent equation, so
\begin{align*}
\dim \R^{(1)}_2 = \frac{1}{2} n^2 (n+3) - 1,
\end{align*}
which is different from the previous result. This confirms that $\R^{(1)}_2$ and $\R_2$ define genuinely distinct spaces.

Before concluding this subsection, we revisit an important observation from the previous example: prolonging once produced an integrability condition, which we then isolated by projecting back. This is a completely general phenomenon and plays a crucial role in our analysis. As we have noted, attempting to construct a formal power series solution for an equation where integrability conditions have \textit{not} been identified is doomed to fail.

The reason is straightforward: our construction begins with an equation $\R_q$, which allows us to determine some of the $q$-th order coefficients of the power series while leaving others undetermined. We then prolong to $\R_{q+1}$ to obtain constraint equations for the $(q+1)$-th order coefficients. However, integrability conditions arise at lower orders. If they emerge after prolongation, they force us to re-examine the coefficients of order $q$ or lower. As a result, the systematic order-by-order construction of a formal power series solution breaks down. This brings us to the concept of \textit{formal integrability}.
 
\begin{definition}[Formal Integrability of $\R_q$]\label{def:FormalIntegrability}
    A differential equation $\R_q$ is called \textbf{formally integrable} if
    \begin{align*}
        \R^{(1)}_{q+r} = \R_{q+r} \quad \text{for all } r \geq 0,
    \end{align*}
    i.e., if it already contains all its integrability conditions.
\end{definition}\medskip

This definition requires that at every prolongation order $r$, no new integrability conditions arise. At first glance, verifying this condition appears challenging, as it involves checking infinitely many equations of the form $\R^{(1)}_{q+r} = \R_{q+r}$. However, we will soon see that for certain classes of equations, this task is significantly simplified (see Theorem~\ref{thm:ConsequencesOfInvolutiveSymbols} and Corollary~\ref{cor:CriterionOfInvolutivity}).

In Subsection~\ref{ssec:CKAlgorithm}, we will explore how the Cartan-Kuranishi algorithm provides a systematic procedure for transforming any given equation into one that is formally integrable. However, before we can fully appreciate this result, we must first introduce the concept of the \textit{symbol of a PDE}.

\subsection{The Symbol of a PDE}\label{ssec:Symbol}

The key insight from the last subsection was that projecting the prolongation of a PDE $\R_q$ does not necessarily reproduce the original system. In general, we have
\begin{align*}
    \R^{(1)}_q \neq \R_q\,.
\end{align*}
This discrepancy arises due to the presence of equations that either reduce to identities or introduce new integrability conditions. This raises an important question: How can we systematically determine whether identities and/or integrability conditions appear when prolonging a given equation $\R_q$?

Answering this question (and correctly identifying these identities and integrability conditions) is crucial for our approach. Recall that our goal is to construct a formal power series solution to $\R_q$ and determine how many Taylor coefficients remain undetermined by the equations. These undetermined coefficients correspond to our freedom in specifying initial data and are directly related to the number of degrees of freedom. However, the emergence of integrability conditions disrupts the systematic order-by-order construction of a Taylor series.

At order $q$, we can substitute the $q$-th order Taylor expansion into $\R_q$ and determine a subset of its Taylor coefficients from the resulting algebraic equations. If $\R_q$ is formally integrable, we can prolong the system indefinitely without encountering new integrability conditions. Consequently, we can determine the Taylor coefficients at order $q+r$ for any $r \geq 0$ without affecting those obtained at lower orders.

If, however, $\R_q$ is \textit{not} formally integrable, a prolongation will eventually produce an integrability condition at some order $q+r$ for $r > 0$. Since integrability conditions necessarily arise at \textit{lower} orders than $q+r$, we must \textit{recompute} certain lower-order Taylor coefficients. In this sense, integrability conditions act as corrections, forcing us to revise our earlier computations.

Now that we recognize the significance of this question, we introduce a powerful tool that allows us to systematically address it using simple linear algebra methods: the \textit{symbol} $\S_q$ of the equation $\R_q$. To motivate this concept and uncover its intuitive meaning, we begin by examining a special class of PDEs: Quasi-linear first-order partial differential equations. 

\paragraph{Interlude: Quasi-linear First-order PDEs and the Kinetic Matrix}\phantom{.}\newline
\textit{We consider the initial value problem for a class of first-order PDEs that are quasi-linear in a sense we will clarify shortly. The problem is formulated as\footnote{This discussion is deliberately kept brief, omitting many details. For an introduction to the methods discussed here, see, for instance,~\cite{MiersemannBook}. For a pedagogical treatment with applications to physics, see Appendix A.2 and A.3 of~\cite{DAmbrosio:2022}.} }
\begin{align}\label{eq:IVP}
    \begin{cases}
        \displaystyle\sum_{\mu =1}^n M^{(\mu)}(v, x) \partial_{\mu} v^{A} + L(x) v^{A} + V(x) = 0\\
        \\
        \left. v^{A}\right|_{\Sigma} = f^{A}(y^1, \dots, y^{n-1})\,.
    \end{cases}
\end{align}
\textit{According to our conventions, $v^{A}$ represents a collection of fields with a total of $m$ algebraically independent components, } 
\begin{align}  
    v^{A} = (v^1, v^2, \dots, v^m).
\end{align}  
\textit{The first-order derivatives $\partial_\mu v^{A}$ appear multiplied by $n$ matrices $M^{(\mu)}$ of dimension $m\times m$. Each value of  $\mu$ corresponds to a distinct matrix $M^{(\mu)}$. Furthermore, there can be an $m\times m$ matrix multiplying $v^{A}$, and an $m$-dimensional vector $V(x)$. The equation is \emph{quasi-linear} because the matrices $M^{(\mu)}$ are allowed to depend on $v^{A}$.}

\textit{The second line of \eqref{eq:IVP} imposes the initial condition. Here, $\Sigma$ denotes a Cauchy surface of dimension $n-1$. The initial data consists of an arbitrary function $f^{A}(y^1, \dots, y^{n-1})$, specifying the values of $v^{A}$ on $\Sigma$. The coordinates $\{y^1, \dots, y^{n-1}\}$ parametrize $\Sigma$.} 

\textit{Finding a general solution to this system of PDEs is often highly nontrivial, if not impossible. Nevertheless, determining whether a unique solution exists for a given initial dataset $f^{A}$ is remarkably straightforward. First, we note that the Cauchy surface $\Sigma$ can always be defined by an equation $\chi(x) = 0$, where $\chi$ is a scalar function with a non-vanishing gradient, } 
\begin{align}
    \nabla\chi(x) \neq 0 \quad \text{for all} \quad x = (x^1, \dots, x^n)\,.
\end{align}  
\textit{Furthermore, we can always introduce coordinates adapted to this formulation of the initial value problem~\eqref{eq:IVP}. The change of coordinates is defined by a map $\phi: \M \to \M$, whose $n$-th component is given by $\chi(x)$:}
\begin{align}
    \phi^n \ce \chi(x^1, \dots, x^n)\,.
\end{align}
\textit{We further demand that the map is smooth and invertible, which implies that its Jacobian matrix $J$ is well-defined with a nonzero determinant,} 
\begin{align}
    \det J = \det\left(\PD{\phi^\mu}{x^\nu}\right) \neq 0\,.
\end{align}
\textit{This ensures that the transformation is locally invertible. So far, we have achieved the following: The Cauchy surface is represented by the constraint equation $\chi(x) = 0$. The condition $\nabla\chi(x) \neq 0$ guarantees that $\Sigma$ has a well-defined normal vector everywhere, given by}
\begin{align}
    \vec{n} \ce \nabla\chi(x)\,.
\end{align}  
\textit{In the new coordinates $(\phi^1, \dots, \phi^n)$, the Cauchy surface is simply given by $\phi^n = 0$. Thus, the coordinates $(\phi^1, \dots, \phi^{n-1})$ parametrize the surface $\Sigma$, while changes in $\phi^n$ correspond to displacements in the direction normal to $\Sigma$, i.e., along $\vec{n}$. This geometric picture will be crucial in understanding the argument that follows.}

\textit{We now rewrite the system~\eqref{eq:IVP} in these new coordinates. To that end, we use the transformation } 
\begin{align}
    \PD{v^{A}}{x^\mu} = \PD{u^{A}}{\phi^{\lambda}} \PD{\phi^{\lambda}}{x^\mu} \equiv \left(J\cdot\nabla_\phi u^{A}\right)_\mu\,,
\end{align}
\textit{where we define}
\begin{align}
    u^{A}(\phi) \ce v^{A}(x(\phi))\,,
\end{align}
\textit{and where $\nabla_\phi$ denotes the gradient with respect to the coordinates $\phi$. From this relation, we immediately deduce that knowing the initial data $f^{A}$ on $\Sigma$ allows us to determine the $m\times (n-1)$ partial derivatives}
\begin{align}
    \PD{u^{A}}{\phi^{j}}, \quad j \in \{1, \dots, n-1\}\,,
\end{align}
\textit{when evaluated on $\Sigma$. To see this explicitly, we use the definition of the partial derivative:}
\begin{align}
    \left.\PD{u^{A}}{\phi^{j}}\right|_{\Sigma} 
    &\ce \lim_{\epsilon\to 0} \frac{u^{A}(\phi^1, \dots, \phi^{j} + \epsilon, \dots, \phi^{n-1}, 0) - u^{A}(\phi^1, \dots, \phi^{j}, \dots, \phi^{n-1}, 0)}{\epsilon} \notag\\
    &= \lim_{\epsilon\to 0} \frac{f^{A}(\phi^1, \dots, \phi^{j} + \epsilon, \dots, \phi^{n-1}, 0) - f^{A}(\phi^1, \dots, \phi^{j}, \dots, \phi^{n-1}, 0)}{\epsilon} \notag\\
    &= \PD{f^{A}}{\phi^{j}}\,.
\end{align}
\textit{Thus, these derivatives are fully determined by the initial data $f^{A}$, as claimed. However, the $m$ partial derivatives}
\begin{align}
    \PD{u^{A}}{\phi^n} = \PD{u^{A}}{\chi}
\end{align}
\textit{remains undetermined because} 
\begin{align}\label{eq:NotInitialData}
    \left. \PD{u^{A}}{\chi}\right|_{\Sigma} &= \lim_{\epsilon\to 0} \frac{u^{A}(\phi^1, \dots, \phi^{n-1}, \epsilon) - u^{A}(\phi^1, \dots, \phi^{n-1}, 0)}{\epsilon} \notag\\
    &= \lim_{\epsilon\to 0} \frac{u^{A}(\phi^1, \dots, \phi^{n-1}, \epsilon) - f^{A}(\phi^1, \dots, \phi^{n-1}, 0)}{\epsilon}\,.
\end{align}
\textit{This expression is not determined by $f^A$ because $u^{A}(\phi^1, \dots, \phi^{n-1}, \epsilon)$ is evaluated at a point \textit{off} the surface $\Sigma$. Recall that $\Sigma$ is located at $\phi^n = 0$, and any displacement along $\phi^n$ corresponds to moving away from $\Sigma$ in the direction of the normal vector $\vec{n}$. Equation~\eqref{eq:NotInitialData} also reveals an important insight: If we can determine $\left.\PD{u^{A}}{\chi}\right|_{\Sigma}$, then we can extend $u^{A}$ beyond~$\Sigma$. Specifically, if $\left.\PD{u^{A}}{\chi}\right|_{\Sigma}$ is known, then}
\begin{align}\label{eq:FormalIntegration}
    u^{A}(\phi^1, \dots, \phi^{n-1}, \epsilon) = f^{A} + \epsilon\,\left. \PD{u^{A}}{\chi}\right|_\Sigma + \mathcal{O}(\epsilon^2)\,.
\end{align}
\textit{This means that we can formally integrate the equation and determine $u^{A}$ in a neighborhood of~$\Sigma$. At this point, the PDE~\eqref{eq:IVP} becomes crucial. In the adapted coordinate system, it takes the form}
\begin{align}
    \sum_{\mu=1}^n \tilde{M}^{(\mu)} n_\mu \PD{u^{A}}{\chi} 
    + \sum_{\mu=1}^n \tilde{M}^{(\mu)} \PD{\phi^{j}}{x^\mu} \PD{u^{A}}{\phi^{j}} 
    + \tilde{L} u^{A} + \tilde{V} = 0\,,
\end{align}
\textit{where $n_\mu$ are the components of the normal vector, given by $n_\mu = \PD{\chi}{x^\mu}$, and $\tilde{M}^{(\mu)}$, $\tilde{L}$, and $\tilde{V}$ are the same matrices and vectors as before, but expressed in the new coordinate system.}

\textit{Evaluating this equation on $\Sigma$ yields the schematic form}
\begin{align}
    \sum_{\mu=1}^n \left.\tilde{M}^{(\mu)} n_\mu \PD{u^{A}}{\chi}\right|_\Sigma = \text{terms known on } \Sigma\,.
\end{align}
\textit{On the right-hand side, everything is fully determined by the initial data $f^{A}$. However, on the left-hand side, the partial derivative $\PD{u^{A}}{\chi}$ appears—this is not directly determined by the initial data.}

\textit{It follows that if we can solve this equation for $\PD{u^{A}}{\chi}$, we can integrate the PDE~\eqref{eq:IVP} and determine $u^{A}$ in a neighborhood of $\Sigma$ using~\eqref{eq:FormalIntegration}. Thus, establishing whether the initial value problem~\eqref{eq:IVP} has a unique solution reduces to a problem of linear algebra: We must determine whether the matrix}
\begin{align}
    \sum_{\mu=1}^n \tilde{M}^{(\mu)} n_\mu
\end{align}
\textit{is invertible. What is crucial for our discussion is that the above matrix carries additional information: it can reveal the presence of hidden constraint equations or integrability conditions. To make this explicit, let us denote}
\begin{align}
    \mathcal{K} \ce \sum_{\mu=1}^n \tilde{M}^{(\mu)} n_\mu\,,
\end{align}
\textit{and assume that it is degenerate, meaning its rank satisfies $\rank\mathcal{K} = r < m$. In this case, the equation for $\PD{u^{A}}{\chi}$ does not have a unique solution. More importantly, the degeneracy of $\mathcal{K}$ implies the existence of either constraints or identities within the PDE system~\eqref{eq:IVP}.}

\textit{To see this explicitly, we apply the Gauss elimination algorithm to bring $\mathcal{K}$ into row-echelon form:}
\begin{align}
    \tilde{\mathcal{K}} =
    \begin{pmatrix}
        \tilde{\mathcal{K}}_{11} & \tilde{\mathcal{K}}_{12} & \tilde{\mathcal{K}}_{13} & \cdots & \tilde{\mathcal{K}}_{1(m-1)} & \tilde{\mathcal{K}}_{1 m} \\
        0 & \tilde{\mathcal{K}}_{22} & \tilde{\mathcal{K}}_{23} & \cdots & \tilde{\mathcal{K}}_{2(m-1)} & \tilde{\mathcal{K}}_{2m} \\
        \vdots & & \ddots & & \ddots & \vdots \\
        0 & 0 & 0 & \cdots & 0 & \tilde{\mathcal{K}}_{rm} \\ \hline
        0 & 0 & 0 & \cdots & 0 & 0 \\
        \vdots & \vdots & \vdots & \cdots & \vdots & \vdots \\
        0 & 0 & 0 & \cdots & 0 & 0
    \end{pmatrix}.
\end{align}
\textit{Here, the tilde denotes the row-echelon form of $\mathcal{K}$. Since we assumed $\rank\mathcal{K} = r$, only the first $r$ rows contain nonzero entries, while the remaining $m-r$ rows (below the horizontal line) are entirely zero.}

\textit{Since the transformation to row-echelon form involves only linear operations, we can extend these operations to the entire PDE system~\eqref{eq:IVP}, which then takes the schematic form}
\begin{align}
    \tilde{\mathcal{K}}\, \PD{u^{A}}{\chi} = \text{terms known on } \Sigma\,.
\end{align}
\textit{Explicitly, this becomes}
\begin{align}
    \begin{pmatrix}
        \tilde{\mathcal{K}}_{11} & \tilde{\mathcal{K}}_{12} & \tilde{\mathcal{K}}_{13} & \cdots & \tilde{\mathcal{K}}_{1(m-1)} & \tilde{\mathcal{K}}_{1 m} \\
        0 & \tilde{\mathcal{K}}_{22} & \tilde{\mathcal{K}}_{23} & \cdots & \tilde{\mathcal{K}}_{2(m-1)} & \tilde{\mathcal{K}}_{2m} \\
        \vdots & & \ddots & & \ddots & \vdots \\
        0 & 0 & 0 & \cdots & 0 & \tilde{\mathcal{K}}_{rm} \\ \hline
        0 & 0 & 0 & \cdots & 0 & 0 \\
        \vdots & \vdots & \vdots & \cdots & \vdots & \vdots \\
        0 & 0 & 0 & \cdots & 0 & 0
    \end{pmatrix}
    \begin{pmatrix}
        \partial_\chi u^1\\
        \partial_\chi u^2\\
        \vdots\\
        \partial_\chi u^r \\
        \partial_\chi u^{r+1} \\
        \vdots\\
        \partial_\chi u^{m}
    \end{pmatrix}
    = \text{terms known on } \Sigma\,.
\end{align}
\textit{From this form, we see that only the first $r$ equations determine the $\chi$-derivatives of $u^{A}$. The remaining $m-r$ equations are either lower-order constraints on the initial data or trivial identities (e.g. $0=0$). Thus, the degeneracy of $\mathcal{K}$ directly implies the presence of hidden constraints or redundancies in the PDE system.}

\textit{These considerations extend naturally to higher-order systems. In the case of second-order systems, the matrix $\mathcal{K}$, which multiplies the second-order time derivatives, is often referred to as the \emph{kinetic matrix}. The techniques discussed here have been applied to kinetic matrices in $f(\bbQ)$ gravity in~\cite{DAmbrosio:2023}, and we refer the reader to that work for further details.}  

\textit{With this, we are now equipped to answer the question posed at the beginning of this subsection: How can we systematically determine whether identities and/or integrability conditions arise when prolonging a given equation $\R_q$? The key lies in analyzing the matrix that multiplies the highest-order derivatives.}\medskip

By extending our analysis from quasi-linear first-order PDEs to \emph{arbitrary} PDEs and considering all highest-order derivatives---not just those in the ``time'' direction---we arrive at the following definition:

\begin{definition}[The symbol $\S_q$ of $\R_q$]\label{def:Symbol}
    Let $\R_q$ be described by the equation $\bbE^\tau(x^\mu, v^{A}, p^{A}_{\bfm})=0$ for $\tau \in \{1, \dots, \ell\}$. The \textbf{symbol} $\S_q$ of $\R_q$ is then defined as the solution space to the following system of linear equations in the unknowns $\xi^{A}_{\bfm}$:
    \begin{align*}
        \S_q: \left\{\sum_{A=1}^{m} \sum_{|\bfm|=q} \frac{\partial \bbE^\tau}{\partial p^{A}_{\bfm}} \xi^{A}_{\bfm} = 0\right. \quad \text{for all} \quad \tau \in \{1, \dots, \ell\}.
    \end{align*}
    By abuse of language, we also refer to the matrix $\frac{\partial \bbE^\tau}{\partial p^{A}_{\bfm}}$ as the \textbf{symbol} and denote it by $\S_q$.
\end{definition}

To better understand the matrix $\frac{\partial \bbE^\tau}{\partial p^{A}_{\bfm}}$ and the system of linear equations it defines, let us examine its size. In the matrix-vector product described above, the indices $A$ and $\bfm$ are summed over, while $\tau$ is free. Since the matrix-vector multiplication produces a column vector, we infer that $\tau$ labels the rows of that vector, and there are $\ell$ rows in total. Therefore, the matrix $\frac{\partial \bbE^\tau}{\partial p^{A}_{\bfm}}$ has $\ell$ rows (i.e., there are as many rows as algebraically independent equations).

The index $A$ corresponds to the components of the field. For example, if $v^{A}$ is a scalar field, then $A=1$; if $v^{A}$ is a vector field, then $A \in \{1, \dots, n\}$; for a symmetric $(0,2)$ tensor field, $A \in \{1, \dots, \frac{n(n+1)}{2}\}$, and so on. The multi-index $\bfm$ represents all partial derivatives of order $q$. In $n$ dimensions, the number of independent $q$-th order derivatives is given by the binomial coefficient:
\begin{align*}
    \binom{n-1+q}{n-1}.
\end{align*}
Thus, the vector $\xi^{A}_{\bfm}$, where $|\bfm|=q$, has a total of $m \binom{n-1+q}{n-1}$ components, which means the matrix must have this many columns.

Taking these considerations into account, we conclude that the symbol is a matrix of size $\ell \times m \binom{n-1+q}{n-1}$. By experimenting with different values, we observe that $m \binom{n-1+q}{n-1}$ is generally larger than $\ell$, except in the special case of a single scalar field ($m=1$) in one dimension ($n=1$), which satisfies a single equation ($\ell=1$) of order one ($q=1$). This indicates that, in most cases, the symbol matrix is longer (more columns) than it is tall (number of rows).

Another way to state this fact is that the system of linear equations describing the symbol contains more variables than equations. This observation is crucial, as we will later be interested in the size of the solution space of $\S_q$. From linear algebra, we recall that the size, or dimension, of the solution space to a system of linear equations is determined by the number of variables that can be freely chosen to specify a particular solution. Since there are always fewer equations than variables in our case\footnote{Because the number of equations $\ell$ is in general smaller than the number of variables $m\binom{n-1+q}{n-1}$.}, there will always be some undetermined variables that we can freely choose to fix a particular solution.

In the best case, the symbol has maximal rank, i.e., $\rank \S_q = \ell$, which implies that all equations of order $q$ are independent. In this case, there are exactly $m \binom{n-1 +q}{n-1} - \ell$ free variables. On the other end of the spectrum, if $\S_q$ has minimal rank\footnote{Note that $\S_q$ cannot have a rank of zero, because of how $\S_q$ is constructed. In fact, $\rank \S_q = 0$ would imply that the $q$-th order PDE $\R_q$ \textbf{does not} contain any equations of order $q$, which is a contradiction.}, its rank would be~$1$. In this case, there are $m \binom{n-1 +q}{n-1} - 1$ free variables. The general case is described by the following formula:
\begin{align}\label{eq:DimOfSq}
    \dim \S_q = m \binom{n-1 +q}{n-1} - \rank \S_q \,.
\end{align}
For future reference we note that the rank of $\S_q$ measures how many independent $q$-th order equations are present in $\R_q$. This number can be equal or lower to the number of algebraically independent equations in $\R_q$. That is because $\R_q$ may contain independent equations of order less than $q$. At this point, it is convenient to introduce the notion of \emph{principal} and \emph{parametric derivatives}.

\begin{definition}[Principal and parametric derivatives]\label{def:PrincipalParametricDerivatives}
    Given an equation $\R_q$, we call every $q$-th order jet variable we can solve for a \textbf{principal derivative}. All other jet variables, namely the ones that cannot be solved for, are called \textbf{parametric derivatives}.
\end{definition}

The concept of principal and parametric derivatives is best described using an analogy and an example. For the analogy, consider the linear system of equations
\begin{align}
    \begin{cases}
        2x - z &= 3 \\
        x + y &= 1
    \end{cases}\,,
\end{align}
which can be solved for $x$ and $y$:
\begin{align}
    x &= \frac32 + \frac12 z\notag\\
    y &= -\frac12 - \frac12 z\,.
\end{align}
Here, $x$ and $y$ are analogous to principal derivatives, because we were able to solve for them. On the other hand, $z$ corresponds to a parametric derivative, because it appears on the right hand side of the solved-for variables. In this sense, $z$ parametrizes the solution space: for every choice of $z$, we obtain a distinct solution to the linear system of equations. Also, note that in matrix-vector notation we would say that the above system of linear equations is described by a matrix of rank $2$ and that the solution space is $1$-dimensional. This is where a connection to the symbol $\S_q$ appears: The rank of the symbol is analogous to the rank of the matrix of the linear system, telling us how many independent equations we have to work with. On the other hand, the dimension of the symbol is analogous to the dimension of the solution space of the linear system. It tell us how many variables are left to parametrize the solution space. 

These considerations are reinforced by the following example based on PDEs.

\begin{example}[Principal and parametric derivatives]
    We work in three dimensions and we consider a scalar field $\Phi$ governed by the following system of non-linear partial differential equations:
    \begin{align*}
        \R_2 : 
        \begin{cases}
            \Phi_{xx} + \Phi_{yy} - \Phi_x \Phi_y &= 0\\
            \Phi_{xy} - \Phi_{yz} + \Phi &= 0
        \end{cases}\,.
    \end{align*}
    We can solve this system, for instance, for $\Phi_{xx}$ and $\Phi_{xy}$, thus obtaining
    \begin{align*}
        \Phi_{xx} &= - \Phi_{yy} + \Phi_x \Phi_y \\
        \Phi_{xy} &= \Phi_{yz} - \Phi\,.
    \end{align*}
    The jet variables $\Phi_{xx}$ and $\Phi_{xy}$ are principal derivatives, while $\Phi_{xz}$, $\Phi_{yy}$, $\Phi_{yz}$, and $\Phi_{zz}$ are parametric derivatives. We could of course also have selected other variables, such as $\Phi_{yy}$ and $\Phi_{yz}$, to play the role of principal derivatives. Observe, however, that $\Phi_{xz}$ and $\Phi_{zz}$ do not appear in the equations. Therefore, they are always parametric derivatives.

    More generally, we can determine the symbol of this system and conclude that its rank is~$2$. Thus we can always solve for two of the highest order derivatives, which is the same as saying that we can always select two such derivatives as principal derivatives. 

    The solution space of the symbol turns out to have dimension
    \begin{align*}
        \dim \S_2 = \binom{3-1+2}{3-1} - \rank \S_2 = 6-2 = 4\,.
    \end{align*}
    Thus, as expected, there are four parametric derivatives which, as the name suggests, parametrize the solution space of the symbol. 
\end{example}
We reiterate the crucial point that for an equation $\R_q$ with symbol $\S_q$ there are precisely $\rank\S_q$ principal derivatives and $\dim\S_q$ parametric derivatives. Looking back at our interlude on quasi-linear first order partial differential equations, we realize that the matrix $\mathcal{K}$ corresponds to a part of the symbol $\S_1$. Moreover, from the row-echelon form $\tilde{\mathcal{K}}$, which has non-maximal rank, we can see the intuitive meaning of $\rank \S_q < \ell$: Whenever the rank of $\S_q$ is \emph{not maximal}, there are highest order derivatives we \emph{cannot} solve for and which are also \emph{not determined by the initial data}. Moreover, this also implies the presence of constraints or identities. 

Since the rank of $\S_q$ is related to the dimension of $\S_q$ via equation~\eqref{eq:DimOfSq}, we can also use $\dim\S_q$ as an indicator for occurrences of integrability conditions or identities. This will be the content of Theorem~\ref{thm:DimR1q}. Before we discuss this result, however, it is instructive to examine some concrete examples of symbols and determine the symbol of a prolonged equation. We begin with a simple example:

\begin{example}[The symbol of the non-linear equation in~\ref{ex:NonFiberedProlongation}]\label{ex:SymbolOfNonLinearEquation}
For clarity, we repeat the system of PDEs from Example~\ref{ex:NonFiberedProlongation}:
    \begin{align*}
        \R_2:
        \begin{cases}
            \Phi_{xx} - \frac12 \left(\Phi_{yy}\right)^2 &= 0\\
            \Phi_{yy} - \Phi_{xy}  &= 0
        \end{cases}\,.
    \end{align*}
    For this system, $\bbE^\tau$ is defined as
    \begin{align*}
        \bbE^{1} &= \Phi_{xx} - \frac12 \left(\Phi_{yy}\right)^2\\
        \bbE^{2} &= \Phi_{yy} - \Phi_{xy}\,.
    \end{align*}
    According to Definition~\ref{def:Symbol}, we need to take derivatives of $\bbE^\tau$ with respect to the \textit{highest-order} jet variables. In this case, these variables are
    \begin{align*}
        (\Phi_{xx}, \Phi_{xy}, \Phi_{yy})\,.
    \end{align*}
    We differentiate in the following order: first with respect to $\Phi_{xx}$, then with respect to $\Phi_{xy}$, and finally with respect to $\Phi_{yy}$. This results in the following symbol:
    \begin{align*}
        \S_2 = 
        \begin{pmatrix}
            \displaystyle\PD{\bbE^{1}}{\Phi_{xx}} & \displaystyle\PD{\bbE^{1}}{\Phi_{xy}} & \displaystyle\PD{\bbE^{1}}{\Phi_{yy}} \\[15pt]
            \displaystyle\PD{\bbE^{2}}{\Phi_{xx}} & \displaystyle\PD{\bbE^{2}}{\Phi_{xy}} & \displaystyle\PD{\bbE^{2}}{\Phi_{yy}}
        \end{pmatrix}
        =
        \begin{pmatrix}
            1 & 0 & -\Phi_{yy} \\[5pt]
            0 & -1 & 1
        \end{pmatrix}\,.
    \end{align*}
    Notice that changing the order of the highest-order jet variables corresponds to swapping the columns of the symbol. We do have the freedom to alter the ordering of jet variables (and thus the columns), and we will revisit this flexibility later. Regardless of the ordering, however, the rank of the symbol remains unaffected. For any ordering, we have
    \begin{align*}
        \rank \S_2 = 2\,.
    \end{align*}
    Clearly, this is consistent with the fact that $\rank \S_2$ measures the number of independent equations of order $2$, which is the same as the number of principal derivatives. Evidently, the $\R_2$ in this example contains exactly two such equations. Furthermore, for the dimension we obtain
    \begin{align}
        \dim\S_2 = \binom{2-1+2}{2-1} - \rank\S_2 = 3 - 2 = 1\,.
    \end{align}
    Indeed, there is only one parametric derivative spanning the solution space of $\S_2$.
\end{example}\medskip

Next, we consider Maxwell's and Proca's equations. These equations are not only more physically relevant, but we will also encounter them repeatedly, as their familiarity will help illustrate several novel concepts.

\begin{example}[The symbol of Maxwell's and Proca's equations]\label{ex:SymbolMaxwellProca}
    In Example~\ref{ex:R1qNotEqualR1}, we already encountered Proca's equations, which can be written as
    \begin{align*}
        \R_2 : 
        \begin{cases}
            \partial_\mu \left(\partial^\nu A^\mu - \partial^\mu A^\nu\right) + m^2\, A^\nu = 0\,.
        \end{cases}
    \end{align*}
    If we set the photon mass $m$ to zero, we recover Maxwell's equations. However, when it comes to computing the symbol, we do not need to distinguish between Proca and Maxwell, as both equations share the same symbol. This is because the symbol only depends on the highest-order derivatives, which in both cases are contained in the term $\partial_\mu \left(\partial^\nu A^\mu - \partial^\mu A^\nu\right)$.

    To compute the symbol, we need to differentiate this term with respect to all possible second-order derivatives $\partial_\alpha \partial_\beta A^\gamma$. Sticking to index notation, we find the symbol:
    \begin{align*}
        \S_2 = \delta\ud{\nu}{\gamma} \eta^{\alpha\beta} - \frac12 \left(\delta\ud{\beta}{\gamma}\eta^{\alpha\nu} + \delta\ud{\alpha}{\gamma}\eta^{\beta\nu}\right)\,.
    \end{align*}
    This is a rather abstract expression, which hides important properties of the matrix, such as its dimensions and rank, which we discussed earlier. In $n$ spacetime dimensions, we expect the symbol to have $n$ rows and $n\binom{n-1+2}{n-1} = \frac12 n^2 (n+1)$ columns. In the physically relevant case of $n=4$ spacetime dimensions, this results in a matrix of size $4 \times 40$.

    Given these large numbers, determining the rank of $\S_2$ may seem daunting. However, it turns out to be relatively straightforward. First, we observe that $\nu$ labels the equations in $\R_2$, which corresponds to the rows of the matrix $\S_2$. The indices $\alpha$ and $\beta$ belong to the multi-index $\bfm$ and indicate which second-order partial derivative we are considering. Finally, the index $\gamma$ corresponds to the index $A$, which specifies which component of the vector field $A^\gamma$ we are referring to. Together, the indices $\alpha$, $\beta$, and $\gamma$ specify a particular element $\partial_\alpha \partial_\beta A^\gamma$ in the symbol, or equivalently, a column in $\S_2$.

    As in Example~\ref{ex:SymbolOfNonLinearEquation}, we must choose an ordering for the second-order jet variables, i.e., for the variables of the form $\partial_\alpha \partial_\beta A^\gamma$. We select a different ordering than in the previous example and fix $n=4$ to simplify the notation:
    \begin{align*}
        (\partial_1 \partial_1 A^1, \partial_2 \partial_2 A^2, \partial_3 \partial_3 A^3, \partial_4 \partial_4 A^4, \dots)\,,
    \end{align*}
    where the dots represent the remaining 36 variables. Their ordering is not relevant to us, as we have chosen a convenient ordering for the first four jet variables, which allows us to directly read off the rank of the symbol. Specifically, for the first four variables, we have $\alpha=\beta=\gamma$, which reduces the first four columns of the symbol to:
    \begin{align*}
        \S_2 = \eta^{\alpha \alpha}\,.
    \end{align*}
    More explicitly and correctly, this is:
    \begin{equation}
        \S_2 = \bordermatrix{ 
                  & \partial_1 \partial_1 A^1 & \partial_2 \partial_2 A^2 & \partial_3 \partial_3 A^3 & \partial_4 \partial_4 A^4 & \dots\cr
          \nu = 1 & -1 & 0 & 0 & 0 & \dots\cr
          \nu = 2 & 0 & 1 & 0 & 0 & \dots \cr
          \nu = 3 & 0 & 0 & 1 & 0 & \dots \cr
          \nu = 4 & 0 & 0 & 0 & 1 & \dots
          }\,.
    \end{equation}
    This form clearly indicates that there are additional columns, which we have suppressed. Thanks to our clever choice of column ordering, we can now directly read off the rank of $\S_2$:
    \begin{align*}
        \rank \S_2 = 4\,.
    \end{align*}
    Since this index computation is valid in any dimension $n$, and always results in the first $n$ columns of $\S_2$ being diagonal elements $\eta^{\alpha \beta}$, we conclude that the rank of $\S_2$ is:
    \begin{align*}
        \rank \S_2 = n
    \end{align*}
    in any dimension $n$, as the Minkowski metric has rank $n$. This number is again consistent with the fact that $\rank\S_2$ measures the number of independent second-order equations.
\end{example}

As anticipated, the symbol provides valuable information about integrability conditions or constraint equations. In fact, we can explicitly construct these integrability conditions using only linear algebra. To that end, the following theorem~\cite{SeilerBook} is crucial.

\begin{theorem}[Dimension of $\R^{(1)}_q$]\label{thm:DimR1q}
    If the solution space of $\S_{q+1}$ is a vector bundle, then
    \begin{align}
        \dim \R^{(1)}_q = \dim \R_{q+1} - \dim \S_{q+1}\,
    \end{align}
    at every point in the jet bundle. In other words, the dimension of $\R^{(1)}_q$, which is obtained by projecting $\R_{q+1}$ back into $J_q\E$, is equal to the dimension of $\R_{q+1}$ minus the number of freely specifiable jet variables of order $q+1$.
\end{theorem}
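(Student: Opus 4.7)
The plan is to exhibit the projection $\pi^{q+1}_q|_{\R_{q+1}}:\R_{q+1}\to \R^{(1)}_q$ as a locally trivial affine bundle whose fibers have dimension $\dim\S_{q+1}$, and then invoke additivity of dimensions in a fiber bundle.

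First, I would unpack the local defining equations of $\R_{q+1}$. By Definitions~\ref{def:Prolongation} and~\ref{def:Di}, $\R_{q+1}$ is locally cut out by $\bbE^\tau = 0$ (which depends only on jet coordinates of order $\leq q$) together with $D_\mu \bbE^\tau = 0$. Inspecting Definition~\ref{def:Di}, each $D_\mu \bbE^\tau$ is affine linear in the $(q+1)$-st order jet variables $p^A_{\bfm}$ with $|\bfm|=q+1$ once the lower-order coordinates are treated as parameters; the coefficients $\partial \bbE^\tau / \partial p^A_{\bfm-1_\mu}$ are functions on $J_q\E$ alone.

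Second, I would analyze the fiber of $\pi^{q+1}_q|_{\R_{q+1}}$ above an arbitrary $y \in \R^{(1)}_q$. By the very definition of projection this fiber is nonempty. Because the defining equations are affine in the top-order jet coordinates, the fiber is an affine subspace of $(\pi^{q+1}_q)^{-1}(y)$. Its underlying vector space is the kernel of the associated homogeneous linear system which, comparing with Definition~\ref{def:Symbol}, is precisely the symbol solution space $\S_{q+1}|_y$. Hence each fiber has dimension $\dim\S_{q+1}|_y$.

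Third, I would invoke the hypothesis: the assumption that the solution space of $\S_{q+1}$ is a vector bundle over $\R^{(1)}_q$ is exactly the statement that the dimensions $\dim\S_{q+1}|_y$ are locally constant and that the kernels glue smoothly. This promotes the pointwise description above to a locally trivial affine bundle $\R_{q+1}\to \R^{(1)}_q$, and additivity of dimensions yields $\dim \R_{q+1} = \dim \R^{(1)}_q + \dim \S_{q+1}$, which is the claim after rearrangement.

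The main technical point is the constant-rank assumption. Without it, the kernel dimension can jump across $\R^{(1)}_q$ and the image itself can fail to be a smooth fibered submanifold, a phenomenon closely analogous to Example~\ref{ex:NonFiberedSubmanifold}. The vector-bundle hypothesis is precisely what sidesteps this pathology and makes the elementary counting argument rigorous at every point; in practice it is verified by a uniform rank computation on the symbol matrix.
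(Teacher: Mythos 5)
Your argument is correct, but it proceeds along a genuinely different route than the paper's. You exhibit the restricted projection $\pi^{q+1}_q|_{\R_{q+1}}:\R_{q+1}\to\R^{(1)}_q$ as an affine bundle: since each $D_\mu\bbE^\tau$ is affine in the order-$(q+1)$ jet variables with linear part given exactly by the symbol matrix of Definition~\ref{def:Symbol}, every nonempty fiber is an affine space modeled on $\S_{q+1}$, and the vector-bundle hypothesis supplies the constant-rank and local-triviality needed to conclude $\dim\R_{q+1}=\dim\R^{(1)}_q+\dim\S_{q+1}$ by additivity of dimensions. The paper instead works with the block decomposition of the Jacobian of $\R_{q+1}$, computes $\dim\R_q+\dim\S_{q+1}$ explicitly via a binomial identity, and introduces the rank defect $\Delta=n\ell-\rank\S_{q+1}$, arriving at $\dim\R_q-\Delta=\dim\R_{q+1}-\dim\S_{q+1}$ before identifying $\Delta$ with the number of integrability conditions through a case distinction. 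Your approach is cleaner and more geometric — it makes the role of the vector-bundle hypothesis transparent and avoids the combinatorics and the case split entirely — and it is essentially the standard argument in the formal theory of PDEs. What the paper's computational route buys in exchange is the explicit quantity $\Delta$, which feeds directly into Corollary~\ref{cor:CriterionForIntCond} and into the recipe for constructing integrability conditions by recording the row operations on the symbol; your proof establishes the dimension formula but would require a small additional remark to extract that constructive content. One point worth making explicit in your write-up: the affine structure in the top-order variables is a property of \emph{prolonged} equations (it follows from Definition~\ref{def:Di}), not of a general nonlinear $\R_{q+1}$, and the coefficients of the linear part depend only on jet variables of order $\leq q$, which is what guarantees the symbol is constant along each fiber of the projection.
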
\medskip
The proof of this theorem provides valuable insight into the symbol, the dimension of its solution space, and related concepts.\medskip

\begin{proof}
    This theorem concerns the dimensions of three different spaces: the submanifolds $\R_{q+1}$ and $\R^{(1)}_q$, and the solution space $\S_{q+1}$. Suppose the original PDE is given by
    \begin{align}
        \R_q : \left\{ \bbE^\tau(x^\mu, v^A, p^A_{\bfm}) = 0 \right.\,.
    \end{align}
    The prolonged PDE that defines the submanifold $\R_{q+1}$ is
    \begin{align}
        \R_{q+1}:
        \begin{cases}
            \bbE^\tau(x^\mu, v^A, p^A_{\bfm}) &= 0 \\
            D_\nu \bbE^\tau(x^\mu, v^A, p^A_{\bfm}) &= 0
        \end{cases}\,.
    \end{align}
    The submanifold $\R^{(1)}_q$ is obtained by projecting $\R_{q+1}$. It may or may not coincide with the original PDE $\R_q$. Whether $\R^{(1)}_q = \R_q$ depends on whether integrability conditions and/or identities arise during the prolongation. The solution space $\S_{q+1}$ is defined by the linear equation
    \begin{align}\label{eq:LinEq}
        \S_{q+1} : \left\{ \sum_{A=1}^{m} \sum_{|\bfm|=q+1} \frac{\partial D_\nu \bbE^\tau}{\partial p^A_{\bfm}} \xi^A_{\bfm} = 0 \right.\,,
    \end{align}
    where $\frac{\partial D_\nu \bbE^\tau}{\partial p^A_{\bfm}}$ represents the symbol of $\R_{q+1}$. According to equation~\eqref{eq:DimOfSq}, the dimension of $\S_{q+1}$ is given by
    \begin{align}
        \dim \S_{q+1} = m 
        \begin{pmatrix}
            n + q \\ 
            n - 1
        \end{pmatrix}
        - \rank \S_{q+1}\,.
    \end{align}

    To clarify, the dimension measures how many of the highest-order jet variables can be freely specified. In other words, the total number of highest-order jet variables, minus the number of independent equations of order $q$, tells us how many of these variables can be solved for in the linear equation~\eqref{eq:LinEq}.

    For the dimension of $\R_{q+1}$, we recall that Definition~\ref{def:DimRq} applies. Denote the number of independent equations in $\R_q$ by $\ell$. The prolonged equation $\R_{q+1}$ then contains $\ell + n \ell$ independent equations, because the prolongation produces $n\ell$ new equations of order $q+1$. Then, according to Definition~\ref{def:DimRq}, we have
    \begin{align}\label{eq:DimRqp1}
    \dim \R_{q+1} = m \binom{n+q+1}{q+1} - \ell - n \ell\,.
    \end{align}

    Since $\R_{q+1}$ defines a submanifold of $J_{q+1}\E$, we can also determine its dimension in another way. Recall from elementary calculus that the dimension of a submanifold is the same as the number of independent functions required to describe it locally. This number is given by the rank of the Jacobian matrix. The Jacobian of $\R_{q+1}$ is obtained by taking derivatives of $\bbE^\tau$ and $D_\nu \bbE^\tau$ with respect to $p^{A}_{\bfm}$ for $|\bfm|\leq q+1$, and with respect to $v^{A}$. We find that the Jacobian naturally compartmentalizes into six distinct blocks:
    \begin{align}
    \left( \begin{array}{|c|c|c|} \hline
    & & \\[-5pt]
        \boldsymbol{0} & \displaystyle \PD{\bbE^\tau}{p^{A}_{\bfm}} \quad \text{with }|\bfm|\leq q & \displaystyle \PD{\bbE^\tau}{v^{A}} \\
        & & \\[-5pt] \hline
        & & \\[-5pt]
        \displaystyle \PD{D_\mu \bbE^\tau}{p^{A}_{\bfm}} \quad \text{with } |\bfm| = q+1 & \displaystyle \PD{D_\mu \bbE^\tau}{p^{A}_{\bfm}} \quad \text{with } |\bfm|\leq q & \phantom{.} \quad \displaystyle \PD{D_\mu \bbE^\tau}{v^{A}} \qquad \phantom{.} \\[-5pt]
        & & \\ \hline
    \end{array} \right)\,.
    \end{align}
    The three upper blocks are obtained from the original PDE $\bbE^\tau$ by taking derivatives with respect to $v^{A}$ and the jet variables $p^{A}_{\bfm}$ up to and including order $|\bfm| = q$. Since the original equation is of order $q$, the left upper block is necessarily filled with zeros. However, the prolonged equation is of order $q+1$, which is why the left lower block is non-zero. Notice that this block contains precisely the symbol of the equation $\R_{q+1}$ (see equation~\eqref{eq:LinEq}).

    Now, we show that the dimension of $\R_{q+1}$ can be written in terms of the dimensions of two other spaces: $\R_q$ and $\S_{q+1}$. To see this, observe that the middle and right blocks in the top row of the Jacobian are simply the Jacobian of $\R_q$. As mentioned earlier, the bottom left block represents the symbol $\S_{q+1}$ of the prolonged equation. Therefore, we have
    \begin{align}\label{eq:RelatingDimensions}
    \dim \R_q + \dim \S_{q+1} &=  m \binom{n+q}{q} - \ell + m \binom{n+q}{n-1} - \rank \S_{q+1} \notag\\
    &= m \left( \frac{(n+q)!}{q!\,n!} + \frac{(n+q)!}{(n-1)!(q+1)!} \right) - \ell - \rank \S_{q+1} \notag\\
    &= m \frac{(n+q)!}{(q+1)!(n-1)!} \left( \frac{q+1}{n} + 1 \right) - \ell - \rank \S_{q+1} \notag\\
    &= m \frac{(n+q)!}{(q+1)!(n-1)!} \frac{n+q+1}{n} - \ell - \rank \S_{q+1} \notag\\
    &= m \binom{n+q+1}{q+1} - \ell - n\ell + \Delta = \dim \R_{q+1} + \Delta\,.
    \end{align}
    To get to the last line, we used that $\rank \S_{q+1} = n\ell - \Delta$, where $\Delta$ measures the deviation from the maximal possible rank. Finally, we also used equation~\eqref{eq:DimRqp1}. After rearranging this equation, we obtain
    \begin{align}\label{eq:IntermediateStep}
        \dim \R_q - \Delta = \dim \R_{q+1} - \dim \S_{q+1}\,,
    \end{align}
    which looks almost like what we are trying to prove. The difference is that Theorem~\ref{thm:DimR1q} involves $\dim\R^{(1)}_{q}$, rather than $\dim\R_q$ and $\Delta$. To relate the former to the latter two quantities, we need to introduce two case distinctions. In the first case, we assume that $\S_{q+1}$ has maximal rank. If this is the case, then $\Delta = 0$ and all rows of the lower left block of the Jacobian are linearly independent, which means it is impossible to construct algebraically independent equations of order less than $q+1$. As we know, this means that projecting $\R_{q+1}$ back into $J_q\E$ gives us $\R^{(1)}_q = \R_q$. Thus, equation~\eqref{eq:IntermediateStep} turns into
    \begin{align}
        \dim \R^{(1)}_q = \dim \R_{q+1} -\dim \S_{q+1}\,.
    \end{align}
    This is precisely what we set out to prove. In the second scenario, the symbol $\S_{q+1}$ \textit{does not} have maximal rank. Thus, not all rows appearing in the left lower block of the Jacobian are linearly independent! What this means, is that we can perform linear operations which turn at least one of the rows (at most all except one row) into a bunch of zeros. In turn this means that not all equations of order $q+1$ were independent of each other (recall Example~\ref{ex:R1qNotEqualR1} where we showed that this happens for the Proca equation). In particular, we now have $\Delta > 0$. Since $\Delta$ is the difference between $n \ell$ and the number of independent equation of order $q+1$, it measures how many of the $n\ell$ equations generated by the prolongation fail to actually be of order $q+1$. In other words, $\Delta$ turns into a measure for the number of hidden integrability conditions. Thus, \begin{align}
        \dim\R_{q} - \Delta = \dim \R^{(1)}_q
    \end{align}
    by definition. We finally conclude that
    \begin{align}
        \dim \R^{(1)}_q = \dim\R_{q+1} - \dim \S_{q+1}\,.
    \end{align}
    Since the solution space of $\S_{q+1}$ is a vector bundle, this result actually holds at every point. 
\end{proof}\medskip

From Theorem~\ref{thm:DimR1q} and its proof we can deduce further useful consequences. First of all, the proof shows us how integrability conditions and/or identities can be constructed: In order to determine the rank of $\S_{q+1}$, it is necessary to perform linear operations on its matrix. By recording these linear operations and then applying them to $\R_{q+1}$, one can directly construct integrability conditions or identities, if they occur. 

Moreover, we saw how the dimensions of $\R_{q}$, $\R_{q+1}$, and $\S_{q+1}$ are related to each other and what role the rank of $\S_{q+1}$ in determining the occurrence of integrability conditions plays. From these relations we deduce the following Corollary. \medskip

\begin{corollary}[Criterion for presence of integrability conditions]\label{cor:CriterionForIntCond}
    Let $\R_q$ be a PDE of order $q$, and let $\S_{q+1}$ be the symbol of its prolongation $\R_{q+1}$. Then:
    \begin{enumerate}
        \item If $\rank \S_{q+1} = n \ell$, then no integrability conditions arise, and $\R^{(1)}_q = \R_q$.
        \item If $\rank \S_{q+1} < n \ell$, then integrability conditions or identities appear, leading to $\R^{(1)}_q \neq \R_q$.
        \item The number of integrability conditions (or identities) is given by $n\ell - \rank \S_{q+1}$.
    \end{enumerate}
\end{corollary}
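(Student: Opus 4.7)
The plan is to reduce the corollary to arithmetic bookkeeping that is already essentially carried out inside the proof of Theorem~\ref{thm:DimR1q}. The key idea is that the integer
\[
\Delta \ce n\ell - \rank\S_{q+1}
\]
measures exactly the defect of the symbol from having maximal rank, and this defect is precisely what governs the difference between $\R_q$ and $\R^{(1)}_q$. So I would first observe that, by construction, $\R^{(1)}_q \subseteq \R_q$ as subspaces of $J_q\E$, since the projection $\pi^{q+1}_q$ applied to the prolonged system can only add (never remove) independent equations of order at most~$q$. Equality or strict inclusion is therefore controlled by the dimensions alone: $\R^{(1)}_q = \R_q$ if and only if $\dim\R^{(1)}_q = \dim\R_q$.

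Next I would compute $\dim\R_q - \dim\R^{(1)}_q$ directly. Using Theorem~\ref{thm:DimR1q} I can replace $\dim\R^{(1)}_q$ by $\dim\R_{q+1} - \dim\S_{q+1}$, and then substitute the dimension formulas
\[
\dim\R_q = m\binom{n+q}{q} - \ell, \qquad \dim\R_{q+1} = m\binom{n+q+1}{q+1} - \ell - n\ell,
\]
together with $\dim\S_{q+1} = m\binom{n+q}{n-1} - \rank\S_{q+1}$. The same combinatorial identity used in equation~\eqref{eq:RelatingDimensions}, namely
\[
\binom{n+q}{q} + \binom{n+q}{n-1} = \binom{n+q+1}{q+1},
\]
then produces the clean cancellation
\[
\dim\R_q - \dim\R^{(1)}_q = n\ell - \rank\S_{q+1} = \Delta.
\]

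With this identity in hand the three claims follow immediately. For (1), if $\rank\S_{q+1} = n\ell$ then $\Delta=0$, so $\dim\R^{(1)}_q = \dim\R_q$, and by the inclusion $\R^{(1)}_q\subseteq\R_q$ the two submanifolds must coincide. For (2), if $\rank\S_{q+1} < n\ell$ then $\Delta>0$, so $\dim\R^{(1)}_q < \dim\R_q$, which forces $\R^{(1)}_q\neq\R_q$ and certifies the appearance of at least one genuinely new equation of order at most~$q$—an integrability condition in the non-trivial case or an identity if the new equation is of the form $0=0$. For (3), each algebraically independent new equation reduces the dimension by exactly one, so the number of such equations is precisely $\dim\R_q - \dim\R^{(1)}_q = n\ell - \rank\S_{q+1}$.

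The only delicate point—and the step I would flag as the main obstacle—is justifying the distinction between genuine integrability conditions and trivial identities that is implicit in the statement. The dimension count cannot by itself separate these two phenomena; both reduce $\rank\S_{q+1}$ below $n\ell$ in the same way. To address this I would appeal to the explicit procedure described right after the proof of Theorem~\ref{thm:DimR1q}: the linear operations that expose the rank deficiency of $\S_{q+1}$, when applied to the full prolonged system $\R_{q+1}$, either produce a non-trivial equation of order at most~$q$ (an integrability condition, as illustrated by Proca in Example~\ref{ex:R1qNotEqualR1}) or a vacuous $0=0$ (an identity). Either way, the total count is $\Delta$, which is what the corollary asserts.
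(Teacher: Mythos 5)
Your proposal is correct and follows essentially the same route as the paper: both arguments reduce the corollary to the identity $\dim\R_q - \dim\R^{(1)}_q = n\ell - \rank\S_{q+1}$ established in (the proof of) Theorem~\ref{thm:DimR1q}, and both defer the explicit construction of the conditions, and the integrability-condition-versus-identity distinction, to the row operations on $\S_{q+1}$ described there. Your version merely spells out the dimension arithmetic and the inclusion $\R^{(1)}_q \subseteq \R_q$ more explicitly than the paper does.
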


\begin{proof}
    The proof follows directly from Theorem~\ref{thm:DimR1q} and its proof. If the rank of $\S_{q+1}$ is maximal, meaning $\rank \S_{q+1} = n \ell$, then all equations in $\R_{q+1}$ contribute new independent equations, and no additional conditions arise. In this case, $\R^{(1)}_q = \R_q$.

    However, if $\rank \S_{q+1} < n \ell$, then some of the equations in $\R_{q+1}$ must be dependent, indicating the presence of integrability conditions or identities. The number of such conditions is precisely the deficiency in rank, given by $n \ell - \rank \S_{q+1}$.

    Since these conditions result from the linear dependencies in $\S_{q+1}$, they can be systematically derived by performing row operations on the matrix representing $\S_{q+1}$, as outlined in the proof of Theorem~\ref{thm:DimR1q}. Applying these operations to the equations in $\R_{q+1}$ yields the explicit form of the integrability conditions or identities.
\end{proof}

To illustrate these mathematical results, we revisit Maxwell's and Proca's equations as examples.

\begin{example}[The dimension of $\R^{(1)}_2$ for Proca and Maxwell]\label{ex:DimR12ProcaAndMaxwell}
    From Example~\ref{ex:ProjectionMaxwell}, we know that Maxwell's equations do not contain any hidden integrability conditions. However, the same does not hold for Proca's equations, as seen in Example~\ref{ex:R1qNotEqualR1}. Let us verify these findings using Corollary~\ref{cor:CriterionForIntCond}. We work in $n$ spacetime dimensions and compute the dimensions of three different spaces.

    \paragraph{Maxwell's Equations.} We start with Maxwell's equations:
    \begin{align*}
        \dim \R_2 &= n \binom{n+2}{2}-n = \frac12 n^2 (n+3)\,, \\
        \dim \S_3 &= n \binom{n+2}{n-1} - \rank \S_3 = n \binom{n+2}{n-1} - n^2 = \frac16 n^2 (n+1)(n+2) - n^2\,, \\
        \dim \R_3 &= m \binom{n+q+1}{q+1} - n \ell - \ell = \frac16 (n+1)(n+2)(n+3) - n - n^2\,.
    \end{align*}
    Here, we used $m = n$, $\ell = n$, and $q = 2$. We find:
    \begin{align*}
        \dim \R^{(1)}_2 = \dim\R_3 - \dim\S_3 = \frac12 n^2 (n+3) = \dim\R_2\,.
    \end{align*}
    According to Corollary~\ref{cor:CriterionForIntCond}, this confirms that there are no integrability conditions.

    \paragraph{Proca's Equations.} Now, we turn to Proca's equations. In this case, we find:
    \begin{align*}
        \dim \R_2 &= n \binom{n+2}{2} - n = \frac12 n^2 (n+3)\,, \\
        \dim \S_3 &= n \binom{n+2}{n-1} - \rank \S_3 = n \binom{n+2}{n-1} - n^2 + 1 = \frac16 n^2 (n+1)(n+2) - n^2 + 1\,, \\
        \dim \R_3 &= m \binom{n+q+1}{q+1} - n \ell - \ell = \frac16 (n+1)(n+2)(n+3)- n - n^2\,.
    \end{align*}
    Observe that the dimensions of $\R_2$ and $\R_3$ are identical to those in the Maxwell case. The key difference lies in the dimension of $\S_3$, which stems from its rank. This discrepancy is easily explained: as noted in Example~\ref{ex:R1qNotEqualR1}, a specific linear combination of third-order terms reduces to a first-order expression. As a result, the rank of $\S_3$ decreases by one compared to the Maxwell case, since the rank of a matrix is sensitive to linear dependencies. From the dimensions computed above we obtain
    \begin{align*}
        \dim \R^{(1)}_2 = \dim\R_3 - \dim\S_3 = \frac12 n^2 (n+3) - 1 = \dim\R_2 - 1 < \dim\R_2\,.
    \end{align*}
    By Corollary~\ref{cor:CriterionForIntCond}, we conclude that one integrability condition must be present, which is consistent with our prior findings.
\end{example}

In what follows, the symbol will play a crucial role, revealing more information than we have already uncovered. To make this information more accessible, we can simplify the symbol's structure. This process consists of two steps: First, we exploit the freedom to rearrange the columns of the symbol, grouping them into distinct classes. Then, we apply the Gauss algorithm to bring the symbol into row-echelon form.

The classes mentioned above correspond to classes of multi-indices, defined as follows.

\begin{definition}[The class of a multi-index]\label{def:class}
    A multi-index $\bfm = [m_1, \dots, m_n]$ is said to be of \textbf{class} $k$ if its first non-zero entry is $m_k$.
\end{definition}\medskip

For example, in four dimensions, the multi-indices $\bfm_1 = [0,0,0,1]$, $\bfm_2 = [1,1,0,1]$, and $\bfm_3 = [0,3,4,1]$ belong to class $4$, class $1$, and class $2$, respectively. 

This notion naturally extends to jet variables $p^{A}_{\bfm}$. Consider a scalar field $\Phi$ and the multi-indices $\bfm_1$, $\bfm_2$, and $\bfm_3$ from above. Then, we classify the corresponding jet variables as follows:
\begin{align*}
    \partial_{\bfm_{1}}\Phi &= \partial_{x^4}\Phi&& \rightarrow &&\text{class $4$}\\
    \partial_{\bfm_{2}}\Phi &= \partial_{x^{1}}\partial_{x^{2}}\partial_{x^{4}}\Phi &&\rightarrow &&\text{class $1$}\\
    \partial_{\bfm_{3}}\Phi &= \partial^3_{x^{2}}\partial^4_{x^{3}}\partial_{x^{4}}\Phi &&\rightarrow &&\text{class $2$}\,.
\end{align*}

Since jet variables can be grouped into classes, it follows that the columns of the symbol matrix $\S_q$ can also be arranged according to the classes of $p^{A}_{\bfm}$ with $|\bfm| = q$. Let us now establish some basic properties of these classes.

First, we note that the number of classes always matches the dimension of the underlying manifold. That is, there are precisely $n$ classes. However, the size of each class (i.e., the number of jet variables $p^{A}_{\bfm}$ with $|\bfm| = q$ in each class) depends on the order $q$ and on the number of algebraically independent field components $m$. To make the discussion more concise, we focus on the case of a scalar field, where $m=1$. The general case follows by multiplying the numbers obtained by $m\geq 1$.

The simplest case is $q=1$, where each class contains exactly one element. For $q=2$, the distribution is as follows:
\begin{itemize}
    \item Class $n$: $1$ element
    \item Class $n-1$: $2$ elements
    \item Class $n-2$: $3$ elements
    \item $\dots$
    \item Class 1: $n$ elements
\end{itemize}
This pattern arises because the elements under consideration take the form $\partial_{x^{i}}\partial_{x^{j}}\Phi$. For an element to belong to class $n$, we must have $i=j=n$, as any other choice would place it in a lower class. Similarly, an element belongs to class $n-1$ if either $i=n$ and $j=n-1$, or $i=j=n-1$. Proceeding in this manner, we find that, in general, class $k$ contains $n+1-k$ elements.

As a consistency check, summing over all class sizes must yield the total number of jet variables of order $q=2$ for a scalar field. Indeed, we verify:
\begin{align}
    \underbrace{\sum_{k=1}^n (n+1-k)}_{\text{sum over class sizes}} 
    &= (n+1)\sum_{k=1}^n 1 - \sum_{k=1}^n k\notag \\
    &= n(n+1) - \frac{n(n+1)}{2}\notag \\
    &= \underbrace{\frac{n(n+1)}{2}}_{\text{total number of jet variables}}.
\end{align}

To generalize this result for arbitrary order $q$, we can interpret the multi-index $\bfm$ as a sequence of $q$ bins, each containing a number from $1$ to $n$. If a bin contains the number $k$, it indicates the presence of one occurrence of $x^k$ in the multi-index. Fixing the number of bins to be $q$ ensures that the multi-index has length $q$, corresponding to a $q$-th order derivative operator.

Next, we proceed with filling the bins. The first bin, marked by the red $\red{k}$ below, determines the class we are considering---specifically, class $k$. For the multi-index to belong to class $k$, all other bins must be filled with numbers greater than or equal to $k$. Since partial derivatives commute (i.e., the order in which we take derivatives does not matter), we can always arrange the bins in ascending order:
\begin{align}
    \underbrace{\Boxed{\prescript{}{\phantom{q}}{\red{k}}_{\phantom{q}}}\Boxed{\prescript{}{\phantom{q}}{\blue{k_1}}}\Boxed{\prescript{}{\phantom{q}}{\blue{k_2}}}\,\cdots\,\Boxed{\blue{k_{q-1}}}}_{q \text{ bins}}
    \quad \text{with } \red{k} \leq \blue{k_1} \leq \blue{k_2} \leq \dots \leq \blue{k_{q-1}} \leq n.
\end{align}
We refer to this as the \textbf{lexicographic ordering}\footnote{For instance, $\partial_{x^2}\partial_{x^1}\partial_{x^3}\Phi$ is \textit{not} in lexicographic order. However, swapping the first two derivatives yields $\partial_{x^1}\partial_{x^2}\partial_{x^3}\Phi$, which is indeed lexicographically ordered.}. Determining class sizes now becomes straightforward. For $k=n$, we see that all blue indices must be equal, i.e., $k_1 = k_2 = \dots = k_{q-1} = n$. Consequently, class $n$ always has size $1$, as there is only one possible assignment for the blue indices. 

For a general $k$, more possibilities arise. The key question then becomes: How many distinct values can the blue indices take?

Clearly, in $n$ dimensions, there are $n-k+1$ possible values for the blue~$k$'s, as these correspond to all values greater than or equal to~$\red{k}$. Moreover, the blue $k$'s define a differential operator of order $q-1$. Thus, our task reduces to counting how many distinct operators of the form
\begin{align}
    \partial_{\blue{x^{k_{1}}}}\partial_{\blue{x^{k_{2}}}}\cdots\partial_{\blue{x^{k_{q-1}}}}
\end{align}
can be constructed, given that the indices $k_i$ can only take $(n-k+1)$ distinct values (i.e., they live in an $(n-k+1)$-dimensional space). Since the number of columns in the symbol $\S_q$ equals the number of $q$-th order jet variables in $n$ dimensions, which is given by $\binom{n-1+q}{n-1}$, we can determine the class sizes by replacing $n$ with $n-k+1$ and $q$ with $q-1$. This yields the expression
\begin{align}\label{eq:ClassSize}
    \csize(\class k) = \binom{n+q-k-1}{n-k}\,.
\end{align}

As in our earlier example for $q=2$, we can verify that the sum of all class sizes recovers the total number of jet variables of order $q$:
\begin{align}\label{eq:SumOverAllClassSizes}
    \underbrace{\sum_{k=1}^n \binom{n+q-k-1}{n-k}}_{\text{sum of all class sizes}} 
    = \underbrace{\binom{n+q-1}{n-1}}_{\text{number of jet variables in } J_q\E}\,.
\end{align}

Introducing the change of variable $k\mapsto n-p$, we obtain an alternative expression for the class size:
\begin{align}
    \csize(\class(n-p)) = \frac{1}{p!} \prod_{i=1}^{p} (q+i-1)\,.
\end{align}
Notably, the right-hand side is a polynomial in $q$, whose degree increases with $p$. This leads to the sequence:
\begin{align}\label{eq:SequenceOfClassSizes}
    1 = \csize(\class n) \leq \csize(\class(n-1)) \leq \dots \leq \csize(\class 1) = \binom{n+q-2}{n-1}\,,
\end{align}
where equality holds only for $q=1$. 

Examining the derivations of equations~\eqref{eq:ClassSize} and~\eqref{eq:SequenceOfClassSizes}, we observe that they remain valid regardless of the number of field components. For a general field $v^{A}$ with $m$ components, the right-hand side of~\eqref{eq:ClassSize} is simply scaled by $m$:
\begin{align}\label{eq:GeneralClassSize}
    \csize(\class k) = m\binom{n+q-k-1}{n-k}\,.
\end{align}
Similarly, the sequence~\eqref{eq:SequenceOfClassSizes} extends to:
\begin{align}\label{eq:GeneralSequenceOfClassSizes}
    m = \csize(\class n) \leq \csize(\class(n-1)) \leq \dots \leq \csize(\class 1) = m\binom{n+q-2}{n-1}\,.
\end{align}


What these fundamental results reveal is that we can systematically group the columns of the symbol according to the multi-index classes, with the number of columns in each class given by~\eqref{eq:GeneralClassSize}. To maintain consistency, we adopt the convention of ordering these columns from left to right following the sequence in~\eqref{eq:GeneralSequenceOfClassSizes}.

The resulting structure of the symbol can then be schematically represented as follows:\footnote{The depicted structure corresponds to the case $m=1$ and $q=2$, but the underlying principle extends to arbitrary values of $m$ and $q$.}
\begin{align}\label{eq:GroupingSymbolColumnsByClasses}
\renewcommand\arraystretch{1.3}
\begin{blockarray}{ccccccccc}
\text{\small class $n$} & \multicolumn{3}{c}{\small \text{class}\, $n-1$} & \multicolumn{4}{c}{\small \text{class } $n-2$} & \phantom{\cdots} \cdots \phantom{\cdots} \\
\begin{block}{(c|ccc|cccc|c)}
* & * & * & * & * & * & * & * & \cdots\ * \ \cdots\\
* & * & * & * & * & * & * & * & \cdots\ * \ \cdots\\
* & * & * & * & * & * & * & * & \cdots\ * \ \cdots\\
\vdots & \vdots & \vdots & \vdots & \vdots & \vdots & \vdots & \vdots & \ddots \\
\end{block}
\end{blockarray}\,.
\end{align}
Here, the symbols $*$ serve as placeholders for the entries of the matrix. Organizing the symbol in this manner accomplishes the first step outlined earlier—rearranging its columns so that they are grouped into classes. The next step is to apply the Gauss algorithm to transform the symbol into row-echelon form. Since this procedure is always feasible, the symbol ultimately takes on a simplified structure, which can be schematically represented as follows:
\begin{align}\label{eq:SymbolInRowEchelonForm}
\renewcommand\arraystretch{1.3}
\begin{blockarray}{lcccccccccccc}
&&& \text{\small class $n$} & \multicolumn{3}{c}{\small \text{class}\, $n-1$} & \multicolumn{4}{c}{\small \text{class } $n-2$} & \phantom{\cdots} \cdots \phantom{\cdots} & \phantom{c} \\
\begin{block}{lc(c@{}c|ccc|cccc|c@{}c)}
\beta^{(n)}_q &  && \red{\bullet} & \blue{*} & \blue{*} & \blue{*} & \blue{*} & \blue{*} & \blue{*} & \blue{*} & \cdots\ \blue{*} \ \cdots & \phantom{c}\\
\BAhhline{~~~---------~}
\multirow{3}*{$\beta^{(n-1)}_q$}  &  && 0 & \red{\bullet} & \blue{*} & \blue{*} & \blue{*} & \blue{*} & \blue{*} & \blue{*} & \cdots\ \blue{*} \ \cdots & \phantom{c} \\
  &  && 0 & 0 & \red{\bullet} & \blue{*} & \blue{*} & \blue{*} & \blue{*} & \blue{*} & \cdots\ \blue{*} \ \cdots & \phantom{c} \\
  &  && 0 & 0 & 0 & \red{\bullet} & \blue{*} & \blue{*} & \blue{*} & \blue{*} & \cdots\ \blue{*} \ \cdots & \phantom{c}  \\
\BAhhline{~~~---------~}
\multirow{4}*{$\beta^{(n-2)}_q$} &  && 0 & 0 & 0 & 0 & \red{\bullet} & \blue{*} & \blue{*} & \blue{*} & \cdots\ \blue{*} \ \cdots & \phantom{c} \\
 &  && 0 & 0 & 0 & 0 & 0 & \red{\bullet} & \blue{*} & \blue{*} & \cdots\ \blue{*} \ \cdots & \phantom{c}  \\
 &  && 0 & 0 & 0 & 0 & 0 & 0 & \red{\bullet} & \blue{*} & \cdots\ \blue{*} \ \cdots & \phantom{c} \\
 &  && 0 & 0 & 0 & 0 & 0 & 0 & 0 & \red{\bullet} & \cdots\ \blue{*} \ \cdots & \phantom{c} \\
\BAhhline{~~~---------~}
 \phantom{\beta} \vdots & && \vdots & \vdots & \vdots & \vdots & \vdots & \vdots & \vdots & \vdots & \ddots & \phantom{c} \\
\end{block}
\end{blockarray}\,.
\end{align}
The blue $\blue{*}$ entries differ from the black $*$ in the symbol described in~\eqref{eq:GroupingSymbolColumnsByClasses}, as the Gauss algorithm requires multiplying rows by real numbers and adding or subtracting rows to bring the symbol into row-echelon form, as shown in~\eqref{eq:SymbolInRowEchelonForm}. The blue $\blue{*}$ can thus be interpreted as the entries of the symbol after the Gauss algorithm has been applied. The red $\red{\bullet}$ represent the pivots---a concept familiar from linear algebra. While the blue $\blue{*}$ entries may or may not be zero, the pivots must be nonzero. This leads us to the meaning of the horizontal lines and the $\beta$'s.  

If we closely examine~\eqref{eq:SymbolInRowEchelonForm}, we see that the horizontal lines are positioned so that all pivots between two vertical lines are grouped together in a block. For example, moving the second vertical line up by one row would enclose only two pivots from the $\class n-1$. Conversely, moving the same line down by one row would exclude the first pivot of the $\class n-2$ from its block.

This observation suggests that, just as columns are grouped into classes, we can also group the rows. While the number of columns in each class is well-defined (see equation~\eqref{eq:GeneralClassSize}), the number of rows in each class is less straightforward to determine. There are two reasons for this: First, while the number of columns is determined purely by combinatorial factors involving $k$, $m$, $n$, and $q$, the number of rows depends on the number of equations in $\R_q$. Second, the Gauss algorithm may reveal that some rows are linear combinations of others. Once the row-echelon form is reached, only linearly independent rows remain, and it is possible that some rows at the bottom of the matrix contain only zeros, providing no new information.

This motivates the introduction of the integers $\beta^{(k)}_q$, which represent the number of rows at order $q$ that belong to the $\class k$. These values can only be determined once the symbol has been transformed into row-echelon form.

Since the symbol in the form~\eqref{eq:SymbolInRowEchelonForm} and the $\beta$'s play a crucial role in what follows, we now introduce two formal definitions to capture these concepts.

\begin{definition}[The symbol in solved form]\label{def:SymbolSolvedForm}
    A symbol $\S_q$ is said to be in \textbf{solved form} if its columns are ordered according to the multi-index classes of the corresponding $p^{A}_{\bfm}$, with $|\bfm|=q$, in descending order from left to right, and if it has been brought into row-echelon form.
\end{definition}\medskip

It is important to emphasize that the order within a given class is not significant; it can be fixed on a case-by-case basis. However, we will often choose a lexicographic ordering within each class.

\begin{definition}[Characters $\beta^{(k)}_q$]\label{def:beta}
    Given a symbol $\S_q$ in solved form, we define the \textbf{characters} of the symbol, denoted by $\beta^{(k)}_q$, as the integer values corresponding to the number of rows of class $k$ in $\S_q$.
\end{definition}\medskip

Before discussing the importance of these two notions, let us consider a concrete example of a second-order symbol in solved form.

\begin{example}[A symbol in solved form and its $\beta$'s]
    We consider a single scalar field $\Phi$ that obeys the second-order equations
    \begin{align*}
        \R_2 : 
        \begin{cases}
            \Phi_{yy} - \frac{1}{2} \Phi_{xx}^2 &= 0 \\
            \Phi_{xy} &= 0 \\
            \Phi_{zz} + y\, \Phi_{xz} &= 0
        \end{cases}\,.
    \end{align*}
    The second-order jet variables, written in lexicographic order, are
    \begin{align*}
        (\Phi_{xx}, \Phi_{xy}, \Phi_{xz}, \Phi_{yy}, \Phi_{yz}, \Phi_{zz})\,.
    \end{align*}
    Therefore, the symbol of this second-order system, when written in lexicographic order, takes the form
    \begin{align*}
        \begin{blockarray}{cccccc}
            \Phi_{xx} & \Phi_{xy} & \Phi_{xz} & \Phi_{yy} & \Phi_{yz} & \Phi_{zz} \\
            \begin{block}{(cccccc)}
                -\Phi_{xx} & 0 & 0 & 1 & 0 & 0 \\
                0 & 1 & 0 & 0 & 0 & 0 \\
                0 & 0 & y & 0 & 0 & 1 \\
            \end{block}
        \end{blockarray}\,,
    \end{align*}
    where we wrote the second-order jet variables over the corresponding columns to emphasize the ordering prescription we chose. The first step to bringing this symbol into solved form is to regroup the columns according to the multi-index classes in descending order. This leads to the \emph{equivalent} symbol
    \begin{align*}
        \begin{blockarray}{cccccc}
            \Phi_{zz} & \Phi_{yy} & \Phi_{yz} & \Phi_{xx} & \Phi_{xy} & \Phi_{xz} \\
            \begin{block}{(c|cc|ccc)}
                0 & 1 & 0 & -\Phi_{xx} & 0 & 0 \\
                0 & 0 & 0 & 0 & 1 & 0 \\
                1 & 0 & 0 & 0 & 0 & y \\
            \end{block}
        \end{blockarray}\,,
    \end{align*}
    where the vertical lines indicate the boundaries between multi-index classes. We emphasize that both symbols contain exactly the same information, only arranged differently. Once the symbol is ordered correctly, we apply the Gauss algorithm to bring it into row-echelon form. After elementary algebraic manipulations, we obtain
    \begin{align*}
        \begin{blockarray}{cccccc}
            \Phi_{zz} & \Phi_{yy} & \Phi_{yz} & \Phi_{xx} & \Phi_{xy} & \Phi_{xz} \\
            \begin{block}{(c|cc|ccc)}
                \red{1} & 0 & 0 & 0 & 0 & y \\ \BAhhline{------}
                0 & \red{1} & 0 & -\Phi_{xx} & 0 & 0 \\ \BAhhline{------}
                0 & 0 & 0 & 0 & \red{1} & 0 \\
            \end{block}
        \end{blockarray}\,.
    \end{align*}
    We indicate the pivots in red, just as in~\eqref{eq:SymbolInRowEchelonForm}.
    The horizontal lines indicate where the class of a row ends and the next one begins. In this form, it is easy to read off the $\beta$'s. We find
    \begin{align*}
        \beta^{(3)}_2 = 1, \quad \beta^{(2)}_2 = 1, \quad \beta^{(1)}_2 = 1\,.
    \end{align*}
\end{example}\medskip

In our effort to define physical degrees of freedom, prolongations of differential equations play an important role. How does the solved form of a symbol change under prolongation? To answer this question, we first note that the same linear operations used to bring the symbol into solved form can also be applied to the differential equations in $\R_q$. Therefore, we can assume, without loss of generality, that any $\R_q$ we consider automatically gives us a symbol $\S_q$ in solved form.

Prolonging $\R_q$ simply means taking derivatives with respect to $x^{1}, x^{2}, \dots, x^{n}$ of each equation in $\R_q$. To see what effect this has on $\S_q$, we consider a first order equation $\R_1$ in solved form in three dimensions. Its associated symbol $\S_1$ then has the schematic form\footnote{We use the reversed lexicographic ordering.}
    \begin{align}
       \S_1 \sim \begin{blockarray}{c|c|c}
           \Phi_{z} & \Phi_{y} & \Phi_{x} &\\
            \begin{block}{c|c|c}
                \red{\bullet_z} & *_{z_1} & *_{z_2} \\ \BAhhline{---}
                0 & \red{\bullet_y} & *_{y_1} \\ \BAhhline{---}
                0 & 0 & \red{\bullet_x}\\
            \end{block}
        \end{blockarray}\,.
    \end{align}
The red bullets indicate the pivot elements and we index them by $x$, $y$, and $z$ in order to see to which class they belong to. Similarly, we index the entries to the right of the pivots by $z_1$, $z_2$, and $y_1$.

In a first step, we prolong each row of $\R_1$ by coordinates which are of lower or equal class as the row itself. That is, the first row is prolonged by $z$, $y$, and $x$, since it is of class $z$. The second row is prolonged by $y$ and $x$, since it is of class $y$. Finally, the third row is prolonged only by $x$. These operations give rise to the following prolonged symbol $\S_2$\footnote{This is of course not the full symbol, since we did not prolong each row by all coordinates.}
    \begin{align}\label{eq:ProlongSymbol1}
        \S_2 \sim \begin{blockarray}{c|cc|ccc}
           \Phi_{zz} & \Phi_{zy} & \Phi_{yy} & \Phi_{zy} & \Phi_{yx} & \Phi_{xx} &\\
            \begin{block}{c|cc|ccc}
                \red{\bullet'_z} & *'_{z_1} & 0 & *_{z_2} & 0 & 0 \\
                0 & \red{\bullet'_z} & *'_{z_1} & 0 & *'_{z_2} & 0 \\ 
                0 & 0 & 0 & \red{\bullet'_z} & *'_{z_1} & *'_{z_2}\\ \BAhhline{------}
                0 & 0 & \red{\bullet'_y} & 0 & *'_{y_1} & 0   \\
                0 & 0 & 0 & 0 & \red{\bullet'_y} & *'_{y_1} \\ \BAhhline{------}
                0 & 0 & 0 & 0 & 0 & \red{\bullet'_x} \\
            \end{block}
        \end{blockarray}\,.
    \end{align}
Here, the primes in $\red{\bullet'}$ and $*'$ remind us that these entries were obtained from the unprimed ones by taking a derivative. By looking at the above schematic prolongation of $\R_1$, we see a clear pattern: prolonging a row of class $k$ by $x^1$, $x^2$, $\dots$, $x^k$ creates $k$ new rows which preserve the row-echelon form. In fact, if we swap the third and the fourth row in the above prolonged symbol matrix, which is of course a valid operation, we obtain again a row-echelon form. 

The same is not true if a row of class $k$ is prolonged by the variables $x^{k+1}$, $x^{k+2}$, $\dots$, $x^n$. This creates $n-k$ new rows, but they are in general not in row-echelon form. It is precisely these rows which force us to re-compute the row-echelon form of a prolonged symbol. This insight is more important than it might seem at first glance, and it leads us to introduce the notion of multiplicative variables.

\begin{definition}[Multiplicative variables]\label{def:ClassPreservingVariables}
    Given a symbol $\S_q$ in solved form, we associate with a row of class $k$ the \textbf{multiplicative variables} $x^{1}$, $x^2$, $\dots$, $x^k$.
\end{definition}

As we will see in the next section, there are special types of equations, known as \emph{involutive equations}, where all independent equations at any prolongation order can be obtained by prolonging each equation only with respect to its multiplicative variables. This offers considerable advantages in constructing formal power series solutions to these equations, since the row-echelon form does not need to be recomputed at each order.

\subsection{Involutive Equations}\label{ssec:InvolutivePDEs}
In the previous subsection, we mentioned that we can always assume a symbol to be in solved form without loss of generality. This is because any linear operation required to bring a generic symbol into its solved form can also be applied to $\R_q$, from which the symbol is derived. Consequently, throughout this subsection, we will always assume that any symbol $\S_q$ we encounter is in solved form.

We also briefly introduced the concept of multiplicative variables. We will now explore this concept further and derive an important consequence from it. The central question that guides our discussion is: \textit{Can we devise a simple method for determining all independent equations resulting from the prolongations of $\R_q$?} Developing such a method would be highly beneficial for constructing formal power series solutions to $\R_q$.

To answer this question, let us analyze how the symbol changes when $\R_q$ is prolonged to $\R_{q+1}$. From our discussion of multiplicative variables in the previous subsection, we know that prolongations have two distinct effects: By examining the symbol, as we did in~\eqref{eq:ProlongSymbol1}, we observe that prolongation by multiplicative variables maintains the row-echelon form, thereby producing a new set of linearly independent higher-order equations. However, the same does \textit{not} hold for non-multiplicative variables: in this case, the row-echelon form is disrupted and must be recalculated for the new symbol $\S_{q+1}$. As a result, we cannot easily determine the independent equations of $\R_{q+r}$ for all orders $r>0$.

How many independent equations arise when we prolong $\R_q$ by its multiplicative variables? In the previous subsection, we introduced the characters $\beta^{(k)}_q$ to count the number of equations of class $k$ contained in $\R_q$. It follows that if we prolong $\R_q$ only by its multiplicative variables, the number of independent equations in $\R_{q+1}$ is given by
\begin{align}\label{eq:LowerBound}
    \sum_{k=1}^{n} k \beta^{(k)}_q\,,
\end{align}
since each class $k$ contains precisely $k$ multiplicative variables. This quantity provides a lower bound on the total number of independent equations in~$\R_{q+1}$.

However, to fully prolong $\R_q$, we must also consider the non-multiplicative variables. This requires recomputing the row-echelon form of $\R_{q+1}$, a process that may reveal additional independent equations. There are, nonetheless, special cases in which no new equations emerge. Specifically, when the rank of $\S_{q+1}$ coincides with the lower bound in~\eqref{eq:LowerBound}, prolongation by non-multiplicative variables yields no further independent equations.

This equivalence arises because $\rank \S_{q+1}$ counts the total number of independent rows in $\S_{q+1}$, which is precisely the number of independent equations of order $q+1$ in~$\R_{q+1}$. Therefore, if
\begin{align*}
    \rank \S_{q+1} = \sum_{k=1}^{n} k \beta^{(k)}_q\,,
\end{align*}
it is sufficient to prolong $\R_q$ solely by its multiplicative variables. In all other cases, prolongation by non-multiplicative variables becomes necessary, making it essential to recompute the row-echelon form of $\R_{q+1}$. This procedure may uncover additional independent equations of order $q+1$.

This observation is fundamental and merits its own formal definition:
\begin{definition}[Involutive symbol]\label{def:InvolutiveSymbol}
    A symbol $\S_q$ is said to be \textbf{involutive} if its rank satisfies
    \begin{align*}
        \rank \S_{q+1} = \sum_{k=1}^{n} k \beta^{(k)}_q\,.
    \end{align*}
    In this case, the quantity $\sum_{k=1}^{n} k \beta^{(k)}_q$ equals the total number of independent equations of order $q+1$ contained in~$\R_{q+1}$.
\end{definition}

This definition captures the essence of our discussion in the preceding paragraphs. It is important to note, however, that it concerns only the independent equations of order $q+1$. It does not address equations of lower order, which may arise as integrability conditions (see Corollary~\ref{cor:CriterionForIntCond}).

Moreover, the definition applies specifically to the symbol. In practice, however, we are typically more interested in the differential equation $\R_q$ itself, from which the symbol $\S_q$ is derived. We therefore extend the notion of involutivity as follows:
\begin{definition}[Involutive equation]\label{def:InvolutiveEquation}
    An equation $\R_q$ is said to be \textbf{involutive} if it is formally integrable and if its symbol is involutive.
\end{definition}

This extended definition includes a crucial addition: the requirement that $\R_q$ be formally integrable. This condition matters for two key reasons. First, a formally integrable equation includes all of its integrability conditions. If its symbol is also involutive, then it is possible to determine all independent equations of order $q+r$ for any prolongation order $r \geq 1$. In other words, involutive equations behave predictably under prolongation, which makes it particularly straightforward to construct formal power series solutions.

Second, this condition highlights that formal integrability and involutivity are logically independent. An equation may be formally integrable without being involutive. In this case, although all integrability conditions are contained in the equation, there is no systematic way to determine all principal derivatives at higher prolongation orders. Conversely, an equation $\R_q$ may possess an involutive symbol $\S_q$ without being formally integrable. In such a case, we can predict all principal derivatives to any order of prolongation, but we lack knowledge of the integrability conditions and, consequently, of all independent equations.

Both scenarios prevent us from systematically constructing formal power series solutions. This is why involutivity is such a fundamental concept.

To make this discussion more concrete, we will now illustrate one of these possibilities with an explicit example. We consider an equation that turns out to be formally integrable, but with a non-involutive symbol.

\begin{example}[Formal integrability does not imply involutivity]
Consider the case of a two dimensional manifold $\M$, coordinatized by $x$ and $y$. Let $\Phi$ be a scalar field satisfying the second-order PDE
\begin{align*}
    \R_2 :
    \begin{cases}
        \Phi_{xx} = 0 \\
        \Phi_{yy} = 0
    \end{cases}\,.
\end{align*}
This system is formally integrable, as it contains all its integrability conditions. Indeed, it is straightforward to determine its general solution:
\begin{align*}
    \Phi(x,y) = A + x B + y C + x y D\,,
\end{align*}
where $A$, $B$, $C$, and $D$ are real constants. The system is thus clearly integrable. However, it is not involutive.

To see this, note that the system contains one equation of class $1$ ($\Phi_{xx} = 0$) and one equation of class $2$ ($\Phi_{yy} = 0$), so that $\beta^{(1)}_2 = 1$ and $\beta^{(2)}_2 = 1$. The prolongation of the system reads
\begin{align*}
    \R_3 :
    \begin{cases}
        \Phi_{xxx} = 0 \\
        \Phi_{xxy} = 0 \\
        \Phi_{xyy} = 0 \\
        \Phi_{yyy} = 0 \\
        \\
        \Phi_{xx} = 0 \\
        \Phi_{yy} = 0
    \end{cases}
\end{align*}
and gives rise to the symbol
\begin{align*}
    \S_3 =
    \begin{pmatrix}
        1 & 0 & 0 & 0 \\
        0 & 1 & 0 & 0 \\
        0 & 0 & 1 & 0 \\
        0 & 0 & 0 & 1
    \end{pmatrix}\,.
\end{align*}
This symbol has rank $4$. However, the sum of the characters yields
\begin{align*}
    \sum_{k=1}^2 k \, \beta^{(k)}_2 = 1 \times 1 + 2 \times 1 = 3\,.
\end{align*}
Therefore,
\begin{align*}
    \rank \S_3 = 4 \neq 3 = \sum_{k=1}^2 k \, \beta^{(k)}_2\,.
\end{align*}
The system $\R_2$ is thus not involutive, despite being formally integrable. This means that, although we know all integrability conditions to any order of prolongation, we cannot predict the principal derivatives solely by prolongation with respect to multiplicative variables.

Indeed, if we prolong $\R_2$ only by its multiplicative variables, we obtain the independent equations
\begin{align*}
    \Phi_{xxx} &= 0  & &\text{(from prolonging $\Phi_{xx}$ with respect to $x$)} \\
    \Phi_{xyy} &= 0  & &\text{(from prolonging $\Phi_{yy}$ with respect to $x$)} \\
    \Phi_{yyy} &= 0  & &\text{(from prolonging $\Phi_{yy}$ with respect to $y$)}\,.
\end{align*}
To generate the equation $\Phi_{xxy} = 0$, we need to prolong the class $1$ equation $\Phi_{xx} = 0$ by $y$, which is not a multiplicative variable.
\end{example}

This example clearly illustrates that formal integrability does not imply involutivity. However, due to the simplicity of the system, it does not exhibit the need to re-compute the row-echelon form of $\S_{q+1}$ upon prolongation. This technical aspect becomes more relevant in more complex systems and should not detract from the significance of involutivity.

Equations whose symbols are involutive possess remarkable structural properties, even when they are not formally integrable. These properties are summarized in the following theorem (for a proof see~\cite{SeilerBook}).

\begin{theorem}[Involutive symbols and their consequences for prolongations]\label{thm:ConsequencesOfInvolutiveSymbols}
    Let $\S_q$ be the involutive symbol of an equation $\R_q$. Then the following statements hold:
    \begin{itemize}
        \item[(i)] The symbol $\S_{q+1}$ of the prolonged equation $\R_{q+1}$ is also involutive.
        \item[(ii)] All integrability conditions of $\R^{(1)}_{q+1}$ are obtained by prolonging the integrability conditions of $\R^{(1)}_q$, that is, $\R^{(1)}_{q+1} = (\R^{(1)}_q)_{+1}$.
    \end{itemize}
\end{theorem}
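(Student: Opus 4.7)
The plan is to exploit the combinatorics of multiplicative variables in the solved form of the symbol, combined with the rank identity that defines involutivity. Throughout I assume $\S_q$ has been brought into solved form with characters $\beta^{(k)}_q$, and I track how these characters evolve under prolongation.

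For part~(i), the first step is to determine the characters of $\S_{q+1}$. When a class-$k$ row of $\S_q$ is prolonged by a multiplicative variable $x^j$ with $1 \leq j \leq k$, its leading pivot, originally at a multi-index $\bfm$ of class $k$, is shifted to $\bfm + 1_j$, which is of class $\min(k,j) = j$. Since each class-$k$ row produces exactly one prolonged row of each class $\ell \leq k$, and since involutivity (Definition~\ref{def:InvolutiveSymbol}) guarantees that these multiplicative prolongations already exhaust the independent rows of $\S_{q+1}$, I obtain the recursion
\[
\beta^{(\ell)}_{q+1} \;=\; \sum_{k=\ell}^{n}\beta^{(k)}_q\,.
\]
To verify involutivity of $\S_{q+1}$, I need $\rank \S_{q+2} = \sum_\ell \ell\, \beta^{(\ell)}_{q+1}$. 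Inserting the recursion simplifies the right-hand side to $\sum_k \beta^{(k)}_q \cdot \tfrac{k(k+1)}{2}$, which is precisely the number of unordered pairs of multiplicative variables attached to each class-$k$ row of $\S_q$. One then shows by direct counting, using echelon preservation under multiplicative prolongations, that the double prolongation of $\R_q$ produces exactly this many linearly independent order-$(q+2)$ rows in $\S_{q+2}$, with no extra contributions from non-multiplicative prolongations.

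For part~(ii), I would combine a tautological inclusion with a matching dimension count. The inclusion $(\R^{(1)}_q)_{+1} \subseteq \R^{(1)}_{q+1}$ is immediate, since any equation already present in $\R^{(1)}_q$ remains a valid equation after further prolongation and projection back. For the reverse inclusion, Theorem~\ref{thm:DimR1q} gives
\[
\dim \R^{(1)}_{q+1} \;=\; \dim \R_{q+2} - \dim \S_{q+2},
\]
and, using the involutivity of $\S_{q+1}$ established in part~(i), a parallel application expresses $\dim (\R^{(1)}_q)_{+1}$ in terms of $\dim \R^{(1)}_q$ and $\dim \S_{q+1}$. Inserting the character recursion from part~(i) and carrying out the bookkeeping should yield agreement of the two dimensions, forcing the two submanifolds to coincide.

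The main obstacle lies in closing the rank argument in part~(i): producing the lower bound $\sum_k \beta^{(k)}_q \cdot k(k+1)/2$ via multiplicative prolongations is straightforward from echelon preservation, but the matching upper bound — that non-multiplicative prolongations introduce no new independent rows at order $q+2$ — is precisely the algebraic content of involutivity propagating under prolongation. A clean proof of this fact invokes the exactness of the Spencer $\delta$-complex attached to an involutive symbol, a classical result of Serre and Guillemin; a more elementary but more laborious route, of the type found in Seiler's book, proceeds by induction on both the order and the class, tracking how the solved form of the symbol is preserved at each prolongation step.
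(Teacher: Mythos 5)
The paper does not actually prove this theorem: it states it and refers the reader to Seiler's book for the proof. So there is no in-paper argument to compare against, and your proposal has to be judged on its own merits. Its skeleton is the standard one. Your character recursion $\beta^{(\ell)}_{q+1}=\sum_{k=\ell}^{n}\beta^{(k)}_q$ is correct (it is the $r=1$ case of Theorem~\ref{thm:NoOfPrincipalD}), and the identity $\sum_\ell \ell\,\beta^{(\ell)}_{q+1}=\sum_k \tfrac{k(k+1)}{2}\beta^{(k)}_q$ correctly reduces involutivity of $\S_{q+1}$ to a rank statement about $\S_{q+2}$. But the matching upper bound --- that non-multiplicative prolongations contribute no new independent rows at order $q+2$ --- is precisely the mathematical content of part~(i), and you do not prove it; you defer it to the exactness of the Spencer $\delta$-complex (Serre/Guillemin) or to an induction "of the type found in Seiler's book". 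That is an honest and correctly targeted reduction, but it means the proposal replaces the theorem by another theorem of comparable depth rather than establishing it; none of the $\delta$-cohomology machinery is available in the paper's framework.

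In part~(ii) the "tautological" inclusion is stated in the wrong direction. Since $\R^{(1)}_{q+1}$ is cut out by $\{\bbE, D\bbE\}$ together with \emph{all} order-$\le q+1$ consequences of $DD\bbE$, while $(\R^{(1)}_q)_{+1}$ is cut out by $\{\bbE, D\bbE\}$ together with only the prolongations of the order-$\le q$ integrability conditions, the former has at least as many defining equations and hence the easy inclusion is $\R^{(1)}_{q+1}\subseteq(\R^{(1)}_q)_{+1}$; the inclusion $(\R^{(1)}_q)_{+1}\subseteq\R^{(1)}_{q+1}$, which you call immediate, is exactly the assertion that no genuinely new integrability conditions appear at the next order, i.e.\ the content of the theorem. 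Your own justification ("any equation already present in $\R^{(1)}_q$ remains a valid equation...") in fact proves the reverse containment. The slip is recoverable, because one containment plus equality of dimensions still forces equality of the fibered submanifolds, but the dimension bookkeeping for $\dim(\R^{(1)}_q)_{+1}$ --- which requires its own version of the relation $\dim\R_{q+1}=\dim\R_q+\dim\S_{q+1}-\Delta$ applied to $\R^{(1)}_q$, together with part~(i) to control $\Delta$ --- is left at the level of "should yield agreement" and is not carried out. As it stands, both halves of the proposal terminate at the step where the actual work begins.
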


The significance of this theorem is twofold. First, it requires only that $\S_q$ be involutive; no assumption about the formal integrability of $\R_q$ is necessary. Second, if an equation $\R_q$ gives rise to an involutive symbol, then the symbols of all its prolongations are involutive as well. This guarantees that we can systematically identify all principal derivatives of $\R_q$ at any order of prolongation—a fact anticipated in our earlier discussion.

Part~(ii) of the theorem further asserts that, if the symbol is involutive, then any integrability conditions that arise at order $q+1$ are merely prolongations of those already present at order $q$. In particular, we obtain the following corollary:

\begin{corollary}[Criterion of involution]\label{cor:CriterionOfInvolutivity}
    The equation $\R_q$ is involutive if and only if its symbol $\S_q$ is involutive and $\R^{(1)}_q = \R_q$.
\end{corollary}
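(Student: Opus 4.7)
The forward direction is essentially a matter of unpacking definitions: if $\R_q$ is involutive, then by Definition~\ref{def:InvolutiveEquation} the symbol $\S_q$ is involutive and $\R_q$ is formally integrable. Specializing Definition~\ref{def:FormalIntegrability} to $r=0$ immediately gives $\R^{(1)}_q = \R_q$. So the only real content is in the converse.

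For the converse, assume $\S_q$ is involutive and $\R^{(1)}_q = \R_q$. The symbol assumption is half of what we want; the remaining task is to upgrade the single equality $\R^{(1)}_q = \R_q$ to the full formal integrability requirement $\R^{(1)}_{q+r} = \R_{q+r}$ for all $r \geq 0$. The plan is to induct on $r$, using Theorem~\ref{thm:ConsequencesOfInvolutiveSymbols} as the engine that propagates both hypotheses one order upward at each step.

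The base case $r=0$ is the given equality. For the induction step, assume $\R^{(1)}_{q+r} = \R_{q+r}$ and that $\S_{q+r}$ is involutive. Part~(i) of Theorem~\ref{thm:ConsequencesOfInvolutiveSymbols} gives involutivity of $\S_{q+r+1}$, preserving the symbolic hypothesis. Part~(ii) gives
\begin{equation*}
\R^{(1)}_{q+r+1} = \left(\R^{(1)}_{q+r}\right)_{+1} = \left(\R_{q+r}\right)_{+1} = \R_{q+r+1},
\end{equation*}
where the middle equality uses the induction hypothesis and the last is simply the definition of prolongation. This closes the induction and yields formal integrability of $\R_q$. Combined with the assumed involutivity of $\S_q$, Definition~\ref{def:InvolutiveEquation} concludes that $\R_q$ is involutive.

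The only subtle point, and what I expect to be the main conceptual hurdle, is recognizing that the hypothesis $\R^{(1)}_q = \R_q$ is exactly the right ``seed'' to feed into the iterative mechanism of Theorem~\ref{thm:ConsequencesOfInvolutiveSymbols}(ii). Without it, part~(ii) only tells us that the integrability conditions at order $q+1$ come from prolonging those at order $q$, which is not in itself enough to conclude formal integrability; the extra input $\R^{(1)}_q = \R_q$ ensures that there are no nontrivial integrability conditions to prolong at the base level, so none appear at any higher level either. Once this is noticed, the argument is a short induction and no further calculation is needed.
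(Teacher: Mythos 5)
Your proof is correct and follows exactly the route the paper intends: the corollary is presented as a direct consequence of Theorem~\ref{thm:ConsequencesOfInvolutiveSymbols}, and your induction on the prolongation order, using part~(i) to propagate symbol involutivity and part~(ii) to propagate the equality $\R^{(1)}_{q+r}=\R_{q+r}$, is the natural formalization of the paper's remark that the infinite sequence of formal-integrability conditions collapses to the single check $\R^{(1)}_q=\R_q$. No gaps.
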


Recall from Definition~\ref{def:FormalIntegrability} that formal integrability requires $\R^{(1)}_{q+r} = \R_{q+r}$ for all $r \geq 0$. This is an infinite sequence of conditions, which is typically impractical to verify. However, if the symbol $\S_q$ is involutive, this sequence collapses to a single, simple condition: checking whether $\R^{(1)}_q = \R_q$ suffices to decide formal integrability. This highlights the power of involutive systems.

At first glance, one might be discouraged by the impression that involutive equations are rare. Theorem~\ref{thm:ConsequencesOfInvolutiveSymbols} and Corollary~\ref{cor:CriterionOfInvolutivity} might seem irrelevant to physical systems if these systems are not involutive. For instance, Proca's equations are not involutive, and therefore the methods developed here cannot be directly applied to construct formal power series solutions or to count degrees of freedom.

However, this concern is unfounded. In fact, any equation $\R_q$ can be \emph{completed} to an equivalent equation that is involutive. Involutive equations are therefore not an exception, but rather the rule. This fundamental result, due to Cartan and Kuranishi, is the subject of the next subsection.

\subsection{The Cartan-Kuranishi Algorithm}\label{ssec:CKAlgorithm}
Let us pause to recapitulate the steps we have taken so far and to summarize the assumptions underlying our construction. Our objective has been to start from a set of PDEs of order $q$, to construct a formal power series solution to these equations, and to use this construction to count the physical degrees of freedom described by the system.

This led us to a detailed study of PDEs in the jet bundle formalism. By employing jet bundles, we shifted our perspective: PDEs are now understood as (possibly non-linear) algebraic equations imposing constraints on jet variables. From this viewpoint, the space of all $q$-th order jet variables can be treated as a manifold, and a system of PDEs is interpreted as defining a submanifold of this jet space.

However, this interpretation implicitly relies on an important assumption, which we identified in Subsection~\ref{ssec:JetBundlePDEs}: Not every equation $\R_q$ defines a submanifold of $J_q\E$ in a well-behaved way. More precisely, the set of all possible $q$-th order PDEs can be subdivided into two categories: those which define fibered submanifolds of $J_q\E$, and those which require a case distinction and therefore fail to define such submanifolds.

Crucially, we observed that restricting ourselves to equations defining fibered submanifolds does not entail a significant loss of generality. In fact, for any equation $\R_q$ which does \textit{not} define a fibered submanifold, it is always possible to impose additional conditions---i.e., to append further equations---such that the resulting system does describe a fibered submanifold.

For this reason, we can, without loss of generality, restrict our attention to equations $\R_q$ that define fibered submanifolds, as formalized in Definition~\ref{def:Rq}. Nevertheless, even when $\R_q$ describes a fibered submanifold, its prolongations may fail to do so. We encountered this phenomenon explicitly in Example~\ref{ex:NonFiberedProlongation}. Once again, this issue can be resolved without sacrificing generality: by imposing suitable additional conditions, we can ensure that all prolongations of $\R_q$ describe fibered submanifolds. This leads us to the notion of \emph{regular equations} (cf.~Definition~\ref{def:RegularEquation}).

In summary, we may consistently and without loss of generality focus our analysis on regular equations, whose prolongations always describe fibered submanifolds.
As our discussion of involutive symbols and involutive equations has shown, the set of regular equations admits a further subdivision, which is schematically represented in Figure~\ref{fig:Summary2}. Specifically, we can distinguish four distinct subsets within the set of regular equations:

\begin{itemize}
    \item Equations which are formally integrable (blue region).
    \item Equations whose symbol $\S_q$ is involutive (red region).
    \item Equations which are both formally integrable and possess an involutive symbol (green region), i.e., involutive equations in the strict sense.
    \item Equations which are neither formally integrable nor possess an involutive symbol (yellow region surrounding the other regions).
\end{itemize}

This visualization reinforces the impression that involutive equations, which lie in the intersection of the blue and red regions, may at first appear to be the exception rather than the rule.
\begin{figure}[htb!]
	\centering
	\includegraphics[width=0.75\linewidth]{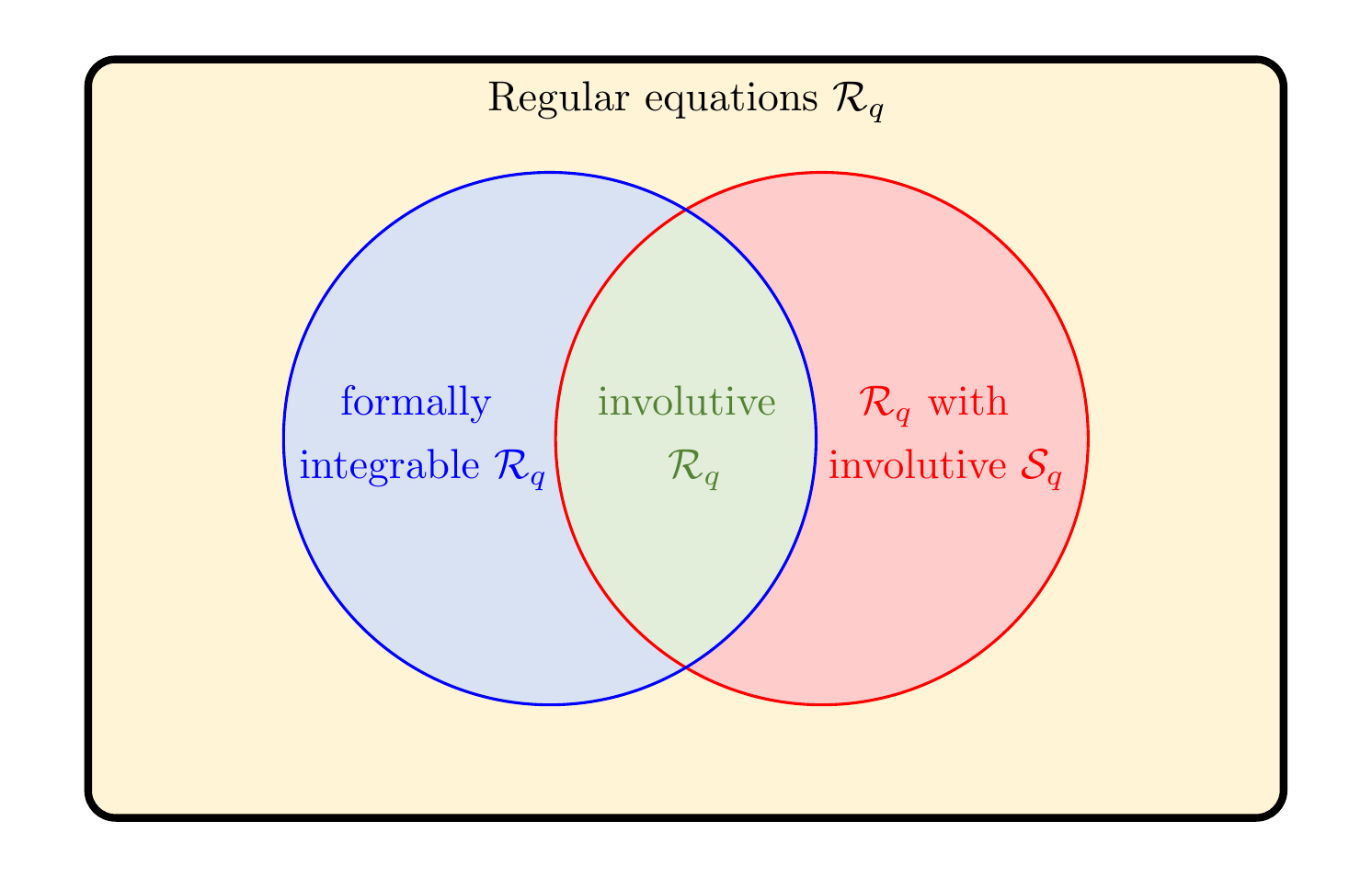}
	\caption{\textit{Within the set of regular equations we can distinguish between equations which are formally integrable (blue set), equations which possess an involutive symbol (red set), and equations which have neither of these properties (yellow set). If an equation is formally integrable and it also possess an involutive symbol, it is called an involutive equation and it lies in the intersection set (green set).}}
	\label{fig:Summary2}
\end{figure}
The observation that involutive equations may seem exceptional is, fortunately, misleading. The seminal work of Cartan and Kuranishi reassures us that this is not the case. Their theorem establishes that any regular equation can be systematically completed to an involutive equation without changing its solution space. More precisely, by performing a finite number of prolongations and projections, one can always obtain an equivalent involutive equation.

This result is formalized in the following theorem:

\begin{theorem}[Cartan-Kuranishi]
    For every regular equation $\R_q$ there exist two integers $r$ and $s$ such that the equation $\R^{(s)}_{q+r}$ is an involutive equation which possesses the same solution space as the original equation $\R_q$.
\end{theorem}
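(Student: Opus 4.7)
The plan is to turn the pair \emph{prolongation} and \emph{projection} into an iterative completion procedure and then establish termination via two independent finiteness arguments. Concretely, starting from $\R_q$, I would first check whether the symbol $\S_q$ is involutive and whether $\R^{(1)}_q = \R_q$. If both hold, Corollary~\ref{cor:CriterionOfInvolutivity} immediately gives involutivity and we are done with $(r,s)=(0,0)$. Otherwise, I would proceed in rounds: at each round, either prolong once (replacing the current equation $\R^{(s)}_{q+r}$ by $\R^{(s)}_{q+r+1}$) when the symbol is not yet involutive, or project once (replacing it by $\R^{(s+1)}_{q+r}$) when integrability conditions remain to be absorbed. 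Equivalence of the solution space is automatic throughout, because prolongation only adds differential consequences of the original equations, and projection only makes explicit those lower-order relations that every solution must already satisfy.

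The first finiteness ingredient controls how many prolongations are needed to reach an involutive symbol. The symbols $\S_q, \S_{q+1}, \S_{q+2}, \dots$ are naturally organized as the homogeneous components of a graded module over the polynomial ring $\bbR[x^1, \dots, x^n]$, which by Hilbert's basis theorem is Noetherian. Translating this algebraic stabilization into the combinatorial language of the characters $\beta^{(k)}_{q+r}$ from Definition~\ref{def:InvolutiveSymbol}, one obtains that there exists some $r_0 \geq 0$ such that $\S_{q+r}$ is involutive for every $r \geq r_0$. This is the Delassus--Cartan--Sweeney step, and part~(i) of Theorem~\ref{thm:ConsequencesOfInvolutiveSymbols} ensures that once involutivity of the symbol has been achieved, it is preserved under all subsequent prolongations.

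The second finiteness ingredient controls the projection steps. Whenever projection genuinely introduces new integrability conditions, Corollary~\ref{cor:CriterionForIntCond} together with Theorem~\ref{thm:DimR1q} implies a strict drop in fibered dimension, $\dim \R^{(s+1)}_{q+r} < \dim \R^{(s)}_{q+r}$. Since these dimensions are non-negative integers bounded above by $\dim_F J_{q+r}\E$, only finitely many such strict decreases can occur. Combined with the preservation of symbol involutivity under prolongation and the fact, guaranteed by part~(ii) of Theorem~\ref{thm:ConsequencesOfInvolutiveSymbols}, that all integrability conditions at order $q+r+1$ are prolongations of those at order $q+r$ once the symbol is involutive, one concludes that after finitely many alternating steps we arrive at an equation $\R^{(s)}_{q+r}$ whose symbol is involutive and which coincides with its own first projection. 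Corollary~\ref{cor:CriterionOfInvolutivity} then certifies that this terminal equation is involutive.

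The hard part will be the symbol-stabilization step. A naive dimension count is insufficient, because individual characters $\beta^{(k)}_{q+r}$ can oscillate before settling, so the involutivity test need not be monotone in $r$. The cleanest route, which I would follow, is to pass to the module-theoretic picture on $\bbR[x^1,\dots,x^n]$ and use Noetherianity (or equivalently Hilbert's syzygy theorem) to bound the prolongation order at which an involutive basis of the symbol module appears. Once that algebraic bound is in hand, the remaining bookkeeping with projections, integrability conditions, and the dimension-drop argument is essentially mechanical, and the two finiteness statements can be packaged together to produce the desired integers $r$ and $s$.
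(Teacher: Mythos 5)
The paper does not actually prove this theorem: it is stated as a known result, followed by the pseudo-code of Algorithm~\ref{alg:CK} and a discussion of its logic, with the underlying mathematics deferred to the literature (Cartan, Kuranishi, Seiler). So there is no in-paper proof to match your proposal against; what you have written is essentially the standard proof strategy from that literature, and its two pillars are the right ones. The solution-space equivalence argument (prolongation adds only differential consequences, projection only makes explicit relations every solution already satisfies) is correct and complete as stated. The first finiteness ingredient --- that every symbol becomes involutive after finitely many prolongations, proved by dualizing the family $\S_q, \S_{q+1}, \dots$ into a finitely generated graded module over $\bbR[x^1,\dots,x^n]$ and invoking Noetherianity --- is indeed the hard step, and you are right that a naive count of the characters $\beta^{(k)}_{q+r}$ does not suffice. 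Your second ingredient, the strict dimension drop $\dim \R^{(s+1)}_{q+r} < \dim \R^{(s)}_{q+r}$ from Theorem~\ref{thm:DimR1q} and Corollary~\ref{cor:CriterionForIntCond}, is also correct and is exactly how the algorithm's outer loop is controlled.

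The one genuine gap is in how you interleave the two finiteness arguments. Your bound $r_0$ on the prolongation order at which the symbol stabilizes is established for the symbol sequence of the \emph{original} equation; but each projection adds new equations (the integrability conditions and, implicitly, their prolongations), which changes the symbol at the current order and can destroy its involutivity --- this is precisely why Algorithm~\ref{alg:CK} returns from the projection block to the inner loop. So after a projection you cannot simply reuse $r_0$; you must either re-derive a prolongation bound for the modified equation or run a genuine double induction: the inner termination (finitely many prolongations to an involutive symbol, by the Noetherian argument applied afresh to the current equation) nested inside the outer termination (finitely many projections, because at each fixed order the fibered dimension is a non-negative integer that strictly decreases and is bounded above by $\dim_F J_{q+r}\E$). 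You should also note where the hypothesis of \emph{regularity} enters: it guarantees that all the intermediate $\R^{(s)}_{q+r}$ remain fibered submanifolds (Definition~\ref{def:RegularEquation}), so that the dimensions appearing in the outer termination argument, and the vector-bundle hypothesis of Theorem~\ref{thm:DimR1q}, are well defined at every step. With those two repairs the proposal becomes a complete proof.
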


This theorem has profound consequences. It ensures that the method of involutive completion is always applicable to regular equations and thus fully justifies our focus on involutive equations in the analysis of formal integrability and the construction of formal power series solutions. Figure~\ref{fig:VisualizationCKTheorem} illustrates how the Cartan-Kuranishi theorem maps any regular equation to an equivalent involutive equation.

\begin{figure}[htb!]
	\centering
	\includegraphics[width=0.75\linewidth]{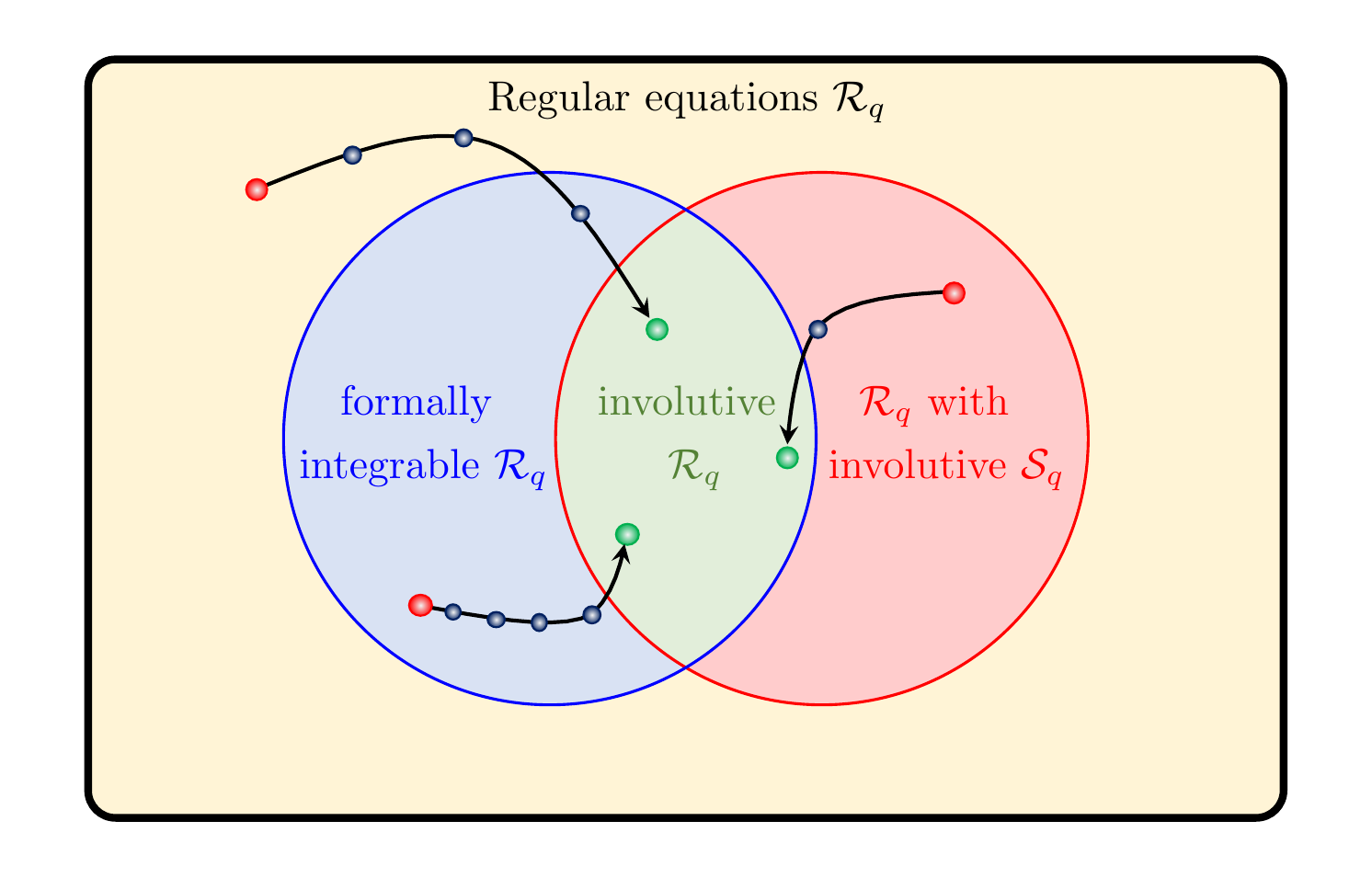}
	\caption{\textit{The Cartan–Kuranishi theorem guarantees that any regular equation (represented by red dots) can be transformed, after a finite sequence of prolongations and projections (indicated by blue dots), into an equivalent equation that is involutive (depicted by green dots within the green region). The procedure described by the Cartan–Kuranishi algorithm can be visualized as moving an equation $\R_q$ through the space of regular equations until it reaches the subset of involutive equations.}}
	\label{fig:VisualizationCKTheorem}
\end{figure}
Indeed, we can algorithmically complete any regular equation $\R_q$ to an equivalent involutive equation $\R^{(s)}_{q+r}$. That is, we can always construct an equation which contains all its integrability conditions, whose prolongations behave predictably, and for which we can systematically build a formal power series solution order by order. Crucially, since the Cartan-Kuranishi theorem guarantees that $\R_q$ and $\R^{(s)}_{q+r}$ possess the same solution space, this procedure effectively yields a formal power series solution to the original system.

In summary, we can always determine such a formal power series solution for any regular equation $\R_q$, because we can always complete the equation to an involutive one. Once this is achieved, we are in a position to systematically count the physical degrees of freedom encoded in the PDE system. This will be the subject of Section~\ref{sec:FormalPSSExtnsionToGT}.

To present the Cartan-Kuranishi algorithm, we employ pseudo-code and introduce the notation $\#\text{MV}(\S_q)$. This quantity, referred to as the ``number of multiplicative variables,'' is defined by
\begin{align}
    \#\text{MV}(\S_q) \ce \sum_{k=1}^n k\, \beta^{(k)}_q\,,
\end{align}
where the characters $\beta^{(k)}_q$ are determined from the symbol $\S_q$. Additionally, we use the notation $\S^{(s)}_{q+r}$ to denote the symbol associated with an equation $\R^{(s)}_{q+r}$ that has undergone $r$ prolongations and $s$ projections.

\begin{algorithm}[hb!]
\caption{(Cartan-Kuranishi)}\label{alg:CK}
Input: Equation $\R_q$
\begin{algorithmic}[1]
\State Set $r=0$ and $s=0$ (initialize the counters for prolongations and projections)
\State Compute the prolongation $\R_{q+1}$
\State Compute the symbols $\S_q$ and $\S_{q+1}$
\Repeat\{ \hfill \textcolor{myGreen}{$\longleftarrow$ (Start outer loop)} 
    \Repeat\{ \hfill \textcolor{red}{$\longleftarrow$ (Start inner loop: Make symbol involutive)}
    \State r = r+1 (counter for prolongations);
    \State Compute the prolongation $\R^{(s)}_{q+r+1}$;
    \State Extract the symbol $\S^{(s)}_{q+r+1}$;
    \Until{$\#\text{MV}(\S^{(s)}_{q+r})$ = $\rank\S^{(s)}_{q+r+1}$}\} \hfill \textcolor{red}{$\longleftarrow$ (End inner loop: Symbol is involutive)}
    \If {$\dim \R^{(s)}_{q+r+1} - \dim\S^{(s)}_{q+r+1} < \dim\R^{(s)}_{q+r}$} \hfill \textcolor{blue}{$\longleftarrow$ Start check for int. conditions}
        \State s = s+1 (counter for projections);
        \State Compute the projections $\R^{(s)}_{q+r}$ and $\R^{(s)}_{q+r+1}$;
        \State Extract the symbols $\S^{(s)}_{q+r}$ and $\S^{(s)}_{q+r+1}$;
    \EndIf \State \textbf{End if} \hfill \textcolor{blue}{$\longleftarrow$ Integrability conditions identified and added}
\Until{$R^{(s)}_{q+r}$ is involutive}\} \hfill \textcolor{myGreen}{$\longleftarrow$ (End outer loop)}
\State Return $\R^{(s)}_{q+r}$
\end{algorithmic}
Output: Involutive equation $\R^{(s)}_{q+r}$ with the same solution space as $\R_q$
\end{algorithm}

Let us dissect the Cartan-Kuranishi algorithm step by step and subsequently illustrate it with a concrete example of physical relevance. The algorithm takes as input an equation $\R_q$ and, after a finite number of prolongations $r$ and projections $s$, produces an involutive equation $\R^{(s)}_{q+r}$ that possesses the same solution space as the original system. The algorithm is structured around two key criteria:
\begin{itemize}
    \item Criterion for an involutive symbol (line 9):
    \begin{align}
        \#\text{MV}(\S_q) = \sum_{k=1}^n k\,\beta^{(k)}_q = \rank \S_{q+1}
    \end{align}
    (cf. Definition~\ref{def:InvolutiveSymbol} and discussion around it)
    \item Criterion for the presence of integrability conditions (line 10):
    \begin{align}
        \dim \R_{q+1} - \dim \S_{q+1} < \dim \R_{q}
    \end{align}
    (cf. Corollary~\ref{cor:CriterionForIntCond} and discussion around it)
\end{itemize}
These criteria reflect the fact that an involutive equation contains all its integrability conditions and possesses an involutive symbol. Consequently, the algorithm is designed to verify both conditions. To start, we require knowledge of $\R_{q+1}$ as well as the symbols $\S_q$ (which determines the characters $\beta^{(k)}_q$) and $\S_{q+1}$. This explains the purpose of the first three steps of the algorithm.

Once the prolongation $\R_{q+1}$ and the symbols $\S_q$ and $\S_{q+1}$ are determined, we can bring $\S_q$ into row-echelon form, deduce the values of $\beta^{(k)}_q$, compute the rank of $\S_{q+1}$, and evaluate the dimensions of $\R_{q}$, $\R_{q+1}$, and $\S_{q+1}$ (see equation~\eqref{eq:DimOfSq} and Definition~\eqref{def:DimRq}). We can then assess whether the two criteria are satisfied. If the equation fails to meet either condition, the algorithm prompts specific operations (prolongations and projections) to remedy this.

For clarity, it is useful to divide the algorithm into three logical blocks and analyze their roles:
\begin{itemize}
    \item The first block is an outer loop, running from line~$4$ to line~$15$. Its purpose is to iterate the other two blocks until we arrive at an involutive equation $\R^{(s)}_{q+1}$. Importantly, the two inner blocks are not independent of each other.
    \item The second block is an inner loop, spanning lines~$5$ to $9$. Here, we check whether the symbol is involutive. If it is not, we must prolong $\R_q$ until an involutive symbol is obtained. However, this block is not independent of the third block.
    \item The third block covers lines~$10$ to $14$. It checks for hidden integrability conditions. If none are present (i.e., the condition on line~$10$ evaluates to \texttt{False}), there are no consequences for the second block. If integrability conditions are found, we perform a projection from the last prolongations, effectively adding the integrability conditions to lower-order equations. Since this can alter the symbols, we must return to the second block and reassess whether the symbols remain involutive.
\end{itemize}

To understand how these blocks interact, let us examine three illustrative cases:

In the first case, we start with an equation $\R_q$ and execute the initial steps of the algorithm. We then enter the second block and check whether $\S_q$ is involutive. If it is, no prolongation is required (so $r = 0$), and we proceed to the third block. If there are no integrability conditions, the third block is skipped (so $s = 0$), and the algorithm terminates. We conclude that $\R_q$ is involutive.

In the second case, we assume that $\S_q$ is not involutive. That is, $\#\text{MV}(\S_q) \neq \rank \S_{q+1}$. This requires us to execute the second block: we increment $r$ from $0$ to $1$, compute the prolongation $\R_{q+2}$, extract the symbol $\S_{q+2}$, and check whether $\#\text{MV}(\S_{q+1}) = \rank \S_{q+2}$. If this condition is satisfied, we proceed to the third block. There, we evaluate whether
\begin{align}
    \dim \R_{q+2} - \dim \S_{q+2} < \dim \R_{q+1}.
\end{align}
Suppose this condition evaluates to \texttt{False}. In that case, the third block is skipped, and the algorithm terminates. The output is the involutive equation $\R_{q+1}$.

In the third and final case, everything proceeds as in the second case, except that the condition in the third block now evaluates to \texttt{True}. We then increment $s$ from $0$ to $1$ and compute the projections $\R^{(1)}_{q+1}$ and $\R^{(1)}_{q+2}$, along with the corresponding symbols $\S^{(1)}_{q+1}$ and $\S^{(1)}_{q+2}$. Next, on line~$15$, we must determine whether $\R^{(1)}_{q+1}$ is involutive. This requires checking whether the symbol $\S^{(1)}_{q+1}$ is involutive. If it is, the third block can be skipped (since the integrability conditions have already been incorporated), and the algorithm terminates. We then conclude that $\R^{(1)}_{q+1}$ is an involutive equation.

These simple case studies illustrate the interaction between the algorithm's blocks. Of course, more intricate scenarios are possible. Nevertheless, even these basic examples may seem somewhat abstract. To make the procedure more tangible and to develop familiarity with the Cartan-Kuranishi algorithm, we will next apply it to the Proca equation.
\begin{example}[Completion of the Proca equation to an involutive system]\label{ex:CKAAppliedToProca}
    We have already used the Proca equation on several occasions to illustrate key concepts. This is advantageous here, as we can refer back to those earlier examples for intermediate results needed to demonstrate the Cartan-Kuranishi algorithm. For the sake of brevity, we omit detailed computations and focus on the essential steps and results.

    Our starting point is the Proca equation in $n = 4$ dimensions with $m = 4$ vector field components. The second-order ($q = 2$) equation reads
    \begin{align*}
        \R_2 :  
        \begin{cases}
            \partial^\nu \partial_\mu A^\mu - \partial_\mu \partial^\mu A^\nu + m^2 A^\nu = 0\,.
        \end{cases}
    \end{align*}
    We now proceed step by step through the Cartan-Kuranishi algorithm (the numbering below is unrelated to that used in Algorithm~\ref{alg:CK}).
    \begin{enumerate}
        \item Initialize $r = 0$ and $s = 0$.
        \item Compute the prolongation $\R_3$ (as done in Example~\ref{ex:R1qNotEqualR1}).
        \item Determine the symbols $\S_2$ and $\S_3$ from $\R_2$ and $\R_3$, respectively (see Example~\ref{ex:SymbolMaxwellProca}).
        \item Bring $\S_2$ into row-echelon form and extract the characters $\beta^{(k)}_2$. We find
        \begin{align*}
            \beta^{(1)}_2 = 0\,, \quad \beta^{(2)}_2 = 0\,, \quad \beta^{(3)}_2 = 1\,, \quad \beta^{(4)}_2 = 3\,.
        \end{align*}
        Consequently,
        \begin{align*}
            \#\text{MV}(\S_2) = \sum_{k=1}^4 k\, \beta^{(k)}_2 = 3 \times 1 + 4 \times 3 = 15\,.
        \end{align*}
        \item Compute the rank of $\S_3$, obtaining $\rank \S_3 = 15$.
        \item Since $\#\text{MV}(\S_2) = \rank \S_3 = 15$, the symbol is involutive, and no further prolongations are needed. We proceed to check for integrability conditions.
        \item From Example~\ref{ex:DimR12ProcaAndMaxwell}, we already know that
        \begin{align*}
            \dim \R_2 = 56 \quad \text{and} \quad \dim \R_3 - \dim \S_3 = 55\,,
        \end{align*}
        indicating the presence of one integrability condition. We therefore execute the third block of the algorithm.
        \item Set $s = 1$ to account for the projection. The resulting projected system $\R^{(1)}_2$ was computed in Example~\ref{ex:R1qNotEqualR1}:
        \begin{align*}
            \R^{(1)}_2: 
            \begin{cases}
                \partial_\mu \partial^\mu A^\nu - m^2 A^\nu = 0\\[0.5em]
                \partial_\mu A^\mu = 0
            \end{cases}\,.
        \end{align*}
        Additionally, compute the prolongation $\R^{(1)}_3$ and the corresponding symbols $\S^{(1)}_2$ and~$\S^{(1)}_3$.
        \item Since the third block was executed, we return to the second block and verify whether the symbol $\S^{(1)}_2$ remains involutive. We find that the $\beta$'s are unchanged and that $\rank \S^{(1)}_3 = \rank \S_3$. Therefore, $\S^{(1)}_2$ is still involutive.
        \item Next, we re-examine the presence of integrability conditions. We compute
        \begin{align*}
            \dim \R^{(1)}_2 = 55 \quad \text{and} \quad \dim \R^{(1)}_3 - \dim \S^{(1)}_3 = 51\,.
        \end{align*}
        Since these numbers differ, there are $55 - 51 = 4$ additional integrability conditions to be incorporated.
        \item To account for them, increase $s$ from $1$ to $2$ and compute the projection $\R^{(2)}_2$. According to Definition~\ref{def:Projection}, this projection is defined by $\R^{(2)}_2 \ce \pi^{2+2}_2(\R_2)$, meaning we prolong $\R_2$ twice and retain only equations of order $2$ or lower. Using the known prolongation $\R_3$, we obtain
        \begin{align*}
            \R^{(2)}_2 :
            \begin{cases}
                \partial_\mu \partial^\mu A^\nu - m^2 A^\nu = 0\\[0.5em]
                \partial_\nu \partial_\mu A^\mu = 0 \\[0.5em]
                \partial_\mu A^\mu = 0
            \end{cases}\,.
        \end{align*}
        Observe that the four missing integrability conditions correspond to the prolongations of $\partial_\mu A^\mu = 0$, i.e., they are given by $\partial_\nu \partial_\mu A^\mu = 0$.
        \item Verify whether the new symbol $\S^{(2)}_2$ is involutive. This time, the presence of additional second-order equations alters the characters:
        \begin{align*}
            \beta^{(1)}_2 = 1\,, \quad \beta^{(2)}_2 = 1\,, \quad \beta^{(3)}_2 = 2\,, \quad \beta^{(4)}_2 = 4\,.
        \end{align*}
        From these, we compute
        \begin{align*}
            \sum_{k=1}^{4} k\, \beta^{(k)}_2 = 1 \times 1 + 2 \times 1 + 3 \times 2 + 4 \times 4 = 25\,,
        \end{align*}
        and verify that
        \begin{align*}
            \rank \S^{(2)}_3 = 25\,.
        \end{align*}
        Thus, the symbol $\S^{(2)}_2$ is involutive.
        \item Finally, we check for remaining integrability conditions:
        \begin{align*}
            \dim \R^{(2)}_2 = 51 \quad \text{and} \quad \dim \R^{(2)}_3 - \dim \S^{(2)}_3 = 51\,.
        \end{align*}
        The equality confirms that no further integrability conditions are present.
        \item The Cartan-Kuranishi algorithm terminates. The system $\R^{(2)}_2$, as given in step 11, is an involutive equation equivalent to the original Proca equation.
    \end{enumerate}
\end{example}

It may appear surprising that the final system $\R^{(2)}_2$ contains an apparently redundant equation. Indeed, as previously noted, the equation $\partial_\nu \partial_\mu A^\mu = 0$ is merely the prolongation of the integrability condition $\partial_\mu A^\mu = 0$ and does not seem to provide new information. While it is true that this second-order equation can be obtained by differentiating $\partial_\mu A^\mu = 0$, its explicit inclusion is essential. Only when this equation is part of the system does $\R^{(2)}_2$ attain the formal property of involutivity. As we have emphasized throughout the preceding sections, it is precisely this property that enables the systematic construction of formal power series solutions and the precise counting of degrees of freedom. From this perspective, including the seemingly redundant equation $\partial_\nu \partial_\mu A^\mu = 0$ is indispensable.

In the next section, we will finally explain how to construct formal power series solutions for involutive systems and how to count their degrees of freedom.

\section{Formal Power Series Solutions and Extension to Gauge Theories}\label{sec:FormalPSSExtnsionToGT}

Consider a generic field $v^{A}$, where $A$ labels the individual components, and fix a point $x_0$ in the $n$-dimensional base space $B$. We can formally expand $v^{A}$ around $x_0$ in a Taylor series of the form
\begin{align}\label{eq:FormalPowerSeries}
    v^{A}(x) &= \sum_{|\bfm|=0}^{\infty} \frac{p^{A}_{\bfm}(x_0)}{\bfm!}(x - x_0)^{\bfm}\,,
\end{align}
where the jet variables $p^{A}_{\bfm}(x_0)$ correspond to the Taylor coefficients. We adopt the standard multi-index notation
\begin{align}
    \bfm! &\ce m_1! m_2! \cdots m_n! \notag\\
    (x - x_0)^{\bfm} &\ce (x^1 - x^1_0)^{m_1} (x^2 - x^2_0)^{m_2} \cdots (x^n - x^n_0)^{m_n}\,.
\end{align}
For notational convenience, we set $p^{A}_{\bfm}(x_0) = v^{A}(x_0)$ when $|\bfm| = 0$.

This is a \emph{formal} power series in the sense that it is not required to converge. Its primary purpose is to provide insight into the equations $\R_q$ by constructing a formal solution. The procedure is analogous to the example of the scalar field discussed in Subsection~\ref{ssec:EinsteinsMethod}. Specifically, we substitute the formal power series~\eqref{eq:FormalPowerSeries} into $\R_q$ and evaluate the result at $x = x_0$. In local coordinates, this leads to a system of algebraic equations of the form
\begin{align}
    \R_q &:
    \begin{cases}
        \bbE^\tau(x_0, p^{A}_{\bfm}) = 0 & \text{with } 0 \leq |\bfm| \leq q\,.
    \end{cases}
\end{align}

When evaluating the series at $x = x_0$, all terms except the zeroth-order term vanish. However, since $\R_q$ involves derivatives of $v^{A}$ up to order $q$, finitely many terms of the series survive in the expression for $\bbE^\tau$ evaluated at $x = x_0$. This results in algebraic equations for finitely many jet variables $p^{A}_{\bfm}$. Typically, there are not enough equations to fully determine all the jet variables. Nevertheless, these equations impose relations among the $p^{A}_{\bfm}$, thereby constraining them. In geometric terms, the equation $\R_q$ defines a subspace of the jet bundle~$J_q \E$.

This shift in perspective is precisely what we introduced in Subsection~\ref{ssec:JetBundlePDEs}. Moreover, we are not limited to considering $\R_q$ alone. By employing the notion of prolongations, we can systematically generate higher-order equations $\R_{q+r}$, which in turn yield new algebraic constraints on additional jet variables. Schematically, this process takes the form
\begin{align}\label{eq:OrderByOrderAlgEq}
    \R_{q+1} &:
    \begin{cases}
        D_{\mu_1}\bbE^\tau(x_0, p^{A}_{\bfm}) = 0 & \text{with } 0 \leq |\bfm| \leq q + 1
    \end{cases} \notag\\
    \notag\\
    \R_{q+2} &:
    \begin{cases}
        D_{\mu_2} D_{\mu_1} \bbE^\tau(x_0, p^{A}_{\bfm}) = 0 & \text{with } 0 \leq |\bfm| \leq q + 2
    \end{cases} \notag\\
    \notag\\
    \vdots \notag\\
    \notag\\
    \R_{q+r} &:
    \begin{cases}
        D_{\mu_r} D_{\mu_{r-1}} \cdots D_{\mu_1} \bbE^\tau(x_0, p^{A}_{\bfm}) = 0 & \text{with } 0 \leq |\bfm| \leq q + r
    \end{cases}\,.
\end{align}
There is no reason to terminate this process at any finite order $r > 0$. We can continue indefinitely, thereby obtaining an infinite sequence of algebraic relations for an infinite set of Taylor coefficients.

The objective, however, is not to explicitly solve these algebraic equations and reconstruct the full power series for $v^{A}$. Rather, this formal procedure is used to analyze how much freedom remains in specifying a solution to $\R_q$. Qualitatively, the more Taylor coefficients that can be determined from $\R_q$ and its prolongations, the fewer degrees of freedom remain in the solution. Conversely, the more coefficients that remain unconstrained, the greater the freedom in specifying a solution.

As an illustrative example, consider an equation $\R_q$ that can be solved exactly and whose solution is a polynomial of degree $q$. In this case, the formal power series solution coincides with the polynomial itself. Consequently, only finitely many Taylor coefficients are nonzero, and these coefficients are simply constants. The freedom in specifying a solution thus amounts to choosing these constant coefficients, where each distinct choice yields a different solution.

This stands in sharp contrast to a more interesting case, such as Maxwell's equations. Here, we know that the freedom in specifying solutions goes beyond selecting constants. Gauge freedom implies that we can arbitrarily prescribe one component of the vector potential, meaning we can freely choose a function of four coordinates. Additionally, we can specify the initial data $(A^\mu|_{\Sigma}, \dot{A}^\mu|_{\Sigma})$ on a spacelike initial value surface $\Sigma$. This amounts to the freedom of choosing six functions of three coordinates. All of this is, of course, well-known. The connection to our current discussion is that, if we attempted to construct a formal power series solution of Maxwell's equations, this freedom would manifest as a much larger set of undetermined Taylor coefficients. Moreover, these coefficients would not merely be constants but functions of four, three, two, or even a single coordinate.

Our objective is now to develop a systematic way of quantifying this freedom in specifying a formal solution and relating it to the familiar notion of \emph{degrees of freedom}. To this end, we will restrict our attention to involutive equations $\R_q$. This is not a severe limitation, since any regular equation can be completed to an involutive one via the Cartan--Kuranishi Algorithm~\ref{alg:CK}. Furthermore, involutive equations possess many desirable properties that simplify the construction of formal power series solutions: they include all integrability conditions, and the parametric derivatives at any order of prolongation can be predicted systematically. This latter property will play a key role in the next subsection.

\subsection{Order-by-Order Construction, Cartan Characters, and Hilbert Polynomials}\label{ssec:OrderByOrder}

As stated above, our goal is to construct a formal power series solution to an involutive equation $\R_q$ order by order. To achieve this, we need to solve the algebraic equations appearing in~\eqref{eq:OrderByOrderAlgEq}. Each such equation determines some of the jet variables $p^{A}_{\bfm}$ in terms of others. A generic solution to one of these algebraic equations takes the form
\begin{align}\label{eq:solAlgEq}
    p^{A}_{\bfm} = f^{A}_{\bfm}(x^\mu, v^{A}, p^{A}_{\mathbf{n}})\,,
\end{align}
where the functions $f^{A}_{\bfm}$ arise naturally from solving the equations, and the $p^{A}_{\mathbf{n}}$ denote jet variables distinct from those on the left-hand side. This equation naturally distinguishes between principal and parametric derivatives (see Definition~\ref{def:PrincipalParametricDerivatives}): the jet variables on the left-hand side are principal derivatives, while those appearing inside $f^{A}_{\bfm}$ on the right-hand side are parametric derivatives.

The condition that a formal power series solution can be constructed order by order translates into the requirement
\begin{align}\label{eq:OrderByOrderCond}
    |\mathbf{n}| \leq |\bfm|\,.
\end{align}
In other words, at order $q+r = |\bfm|$, the principal derivatives can only depend on parametric derivatives of order less than or equal to $|\bfm|$. This has a crucial consequence: when we move to the next order---that is, when we prolong $\R_{q+r}$ to $\R_{q+r+1}$ and solve the resulting algebraic equations---the principal derivatives at order $q+r+1$ will depend only on principal derivatives already determined at lower orders and on lower-order parametric derivatives. Since the lower-order principal derivatives can themselves be expressed in terms of parametric derivatives via~\eqref{eq:solAlgEq}, it follows that the principal derivatives at order $q+r+1$ ultimately depend only on a subset of parametric derivatives, most of which already appear at order $q$. 

If the condition~\eqref{eq:OrderByOrderCond} is violated, however, it may happen that solving a higher-order equation forces us to revisit and re-compute principal derivatives at lower orders. In that case, an order-by-order construction of the solution becomes impossible.

Fortunately, involutive equations are structured precisely so that condition~\eqref{eq:OrderByOrderCond} is satisfied at all orders of prolongation. To see this, recall from Subsection~\ref{ssec:Symbol} that the symbol $\S_q$ of an equation $\R_q$ can always be brought into row-echelon form, where the pivot elements correspond to the principal derivatives and the non-pivot elements to the parametric derivatives. Since the same linear operations used to bring $\S_q$ into row-echelon form can be applied to $\R_q$ itself, we can systematically rewrite the nonlinear equations to isolate the principal derivatives. Moreover, when written in this \emph{solved form} (see Definition~\ref{def:SymbolSolvedForm}), condition~\eqref{eq:OrderByOrderCond} is automatically satisfied at order $q$.

If $\R_q$ is involutive, then it suffices to prolong it with respect to its multiplicative variables only. As discussed previously, the row-echelon form of $\S_q$ is preserved under this operation, allowing us to identify the pivot elements systematically at each order of prolongation. This ensures that condition~\eqref{eq:OrderByOrderCond} holds at every order $q+r$. In summary, the involutivity of $\R_q$ guarantees that an order-by-order construction of formal solutions is always possible.

To make our discussion more precise, we now compute how many principal derivatives appear at any order $q + r$. Recall that the number of principal derivatives of class $k$ appearing in $\R_q$ is measured by the characters $\beta^{(k)}_q$. Our task is therefore to predict the characters $\beta^{(k)}_{q + r}$ at higher orders in terms of the characters at order $q$. The following theorem provides the answer.

\begin{theorem}[Number of principal derivatives at order $q + r$]\label{thm:NoOfPrincipalD}
    If $\S_q$ is an involutive symbol with characters $\beta^{(k)}_q$, then the characters of its prolongations are given by
    \begin{align*}
        \beta^{(k)}_{q + r} = \sum_{i = k}^{n} \binom{r + i - k - 1}{r - 1} \beta^{(i)}_q
    \end{align*}
    for any prolongation order $r\geq 1$.
\end{theorem}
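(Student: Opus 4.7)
The plan is to establish the formula by induction on the prolongation order $r$, with the case $r=1$ carrying the combinatorial content and the inductive step reducing to a standard binomial identity. For the base case, I would analyze the class structure of the symbol under a single prolongation. Because $\S_q$ is involutive, Definition~\ref{def:InvolutiveSymbol} guarantees that prolonging each row only by its multiplicative variables already produces all independent rows of $\S_{q+1}$. A class-$k$ row corresponds to a multi-index whose first nonzero entry lies at position $k$, and its multiplicative variables are $x^1,\ldots,x^k$. Adding $1$ at position $j\leq k$ yields a new multi-index of class $j$ if $j<k$ and of class $k$ if $j=k$, so each class-$k$ row in $\S_q$ generates exactly one new row of each class $j\in\{1,\ldots,k\}$. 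Conversely, a row of class $i<k$ cannot produce a row of class $k$, since $x^k$ is not among its multiplicative variables. Summing over all rows of $\S_q$ yields
$$\beta^{(k)}_{q+1} = \sum_{i=k}^{n}\beta^{(i)}_q,$$
which matches the claim at $r=1$ since $\binom{i-k}{0}=1$.

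For the inductive step, assume the formula holds at prolongation order $r$. Part (i) of Theorem~\ref{thm:ConsequencesOfInvolutiveSymbols} ensures that $\S_{q+r}$ is again involutive, so the base-case identity applies once more and gives $\beta^{(k)}_{q+r+1} = \sum_{j=k}^{n}\beta^{(j)}_{q+r}$. Substituting the inductive hypothesis, interchanging the order of summation, and introducing $l=i-j$ leads to
$$\beta^{(k)}_{q+r+1} = \sum_{i=k}^{n}\beta^{(i)}_q \sum_{l=0}^{i-k}\binom{r+l-1}{r-1}.$$
The inner sum collapses via the hockey-stick identity to $\binom{r+i-k}{r}$, which is precisely the binomial coefficient appearing in the statement at prolongation order $r+1$. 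This closes the induction.

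The main obstacle is the combinatorial bookkeeping in the base case: one must justify both that the rows produced by multiplicative prolongation are linearly independent (so that counting multi-indices truly counts independent rows) and that no additional independent class-$k$ rows arise from non-multiplicative prolongations. Both facts follow directly from the involutivity of $\S_q$ through the rank identity $\rank\S_{q+1} = \sum_{k} k\,\beta^{(k)}_q$ in Definition~\ref{def:InvolutiveSymbol}; without this assumption, the counting argument would overshoot and the class-by-class bookkeeping would fail. Once the base case is secured, the induction proceeds by a purely algebraic Vandermonde--Chu manipulation of binomial coefficients.
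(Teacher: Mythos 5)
Your proof is correct and follows essentially the same route as the paper: induction on $r$, with the base case $\beta^{(k)}_{q+1}=\sum_{i=k}^n\beta^{(i)}_q$ justified by multiplicative prolongation and involutivity, and the inductive step closed by swapping sums and collapsing the inner sum with the hockey-stick identity. Your explicit appeal to Theorem~\ref{thm:ConsequencesOfInvolutiveSymbols}(i) to propagate involutivity through the induction is a point the paper leaves implicit, but the argument is the same.
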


\medskip
\textit{Proof.} We proceed by induction on $r$. For the base case $r = 1$, the formula simplifies to
\begin{align*}
    \beta^{(k)}_{q + 1} = \sum_{i = k}^{n} \underbrace{\binom{i - k}{0}}_{=1} \beta^{(i)}_q = \beta^{(k)}_q + \beta^{(k + 1)}_q + \dots + \beta^{(n)}_q\,.
\end{align*}
This is indeed correct. Prolonging an equation of class $k$ or higher in $\R_q$ with respect to $x^{k}$ produces an equation of class $k$. Since $\R_q$ contains $\beta^{(k)}_q$ equations of class $k$, $\beta^{(k + 1)}_q$ equations of class $k + 1$, and so on, the total number of class $k$ equations in $\R_{q + 1}$ is precisely $\beta^{(k)}_q + \beta^{(k + 1)}_q + \dots + \beta^{(n)}_q$. Moreover, since $\R_q$ is assumed involutive, there are no further independent equations of class $k$ in $\R_{q + 1}$. Thus, the formula holds for $r = 1$.

For the inductive step, assume the formula holds for $r - 1$. When prolonging $\R_{q + r - 1}$ to $\R_{q + r}$, the same argument as above shows that
\begin{align*}
    \beta^{(k)}_{q + r} = \sum_{i = k}^{n} \beta^{(i)}_{q + r - 1}\,.
\end{align*}
By the induction hypothesis, we can express $\beta^{(i)}_{q + r - 1}$ as
\begin{align*}
    \beta^{(i)}_{q + r - 1} = \sum_{j = i}^{n} \binom{r + j - i - 2}{r - 2} \beta^{(j)}_q\,.
\end{align*}
Plugging this into the previous equation yields
\begin{align*}
    \beta^{(k)}_{q + r} 
    &= \sum_{i = k}^{n} \sum_{j = i}^{n} \binom{r + j - i - 2}{r - 2} \beta^{(j)}_q \\
    &= \sum_{j = k}^{n} \sum_{i = k}^{j} \binom{r + j - i - 2}{r - 2} \beta^{(j)}_q \\
    &= \sum_{j = k}^{n} \binom{r + j - k - 1}{r - 1} \beta^{(j)}_q\,,
\end{align*}
where in the last step we performed the sum over $i$. This completes the proof. \hfill $\Box$

\medskip

Theorem~\ref{thm:NoOfPrincipalD} equips us with a concrete tool to predict the number of principal derivatives in involutive equations at any order of prolongation. However, our ultimate goal is to quantify the amount of freedom we possess in specifying a formal power series solution to $\R_q$. Since principal derivatives, by definition, are precisely those determined by $\R_q$ and its prolongations, they do not directly inform us about this freedom. What we truly need is a way to count the number of \emph{parametric derivatives} at any order of prolongation. These derivatives remain undetermined by the differential equations and represent the freedom we have in constructing formal solutions. In physical applications, this freedom typically corresponds to the ability to choose initial data or boundary conditions.

Principal and parametric derivatives are closely related. Recall that the total number of $q$-th order derivatives of class $k$ for a field with $m$ components in $n$ dimensions is given by equation~\eqref{eq:ClassSize}, namely
\begin{align}
    m\binom{n + q - k - 1}{n - k}\,.
\end{align}
Among these, exactly $\beta^{(k)}_q$ derivatives are principal derivatives. By definition, all remaining derivatives of class $k$ are parametric derivatives. Therefore, the difference between the total number of class $k$ derivatives and the number of principal derivatives yields the number of parametric derivatives of class $k$. This number is so important that it deserves a name of its own.

\begin{definition}[Cartan characters $\alpha^{(k)}_q$]\label{def:CartanCharacters}
    Let $\R_q$ be an involutive $q$-th order equation in $n$ dimensions for $m$ field components. The \textbf{Cartan characters} $\alpha^{(k)}_q$ are defined as
    \begin{align*}
        \alpha^{(k)}_q \coloneqq m\binom{n + q - k - 1}{n - k} - \beta^{(k)}_q \quad \text{for } k = 1, \dots, n\,.
    \end{align*}
    The Cartan character $\alpha^{(k)}_q$ counts the number of class $k$ parametric derivatives present in~$\R_q$.
\end{definition}

For involutive equations, it is particularly straightforward to predict the number of parametric derivatives at any order of prolongation. Indeed, the relationship between the characters $\beta^{(k)}_q$ and the Cartan characters $\alpha^{(k)}_q$ allows us to deduce the following lemma as a direct consequence of Theorem~\ref{thm:NoOfPrincipalD}.

\begin{lemma}[Number of parametric derivatives at order $q + r$]\label{lem:NoOfParametricDer}
    Let $\S_q$ be an involutive symbol with Cartan characters $\alpha^{(k)}_q$. Then, the Cartan characters of its prolongations are given by
    \begin{align*}
        \alpha^{(k)}_{q + r} = \sum_{i = k}^{n} \binom{r + i - k - 1}{r - 1} \alpha^{(i)}_q\
    \end{align*}
    for any prolongation order $r\geq 1$.
\end{lemma}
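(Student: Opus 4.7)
The plan is to derive the lemma as a direct corollary of Theorem~\ref{thm:NoOfPrincipalD} by exploiting the linear relation between the characters $\beta^{(k)}_q$ and the Cartan characters $\alpha^{(k)}_q$ given in Definition~\ref{def:CartanCharacters}. Since $\alpha^{(k)}_q$ is defined as the total number of class $k$ derivatives at order $q$ minus $\beta^{(k)}_q$, the whole computation reduces to showing that the combinatorial structure of the problem is the same for principal and parametric derivatives.

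First, I would write
\begin{equation*}
\alpha^{(k)}_{q+r} = m\binom{n+q+r-k-1}{n-k} - \beta^{(k)}_{q+r}
\end{equation*}
and substitute Theorem~\ref{thm:NoOfPrincipalD} to express $\beta^{(k)}_{q+r}$ as a weighted sum of the $\beta^{(i)}_q$. Next, I would replace each $\beta^{(i)}_q$ by $m\binom{n+q-i-1}{n-i} - \alpha^{(i)}_q$. The terms proportional to $\alpha^{(i)}_q$ immediately reproduce the right-hand side of the claimed formula, so the task reduces to verifying the purely combinatorial identity
\begin{equation*}
m\binom{n+q+r-k-1}{n-k} = \sum_{i=k}^n \binom{r+i-k-1}{r-1}\, m\binom{n+q-i-1}{n-i}.
\end{equation*}

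The key step, and the only one that requires a short calculation, is this binomial identity. I would establish it by setting $j = i-k$, $N = n-k$, and recognising the right-hand side as the Chu--Vandermonde convolution
\begin{equation*}
\sum_{j=0}^{N} \binom{j+r-1}{r-1}\binom{N-j+q-1}{q-1} = \binom{N+q+r-1}{N},
\end{equation*}
which is a standard identity (equivalent to the generating-function product $(1-x)^{-r}(1-x)^{-q} = (1-x)^{-(r+q)}$ read off at the coefficient of $x^N$). Applied here, it shows that the ``principal-derivative counting'' and the ``total-derivative counting'' both satisfy the same convolution recursion under prolongation.

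I do not expect any conceptual obstacle: the involutivity hypothesis has already been used in Theorem~\ref{thm:NoOfPrincipalD}, and the remaining content is essentially bookkeeping together with one standard binomial identity. The only mildly delicate point is checking that the lower index of summation in the identity (namely $j=0$, corresponding to $i=k$) matches the lower index in Theorem~\ref{thm:NoOfPrincipalD}; this is straightforward since the class $k$ derivatives can only receive contributions from class $i \geq k$ derivatives under the prolongation mechanism, consistent with the proof of Theorem~\ref{thm:NoOfPrincipalD}. Once the identity is in place, the lemma follows by matching coefficients of $\alpha^{(i)}_q$ on both sides.
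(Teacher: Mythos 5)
Your proposal is correct and follows exactly the route the paper intends: the paper states the lemma as ``a direct consequence of Theorem~\ref{thm:NoOfPrincipalD}'' via Definition~\ref{def:CartanCharacters} without writing out the details, and your argument supplies precisely the missing step, namely the convolution identity $\sum_{i=k}^{n}\binom{r+i-k-1}{r-1}\binom{n+q-i-1}{n-i}=\binom{n+q+r-k-1}{n-k}$, which you correctly reduce to the coefficient of $x^{n-k}$ in $(1-x)^{-r}(1-x)^{-q}=(1-x)^{-(q+r)}$. Nothing further is needed.
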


This lemma finally equips us with a systematic tool to quantify the amount of freedom available when constructing a formal power series solution to an involutive equation $\R_q$. Specifically, it allows us to count how many parametric derivatives remain undetermined at each order, and hence, to characterize the space of solutions.

Let us also briefly connect this discussion to the structure of the symbol. From the definition of the solved form of $\S_q$ (see Definition~\ref{def:SymbolSolvedForm}), it is clear that the rank of $\S_q$ is given by
\begin{align}
    \operatorname{rank}\S_q = \sum_{k = 1}^{n} \beta^{(k)}_q\,,
\end{align}
since $\beta^{(k)}_q$ counts the number of class $k$ pivots in the row-echelon form of $\S_q$. Thus, the total number of pivots is simply the sum of the principal characters. Furthermore, from equation~\eqref{eq:DimOfSq}, the dimension of the symbol $\S_q$ is
\begin{align}
    \dim \S_q 
    &= m\binom{n - 1 + q}{n - 1} - \operatorname{rank}\S_q \notag\\
    &= m\sum_{k = 1}^{n} \binom{n + q - k - 1}{n - k} - \sum_{k = 1}^{n} \beta^{(k)}_q \notag\\
    &= \sum_{k = 1}^{n} \alpha^{(k)}_q\,,
\end{align}
where in the second line we used equation~\eqref{eq:SumOverAllClassSizes}, and in the third line, we applied Definition~\ref{def:CartanCharacters}.

In summary, the dimension of $\S_q$ is precisely the sum of all Cartan characters; that is, it counts the total number of parametric derivatives. This result is intuitively clear: When $\S_q$ is viewed as a system of linear equations (see Definition~\ref{def:Symbol}), only the pivot variables (the principal derivatives) are determined. The remaining variables, which correspond to the parametric derivatives, span the solution space of the symbol. Thus, the dimension of $\S_q$ coincides with the number of parametric derivatives.

Using Theorem~\ref{thm:NoOfPrincipalD} and Lemma~\ref{lem:NoOfParametricDer}, we can now generalize our results to any order of prolongation. Specifically, we obtain
\begin{align}
    \operatorname{rank} \S_{q + r} &= \sum_{k = 1}^{n} \binom{r + k - 1}{r} \beta^{(k)}_q \notag\\
    \dim \S_{q + r} &= \sum_{k = 1}^{n} \binom{r + k - 1}{r} \alpha^{(k)}_q\,.
\end{align}
The meaning of these numbers is clear: The rank of $\S_{q + r}$ measures the total number of principal derivatives at order $q + r$, while $\dim \S_{q + r}$ counts the total number of parametric derivatives at that order. The latter quantity is particularly important, as it quantifies the amount of freedom available when specifying a solution. This observation motivates the following definition.

\begin{definition}[Hilbert polynomial]\label{def:HilbertPolynomial}
    Let $\R_q$ be an involutive equation with Cartan characters $\alpha^{(k)}_q$. The total number of parametric derivatives present in the prolongation $\R_{q + r}$ for $r \geq 0$ is called the \textbf{Hilbert polynomial} and is defined as
    \begin{equation*}
        H_q(r) \coloneqq \dim \S_{q + r} = \sum_{k = 1}^{n} \binom{r + k - 1}{r} \alpha^{(k)}_q\,.
    \end{equation*}
\end{definition}

It is important to emphasize that the number of parametric derivatives is precisely the number of Taylor coefficients which remain undetermined in the formal power series solution to $\R_q$. Thus, the Hilbert polynomial can be interpreted as a measure of the functional freedom inherent in the equation.

At this point, it may not be immediately obvious why $H_q(r)$ is indeed a polynomial in $r$. However, this becomes clear upon closer inspection of the binomial coefficients. For $k = 1$, the binomial simplifies to
\begin{align}
    \binom{r + k - 1}{r} = \binom{r}{r} = 1\,,
\end{align}
which is simply a constant, i.e., a polynomial of degree zero. For the terms with $2\leq k \leq n$, the binomial evaluates to
\begin{align}
    \binom{r+k-1}{r} &= \frac{(r+k-1)!}{r!(k-1)!} = \frac{\overbrace{1\cdot 2\cdots (r-1)\cdot r}^{=r!} \cdot (r+1)(r+2)\cdots(r+k-1)}{r!(k-1!)} \notag\\
    &= \frac{(r+1)(r+2)\cdots(r+k-1)}{(k-1)!} = \underbrace{\frac{1}{(k-1)!}}_{\text{numerical factor}}\underbrace{\prod_{i=1}^{k-1}(r+i)}_{\text{polynomial in $r$}}
\end{align}
For the second equality, we factored out $r!$ from the factorial $(r + k - 1)!$. We can now clearly see that the numerator is a product of $k - 1$ linear factors of the form $(r + i)$. Therefore, each term in the sum is a polynomial in $r$ of degree $k - 1$. 

It follows that the Hilbert polynomial $H_q(r)$ is a polynomial in $r$ of degree at most $n - 1$. The precise degree depends on which Cartan characters $\alpha^{(k)}_q$ are non-zero. In particular, if $\alpha^{(n)}_q = 0$, the degree of $H_q(r)$ is strictly less than $n - 1$.

Before proceeding with our general discussion, let us illustrate the concepts of Cartan characters and the Hilbert polynomial with a concrete example.

\begin{example}[Cartan characters and Hilbert polynomial of the Proca equation]\label{ex:CCandHPOfProca}
    In Example~\ref{ex:CKAAppliedToProca}, we applied the Cartan--Kuranishi algorithm to the Proca equation. In the process, we determined the characters $\beta^{(k)}_2$ associated with the involutive prolongation $\R^{(2)}_2$. From these, the Cartan characters can be readily computed:
    \begin{align*}
        \alpha^{(1)}_2 &= 15\,, & \alpha^{(2)}_2 &= 11\,, & \alpha^{(3)}_2 &= 6\,, & \alpha^{(4)}_2 &= 0\,.
    \end{align*}
    The corresponding Hilbert polynomial is then given by
    \begin{align*}
        H_2(r) = \sum_{k = 1}^{4} \binom{r + k - 1}{r} \alpha^{(k)}_2 = 32 + 20r + 3r^2\,.
    \end{align*}
\end{example}

As we will see shortly, the fact that $\alpha^{(4)}_2$ vanishes is of particular significance. Specifically, if the Cartan character of highest class does \emph{not} vanish, this indicates that either gauge freedom has not been properly accounted for, or that the system of equations under consideration is pathological. The precise implications of a non-vanishing highest Cartan character will be discussed in the next subsection. Gauge freedom, on the other hand, will be treated in detail in Subsection~\ref{ssec:GaugeTheory}.

\subsection{Hilbert Polynomials and the Presence of Arbitrary Functions}\label{ssec:HilbertPolynomial}

As alluded to in the previous subsection, we generally expect that a solution $v^{A}$ to an equation $\R_q$ depends on a certain number of freely specifiable functions. For instance, in the case of Maxwell's equations, we observed that one component of the vector potential $A^{\mu}$ can be freely chosen due to gauge symmetry---corresponding to one arbitrary function of four coordinates. Additionally, the initial data $(A^{\mu}|_{\Sigma}, \dot{A}^{\mu}|_{\Sigma})$ allows us to specify six functions of three coordinates freely.

More generally, we now formalize the idea that in order to construct a solution to $\R_q$, we are required to choose certain functions. These functions may depend on all coordinates or only on a subset of them, just as in the Maxwell example. More precisely, we assume that the solution $v^{A}$ can be expressed as\footnote{This is known as an algebraic representation. Not every solution can be written in this form. Many differential equations admit solutions in terms of integrals. A well-known example is the d'Alembert solution to the one-dimensional wave equation~\cite{SeilerBook}. We will not consider integral representations here.}
\begin{align}\label{eq:AlgebraicRepresentation}
    v^{A} = v^{A}(x^\mu, F_1, \dots, F_k, \dots, F_n)\,,
\end{align}
where $F_k$ collectively denotes all freely specifiable functions of $k$ coordinates which appear in the solution $v^{A}$. It is important to emphasize that some of the $F_k$ may be absent from a particular solution and that the specific functions present may depend on different sets of $k$ coordinates. For instance, we may encounter a solution $v^{A}$ which depends on two functions of two coordinates, such as $f(x^{1}, x^{2})$ and $g(x^{3}, x^{4})$, where $f$ and $g$ are arbitrary. Hence, in this case we would have $F_2 = \{f(x^1, x^2), g(x^3, x^4)\}$.

In typical gauge theories, one generally encounters the freedom to choose one or more functions of all $n$ coordinates, corresponding to the gauge freedom. Additionally, in virtually all field theories, the freedom to specify initial or boundary data introduces an arbitrariness in the form of functions depending on $n-1$ coordinates.

Our next task is to understand how the arbitrary functions in the algebraic representation~\eqref{eq:AlgebraicRepresentation} are related to the Taylor coefficients of the formal power series expansion. This connection is established by Taylor-expanding the right-hand side of~\eqref{eq:AlgebraicRepresentation} and comparing it with the formal power series expansion~\eqref{eq:FormalPowerSeries}. 

Suppose that the solution $v^{A}$ depends on $f_k$ arbitrary functions of $k$ coordinates (that is, $f_k$ counts how many functions of $k$ coordinates are contained in the collection $F_k$). Then, the Taylor expansion of~\eqref{eq:AlgebraicRepresentation} will contain exactly
\begin{align}\label{eq:Tqr}
    T_q(r) \ce \sum_{k=1}^{n}f_k\binom{k + q + r -1}{q+r}
\end{align}
arbitrary Taylor coefficients at order $q+r$. If the Taylor expansion is to agree with the formal power series solution, this number of arbitrary coefficients must match the total number of parametric derivatives appearing at order $q+r$. As we have seen, this number is measured by the Hilbert polynomial. We are therefore led to impose the condition
\begin{align}\label{eq:HequalT}
    H_q(r) = \sum_{k=1}^{n}\binom{r+k-1}{r}\alpha^{(k)}_q \overset{!}{=} \sum_{k=1}^{n}f_k\binom{k + q + r -1}{q+r} = T_q(r)\,.
\end{align}

We recall that the Hilbert polynomial is of degree at most $n-1$, provided that $\alpha^{(n)}_q$ is not zero. The right-hand side of~\eqref{eq:HequalT} is also a polynomial in $r$, as can be seen from the following manipulation:
\begin{align}
    \binom{k+q+r-1}{q+r} &= \frac{(k+q+r-1)!}{(q+r)!(k-1)!}\notag\\
    &= \frac{\overbrace{1\cdot 2\cdots (q+r-1)(q+r)}^{=(q+r)!}(q+r+1)\cdots(q+r+k-1)}{(q+r)!(k-1)!}\notag\\
    &= \frac{1}{(k-1)!}\prod_{i=1}^{k-1}(q+r+i)\,.
\end{align}
This expression is clearly a polynomial in $r$ of degree $k-1$. Plugging this back into the expression for $T_q(r)$, we conclude that the right-hand side of~\eqref{eq:HequalT} is a polynomial of degree at most $n-1$.

We can now relate the Cartan characters $\alpha^{(k)}_q$ to the numbers $f_k$ by comparing the coefficients of $r$, $r^2$, \dots, $r^{n-1}$ on both sides of~\eqref{eq:HequalT}. The simplest relation follows by comparing the coefficients of the highest-degree term, $r^{n-1}$. This yields
\begin{align}
    f_n = \alpha^{(n)}_q\,.
\end{align}
This is an important result: it tells us that whenever the highest Cartan character $\alpha^{(n)}_q$ is non-zero, there are precisely $\alpha^{(n)}_q$ arbitrary functions of $n$ variables in the general solution $v^{A}$ to $\R_q$.

This is typically the case in gauge theories \emph{before} gauge symmetry has been fixed. If such a non-zero $\alpha^{(n)}_q$ appears in a theory without gauge symmetry, this indicates a pathology, since in this case a unique solution cannot be obtained from initial or boundary conditions alone. Instead, one would have to specify a certain number of arbitrary functions of $n$ coordinates, which from a physical standpoint signals a breakdown of classical determinism.

As an explicit example, recall from Example~\ref{ex:CCandHPOfProca} that for the involutive version of the Proca equation we found $\alpha^{(4)}_2 = 0$. Since the Proca theory has no gauge symmetry and is deterministic, this result is fully consistent with our discussion. Had we found a non-vanishing $\alpha^{(4)}_2$, we would have encountered a serious problem.

The relation between the remaining Cartan characters and the numbers $f_k$ with $1\leq k<n$ is more subtle. To find these relations, one has to solve a recursion relation. While this can be done in complete generality for arbitrary $n$ and $q$ (see~\cite{SeilerBook}), the resulting formulas are cumbersome and offer limited insight. Instead, we present here the solutions for $n=4$ and $q=1$ or $q=2$, which cover many practical applications.

For first-order equations in four dimensions, the relations are
\begin{align}\label{eq:fFirstOrder}
    f_1 &= \alpha^{(1)}_1 - \alpha^{(2)}_1, & f_2 &= \alpha^{(2)}_1 - \alpha^{(3)}_1, & f_3 &= \alpha^{(3)}_1 - \alpha^{(4)}_1, & f_4 = \alpha^{(4)}_1\,.
\end{align}
For second-order equations in four dimensions, the relations read
\begin{align}\label{eq:fSecondOrder}
    f_1 &= \alpha^{(1)}_2 - 2\alpha^{(2)}_2 + \alpha^{(3)}_2, & f_2 &= \alpha^{(2)}_2 - 2\alpha^{(3)}_2 + \alpha^{(4)}_2, & f_3 &= \alpha^{(3)}_2 - 2\alpha^{(4)}_2, & f_4 &= \alpha^{(4)}_2\,.
\end{align}
Using the general solution to the recursion relations derived in~\cite{SeilerBook}, one can show that for any dimension $n\geq 2$ and any order $q\geq 1$,
\begin{align}\label{eq:GeneralRelation}
    f_n &= \alpha^{(n)}_q\,, & f_{n-1} &= \alpha^{(n-1)}_q - q\, \alpha^{(n)}_q\,.
\end{align}
These relations provide an important clue towards the interpretation of the Cartan characters and the structure of the general solution to the equation $\R_q$.

\subsection{Gauge-Correction for Cartan Characters and Hilbert Polynomials}\label{ssec:GaugeTheory}

By equating the number of parametric derivatives to the number of free Taylor coefficients, we were able to relate the Cartan characters to the occurrence of freely specifiable functions in the solution to $\R_q$. In particular, we found that $\alpha^{(n)}_q$ always counts the number of arbitrary functions of all $n$ coordinates that need to be specified in order to solve $\R_q$. This observation, together with the explicit formulas~\eqref{eq:fFirstOrder},~\eqref{eq:fSecondOrder}, and in particular~\eqref{eq:GeneralRelation}, leads us to propose the following tentative interpretation:
\begin{itemize}
    \item For a well-posed physical theory, we require $\alpha^{(n)}_q = 0$. If this is not the case, initial or boundary conditions alone do not suffice to determine a unique solution to the equation $\R_q$. This signals a breakdown of classical determinism.
    \item Provided that $\alpha^{(n)}_q = 0$, the Cartan character $\alpha^{(n-1)}_q$ measures the amount of initial or boundary data that must be specified to uniquely solve $\R_q$. In this sense, it directly quantifies the number of degrees of freedom propagated by the equation.
\end{itemize}

This interpretation is corroborated by the case of Proca's equations. In Example~\ref{ex:CKAAppliedToProca} and Example~\ref{ex:CCandHPOfProca}, we found that $\alpha^{(4)}_2 = 0$ and $\alpha^{(3)}_2 = 6$, which implies $f_4 = 0$ and $f_3 = 6$. That is, no arbitrary functions of four coordinates appear in the solution, and six functions of three coordinates must be specified as initial or boundary data. This is precisely in agreement with the well-known physical properties of Proca's theory.

However, so far we have neglected to take gauge symmetry into account. Our current interpretation may fail in this context because, in gauge theories, we typically need to specify functions of all $n$ coordinates, but this does \textit{not} imply a breakdown of determinism. Rather, it simply reflects the fact that an equation $\R_q$ for a gauge field determines a solution $v^{A}$ only up to gauge transformations. More precisely, if $v^{A}$ solves $\R_q$ for a given set of initial or boundary data, then so does any other field $\tilde{v}^{A}$ which is related to $v^{A}$ by a gauge transformation. Such a transformation generically involves arbitrary functions of all $n$ coordinates. Consequently, it is impossible for $\R_q$ to determine $v^{A}$ completely, and we should naturally expect $\alpha^{(n)}_q \neq 0$ in the presence of gauge freedom.

A simple example will convince us of the validity of these arguments and will explicitly demonstrate that our tentative interpretation requires refinement when applied to gauge theories.

\begin{example}[Cartan characters of Maxwell's equations]\label{ex:CartanCharactersMaxwell}
    Unlike Proca's equations, Maxwell's equations feature gauge symmetry. Starting from the equation
    \begin{align*}
        \partial^\nu \partial_\mu A^\mu - \partial_\mu \partial^\mu A^\nu = 0\,,
    \end{align*}
    we run through the machinery of the Cartan-Kuranishi algorithm to ensure that we are working with an involutive equation (our interpretation can only be valid for involutive equations). Remarkably, the Cartan-Kuranishi algorithm reveals that Maxwell's equations in the form given above are already involutive. Thus, the algorithm terminates without requiring any prolongations or projections and yields the following Cartan characters:
    \begin{align*}
        \alpha^{(1)}_2 = 16\,, \qquad \alpha^{(2)}_2 = 12\,, \qquad \alpha^{(3)}_2 = 7\,, \qquad \alpha^{(4)}_2 = 1\,.
    \end{align*}
    As expected, $\alpha^{(4)}_2$ is non-zero, immediately signaling the breakdown of our tentative interpretation. However, the fact that $\alpha^{(4)}_2 = 1$ is fully consistent with our knowledge that in Maxwell's theory there exists one gauge degree of freedom---corresponding to one arbitrary function of four coordinates---that needs to be specified in order to obtain a particular solution.

    On the other hand, the interpretation of the Cartan character $\alpha^{(3)}_2$ is now problematic. Our earlier reading of $\alpha^{(n-1)}_q$ as counting the number of initial or boundary conditions fails in this case. This is further reflected in the corresponding numbers $f_k$ of free functions:
    \begin{align*}
        f_1 = -1\,, \qquad f_2 = -1\,, \qquad f_3 = 5\,, \qquad f_4 = 1\,.
    \end{align*}
    Instead of the expected four functions serving as initial or boundary data, we are now led to the nonsensical conclusion that there should be five, accompanied by negative numbers of functions. Clearly, our tentative interpretation does not hold in the presence of gauge symmetry and requires revision.
\end{example}
Of course, we understand the source of the failure and how to remedy it: we need to incorporate gauge transformations into the formalism. Let us assume that the base space~$B$ is $n$-dimensional and that $v^{A}$ is a gauge field. We say that $\tilde{v}^{A}$ is the gauge-transformed field of $v^{A}$ if it can be written as 
\begin{align}\label{eq:GaugeTransformationLaw} \tilde{v}^{A} = \Gamma\big(x^\mu, v^{A}, \lambda^{(0)}_a(x), \partial\lambda^{(1)}_a(x), \dots, \partial^{\bfm}\lambda^{(p)}_a(x)\big) \quad \text{with } p \ce |\bfm|,. 
\end{align} 
Here, $\Gamma$ is a function of the $n$ base coordinates $x^\mu$, of the field $v^{A}$ itself, and of auxiliary fields $\lambda^{(i)}_a(x)$, each depending on all $n$ coordinates. These fields may appear algebraically (as in $\lambda^{(0)}_a(x)$) or accompanied by derivatives. For later convenience, we introduce the numbers $\gamma_\ell$, defined as follows: \begin{itemize} \item $\gamma_0$: the number of fields that enter algebraically in the transformation law~\eqref{eq:GaugeTransformationLaw}. \item $\gamma_i$ with $0 < i \leq p$: the number of fields that enter with $i$-th order derivatives in the transformation law~\eqref{eq:GaugeTransformationLaw}. \end{itemize}

We illustrate the structure of~\eqref{eq:GaugeTransformationLaw} and the counting of $\gamma_\ell$ with two familiar examples.

\begin{example}[Electromagnetism and General Relativity]\label{ex:EDandGR}
    As is well-known, the gauge transformation law of electromagnetism has the form
    \begin{align*}
        A_\mu \quad\mapsto\quad \tilde{A}_\mu = A_\mu + \partial_\mu \Phi\,,
    \end{align*}
    where $\Phi$ is some scalar field. This has precisely the form of~\eqref{eq:GaugeTransformationLaw}. The field $\tilde{A}_\mu$, which is obtained from $A_\mu$ by a gauge transformation, is written in terms of $A_\mu$ itself and a function of all $n$ coordinates. We have the identification
    \begin{align*}
        \partial_\mu \Phi \equiv \partial_\mu \lambda^{(1)}(x)\,.
    \end{align*}
    It thus follows that in electromagnetism we have
    \begin{align*}
        \gamma_\ell =
        \begin{cases}
            1 & \text{when } \ell = 1 \\
            0 & \text{otherwise}
        \end{cases}
    \end{align*}
    In the case of GR, gauge transformations are induced by diffeomorphism. Under a diffeomorphism which takes $x^\mu$ to $\tilde{x}^\mu(x)$, where $\tilde{x}^\mu$ should be understood as four functions of $x^\mu$, the metric transforms as
    \begin{align*}
        g_{\mu\nu} \quad\mapsto\quad \tilde{g}_{\mu\nu} = \PD{x^\alpha}{\tilde{x}^\mu}\PD{x^\beta}{\tilde{x}^\nu} g_{\alpha\beta}(x) \,.
    \end{align*}
    This is again a transformation law of the form~\eqref{eq:GaugeTransformationLaw}. The field $\tilde{g}_{\mu\nu}$ is expressed in terms of $g_{\mu\nu}$ itself and four function $\tilde{x}^\mu(x)$. With the identification
    \begin{align*}
        \PD{\tilde{x}^\mu}{x^\nu} \equiv \partial_\nu (\lambda^{(1)})^\mu(x)\,,
    \end{align*}
    we conclude that the numbers $\gamma_\ell$ satisfy the relation
    \begin{align*}
        \gamma_\ell =
        \begin{cases}
            4 & \text{when } \ell = 1\\
            0 & \text{otherwise}
        \end{cases}\,.
    \end{align*}
    Observe that $\tilde{g}_{\mu\nu} = \left(\partial_\alpha (\lambda^{(1)})^\mu\right)^{-1}\left(\partial_\alpha (\lambda^{(1)})^\nu\right)^{-1} g_{\alpha\beta}(x)$, which is consistent with~\eqref{eq:GaugeTransformationLaw}.
\end{example}

At this point, we draw attention to a potential ambiguity in the definition of $\gamma_\ell$ introduced above. In the examples discussed so far, the gauge parameters appeared exclusively with first-order derivatives in the transformation law. However, in more general cases, the transformation may involve a mixture of $\lambda$-fields: some appearing algebraically, others with first-order, second-order, or even higher-order derivatives. This raises an important question: how should one properly count the fields when multiple types of $\lambda$'s are present?

To illustrate the problem, let us consider another example. This time, we promote Proca's theory to a gauge theory using St\"{u}ckelberg's trick. In other words, we introduce a scalar field $\pi$, which serves to render the theory gauge-invariant under a certain set of transformations.

\begin{example}[Proca \`{a} la St\"{u}ckelberg]\label{ex:ProcaStueckelberg} 
The Proca Lagrangian is typically written as 
\begin{align*} 
    \mathcal{L} = -\frac14 F_{\mu\nu} F^{\mu\nu} - \frac12 m^2 A_\mu A^\mu\,, 
\end{align*} where $F_{\mu\nu} \ce \partial_\mu A_\nu - \partial_\nu A_\mu$. The tensor $F_{\mu\nu}$ is clearly invariant under the transformation $A_\mu \mapsto \tilde{A}_\mu = A_\mu + \partial_\mu \Phi$, but the mass term $m^2 A_\mu A^\mu$ prevents the Lagrangian from being invariant as a whole.
This issue can be resolved by introducing a scalar field $\pi$, leading to a modified Lagrangian of the form
\begin{align*}
    \mathcal{L} = -\frac14 F_{\mu\nu} F^{\mu\nu} - \frac12 m^2 (A_\mu + \partial_\mu \pi)(A^\mu + \partial^\mu \pi)\,.
\end{align*}
This new Lagrangian is now invariant under the simultaneous transformation
\begin{align*}
    A_\mu \quad &\mapsto \quad \tilde{A}_\mu = A_\mu + \partial_\mu \Phi \\
    \pi \quad &\mapsto \quad \tilde{\pi} = \pi - \Phi\,.
\end{align*}
For more details on the St\"{u}ckelberg trick and this re-formulation of Proca's theory, see for instance~\cite{Heisenberg:2018}. What is crucial for us is that this transformation can be written in a form compatible with~\eqref{eq:GaugeTransformationLaw}. To that end, we recall that in the jet bundle formalism, we combine all fields into a single vector. Thus, we define
\begin{align*}
    v^{A} = 
    \begin{pmatrix}
        A^\mu \\
        \pi
    \end{pmatrix}
\end{align*}
and similarly for the transformations defined above. This is now consistent with~\eqref{eq:GaugeTransformationLaw}. 

Now, how should we identify the functions $\lambda^{(i)}$? We observe that $\Phi$ appears both algebraically and with its first derivative. Should we set $\lambda^{(0)}(x) = \Phi$? Or perhaps $\lambda^{(1)}_\mu(x) = \partial_\mu \Phi$? Another option is to set both $\lambda^{(0)}(x) = \Phi$ and $\lambda^{(1)}_\mu(x) = \partial_\mu \Phi$ simultaneously. These three choices correspond to three distinct possibilities for the numbers $\gamma_\ell$:
\begin{align*}
    (\gamma_0, \gamma_1) \overset{?}{=} 
    \begin{cases}
        (1, 0) \\
        (0, 1) \\
        (1, 1)
    \end{cases}
\end{align*}
Which of these choices is correct?
\end{example}

To determine the correct choice, we need to understand the purpose of the $\gamma_\ell$'s. Recall that when we construct a formal power series solution to a given set of PDEs, we always end up with Taylor coefficients which are determined by the equations themselves, provided we specify a certain number of other Taylor coefficients by hand. These freely specifiable Taylor coefficients encode our freedom to choose initial data or boundary data, as well as our freedom to fix a gauge. The total number of Taylor coefficients at order $q+r$ which are \emph{not} determined by the PDE itself, i.e., the ones over which we have some control, is measured by the Cartan characters $\alpha^{(k)}_{q+r}$. In particular, $\alpha^{(n)}_{q+r}$ measures how many Taylor coefficients can be attributed to the presence of arbitrary functions of $n$ coordinates at order $q+r$.

We now begin to understand why~\eqref{eq:GaugeTransformationLaw} yields ambiguous results in the case of the Proca-St\"{u}ckelberg theory. The Cartan characters are derived directly from the field equations and they do not know anything about gauge transformations. On the other hand, the  $\gamma$'s only know about the gauge transformations, but nothing about the field equations. Hence, to resolve the problem, we have to look at how the $\gamma$'s enter the field equations after a gauge transformation has been applied. Only in this way can we understand how many Taylor coefficients can be fixed by gauge transformations, which is tantamount to knowing the $\gamma$'s.

To make the argument more precise, we first need to state the Proca-St\"{u}ckelberg equations:
\begin{align}
    \R_2 :
\begin{cases}
    \partial^\nu \partial_\mu A^\mu - \partial_\mu \partial^\mu A^\nu + m^2 A^\nu + m\,\partial^\nu \pi &= 0\\
    \\
    \partial^\mu \partial_\mu \pi + m\, \partial_\mu A^\mu &= 0
\end{cases}\,.
\end{align}
If $v^{A} = (A^\mu, \pi)$ is a solution to these equations, then so is $\tilde{v}^{A} = (\tilde{A}^\mu, \tilde{\pi})$, if it is related to~$v^{A}$ via the gauge transformation described in Example~\ref{ex:ProcaStueckelberg}. Of course, the equations are perfectly invariant under this gauge transformation, so nothing seems to be gained. The trick is to observe that $v^{A}$ and $\tilde{v}^{A}$ are genuinely \textit{different} fields and that their difference stems from using the gauge transformation to fix some of their components in different ways. An other observation is that $\pi$, despite transforming algebraically under gauge transformations, always enters with derivatives in the field equations. Thus, also $\Phi$ enters only with derivatives. When using the gauge transformation to fix Taylor coefficients in the formal power series solution, we thus only ever have access to $\partial\Phi$, not $\Phi$ itself. It follows that
\begin{align}
    (\gamma_0, \gamma_1) = (0,1)
\end{align}
is the only correct way of defining the $\gamma$'s for the Proca-St\"{u}ckelberg theory. More generally, one can show that under a given gauge transformation of the form~\eqref{eq:GaugeTransformationLaw}, it is possible to fix
\begin{align}\label{eq:GaugePolynomial}
    G_q(r) \ce \sum_{\ell = 0}^p \gamma_\ell \binom{q+r+\ell + n - 1}{q+r+\ell}
\end{align}
Taylor coefficients of functions of $n$ coordinates at order $q+r$ by means of a gauge-fixing. Compare $G_q(r)$ to $T_q(r)$, as defined in equation~\eqref{eq:Tqr}. The binomial of both expressions has a similar structure. The term $q+r$ keeps track of the order in the formal power series solution, while $k$ and $n$ keep track of the number of coordinates that appear in the functions. In $T_q(r)$ we sum over $k$, meaning that we sum over different functions, where $f_k$ determines how many functions of $k$ coordinates are present. In $G_q(r)$, we only consider functions of $n$ coordinates, so we can see this as specializing $T_q(r)$ to the case where $f_k=0$ unless $k=n$. Moreover, the parameter $\ell$ keeps track of the order of differentiation of the function under consideration. That is, $G_q(r)$ knows that the functions we consider involve derivatives. 

To take into account that gauge-fixing allows us to eliminate certain Taylor coefficients in the formal power series solution by hand, we introduce the gauge-corrected Hilbert polynomial.
\begin{definition}[Gauge-corrected Hilbert polynomial]\label{def:GaugeCorrectedHilbertPolynomial}
    Let $H_q(r)$ be the Hilbert polynomial associated with $\R_q$ and $G_q(r)$ the polynomial defined in~\eqref{eq:GaugePolynomial}, which takes into account the gauge freedom~\eqref{eq:GaugeTransformationLaw}. The gauge-corrected Hilbert polynomial is then defined as
    \begin{align*}
        \bar{H}_q(r) \ce H_q(r) - G_q(r)\,.
    \end{align*}
    In words, we subtract from the number of freely specifiable Taylor coefficients at order $q+r$ the number of coefficients which can be fixed by a gauge choice. 
\end{definition}
We originally defined the Hilbert polynomial in terms of Cartan characters. This relationship can be inverted. That is, knowing the Hilbert polynomial allows us to recursively infer the values of the Cartan characters. This is important because once we have determined the gauge-corrected Hilbert polynomial, we can define a new set of Cartan characters, which we call \textbf{gauge-corrected Cartan characters}, using very similar recursion relations.

In order to determine the Cartan characters from $H_q(r)$, we need to introduce the modified Stirling numbers:
\begin{align}\label{eq:ModifiedStirlingNumbers}
    s^{(N)}_k(X) \ce
    \begin{cases}
        0 & \text{if } k<0 \\
        1 & \text{if } k=0 \\
        \sigma^{(N)}_k(X+1, X+2, \dots, X+n) & \text{for } 0<k\leq N
    \end{cases}\,.
\end{align}
Here, $\sigma^{(N)}_k$ are the symmetric elementary polynomial of degrees $k$ in $N$ unknowns (caution: $N$ has nothing to do with spacetime dimension!). These polynomials are defined as
\begin{align}
    \sigma^{(N)}_k(X_1, X_2, \dots, X_n) \ce \sum_{1\leq i_1 < i_2 <\dots < i_k \leq N} X_{i_1} X_{i_2} \cdots X_{i_k}\,.
\end{align}
In the special cases with $k=1$, $k=2$, and $k=N$, the above definition reduces to the simpler expressions
\begin{align}\label{eq:ESPSpecialCases}
    \sigma^{(N)}_1(X_1, X_2, \dots, X_N) &= \sum_{i=1}^N X_i \notag\\
    \sigma^{(N)}_2(X_1, X_2, \dots, X_N) &= \sum_{1\leq i < j \leq N} X_i X_j = \sum_{i=1}^{N-1}\sum_{j=i+1}^N X_i X_j \notag\\
    \sigma^{(N)}_N(X_1, X_2, \dots, X_N) &= X_1 X_2 \cdots X_N\,.
\end{align}
By using modified Stirling numbers, we can re-write binomial coefficients of the form
\begin{align}\label{eq:BinomialToStirling}
    \binom{q+r+n}{q+r} = \frac{1}{n!}\sum_{i=0}^{n} s^{(n)}_{n-1}(q) r^{i}\,.
\end{align}
This is the general form that appears in the Hilbert polynomial. Next, recall that $H_q(r)$ is a polynomial of degree at most $n-1$ in $r$. We would like to write this polynomial as 
\begin{align}
    H_q(r) = \sum_{i=0}^{n-1} h_i r^{i}\,,
\end{align}
where $h_i$ are the coefficients of the monomials $r^{i}$. Using the expression of the Hilbert polynomial $H_q(r)$ found in Definition~\ref{def:HilbertPolynomial}, and equation~\eqref{eq:BinomialToStirling}, we find
\begin{align}
    H_q(r) &= \sum_{k=1}^{n} \binom{r+k-1}{r}\alpha^{(k)}_q \notag\\
    &= \sum_{k=1}^n \left(\frac{1}{(k-1)!} \sum_{i=0}^{k-1} s^{(k-1)}_{k-i-1}(0) \alpha^{(k)}_q r^{i}\right) \notag\\
    &= \sum_{k=0}^{n-1} \left(\sum_{i=0}^{k-1} \frac{\alpha^{(k+1)}_q}{k!} s^{(k)}_{k-i}(0) r^{i}\right) \notag\\
    &= \sum_{i=0}^{n-1}\left(\sum_{k=i}^{n-1}\frac{\alpha^{(k+1)}_q}{k!}s^{(k)}_{k-i}(0)\right) r^{i}
\end{align}
from which we can now read off the coefficients $h_i$ as 
\begin{align}\label{eq:hi}
    h_i = \sum_{k=i}^{n-1}\frac{\alpha^{(k+1)}_q}{k!}s^{(k)}_{k-i}(0)\,.
\end{align}
To get from the first to the second line, we used~\eqref{eq:BinomialToStirling}. On the third line we re-defined the summation index $k$ as $k+1$. This has the effect of shifting the summation range from $k=1,\dots,n$ to $k=0,\dots,n-1$. Finally, on the last line we swapped the two sums. This is necessary since we want the sum over $i$ outside of the bracket, so that we obtain an expression of the form $\sum_{i=0}^{n-1}h_i r^{i}$. In swapping the sums, we also adjusted the summation ranges of both sums. In the case of the sum over $k$, we can start the summation at $k=i$, instead of $k=0$. This is possible since the modified Stirling numbers $s^{(k)}_{k-i}$ are zero for $k-i<0$ (see equation~\eqref{eq:ModifiedStirlingNumbers}). Similarly, we can extend the sum over $i$ to go up to $n-1$, instead of just $k-1$. The reason is again that $s^{(k)}_{k-i}=0$ for $i\geq k$.

From~\eqref{eq:hi}, we obtain simple expressions for the cases where $i=n-1$ or $i=0$:
\begin{align}\label{eq:hn-1h0}
    h_{n-1} &= \sum_{k=0}^{n-1} \frac{\alpha^{(k+1)}_q}{k!} \underbrace{s^{(k)}_{k-n+1}(0)}_{=0 \text{ unless } k=n-1} = \frac{\alpha^{(n)}_q}{(n-1)!} \notag\\
    h_0 &= \sum_{k=0}^{n-1} \frac{\alpha^{(k+1)}_q}{k!} \underbrace{s^{(k)}_k(0)}_{=k!} = \sum_{k=1}^{n} \alpha^{(k)}_q\,.
\end{align}
Observe that the first equation allows us to solve for $\alpha^{(n)}_q$. If we know the value of $h_{n-1}$, we find that $\alpha^{(n)}_q$ is given by
\begin{align}
    \alpha^{(n)}_q = (n-1)!\,h_{n-1}\,.
\end{align}
This now allows us to recursively determine all the other Cartan characters. To see this, we set $i=k-1$, for $1\leq k \leq n-1$, and we compute
\begin{align}\label{eq:hcoeff}
    h_{k-1} &= \sum_{j=k-1}^{n-1} \frac{\alpha^{(j+1)}_q}{j!} s^{(j)}_{j-k+1}(0) = \frac{\alpha^{(k)}_q}{(k-1)!} \underbrace{s^{(k-1)}_0(0)}_{=1} + \sum_{j=k}^{n-1} \frac{\alpha^{(j+1)}_q}{j!}s^{(j)}_{j-k+1}(0) \notag\\
    &= \frac{\alpha^{(k)}_q}{(k-1)!} + \sum_{j=k+1}^{n} \frac{\alpha^{(j)}_q}{(j-1)!}s^{(j-1)}_{j-k}(0)\,.
\end{align}
Observe that the sum on the second line only involves Cartan characters of order $j>k$. Thus, we can solve for $\alpha^{(k)}_q$ and express it in terms of higher order Cartan characters as follows:
\begin{align}
    \alpha^{(n)}_q &= (n-1)!\, h_{n-1} \notag\\
    \alpha^{(k)}_q &= (k-1)!\, h_{k-1} - \sum_{j=k+1}^{n} \frac{(k-1)!}{(j-1)!}\alpha^{(j)}_q s^{(j-1)}_{j-k}(0) & \text{for} && 1\leq k \leq n-1 \,.
\end{align}
As claimed, we can recursively compute the Cartan characters from the coefficients of the Hilbert polynomial. Using virtually the same computational steps, we can determine the gauge-corrected Cartan characters from the gauge-corrected Hilbert polynomial. To that end, we express
\begin{align}
    \bar{H}_q(r) = \sum_{i=0}^{n-1} \bar{h}_i r^{i}
\end{align}
and, using Definition~\ref{def:GaugeCorrectedHilbertPolynomial} together with the definition of the modified Stirling numbers, we obtain for the coefficients of the gauge-corrected Hilbert polynomial
\begin{align}\label{eq:hbar}
    \bar{h}_k &= h_k - \frac{1}{(n-1)!}\sum_{\ell=0}^p \gamma_\ell \, s^{(n-1)}_{n-k-1}(q+\ell) & \text{for} && 1\leq k \leq n-1\,.
\end{align}
The gauge-corrected Cartan characters are then found recursively to be given by
\begin{align}\label{eq:RecRelGaugeCorrCartanChar}
    \bar{\alpha}^{(n)}_q &= (n-1)!\, \bar{h}_{n-1} \notag\\
    \bar{\alpha}^{(k)}_q &= (k-1)!\, \bar{h}_{k-1} - \sum_{j=k+1}^{n} \frac{(k-1)!}{(j-1)!}\bar{\alpha}^{(j)}_q s^{(j-1)}_{j-k}(0) & \text{for} && 1\leq k \leq n-1 \,.
\end{align}
In a gauge theory, we have to equate the gauge-corrected Hilbert polynomial $\bar{H}_q(r)$, rather than $H_q(r)$, to $T_q(r)$. If we do so, we obtain analogously to the previous subsection that for any $n\geq 2$ and any order $q\geq 1$ the general power series solution to $\R_q$ contains 
\begin{align}\label{eq:GaugeCorrectedfs}
    f_n &= \bar{\alpha}^{(n)}_q, && & f_{n-1} &= \bar{\alpha}^{(n-1)}_q - q\, \bar{\alpha}^{(n)}_q
\end{align}
functions of $n$ and $n-1$ coordinates, respectively. To see whether the gauge-correction cures the problem we had with our tentative interpretation of Cartan characters we discussed at the beginning of this subsection, we consider again an example. 
\begin{example}[Gauge-corrected Cartan characters for Maxwell's equations]\label{ex:CartanCharactersMaxwellCorrected}
    In Example~\ref{ex:CartanCharactersMaxwell} we saw that without taking gauge freedom into account, we obtain $\alpha^{(4)}_2 = 1$ and $\alpha^{(3)}_2 = 5$ for Maxwell's equations. This implies the presence of one function of four coordinates and five functions of three coordinates in the power series solution to Maxwell's equations. This is clearly incorrect.

    However, if we take gauge freedom into account, we should work with the gauge-corrected Cartan characters. To determine those, we first need to compute the gauge-corrected Hilbert polynomial. Since $\gamma_\ell = 1$ for $\ell=1$, we obtain (cf. equation~\eqref{eq:GaugePolynomial} and Definition~\ref{def:GaugeCorrectedHilbertPolynomial})
    \begin{align*}
        \bar{H}_2(r) = \sum_{k=1}^4 \binom{r+k-1}{r}\alpha^{(k)}_2 - \binom{6+r}{3+r} = 16 + 12 r + 2 r^2\,.
    \end{align*}
    Because $\bar{H}_q(r)$ has to be of the form $\sum_{i=0}^{n-1}\bar{h}_i r^{i}$, we can simply read off the coefficients $\bar{h}_i$:
    \begin{align*}
        \bar{h}_0 &= 16\,, & \bar{h}_1 &= 12\,, & \bar{h}_2 &= 2\,, & \bar{h}_3 &= 0\,.
    \end{align*}
    Now we can use the recursion relations~\eqref{eq:RecRelGaugeCorrCartanChar} to determine the gauge-corrected Cartan characters. We find
    \begin{align*}
        \bar{\alpha}^{(4)}_2 &= 3!\, \bar{h}_3 = 0 \\
        \bar{\alpha}^{(3)}_2 &= 2 \bar{h}_2 - 2 \bar{\alpha}^{(4)}_2 = 4 \\
        \bar{\alpha}^{(2)}_2 &= \bar{h}_1 - \frac32 \bar{\alpha}^{(3)}_2 - \frac{11}{6} \bar{\alpha}^{(4)}_2 = 6 \\
        \bar{\alpha}^{(1)}_2 &= \bar{h}_0 - \bar{\alpha}^{(2)}_2 - \bar{\alpha}^{(3)}_2 - \bar{\alpha}^{(4)}_2 = 6\,.
    \end{align*}
    Of particular interest are $\bar{\alpha}^{(4)}_2$ and $\bar{\alpha}^{(3)}_2$. These values for the Cartan characters now imply that there are (see equation~\eqref{eq:GaugeCorrectedfs})
    \begin{align*}
        f_4 &= 0 & \text{and} && f_3 = 4
    \end{align*}
    functions of four and tree coordinates, respectively, in the general power series solution to Maxwell's equations. This is now in agreement with our tentative interpretation: The gauge-corrected Cartan character $\bar{\alpha}^{(4)}_2$ vanishes, signaling that the theory respects classical determinism. Moreover, $\bar{\alpha}^{(3)}_2$ corresponds precisely to the amount of initial data one needs to provide in order to solve Maxwell's equations.
\end{example}

In the next subsection, we will bring everything together and show how the degrees of freedom of any field theory can be computed from $\alpha^{(k)}_q$, $\beta^{(k)}_q$, and $\gamma_\ell$.


\subsection{Counting Degrees of Freedom}
After extending the formalism to accommodate gauge theories, we are now ready to count the degrees of freedom for any theory described by an equation $\R_q$. Without loss of generality, we assume that $\R_q$ is an involutive equation. If it is not, we can apply the Cartan-Kuranishi Algorithm~\ref{alg:CK} to complete the non-involutive equation $\R_q$ into an equivalent involutive system $\R^{(s)}_{q+r}$ for some integers $r, s \geq 0$.

Furthermore, let us assume that we are working in $n$ dimensions and that $\R_q$ describes the evolution of a gauge field $v^A$, which transforms according to the gauge transformation law~\eqref{eq:GaugeTransformationLaw}. Under these assumptions, we know that the gauge-corrected Hilbert polynomial (cf. Definition~\ref{def:GaugeCorrectedHilbertPolynomial}) counts the number of free Taylor coefficients at order $q+r$.

The key idea behind counting degrees of freedom is straightforward: Any solution $v^A$ to $\R_q$ depends on a set of functions that must be specified. Some of these functions merely account for gauge redundancy, while others arise from initial or boundary data. Our goal is to count the latter, as they represent the physical degrees of freedom.

The formalism presented here does not grant direct access to these functions. Instead, it allows us to count the undetermined Taylor coefficients at any order $q+r$. Furthermore, it enables us to distinguish between coefficients arising from gauge freedom and those that remain undetermined even after gauge fixing. These undetermined Taylor coefficients must, in one form or another, correspond to the physical degrees of freedom.

Now, we bring everything together. Since degrees of freedom correspond to \textit{functions} that can be freely specified on an initial surface or boundary, we should compare $\bar{H}_q(r)$---the number of free Taylor coefficients at order $q+r$ after accounting for gauge redundancies---to the number of Taylor coefficients of a single function of $n$ coordinates at the same order.

Both quantities grow as $r$ increases, since any formal power series contains infinitely many Taylor coefficients. However, we expect $\bar{H}_q(r)$ to grow at most as fast as the number of free Taylor coefficients of a single function, since the gauge-corrected Hilbert polynomial encodes information about at least one, and potentially several, functions. This observation motivates the following definition.

\begin{definition}[The strength] 
    Let $\bar{H}_q(r)$ be a gauge-corrected Hilbert polynomial. The \textbf{strength} is defined as the ratio   \begin{align*} 
        Z_q(r) \ce \frac{\bar{H}_q(r)}{\binom{n+q+r-1}{q+r}}\,. 
    \end{align*} 
    This quantity measures the number of free Taylor coefficients at order $q+r$ (after removing gauge degrees of freedom) relative to the number of Taylor coefficients of a single function of $n$ coordinates at the same order. 
\end{definition}

Note that $\bar{H}_q(r)$ depends on the spacetime dimension $n$, the number of fields $m$, the order $q$ of the equation $\R_q$, and other properties of these equations via the Cartan characters $\alpha^{(k)}_q$. In contrast, the binomial factor in the denominator depends only on $n$ and $q$ and carries no information about $\R_q$. Both terms also depend on $r$, which is the only parameter unrelated to the physical system under study---it simply serves to track the order of the power series expansion.

The dependence on $r$ can be eliminated by considering the limit $r \to \infty$ of the strength. This limit always exists. To see why, recall from Subsection~\ref{ssec:HilbertPolynomial} that the Hilbert polynomial is a polynomial of degree at most $n-1$, whereas the binomial coefficient $\binom{n+q+r-1}{q+r}$ is a polynomial of degree exactly $n-1$. Consequently, the limit $r \to \infty$ exists, and using the gauge-corrected Hilbert polynomial (cf. Definition~\ref{def:GaugeCorrectedHilbertPolynomial}), we obtain
\begin{align}
    \lim_{r\to\infty} Z_q(r) = \bar{\alpha}^{(n)}_q = \alpha^{(n)}_q - \sum_{\ell = 0}^{p}\gamma_\ell\,.
\end{align}

From our previous discussions, we know that $\alpha^{(n)}_q$ counts the number of independent functions of $n$ coordinates appearing in the power series solution $v^{A}$, while the sum $\sum_{\ell = 0}^{p} \gamma_\ell$ counts the number of gauge functions present in the gauge transformation law~\eqref{eq:GaugeTransformationLaw}. For a physically well-behaved theory, these two quantities must be equal, meaning we impose the condition
\begin{align}\label{eq:CompatibilityCondition}
    \alpha^{(n)}_q \overset{!}{=} \sum_{\ell=0}^p \gamma_\ell\,.
\end{align}
As a consequence, the gauge-corrected Hilbert polynomial has a degree of at most $n-2$, rather than $n-1$. In other words, the number of free Taylor coefficients it counts grows more slowly than $\binom{n+q+r-1}{q+r}$, which corresponds to the number of free Taylor coefficients of a function of $n$ coordinates. This key observation motivates the introduction of the concepts of \textit{compatibility} and the \textit{compatibility coefficient}.

\begin{definition}[Compatible equations and the compatibility coefficient] 
    We define the \textbf{compatibility coefficient} $Z^{(0)}_q$ as
    \begin{align*} 
        Z^{(0)}_q \ce \lim_{r\to\infty} Z_q(r) = \bar{\alpha}^{(n)}_q = \alpha^{(n)}_q - \sum_{\ell=0}^p \gamma_\ell\,. 
    \end{align*} 
    An equation $\R_q$ is said to be \textbf{compatible} if $Z^{(0)}_q = 0$. A compatible equation is one in which no additional free functions of $n$ coordinates appear in the formal power series solution.
\end{definition}
The rational function $Z_q(r)$ contains more information than just $Z^{(0)}_q$. So far, we have only extracted the zeroth-order term in its expansion around $r = \infty$. Since $Z_q(r)$ is a rational function with a denominator of degree $n-1$ and a numerator of degree at most $n-1$, its Taylor expansion around $r = \infty$ takes the form
\begin{align}\label{eq:ExpansionZ}
    Z_q(r) = Z^{(0)}_q + \frac{Z^{(1)}_q}{r} + \mathcal{O}\left(\frac{1}{r^2}\right)\,. 
\end{align}
This should be reminiscent of the expansion~\eqref{eq:EinsteinExpansion} discussed in Subsection~\ref{ssec:EinsteinsMethod}, where we examined Einstein's method for counting degrees of freedom. In that context, we also introduced the concepts of compatibility and strength. Here, we have arrived at these notions again, but through a different approach---one that resolves the difficulties and limitations of Einstein’s method, as outlined in Subsection~\ref{ssec:Limitations}. To complete the picture, we now demonstrate how to isolate the next term in the expansion~\eqref{eq:ExpansionZ} and use it to compute the degrees of freedom. The term $Z^{(1)}_q$ is extracted from the expansion~\eqref{eq:ExpansionZ} as follows:
\begin{align}
    Z^{(1)}_q &\ce \lim_{r\to\infty}r\left(Z_q(r) - Z^{(0)}_q\right) =\notag\\
    &= (n-1)\left(\frac12 n\,\alpha^{(n)}_q + \alpha^{(n-1)}_q - \sum_{\ell = 0}^{p}\left\{\frac12 n + q + \ell\right\}\gamma_\ell\right)\,.
\end{align}
For a compatible equation $\R_q$, the coefficient $Z^{(1)}_q$ simplifies to
\begin{align}\label{eq:Z1q}
    Z^{(1)}_q = (n-1)\left(\alpha^{(n-1)}_q - q\sum_{\ell = 0}^{p}\gamma_\ell - \sum_{\ell = 1}^{p} \ell \gamma_\ell\right)\,. 
\end{align}
We claim that this expression directly measures the physical degrees of freedom propagated by the involutive equation $\R_q$. To clarify the interpretation of $Z^{(1)}_q$, it is helpful to rewrite the Cartan character $\alpha^{(n-1)}_q$ in terms of $\beta^{(n-1)}_q$. From Definition~\ref{def:CartanCharacters}, we have
\begin{align} 
    \alpha^{(n-1)}_q = q\,m - \beta^{(n-1)}_q\,. 
\end{align}
Substituting this into the expression for $Z^{(1)}_q$, we obtain
\begin{align}\label{eq:FinalZ1}
    Z^{(1)}_q = (n-1)\left(q\,m - \beta^{(n-1)}_q - q\sum_{\ell = 0}^{p}\gamma_\ell - \sum_{\ell = 1}^{p} \ell \gamma_\ell\right)\,. 
\end{align}
To interpret $Z^{(1)}_q$, let us first consider the case without gauge symmetry, i.e., $\gamma_\ell = 0$. In this scenario, the expression in parentheses simplifies to
\begin{align} 
    q\,m - \beta^{(n-1)}_q\,. 
\end{align}
The first term, $q\,m$, represents the number of field components multiplied by the order of the differential equation. In the absence of gauge symmetry and constraints, this would correspond to $q$ times the degrees of freedom. It is also a measure of how many initial value functions or boundary conditions need to be specified in order to obtain a unique solution to $\R_q$.

However, constraints can exist even in the absence of gauge symmetry. The second term, $\beta^{(n-1)}_q$, accounts for these constraints. To see why, recall that $\beta^{(k)}_q$ counts the number of equations of class $k$. For a $q$-th order system, an equation of class $n$ contains $q$ derivatives with respect to $x^n$. In physical systems, these typically correspond to equations involving second-order time derivatives. On the other hand, equations of class $n-1$ contain at least one derivative with respect to $x^{n-1}$, meaning they lack the highest-order time derivative and are therefore classified as constraints. 

When gauge symmetry is present, two additional correction terms appear:
\begin{align} 
    q\,m - \beta^{(n-1)}_q - q\sum_{\ell = 0}^p \gamma_\ell - \sum_{\ell=1}^p \ell \gamma_\ell\,. 
\end{align}
Since we are dealing with compatible equations, we know that
\begin{align}\label{eq:C}
    \sum_{\ell = 0}^p \gamma_\ell = \alpha^{(n)}_q\,, 
\end{align}
which represents the number of gauge modes. Thus, the term $q \sum_{\ell = 0}^p \gamma_\ell$ removes from $q\,m$ all degrees of freedom that are purely gauge. That is, it removes everything that is not solely fixed by initial or boundary conditions. 

Finally, the term $\sum_{\ell = 1}^p \ell \gamma_\ell$ combines with $\beta^{(n-1)}_q$ to account for all constraints, including those that arise due to gauge symmetry. This leads us to the following conjecture: for an involutive equation $\R_q$ in $n$ spacetime dimensions, the number of configuration space degrees of freedom is given by
\begin{align}\label{eq:FinalFormula}
    \boxed{\text{DOFs} = \frac{Z^{(1)}_q}{(n-1) q} = m - \sum_{\ell = 1}^p \gamma_\ell - \frac{1}{q}\left(\beta^{(n-1)}_q + \sum_{\ell = 1}^p \ell \gamma_\ell\right)}\,.
\end{align}
What remains to be shown is that this formula is indeed well-defined, i.e., that it always yields a non-negative integer. We can at least show positivity for theories which lack gauge symmetry. In the absence of gauge, all $\gamma_\ell$'s are zero. Then~\eqref{eq:FinalFormula} reduces to 
\begin{align}
    \text{DOFs} = m - \frac{1}{q} \beta^{(n-1)}_q\,.
\end{align}
Recall that we can organize jet variables in classes and that these classes have a predictable size (cf. equation~\eqref{eq:ClassSize}). Recall also that we defined the $\beta$'s to be the number of pivot elements within a given class (see, in particular, the schematic row-echelon matrix~\eqref{eq:SymbolInRowEchelonForm}). Evidently, the value of $\beta^{(k)}_q$ cannot exceed the number of elements within the class $k$. Thus, we always have
\begin{align}
    \beta^{(k)}_q \leq \csize(\class k) = m\binom{n+q-k-1}{n-k}\,.
\end{align}
For the special case $k=n-1$ this translates into the inequality
\begin{align}
    \beta^{(n-1)}_q \leq q\,m \,.
\end{align}
It therefore follows that
\begin{align}\label{eq:PositiveDOFsNonGauge}
    \text{DOFs} = m - \frac{1}{q} \beta^{(n-1)}_q \geq 0\,.
\end{align}
In the presence of gauge symmetry, and for theories satisfying the compatibility condition~\eqref{eq:CompatibilityCondition}, we can provide a partial estimate:
\begin{align}
    \text{DOFs} &= m - \sum_{\ell=0}^p \gamma_\ell - \frac{1}{q} \left(\beta^{(n-1)}_q + \sum_{\ell=1}^p \ell \gamma_\ell\right) \notag\\
    &= m - \alpha^{(n)}_q - \frac{1}{q} \left(\beta^{(n-1)}_q + \sum_{\ell=1}^p \ell \gamma_\ell\right) \notag\\
    &= m - \left(m-\beta^{(n)}_q\right) - \frac{1}{q} \left(\beta^{(n-1)}_q + \sum_{\ell=1}^p \ell \gamma_\ell\right) \notag\\
    &= \underbrace{\beta^{(n)}_q - \frac{1}{q}\beta^{(n-1)}_q}_{\geq 0} - \frac{1}{q}\sum_{\ell=1}^p \ell \gamma_\ell \,.
\end{align}
We used the fact that $\alpha^{(n)}_q = m - \beta^{(n)}_q$ and that $\beta^{(n)}_q \geq \beta^{(n-1)}_q$ (see~\cite{SeilerBook} for a proof of the latter). For equations with $q>1$, the first two terms are strictly larger than zero. In order for the whole expression to be larger or equal to zero, we need
\begin{align}\label{eq:UpperBoundSum}
    \frac{1}{q}\beta^{(n-1)}_q + \frac{1}{q}\sum_{\ell=0}^p \ell \gamma_\ell \leq \beta^{(n-1)}_q \quad &\Longleftrightarrow\quad \sum_{\ell=0}^p \ell \gamma_\ell \leq (q-1) \beta^{(n-1)}_q \notag\\
    &\Longrightarrow \sum_{\ell=0}^p \ell \gamma_\ell \leq (q-1) \beta^{(n)}_q \leq q\, m - m
\end{align}
On the second line we used $\beta^{(n-1)}_q \leq \beta^{(n)}_q$ and the fact that $\beta^{(n)}_q \leq \csize(\class n) = m$. Some restriction on the $\gamma_\ell$'s is of course reasonable, since a theory with too much gauge freedom is either trivial (all field components are gauge) or logically impossible (more gauge modes than field components). The above inequality then expresses the condition that gauge symmetry cannot remove more than $q\,m-m$ functions. In fact, $q\,m$ is the total number of available functions in the initial value formulation of a $q$-th order PDE for $m$ field components. Also, $\beta^{(n-1)}_q$, the number of constraints, has an upper limit of $m$. This follows again from $\beta^{(n-1)}_q \leq \beta^{(n)}_q \leq m$, but it is also intuitively clear: If there were $m$ constraints, then none of the $m$ field components would be propagating. We therefore postulate
\begin{align}\label{eq:PostulatedUpperBound}
    \sum_{\ell =0}^p \ell \gamma_\ell \leq (q-1) \beta^{(n-1)}_q
\end{align}
as a reasonable upper bound. Curiously, in all examples we study in Subsection~\ref{ssec:Examples} we find that this inequality is saturated and therefore
\begin{align}\label{eq:PositiveDOFs}
    \text{DOFs} = \beta^{(n)}_q -  \beta^{(n-1)}_q \geq 0\,.
\end{align}
Whether this is true beyond the examples we considered cannot be said at this stage. Also, we have not answered the question whether $\text{DOFs}$ is an integer number. This remains a conjecture until one can demonstrate that
\begin{align}
    \frac{1}{q}\left(\beta^{(n-1)}_q + \sum_{\ell = 1}^p \ell \gamma_\ell\right)
\end{align}
is always an integer. We only know with certainty that $Z^{(1)}_q$ is an integer, which implies that the \emph{phase space degrees of freedom} are integer numbers. 

In the next subsection, we put this conjecture to the test by applying the Cartan-Kuranishi algorithm to various well-known physical systems. This allows us to extract the $\beta$'s and Cartan characters, which in turn enable us to verify that all considered equations are compatible as defined in this subsection (cf.~\eqref{eq:CompatibilityCondition}). Furthermore, we use these quantities to compute the number of physical degrees of freedom. In every case, our formula~\eqref{eq:FinalFormula} yields the correct result.

\subsection{Examples: Electromagnetism, General Relativity, and many more}\label{ssec:Examples}
In the examples that follow, we assume the spacetime dimension is $n = 4$ and that all equations are second-order, so we always have $q = 2$.

For each example, we apply the Cartan-Kuranishi algorithm to the field equations to ensure that they are involutive before using them to compute the degrees of freedom.

Each example is structured as follows: We begin with a table summarizing the input, which includes the field content, the number of field components, the field equations, and the set of integers $\gamma_\ell$.

In the output table, we record the number $r$ of prolongations and the number $s$ of projections required to obtain an involutive equation. Additionally, we report the $\beta$'s and Cartan characters, the Hilbert polynomial, and in the case of gauge theories, also the gauge-corrected Hilbert polynomial and Cartan characters. These quantities are then used to determine the degrees of freedom according to~\eqref{eq:FinalFormula}.

\subsubsection{The relativistic wave equation}\label{sssec:WaveEq}
We begin with a simple example of a field theory without gauge symmetry: The relativistic wave equation for the scalar field $\Phi$. 
A scalar field is the simplest type of quantum field, characterized by assigning a single value (a scalar) to every point in spacetime and transforming trivially under Lorentz transformations. The dynamics of a free scalar field are governed by the Klein–Gordon equation $\Box\Phi=0$, which enforces the relativistic energy–momentum relation and ensures Lorentz invariance. It originates from a Lagrangian containing only the kinetic term of the scalar field $\mathcal{L}=-\frac12\partial_\alpha\Phi\partial^\alpha\Phi$. In this case, it is evident that there is a single physical propagating degree of freedom associated with the scalar field. The canonical momentum is $\pi=\frac{\partial\mathcal{L}}{\partial(\partial_0\Phi)}=\dot{\Phi}$. There are no constraints (primary or gauge): the pair $(\Phi,\pi)$ is unconstrained at each spatial point. In Hamiltonian language, the number of physical propagating degrees of freedom equals the number of independent canonical pairs. Equivalently, the Cauchy problem for $\Box\Phi=0$ needs exactly $\Phi(\vec{x},t_0)$ and $\dot\Phi(\vec{x},t_0)$ to determine the solution. 
Although the counting of degrees of freedom is straightforward in this simplest case, it is nevertheless instructive to examine how the single propagating mode manifests itself within the framework developed in this work.\footnote{It would also be interesting to apply the method to the more nontrivial case of Galileon theories with derivative interactions, although this lies beyond the scope of the present study.}.

\begin{table}[H] 
    \centering
    \begin{tabular}{ll}
        \multicolumn{2}{c}{\textbf{Input}}  \\ \toprule
        \textbf{Field content} $\boldsymbol{v^{A}}$ & $\Phi$\\[5pt]
        \textbf{Number of field components} $\boldsymbol{m}$ & $1$ \\[5pt]
        \textbf{Field equations} $\boldsymbol{\R_q}$ &  $\partial_\alpha \partial^\alpha \Phi = 0$\\[5pt]
         $\boldsymbol{\gamma_\ell}$ & $\gamma_\ell = 0$ $\forall \ell$ \\ \bottomrule
         \\
         \multicolumn{2}{c}{\textbf{Output}}  \\ \toprule
         \textbf{Involutive after $\boldsymbol{s}$ projections} & $s=0$ \\[5pt]
         \textbf{Involutive after $\boldsymbol{r}$ prolongations} & $r=0$ \\[5pt]
         $\boldsymbol{\beta^{(k)}_q}$ & $\beta^{(1)}_2 = 0$, $\beta^{(2)}_2 = 0$, $\beta^{(3)}_2 = 0$, $\beta^{(4)}_2 = 1$ \\[5pt]
         $\boldsymbol{\alpha^{(k)}_q}$ & $\alpha^{(1)}_2 = 4$, $\alpha^{(2)}_2 = 3$, $\alpha^{(3)}_2 = 2$, $\alpha^{(4)}_2 = 0$ \\[5pt]
         $\boldsymbol{H_q(r)}$ & $9 + 6r + r^2$\\[5pt]
         \textbf{Degrees of freedom} & $1$ \\ \bottomrule
    \end{tabular}
\end{table}
First, due to the absence of gauge freedom, all 
$\gamma_\ell$ identically vanish.
 As can be seen from the output table, the wave equation is involutive without the need to perform prolongations nor projections ($r=0$ and $s=0$). All $\beta$'s, except $\beta^{(4)}_2$, vanish. This is consistent with our discussion on constraint equations in the previous subsection: The vanishing of $\beta^{(3)}_2$ signals the absence of constraint equations while $\beta^{(4)}_2 = 1$ tells us that there is one principal derivative of class $4$. In physics terms, this is of course the second order time derivative $\partial_t \partial_t \Phi$.

As expected, $\alpha^{(4)}_2$ is zero, which tells us that there are no arbitrary functions of $4$ coordinates in the solution of the wave equation. On the other hand, $\alpha^{(3)}_2 = 2$ signals the presence of two functions of three coordinates in the general solutions. This is precisely what we would expect from the initial value problem: We need to prescribe the initial ``position'' $\Phi|_\Sigma$ and the initial ``velocity'' $\dot{\Phi}|_\Sigma$ on a Cauchy surface $\Sigma$ in order to obtain a unique solution. 

Finally, using~\eqref{eq:FinalFormula} we find that the relativistic wave equation propagates $1$ degree of freedom, as expected. \newpage

\subsubsection{Maxwell's equations}\label{sssec:Maxwell}
This is our first example of a gauge theory. 
Physically, the Maxwell equations describe the propagation of massless spin-1 excitations, the photons, which have two transverse polarization states due to gauge invariance and the masslessness of the field. The traditional way of counting works as follows: The vanishing of the temporal momentum $\Pi^0=0$ defines a primary constraint, $\mathcal{C}_1=\Pi^0=0$. Requiring its preservation under time evolution yields the secondary constraint (obtained via the Poisson bracket with the Hamiltonian density $\mathcal{H}$) $\mathcal{C}_2=\dot{\Pi}^0=\left\{\mathcal{C}_1,\mathcal{H} \right\}=\partial_i\Pi^i$=0. This primary constraint is first-class, as their mutual Poisson brackets vanish $\left\{\mathcal{C}_1,\mathcal{C}_2 \right\}=0$. In the Dirac–Bergmann framework, each first-class constraint removes one configuration-space degree of freedom and one conjugate momentum, so together they eliminate two phase-space dimensions. This accounts for the removal of the unphysical longitudinal mode, leaving precisely two propagating degrees of freedom corresponding to the transverse polarizations of the massless vector field. The first-class character of the constraints reflects the presence of a gauge symmetry in the theory $A_\mu \to A_\mu + \partial_\mu\theta$. At the Hamiltonian level, this is manifested by the temporal component of the vector field $A_0$
  appearing as a Lagrange multiplier, $\mathcal{H}=\int d^3x \left( \Pi_i\Pi^i-A_0 \partial_i \Pi^i -\frac14 F_{ij}F^{ij}\right)$. The constraint associated with $A_0$
  is first-class, which implies that one eliminates not only the explicit 
$A_0$ dependence in the Hamiltonian but also the dependence on the longitudinal component of the canonical momentum $\partial_i \Pi^i$. This gives us $4-1-1 = 2$ degrees of freedom. 

\begin{table}[H] 
    \centering
    \begin{tabular}{ll}
        \multicolumn{2}{c}{\textbf{Input}}  \\ \toprule
        \textbf{Field content} $\boldsymbol{v^{A}}$ & $A^\mu$\\[5pt]
        \textbf{Number of field components} $\boldsymbol{m}$ & $4$ \\[5pt]
        \textbf{Field equations} $\boldsymbol{\R_q}$ &  $\partial_\alpha \partial^\alpha A^\mu - \partial_\alpha \partial^\mu A^\alpha = 0$\\[5pt]
         $\boldsymbol{\gamma_\ell}$ & $\gamma_1 = 1$ \\ \bottomrule
         \\
         \multicolumn{2}{c}{\textbf{Output}}  \\ \toprule
         \textbf{Involutive after $\boldsymbol{s}$ projections} & $s=0$\\[5pt]
         \textbf{Involutive after $\boldsymbol{r}$ prolongations} & $r=0$ \\[5pt]
         $\boldsymbol{\beta^{(k)}_q}$ & $\beta^{(1)}_2 = 0$, $\beta^{(2)}_2 = 0$, $\beta^{(3)}_2 = 1$, $\beta^{(4)}_2 = 3$ \\[5pt]
         $\boldsymbol{\alpha^{(k)}_q}$ & $\alpha^{(1)}_2 = 16$, $\alpha^{(2)}_2 = 12$, $\alpha^{(3)}_2 = 7$, $\alpha^{(4)}_2 = 1$ \\[5pt]
         $\boldsymbol{H_q(r)}$ & $36 + \frac{73}{3}r + \frac{9}{2}r^2 + \frac16 r^3$\\[5pt]
         $\boldsymbol{\bar{H}_q(r)}$ & $16 + 12 r + 2 r^2$\\[5pt]
         $\boldsymbol{\bar{\alpha}^{(k)}_q}$ & $\bar{\alpha}^{(1)}_2 = 6$, $\bar{\alpha}^{(2)}_2 = 6$, $\bar{\alpha}^{(3)}_2 = 4$, $\bar{\alpha}^{(4)}_2 = 0$ \\[5pt]
         \textbf{Degrees of freedom} & $2$ \\ \bottomrule
    \end{tabular}
\end{table}

It is now instructive to compare this standard counting with the corresponding degree-of-freedom counting in the new formalism.
Our formula~\eqref{eq:FinalFormula} achieves the same result by systematically analyzing the field equations, the field content, and the gauge symmetries. 
As discussed in Example~\ref{ex:EDandGR}, all but one of the coefficients $\gamma_\ell$ vanish. The remaining coefficient, $\gamma_1 = 1$, reflects the fact that the gauge transformation in electromagnetism depends on the first derivative of an arbitrary scalar~field.

From the output table, we see that Maxwell's equations are involutive, meaning no prolongations or projections were necessary ($r=s=0$). The table also confirms that $\alpha^{(4)}_2 = 1$, which aligns with our expectations: one component of $A^\mu$ is a gauge mode. Consequently, Maxwell's equations satisfy the compatibility condition defined in the previous subsection: $\bar{\alpha}^{(4)}_2 = \alpha^{(4)}_2 - \gamma_1 = 0$.

Moreover, the value $\beta^{(3)}_2 = 1$ indicates the presence of a single constraint equation---none other than the well-known Gauss constraint. The gauge-corrected Cartan character $\bar{\alpha}^{(3)}_2 = 4$ signals the presence of four initial-value functions (two ``positions'' and two ``velocities'', so to speak). Applying formula~\eqref{eq:FinalFormula}, we conclude that Maxwell's equations propagate two physical degrees of freedom.

\newpage

\subsubsection{Proca's equations}\label{sssec:Proca}
The Proca theory includes an explicit mass term, which significantly alters the physical degrees of freedom. The Lagrangian now contains a quadratic term in $A_\mu$
 without derivatives, resulting in the loss of gauge symmetry $A_\mu \to A_\mu + \partial_\mu\theta$. The temporal component 
$A_0$ remains non-propagating, giving rise to a primary constraint $\mathcal{C}_1=\Pi^0=0$. However, the presence of the mass term modifies the secondary constraint, which now takes the form $\mathcal{C}_2=\partial_i\Pi^i-m^2A_0=0$. Unlike the massless case, the Poisson bracket between the primary and secondary constraints does not vanish $\left\{\mathcal{C}_1,\mathcal{C}_2 \right\}=m^2$, so the constraints are second-class. In Dirac’s terminology, each second-class constraint eliminates one phase-space degree of freedom, or equivalently, half a configuration-space degree of freedom per constraint. Here we have two second-class constraints ($\Pi^0$ and the $\partial_i\Pi^i-m^2A_0$), which together remove one configuration-space degree of freedom. Starting from 4 components $A_\mu$
  and subtracting this single eliminated mode, we are left with 3 physical degrees of freedom.

\begin{table}[H] 
    \centering
    \begin{tabular}{ll}
        \multicolumn{2}{c}{\textbf{Input}}  \\ \toprule
        \textbf{Field content $\boldsymbol{v^{A}}$} & $A^\mu$\\[5pt]
        \textbf{Number of field components $\boldsymbol{m}$} & $4$ \\[5pt]
        \textbf{Field equations} $\boldsymbol{\R_q}$ &  $\partial_\alpha \partial^\alpha A^\mu - \partial_\alpha \partial^\mu A^\alpha - m^2 A^\mu = 0$\\[5pt]
         $\boldsymbol{\gamma_\ell}$ & $\gamma_\ell = 0$ $\forall \ell$ \\ \bottomrule
         \\
         \multicolumn{2}{c}{\textbf{Output}}  \\ \toprule
         \textbf{Involutive after $\boldsymbol{s}$ projections} & $s=2$ \\[5pt]
         \textbf{Involutive after $\boldsymbol{r}$ prolongations} & $r=0$ \\[5pt]
         $\boldsymbol{\beta^{(k)}_q}$ & $\beta^{(1)}_2 = 1$, $\beta^{(2)}_2 = 1$, $\beta^{(3)}_2 = 2$, $\beta^{(4)}_2 = 4$ \\[5pt]
         $\boldsymbol{\alpha^{(k)}_q}$ & $\alpha^{(1)}_2 = 15$, $\alpha^{(2)}_2 = 11$, $\alpha^{(3)}_2 = 6$, $\alpha^{(4)}_2 = 0$ \\[5pt]
         $\boldsymbol{H_q(r)}$ & $32 + 20 r + 3 r^2$\\[5pt]
         \textbf{Degrees of freedom} & $3$ \\ \bottomrule
    \end{tabular}
\end{table}
Since the mass term of Proca's theory explicitly breaks gauge symmetry, all coefficients $\gamma_\ell$ vanish.
The mass term also renders Proca's equations non-involutive. As shown in Example~\ref{ex:CKAAppliedToProca}, achieving an involutive system requires two prolongations followed by two projections (hence $s=2$). In that example, we also determined the $\beta$'s and found that $\beta^{(3)}_2 = 2$, indicating the presence of two constraint equations. A closer look at line 11 of Example~\ref{ex:CKAAppliedToProca} confirms this: there, $\R^{(2)}_2$ contains the constraint $\partial_\mu A^\mu = 0$ along with its prolongation $\partial_\nu \partial_\mu A^\mu = 0$.

At first glance, subtracting this constraint twice may seem counterintuitive. However, from the perspective of the jet bundle formalism, this is both natural and necessary. As discussed after Equation~\eqref{eq:FinalZ1}, when completing a system of PDEs to an involutive system, we must include prolongations of constraint equations. This shifts $\beta^{(3)}_2$ from zero to $2$, making it proportional to $q=2$. The same phenomenon occurs here. Moreover, if our conjecture from the previous subsection is correct, we should always expect $\beta^{(n-1)}_q$ to be proportional to $q$. Only then is $q\,m - \beta^{(n-1)}_q$ guaranteed to be divisible by $q$.

We note that for Proca’s equations, $\alpha^{(4)}_2$ vanishes. This aligns with the absence of gauge symmetry and satisfies the definition of compatibility. Finally, using our formula~\eqref{eq:FinalFormula}, we find the expected three degrees of freedom.

\newpage

\subsubsection{Proca-St\"{u}ckelberg equations}\label{sssec:ProcaStuckelberg}
Proca's theory can be reformulated as a gauge theory using St\"uckelberg's trick, as mentioned in Example~\ref{ex:ProcaStueckelberg}. To restore gauge symmetry, we introduce an additional field, $\pi$. Consequently, compared to standard Proca theory, the field content is augmented. This also introduces an additional field equation, obtained by varying the Proca-St\"uckelberg action with respect to $\pi$ (see, for instance,~\cite{Heisenberg:2018} for more details).
\begin{table}[H] 
    \centering
    \begin{tabular}{ll}
        \multicolumn{2}{c}{\textbf{Input}}  \\ \toprule
        \textbf{Field content} $\boldsymbol{v^{A}}$ & $(A^\mu, \pi)$\\[5pt]
        \textbf{Number of field components} $\boldsymbol{m}$ & $4 + 1 = 5$ \\[5pt]
        \textbf{Field equations} $\boldsymbol{\R_q}$ & $\partial_\alpha \partial^\alpha A^\mu - \partial_\alpha \partial^\mu A^\alpha - m^2 A^\mu - m\, \partial^\mu \pi = 0$ \\[5pt]
        & $m\, \partial_\alpha A^\alpha + \partial_\alpha \partial^\alpha \pi = 0$ \\[5pt]
         $\boldsymbol{\gamma_\ell}$ & $\gamma_1 = 1$ \\ \bottomrule
         \\
         \multicolumn{2}{c}{\textbf{Output}}  \\ \toprule
         \textbf{Involutive after $\boldsymbol{s}$ projections} & $s=0$ \\[5pt]
         \textbf{Involutive after $\boldsymbol{r}$ prolongations} & $r=0$ \\[5pt]
         $\boldsymbol{\beta^{(k)}_q}$ & $\beta^{(1)}_2 = 0$, $\beta^{(2)}_2 = 0$, $\beta^{(3)}_2 = 1$, $\beta^{(4)}_2 = 4$ \\[5pt]
         $\boldsymbol{\alpha^{(k)}_q}$ & $\alpha^{(1)}_2 = 20$, $\alpha^{(2)}_2 = 15$, $\alpha^{(3)}_2 = 9$, $\alpha^{(4)}_2 = 1$ \\[5pt]
         $\boldsymbol{H_q(r)}$ & $45 + \frac{91}{3}r + \frac{11}{2}r^2 + \frac16 r^3$\\[5pt]
         $\boldsymbol{\bar{H}_q(r)}$ & $25 + 18 r + 3 r^2$\\[5pt]
         $\boldsymbol{\bar{\alpha}^{(k)}_q}$ &  $\bar{\alpha}^{(1)}_2 = 10$, $\bar{\alpha}^{(2)}_2 = 9$, $\bar{\alpha}^{(3)}_2 = 6$, $\bar{\alpha}^{(4)}_2 = 0$ \\[5pt]
         \textbf{Degrees of freedom} & $3$ \\ \bottomrule
    \end{tabular}
\end{table}

As discussed after Example~\ref{ex:ProcaStueckelberg}, the gauge transformation acting on $A^\mu$ and $\pi$ is characterized by $\gamma_1 = 1$, while all other $\gamma_\ell$ vanish.
As shown in the output table, the Proca-St\"uckelberg equations are involutive, meaning no prolongations or projections are required. From these equations, we extract the $\beta$'s and the Cartan characters. Of particular interest is $\alpha^{(4)}_2 = 1$, which indicates the presence of a gauge mode. However, since $\gamma_1 = 1$, the compatibility condition is satisfied. Moreover, formula~\eqref{eq:FinalFormula} correctly yields the expected three degrees of freedom.
\newpage

\subsubsection{Einstein's field equations in Riemannian normal coordinates}\label{sssec:GR}
So far, we have studied linear equations. With Einstein's field equations, we now encounter a nonlinear system that also exhibits gauge symmetry. 
In GR the fundamental field is $g_{\mu\nu}$. Thus, we would naively expect to find $4\times 4 = 16$ components. However, $g_{\mu\nu}$ is symmetric. With this in mind, we find that $g_{\mu\nu}$ really only possesses ten components. Einstein's field equations are also symmetric in $\mu$ and $\nu$. The same counting argument therefore tells us that there are ten partial differential equations for ten unknown functions.

However, not all of Einstein's field equations are dynamical. In fact, there are four constraint equations, which can be seen as follows: Starting from the Bianchi identity $\nabla_\mu G^{\mu\nu} = 0$,  we expand the contraction over $\mu$. This yields
\begin{align}\label{eq:ExpandedBianchi}
	\nabla_\mu G^{\mu \nu} = \nabla_{\red{0}} G^{\red{0}\nu} + \nabla_{\blue{i}} G^{\blue{i}\nu} = 0 \quad\Longrightarrow\quad \partial_{\red{0}} G^{\red{0}\nu} = - \partial_{\blue{i}} G^{\blue{i}\nu} - \left\{\nu \atop \sigma \mu \right\} \, G^{\sigma\mu} - \left\{\mu \atop \sigma\mu \right\}\, G^{\nu\sigma}\,,
\end{align}
where the curly brackets denote the Christoffel symbols of the metric. At this point, it is important to recall that $G_{\mu\nu}$ contains second order derivatives of the metric. In particular, it contains \textit{at most} second order time derivatives. The first term on the right hand side of~\eqref{eq:ExpandedBianchi} contains third order spatial derivatives because of $\partial_{\blue{i}}$, but no more than second order time derivatives. Similarly, the other two terms contain at most second order time derivatives (the Christoffel symbols contain only first order derivatives). The term on the left hand side, $\partial_{\red{0}} G^{\red{0}\nu}$, looks like it could contain third order time derivatives. That is because $\partial_{\red{0}}$ increases the order of time derivatives by one. However, this can not be the case, since the right hand side contains at most second order time derivatives, as we just have convinced ourselves. Thus, it follows that the components $G^{\red{0}\nu}$ contain only first order time derivatives. This shows that the $G^{\red{0}\nu}$ components of Einstein's field equations are constraint equations, not dynamical equations.

In GR, gauge transformations depend on four arbitrary functions $\xi^{\nu}$ and these functions can be used to choose four of the ten components of $g_{\mu\nu}$ at will. These components cannot be physical, they are mere gauge. Thus, we are left with six components. We can further use the four equations of the Bianchi identity  $\nabla_{\mu} G^{\mu\red{\nu}} = 0$ to get rid of the dependence of 4 components in the equations. It follows that GR propagates $10-4-4 = 2$ degrees of freedom.  

Let us now examine how the degree-of-freedom counting is manifested within our methodology in the jet-bundle framework.
In Example~\ref{ex:EDandGR}, we examined the gauge symmetry of General Relativity and found that $\gamma_1 = 4$, while all other $\gamma_\ell$ vanish.
In order to drastically simplify the computations, we resort to Riemannian normal coordinates. This choice of coordinates gets rid of nonlinearities, as can be seen from the expression for $\R_2$. 
\begin{table}[H] 
    \centering
    \begin{tabular}{ll}
        \multicolumn{2}{c}{\textbf{Input}}  \\ \toprule
        \textbf{Field content} $\boldsymbol{v^{A}}$ & $g_{\mu\nu}$\\[5pt]
        \textbf{Number of field components} $\boldsymbol{m}$ & $10$ \\[5pt]
        \textbf{Field equations} $\boldsymbol{\R_q}$ &  $\eta^{\alpha\beta}\left(\partial_\alpha\partial_\beta g_{\mu\nu} - \partial_\beta \partial_\mu g_{\nu\alpha} + \partial_\nu \partial_\mu g_{\alpha\beta}\right) = 0$\\[5pt]
         $\boldsymbol{\gamma_\ell}$ & $\gamma_1 = 4$ \\ \bottomrule
         \\
         \multicolumn{2}{c}{\textbf{Output}}  \\ \toprule
         \textbf{Involutive after $\boldsymbol{s}$ projections} & $s=0$ \\[5pt]
         \textbf{Involutive after $\boldsymbol{r}$ prolongations} & $r=0$ \\[5pt]
         $\boldsymbol{\beta^{(k)}_q}$ & $\beta^{(1)}_2 = 0$, $\beta^{(2)}_2 = 0$, $\beta^{(3)}_2 = 4$, $\beta^{(4)}_2 = 6$ \\[5pt]
         $\boldsymbol{\alpha^{(k)}_q}$ & $\alpha^{(1)}_2 = 40$, $\alpha^{(2)}_2 = 30$, $\alpha^{(3)}_2 = 16$, $\alpha^{(4)}_2 = 4$ \\[5pt]
         $\boldsymbol{H_q(r)}$ & $90+\frac{184}{3}r + 12 r^2 + \frac23 r^3$\\[5pt]
         $\boldsymbol{\bar{H}_q(r)}$ & $10 + 12 r + 2 r^2$\\[5pt]
         $\boldsymbol{\bar{\alpha}^{(k)}_q}$ & $\bar{\alpha}^{(1)}_2 = 0$, $\bar{\alpha}^{(2)}_2 = 6$, $\bar{\alpha}^{(3)}_2 = 4$, $\bar{\alpha}^{(4)}_2 = 0$ \\[5pt]
         \textbf{Degrees of freedom} & $2$ \\ \bottomrule
    \end{tabular}
\end{table}

Remarkably, Einstein's field equations are already involutive, meaning no prolongations or projections are required, so $r = s = 0$. As shown in the output table, we find $\alpha^{(4)}_2 = 4$ and $\beta^{(3)}_2 = 4$. These values align perfectly with expectations: the highest-order Cartan character confirms the presence of four gauge modes among the ten metric components, while $\beta^{(3)}_2$ indicates that four of Einstein's field equations are constraints.

Since $\gamma_1 = 4$, the compatibility condition $\alpha^{(4)}_2 - \gamma_1 = 0$ is satisfied. Furthermore, using the values from the output table in formula~\eqref{eq:FinalFormula}, we conclude that General Relativity propagates two physical degrees of freedom.

Finally, we point out that the theory could also haven been analyzed in a different coordinate system, without spoiling the final result. More on the role of coordinate systems and under which conditions they have an influence on the values of characters can be found in~\cite{SeilerBook}.
\newpage

\subsubsection{Coincident General Relativity in Riemannian normal coordinates}\label{sssec:CGR}
Coincident General Relativity (CGR) \cite{BeltranJimenez:2017} is a reformulation of General Relativityin where non-metricity, rather than curvature, plays the central role. Despite this fundamental difference, GR and CGR are equivalent in the sense that they share the same solution space (see, for example,~\cite{BeltranJimenez:2017, BeltranJimenez:2018, BeltranJimenez:2019, Heisenberg:2023a}, or~\cite{DAmbrosio:2020a} for a Hamiltonian perspective).
\begin{table}[H] 
    \centering
    \begin{tabular}{ll}
        \multicolumn{2}{c}{\textbf{Input}}  \\ \toprule
        \textbf{Field content} $\boldsymbol{v^{A}}$ & $g_{\mu\nu}$\\[5pt]
        \textbf{Number of field components} $\boldsymbol{m}$ & $10$ \\[5pt]
        \textbf{Field equations} $\boldsymbol{\R_q}$ & $\eta^{\alpha\beta}\left(\partial_\alpha\partial_\beta g_{\mu\nu} - \partial_\beta\partial_\mu g_{\alpha\nu} - \partial_\beta\partial_\nu g_{\alpha\mu}\right.$ \\[5pt] 
        & $\left.+ \eta_{\mu\nu} \partial^\gamma \partial_\beta g_{\alpha\gamma} - \eta_{\mu\nu} \partial^\gamma \partial_\gamma g_{\alpha\beta} + \partial_\mu \partial_\nu g_{\alpha\beta}\right) = 0$ \\[5pt]
         $\boldsymbol{\gamma_\ell}$ & $\gamma_1 = 4$ \\ \bottomrule
         \\
         \multicolumn{2}{c}{\textbf{Output}}  \\ \toprule
         \textbf{Involutive after $\boldsymbol{s}$ projections} & $s=0$ \\[5pt]
         \textbf{Involutive after $\boldsymbol{r}$ prolongations} & $r=0$ \\[5pt]
         $\boldsymbol{\beta^{(k)}_q}$ & $\beta^{(1)}_2 = 0$, $\beta^{(2)}_2 = 0$, $\beta^{(3)}_2 = 4$, $\beta^{(4)}_2 = 6$ \\[5pt]
         $\boldsymbol{\alpha^{(k)}_q}$ & $\alpha^{(1)}_2 = 40$, $\alpha^{(2)}_2 = 30$, $\alpha^{(3)}_2 = 16$, $\alpha^{(4)}_2 = 4$ \\[5pt]
         $\boldsymbol{H_q(r)}$ & $90+\frac{184}{3}r + 12 r^2 + \frac23 r^3$\\[5pt]
         $\boldsymbol{\bar{H}_q(r)}$ & $10 + 12 r + 2 r^2$\\[5pt]
         $\boldsymbol{\bar{\alpha}^{(k)}_q}$ & $\bar{\alpha}^{(1)}_2 = 0$, $\bar{\alpha}^{(2)}_2 = 6$, $\bar{\alpha}^{(3)}_2 = 4$, $\bar{\alpha}^{(4)}_2 = 0$ \\[5pt]
         \textbf{Degrees of freedom} & $2$ \\ \bottomrule
    \end{tabular}
\end{table}

The equivalence between GR and CGR can also be demonstrated in a straightforward manner within the jet-bundle framework. Upon fixing the coincident gauge, the connection becomes trivial, leaving the metric as the sole carrier of the dynamical degrees of freedom A direct comparison of the output table for CGR with that of GR reveals identical results, further reinforcing their correspondence.\newpage

\subsubsection{Massless Fierz-Pauli equations}\label{sssec:FP}
\begin{table}[H] 
    \centering
    \begin{tabular}{ll}
        \multicolumn{2}{c}{\textbf{Input}}  \\ \toprule
        \textbf{Field content} $\boldsymbol{v^{A}}$ & $h_{\mu\nu}$\\[5pt]
        \textbf{Number of field components} $\boldsymbol{m}$ & $10$ \\[5pt]
        \textbf{Field equations} $\boldsymbol{\R_q}$ &  $\eta^{\alpha\beta}\left(\partial_\alpha\partial_\beta h_{\mu\nu} - \partial_\beta\partial_\mu h_{\alpha\nu} - \partial_\beta \partial_\nu h_{\alpha\mu}\right.$\\[5pt]
        & $\left.+ \eta_{\mu\nu} \partial^\gamma \partial_\beta h_{\alpha\gamma} - \eta_{\mu\nu}\partial^\gamma \partial_\gamma h_{\alpha\beta} + \partial_\nu \partial_\mu h_{\alpha\beta}\right) = 0$\\[5pt]
         $\boldsymbol{\gamma_\ell}$ & $\gamma_1 = 4$ \\ \bottomrule
         \\
         \multicolumn{2}{c}{\textbf{Output}}  \\ \toprule
         \textbf{Involutive after $\boldsymbol{s}$ projections} & $s=0$ \\[5pt]
         \textbf{Involutive after $\boldsymbol{r}$ prolongations} & $r=0$ \\[5pt]
         $\boldsymbol{\beta^{(k)}_q}$ & $\beta^{(1)}_2 = 0$, $\beta^{(2)}_2 = 0$, $\beta^{(3)}_2 = 4$, $\beta^{(4)}_2 = 6$ \\[5pt]
         $\boldsymbol{\alpha^{(k)}_q}$ & $\alpha^{(1)}_2 = 40$, $\alpha^{(2)}_2 = 30$, $\alpha^{(3)}_2 = 16$, $\alpha^{(4)}_2 = 4$ \\[5pt]
         $\boldsymbol{H_q(r)}$ & $90+\frac{184}{3}r + 12 r^2 + \frac23 r^3$\\[5pt]
         $\boldsymbol{\bar{H}_q(r)}$ & $10 + 12 r + 2 r^2$\\[5pt]
         $\boldsymbol{\bar{\alpha}^{(k)}_q}$ & $\bar{\alpha}^{(1)}_2 = 0$, $\bar{\alpha}^{(2)}_2 = 6$, $\bar{\alpha}^{(3)}_2 = 4$, $\bar{\alpha}^{(4)}_2 = 0$ \\[5pt]
         \textbf{Degrees of freedom} & $2$ \\ \bottomrule
    \end{tabular}
\end{table}
Given a massless spin-$2$ field $h_{\mu\nu}$, the most general Lagrangian that is local, Lorentz-invariant, and first order in partial derivatives takes the form  
\begin{align}
    \mathcal{L} = \frac{c_1}{2} \partial_\rho h_{\mu\nu} \partial^\rho h^{\mu\nu} 
    - c_2 \partial_\mu h\ud{\mu}{\rho} \partial_\nu h^{\nu\rho} 
    + c_3 \partial_\mu h \partial_\nu h^{\mu\nu} 
    - \frac{c_4}{2} \partial_\mu h \partial^\mu h\,.
\end{align}
Here, $h \ce \eta^{\mu\nu} h_{\mu\nu}$, $\eta_{\mu\nu}$ is the Minkowski metric, and $c_i$ are four real but otherwise arbitrary coefficients.  

By imposing that the $(4,4)$ and $(4,i)$ components of $h_{\mu\nu}$ do not propagate—or equivalently, by requiring that $\mathcal{L}$ is invariant under linearized diffeomorphisms—one arrives at the Fierz-Pauli Lagrangian. This corresponds to the choice $c_2 = c_1$, $c_4 = c_3$, and the normalization $c_1 = 1 = c_3$ (see~\cite{Heisenberg:2018} for more details):  
\begin{align}\label{eq:FPL}
    \mathcal{L} = \frac{1}{2} \partial_\rho h_{\mu\nu} \partial^\rho h^{\mu\nu} 
    - \partial_\mu h\ud{\mu}{\rho} \partial_\nu h^{\nu\rho} 
    + \partial_\mu h \partial_\nu h^{\mu\nu} 
    - \frac{1}{2} \partial_\mu h \partial^\mu h\,.
\end{align}
The resulting field equations can be interpreted as a linearization of Einstein's equations around a Minkowski background. We study the Fierz-Pauli equations here for three key reasons:  
\begin{enumerate}
    \item Distinguishing equivalence from similarity: Two sets of field equations with identical $\alpha$ and $\beta$ characters are not necessarily equivalent. Comparing the output tables of the Fierz-Pauli and GR equations, we find that they share the same $\alpha$'s and $\beta$'s. However, their solution spaces are clearly different. This is because the Cartan characters and $\beta$'s encode information only about the highest-order derivatives in the PDE system. If two systems have the same highest-order structure, they will exhibit identical Cartan characters and $\beta$'s, even if they describe different physics.  
    \item Exploring ghost instabilities: In the next example, we examine what happens when the coefficients $c_i$ are detuned. This introduces ghosts into the theory, and we analyze how this affects the Cartan-Kuranishi algorithm and our method for counting degrees of freedom.
    \item Studying the massive spin-2 field: While a full discussion of massive gravity is beyond the scope of this paper, the massive Fierz-Pauli equations provide a simple yet insightful case for comparison with the massless limit. 
\end{enumerate} 
\newpage

\subsubsection{Detuned massless Fierz-Pauli equations}\label{sssec:DetunedFP}
\begin{table}[H] 
    \centering
    \begin{tabular}{ll}
        \multicolumn{2}{c}{\textbf{Input}}  \\ \toprule
        \textbf{Field content} $\boldsymbol{v^{A}}$ & $h_{\mu\nu}$\\[5pt]
        \textbf{Number of field components} $\boldsymbol{m}$ & $10$ \\[5pt]
        \textbf{Field equations} $\boldsymbol{\R_q}$ & $\eta^{\alpha\beta}\left(\partial_\alpha\partial_\beta h_{\mu\nu} + \partial_\beta\partial_\mu h_{\alpha\nu} + \partial_\beta \partial_\nu h_{\alpha\mu}\right.$\\[5pt]
        & $\left.+ \eta_{\mu\nu} \partial^\gamma \partial_\beta h_{\alpha\gamma} + \eta_{\mu\nu}\partial^\gamma \partial_\gamma h_{\alpha\beta} + \partial_\nu \partial_\mu h_{\alpha\beta}\right) = 0$\\[5pt]
         $\boldsymbol{\gamma_\ell}$ & $\gamma_\ell = 0$ $\forall \ell$ \\ \bottomrule
         \\
         \multicolumn{2}{c}{\textbf{Output}}  \\ \toprule
         \textbf{Involutive after $\boldsymbol{s}$ projections} & $s=0$ \\[5pt]
         \textbf{Involutive after $\boldsymbol{r}$ prolongations} & $r=0$ \\[5pt]
         $\boldsymbol{\beta^{(k)}_q}$ & $\beta^{(1)}_2 = 0$, $\beta^{(2)}_2 = 0$, $\beta^{(3)}_2 = 0$, $\beta^{(4)}_2 = 10 $ \\[5pt]
         $\boldsymbol{\alpha^{(k)}_q}$ & $\alpha^{(1)}_2 = 40$, $\alpha^{(2)}_2 = 30$, $\alpha^{(3)}_2 = 20$, $\alpha^{(4)}_2 = 0$ \\[5pt]
         $\boldsymbol{H_q(r)}$ & $90 + 60 r + 10 r^2$\\[5pt]
         \textbf{Degrees of freedom} & $10$ \\ \bottomrule
    \end{tabular}
\end{table}
This example is again based on the most general Lagrangian that is local, Lorentz-invariant, and first order in partial derivatives. However, this time we do not impose any conditions on the coefficients $c_i$. For concreteness, we choose the coefficients to be $c_1 = 1$, $c_2 = -1$, $c_3 = 1$, and $c_4 = -1$. This choice clearly violates both conditions we spelled out in the previous example and, as is well-known, introduces ghosts into the theory.

However, the presence of ghosts does not affect the methodology discussed in this work. Neither the Cartan-Kuranishi algorithm nor the counting of degrees of freedom is affected by their presence. In fact, the Cartan-Kuranishi algorithm tells us that the resulting equations are involutive, that there are no constraint equations ($\beta^{(3)}_2 = 0$), and that the equations are compatible ($\alpha^{(4)}_2 = 0$). But there are now 10 propagating degrees of freedom, which would not be compatible with a consistent massless spin-2 theory and give rise to ghost instabilities.
\newpage

\subsubsection{Massive Fierz-Pauli equations}\label{sssec:MassiveFP}
Constructing the field equations of the massive spin-$2$ field is straightforward if one recalls the field equations of the massless theory, since both theories share the same kinetic term. What we have to determine is the contribution of the mass term, and we easily find the equations
\begin{align}\label{eq:MassiveSpin2FEQ}
	2 \hat{\mathcal{E}}\du{\mu\nu}{\alpha\beta}h_{\alpha\beta} + m^2_h\left(c_1\, h_{\mu\nu} + c_2\, \eta_{\mu\nu} h\right) = 0\,.
\end{align}
One can verify that under the transformation $h_{\alpha\beta} \mapsto h_{\alpha\beta} + \xi^{\alpha}\partial_\alpha h_{\mu\nu} + \partial_\mu \xi^\alpha h_{\alpha\nu} + \partial_\nu\xi^\alpha h_{\alpha\mu}$, the kinetic term $\hat{\mathcal{E}}\du{\mu\nu}{\alpha\beta}h_{\alpha\beta}$ remains \textit{invariant}, but the terms $h_{\mu\nu}$ and $h$ obviously transform in a non-trivial way. Thus, the mass term breaks the gauge symmetry of the massless spin-$2$ theory. 

To determine whether there are constraints, we could again use the massless theory as a guide line and easily find four constraint equations. However, we can also make use of the fact that  $\hat{\mathcal{E}}\du{\mu\nu}{\alpha\beta}h_{\alpha\beta}$ satisfies a Bianchi identity, namely
\begin{align}
	\partial^\mu \hat{\mathcal{E}}\du{\mu\nu}{\alpha\beta}h_{\alpha\beta} = 0\,.
\end{align}
Therefore, by taking the divergence of the field equation~\eqref{eq:MassiveSpin2FEQ} we find
\begin{align}\label{eq:Spin2Constraints}
	c_1\, \partial^\mu h_{\mu\nu} + c_2 \partial_\nu h = 0\,.
\end{align}
Observe that this gives us four constraints, since this equation only contains first order derivatives and since there is one free index. Next, we need to check if we can obtain more constraint equations. One can convince oneself that it is not possible to generate further constraints by taking more derivative of the field equations or of the constraint equations we already found. What one can do, however, is to check whether all field equations are independent. By taking linear combinations of the field equations, one may find that terms with second order derivatives drop out, thus giving us further constraint equations. One particular linear combination of field equations is obtained by taking the trace:
\begin{align}
	2\eta^{\mu\nu}\hat{\mathcal{E}}\du{\mu\nu}{\alpha\beta}h_{\alpha\beta} + m_h^2\left(c_1 \eta^{\mu\nu} h_{\mu\nu} + c_2 \eta^{\mu\nu}\eta_{\mu\nu} h\right) &= 0\notag\\
	\Longrightarrow -2\partial_\mu\partial_\nu h^{\mu\nu} + 2 \partial_\nu \partial^\nu h + m_h^2\left(c_1+4c_2\right)h = 0\,.
\end{align}
To get to the second line, we used the definition of the Lichnerowicz operator\footnote{$\hat{\mathcal{E}}\du{\mu\nu}{\alpha\beta}$ is the Lichnerowicz operator which acts on $h_{\alpha\beta}$ as
\begin{align}
	\hat{\mathcal{E}}\du{\mu\nu}{\alpha\beta}h_{\alpha\beta} \ce -\frac12\left[\Box h_{\mu\nu}- 2 \partial^\alpha \partial_{(\mu}h_{\nu)\alpha} + \partial_\mu \partial_\nu h -\eta_{\mu\nu}\left(\Box h - \partial^\alpha \partial^\beta  h_{\alpha\beta}\right)\right]\,.
\end{align}}. Next, we can use the constraint equations~\eqref{eq:Spin2Constraints}, which tell us that
\begin{align}
	\partial^{\nu} h = -\frac{c_1}{c_2} \partial_\mu h^{\mu\nu}\,,
\end{align}
to get rid of the $\partial_\nu \partial^\nu h$ term. We then find
\begin{align}
	m^2_h \left(c_1+4c_2\right)h = 2\left(1+\frac{c_1}{c_2}\right)\partial_\mu\partial_\nu h^{\mu\nu}\,.
\end{align}
From this equation we learn that if we choose $c_2 = -c_1$, we obtain a fifth constraint equation, namely
\begin{equation}
	h = 0\,.
\end{equation}
Given that the number of equations is ten and that we have zero gauge symmetry, we can finally conclude
\begin{align}
	\text{d.o.f.} = \begin{cases}
		10 - 0 - 4 = 6 & \text{if } c_2 \neq -c_1\\
		10 - 0 - 5 = 5 & \text{if } c_2 = -c_1
	\end{cases}\,.
\end{align}
We emphasize again that in the case where $c_2 \neq -c_1$, one of the degrees of freedom is a ghost. The healthy, ghost-free theory has to satisfy $c_2 = -c_1$ and it propagates five physical degrees of freedom. As a final note, we observe that in the case where $c_2 = -c_1$, the field equations~\eqref{eq:MassiveSpin2FEQ} can be equivalently rewritten as
\begin{align}
	\begin{cases}
		\hfill \left(\Box - m^2_h\right)h_{\mu\nu} &= 0\\
		\hfill\partial^\mu h_{\mu\nu} - \partial_\nu h &= 0\\
		\hfill h &= 0
	\end{cases}\,.
\end{align}
Let us now examine the counting of degrees of freedom in linearized massive gravity within the jet-bundle framework.
\begin{table}[H] 
    \centering
    \begin{tabular}{ll}
        \multicolumn{2}{c}{\textbf{Input}}  \\ \toprule
        \textbf{Field content} $\boldsymbol{v^{A}}$ & $h_{\mu\nu}$\\[5pt]
        \textbf{Number of field components} $\boldsymbol{m}$ & $10$ \\[5pt]
        \textbf{Field equations} $\boldsymbol{\R_q}$ &  $\mathbb{E}^\text{FP}_{\mu\nu} -\frac14 m^2 \left(h_{\mu\nu} - \eta^{\alpha\beta}h_{\alpha\beta} \eta_{\mu\nu}\right)  = 0$\\[5pt]
         $\boldsymbol{\gamma_\ell}$ & $\gamma_\ell = 0$ $\forall \ell$ \\ \bottomrule
         \\
         \multicolumn{2}{c}{\textbf{Output}}  \\ \toprule
         \textbf{Involutive after $\boldsymbol{s}$ projections} & $s=4$ \\[5pt]
         \textbf{Involutive after $\boldsymbol{r}$ prolongations} & $r=0$ \\[5pt]
         $\boldsymbol{\beta^{(k)}_q}$ & $\beta^{(1)}_2 = 8$, $\beta^{(2)}_2 = 7$, $\beta^{(3)}_2 = 10$, $\beta^{(4)}_2 = 10$ \\[5pt]
         $\boldsymbol{\alpha^{(k)}_q}$ & $\alpha^{(1)}_2 = 32$, $\alpha^{(2)}_2 = 23$, $\alpha^{(3)}_2 = 10$, $\alpha^{(4)}_2 = 0$ \\[5pt]
         $\boldsymbol{H_q(r)}$ & $65 + 38 r + 5 r^2$\\[5pt]
         \textbf{Degrees of freedom} & $5$ \\ \bottomrule
    \end{tabular}
\end{table}
 Since the mass term explicitly breaks linearized diffeomorphisms, we set $\gamma_\ell = 0$. 
After running through the Cartan-Kuranishi algorithm, we find that the massive Fierz-Pauli equations are \emph{not} involutive from the start. It is necessary to prolong and project them four times, i.e., the Cartan-Kuranishi algorithm produces an involutive system for $s=4$. The resulting system of equations, $\R^{(4)}_2$, contains a large number of second order, first order, and even zeroth order equations. This complicates the straightforward interpretation of the $\beta$'s. However, the algorithm produces nothing unexpected: It finds the constraints
\begin{align}
    \eta^{\mu\nu}h_{\mu\nu} &= 0 &&\text{and} & \partial^\mu h_{\mu\nu} - \eta^{\alpha\beta}\partial_\nu h_{\alpha\beta} &= 0\,,
\end{align}
which are the constraints one finds also through other considerations~\cite{Heisenberg:2018}, as well as their first and second order prolongations. 

The Cartan character $\alpha^{(4)}_2 = 0$ is still an indication that $\R^{(4)}_2$ is a compatible set of equations, i.e., one that does not introduce arbitrary functions of four coordinates in the general solution.

Finally, we note that with the $\alpha$ and $\beta$ characters we computed, our formula~\eqref{eq:FinalFormula} produces the expected number---namely five---of physical degrees of freedom.
\newpage

\subsubsection{Detuned Massive Fierz-Pauli equations (Boulware-Deser Ghost)}\label{sssec:DetunedMassiveFP}
\begin{table}[H] 
    \centering
    \begin{tabular}{ll}
        \multicolumn{2}{c}{\textbf{Input}}  \\ \toprule
        \textbf{Field content} $\boldsymbol{v^{A}}$ & $h_{\mu\nu}$\\[5pt]
        \textbf{Number of field components} $\boldsymbol{m}$ & $10$ \\[5pt]
        \textbf{Field equations} $\boldsymbol{\R_q}$ &  $\mathbb{E}^\text{FP}_{\mu\nu} -\frac14 m^2 \left(h_{\mu\nu} + \eta^{\alpha\beta}h_{\alpha\beta} \eta_{\mu\nu}\right)  = 0$\\[5pt]
         $\boldsymbol{\gamma_\ell}$ & $\gamma_\ell = 0$ $\forall \ell$ \\ \bottomrule
         \\
         \multicolumn{2}{c}{\textbf{Output}}  \\ \toprule
         \textbf{Involutive after $\boldsymbol{s}$ projections} & $s=2$ \\[5pt]
         \textbf{Involutive after $\boldsymbol{r}$ prolongations} & $r=0$ \\[5pt]
         $\boldsymbol{\beta^{(k)}_q}$ & $\beta^{(1)}_2 = 4$, $\beta^{(2)}_2 = 4$, $\beta^{(3)}_2 = 8$, $\beta^{(4)}_2 = 10$ \\[5pt]
         $\boldsymbol{\alpha^{(k)}_q}$ & $\alpha^{(1)}_2 = 36$, $\alpha^{(2)}_2 = 26$, $\alpha^{(3)}_2 = 12$, $\alpha^{(4)}_2 = 0$ \\[5pt]
         $\boldsymbol{H_q(r)}$ & $74 + 44 r + 6 r^2$\\[5pt]
         \textbf{Degrees of freedom} & $6$ \\ \bottomrule
    \end{tabular}
\end{table}
As is well-known (see, for instance, \cite{Heisenberg:2018}), the mass term of the massive Fierz-Pauli Lagrangian has to be tuned very carefully in order not to introduce ghost instabilities. In this example, we flip the sign between the two terms of the mass term:
\begin{align}
    \mathcal{L}_\text{mass} \mapsto \mathcal{L}_\text{mass} = \frac18 m^2 \left(\eta^{\alpha\beta} \eta^{\mu\nu} \red{+} \eta^{\mu\alpha} \eta^{\nu\beta}\right)h_{\alpha\beta} h_{\mu\nu}
\end{align}
This manipulation introduces the so-called Boulware-Deser ghost and rather than five degrees of freedom we should find six. First of all, we observe that the detuned massive Fierz-Pauli equations are involutive after $s=2$ projections, rather than $s=4$ like their well-tuned counterparts. Moreover, we find $\alpha^{(4)}_2 = 0$, in agreement with the compatibility condition and classical determinism. Moreover, our formula~\eqref{eq:FinalFormula} tell us that there are indeed six degrees of freedom. 

This is another example which shows us that ghost instabilities represent no problem for the formalism. The Cartan-Kuranishi algorithm terminates in a finite number of steps and it produces the correct number of degrees of freedom including the ghost\footnote{It would be particularly interesting to extend the method presented here to the non-linear dRGT theory\cite{deRham:2010kj}. This theory has been shown to be free of the Boulware–Deser ghost through various approaches, and it would be a compelling exercise to demonstrate, within the jet-bundle framework, that it propagates precisely five physical degrees of freedom.} 
\newpage

\subsubsection{Massive Fierz-Pauli \`a la St\"uckelberg}\label{sssec:MassiveFPStuckelberg}
\begin{table}[H] 
    \centering
    \begin{tabular}{ll}
        \multicolumn{2}{c}{\textbf{Input}}  \\ \toprule
        \textbf{Field content} $\boldsymbol{v^{A}}$ & $(h_{\mu\nu}, A^\mu, \Phi)$\\[5pt]
        \textbf{Number of field components} $\boldsymbol{m}$ & $10 + 4 + 1 = 15$ \\[5pt]
        \textbf{Field equations} $\boldsymbol{\R_q}$ &  $\mathbb{E}_\text{FP} + m^2 \left(h_{\mu\nu} + 2 \partial_{(\mu} A_{\nu)} + 2 \partial_\mu \partial_\nu \Phi\right)$\\[5pt]
        & $- m^2 \eta_{\mu\nu}\left(h + \partial^\alpha A_\alpha + \partial^\alpha \partial_\alpha \Phi\right) = 0$ \\[5pt]
        & $\partial^\alpha \partial_{[\mu}A_{\alpha]} + \eta^{\alpha\beta} \partial_{[\mu}h_{\beta]\alpha} = 0$ \\[5pt]
        & $\partial^\alpha \partial^\beta h_{\alpha\beta} - \eta^{\beta\gamma}\partial^\alpha \partial_\alpha h_{\beta\gamma} = 0$ \\[5pt]
         $\boldsymbol{\gamma_\ell}$ & $\gamma_1 = 5$ \\ \bottomrule
         \\
         \multicolumn{2}{c}{\textbf{Output}}  \\ \toprule
         \textbf{Involutive after $\boldsymbol{s}$ projections} & $s=0$ \\[5pt]
         \textbf{Involutive after $\boldsymbol{r}$ prolongations} & $r=0$ \\[5pt]
         $\boldsymbol{\beta^{(k)}_q}$ & $\beta^{(1)}_2 = 0$, $\beta^{(2)}_2 = 0$, $\beta^{(3)}_2 = 5$, $\beta^{(4)}_2 = 10$ \\[5pt]
         $\boldsymbol{\alpha^{(k)}_q}$ & $\alpha^{(1)}_2 = 60$, $\alpha^{(2)}_2 = 45$, $\alpha^{(3)}_2 = 25$, $\alpha^{(4)}_2 = 5$ \\[5pt]
         $\boldsymbol{H_q(r)}$ & $135 + \frac{275}{3} r + \frac{35}{2} r^2 + \frac{5}{6} r^3$\\[5pt]
         $\boldsymbol{\bar{H}_q(r)}$ & $35 + 30 r + 5 r^2$\\[5pt]
         $\boldsymbol{\bar{\alpha}^{(k)}_q}$ & $\bar{\alpha}^{(1)}_2 = 10$, $\bar{\alpha}^{(2)}_2 = 15$, $\bar{\alpha}^{(3)}_2 = 10$, $\bar{\alpha}^{(4)}_2 = 0$ \\[5pt]
         \textbf{Degrees of freedom} & $5$ \\ \bottomrule
    \end{tabular}
\end{table}
The transition from the massless Fierz-Pauli equations to their massive version is analogous to the transition from Maxwell's equations to Proca's equations: The introduction of a mass term spoils the gauge freedom of the original theory. 

Just as in the case of Proca's theory, we can apply the St\"uckelberg trick to restore gauge symmetry in the massive theory. In the case of the massive Fierz-Pauli equations, this necessitates the introduction of two auxiliary fields. Namely the vector field $A^\mu$ and the scalar $\Phi$. Therefore, there are three sets of field equations to consider: The ten equations stemming from varying the Fierz-Pauli-St\"uckelberg action with respect to $h^{\mu\nu}$, the four equations associated with $A_\mu$, and the one equation related to $\Phi$. In the table we use $\mathbb{E}_\text{FP}$ as a short hand to denote the left hand side of the massless Fierz-Pauli equations. 

The Lagrangian of the massive Fierz-Pauli theory can then be shown to be invariant under the following simultaneous gauge transformations (see~\cite{Heisenberg:2018} for more details):
\begin{align}
    h_{\mu\nu} & \mapsto h_{\mu\nu} + \partial_\mu \xi_\nu + \partial_\nu \xi_\mu && \text{and} & A_\mu & \mapsto A_\mu - \xi_\mu \notag\\
    A_\mu & \mapsto A_\mu + \partial_\mu \theta && \text{and} & \Phi &\mapsto \Phi - \theta\,,
\end{align}
for some arbitrary $1$-form $\xi_\mu$ and some arbitrary scalar field $\theta$. This introduces a total of five gauge fields and an inspection of the field equations reveals that they all enter with derivatives (despite some of the variables transforming in an algebraic way). Thus, we find that $\gamma_1 = 5$, while all other $\gamma$ coefficients vanish. 

In contrast to the previous example, we now find that the Cartan-Kuranishi algorithm terminates without having to execute prolongations or projections $(r=s=0)$. The resulting set of equations is much more straightforward to analyze: There are five constraint equations and ten dynamical equations, in agreement with $\beta^{(3)}_2 = 5$ and $\beta^{(4)}_2 = 10$. The highest order Cartan character signals the presence of five gauge modes, while the gauge-corrected Cartan character $\bar{\alpha}^{(4)}_2 = 0$ assures us that the equations are compatible. From the resulting $\alpha$'s and $\beta$'s one determines that the massive Fierz-Pauli equations expressed \`a la St\"uckelberg still propagate five degrees of freedom, as they should. \newpage

\subsubsection{Massless \texorpdfstring{$2$}{2}-Form}\label{sssec:2Form}
\begin{table}[H] 
    \centering
    \begin{tabular}{ll}
        \multicolumn{2}{c}{\textbf{Input}}  \\ \toprule
        \textbf{Field content} $\boldsymbol{v^{A}}$ & $B_{\mu\nu}$\\[5pt]
        \textbf{Number of field components} $\boldsymbol{m}$ & $6$ \\[5pt]
        \textbf{Field equations} $\boldsymbol{\R_q}$ &  $\eta^{\alpha\beta}\left(\partial_\alpha\partial_\beta B_{\mu\nu} + \partial_\beta \partial_\nu B_{\alpha\mu} - \partial_\beta\partial_\mu B_{\alpha\nu} \right) = 0$ \\[5pt]
         $\boldsymbol{\gamma_\ell}$ & $(\gamma_0, \gamma_1) = (1,2)$ \\ \bottomrule
         \\
         \multicolumn{2}{c}{\textbf{Output}}  \\ \toprule
         \textbf{Involutive after $\boldsymbol{s}$ projections} & $s=0$ \\[5pt]
         \textbf{Involutive after $\boldsymbol{r}$ prolongations} & $r=0$ \\[5pt]
         $\boldsymbol{\beta^{(k)}_q}$ & $\beta^{(1)}_2 = 0$, $\beta^{(2)}_2 = 1$, $\beta^{(3)}_2 = 2$, $\beta^{(4)}_2 = 3$ \\[5pt]
         $\boldsymbol{\alpha^{(k)}_q}$ & $\alpha^{(1)}_2 = 24$, $\alpha^{(2)}_2 = 17$, $\alpha^{(3)}_2 = 10$, $\alpha^{(4)}_2 = 3$ \\[5pt]
         $\boldsymbol{H_q(r)}$ & $54 + \frac{75}{2} r + 8 r^2 + \frac12 r^3$\\[5pt]
         $\boldsymbol{\bar{H}_q(r)}$ & $4 + 5 r + r^2$\\[5pt]
         $\boldsymbol{\bar{\alpha}^{(k)}_q}$ & $\bar{\alpha}^{(1)}_2 = 0$, $\bar{\alpha}^{(2)}_2 = 2$, $\bar{\alpha}^{(3)}_2 = 2$, $\bar{\alpha}^{(4)}_2 = 0$ \\[5pt]
         \textbf{Degrees of freedom} & $1$ \\ \bottomrule
    \end{tabular}
\end{table}
From a purely mathematical standpoint, Maxwell's theory of electromagnetism is the theory of a $1$-form $A_\mu$ whose field equations are governed by its field strength $2$-form $F_{\mu\nu} = \partial_\mu A_\nu - \partial_\nu A_\mu$. This can be generalized to higher order differential forms. In particular, we can consider a $2$-form $B_{\mu\nu}$ with an associated field strength $3$-form $H_{\mu\nu\rho} \ce \partial_\mu B_{\nu\rho} + \partial_\nu B_{\rho\mu} + \partial_\rho B_{\mu\nu}$. This results in a mathematical consistent theory governed by the field equations
\begin{align}
    \partial^\alpha H_{\alpha\mu\nu} = 0\,.
\end{align}
Moreover, this theory of a massless $2$-form enjoys gauge freedom:
\begin{align}\label{eq:B-Gauge}
    B_{\mu\nu} \mapsto B_{\mu\nu} + \partial_\mu \theta_\nu - \partial_\nu \theta_\mu\,.
\end{align}
At first sight, it may seem that the $1$-form $\theta$ represents four gauge modes. However, a closer inspection reveals that only three of the four components of $\theta$ enter the field equations (see, for instance, \cite{Aoki:2022} for a discussion of this point). Following~\cite{Aoki:2022}, we introduce the auxiliary $3$-dimensional vector fields
\begin{align}
    B_i &\ce B_{0i} &&\text{and} & C^{i} &\ce \frac12 \epsilon^{ijk} B_{jk}\,,
\end{align}
in terms of which the gauge transformation law can be restate as
\begin{align}
    \vec{B} &\mapsto \vec{B} + \dot{\vec{\theta}} - \nabla\theta_0 \notag\\
    \vec{C} &\mapsto \vec{C} + \nabla\times \vec{\theta}\,.
\end{align}
Next, we decompose $\vec{B}$, $\vec{C}$, and $\vec{\theta}$ into their respective longitudinal and transversal parts:
\begin{align}\label{eq:TLDecomp}
    \vec{B} &= \vec{B}^\text{T} + \nabla B \notag\\
    \vec{C} &= \vec{C}^\text{T} + \nabla C \notag\\
    \vec{\theta} &= \vec{\theta}^\text{T} + \nabla \theta\,,
\end{align}
with $\nabla\cdot \vec{B}^\text{T} = \nabla\cdot \vec{C}^\text{T} = \nabla\cdot \vec{\theta}^\text{T} = 0$. The gauge transformations can finally be written as
\begin{align}\label{eq:TLGauge}
    \vec{B}^\text{T} &\mapsto \vec{B}^\text{T} + \dot{\theta}\notag\\
    B &\mapsto B + \dot{\theta} - \theta_0 \notag\\
    \vec{C}^\text{T} &\mapsto \vec{C}^\text{T} + \nabla\times \vec{\theta}^\text{T} \notag\\
    C &\mapsto C\,.
\end{align}
An inspection of the field equations reveals that the longitudinal part of $\vec{B}$ does not appear in any equation. Thus, only the scalar $\dot{\theta}$ and the $2$-component, transversal vector field $\vec{\theta}^\text{T}$ play a role in the field equations. Since the scalar field $\dot{\theta}$ enters algebraically in the gauge transformation, while $\vec{\theta}^\text{T}$ appears with first order derivatives, we conclude that
\begin{align}
    (\gamma_0, \gamma_1) = (1,2)\,.
\end{align}
All other $\gamma_\ell$ coefficients are zero. Finding the correct values for the $\gamma_\ell$ coefficients is the only challenging part in this example. Executing the Cartan-Kuranishi algorithm presents no obstacles and we find that the field equations of the massless $2$-form are involutive from the start. The highest order Cartan character turns out to be $\alpha^{(4)}_2 = 3$, which is consistent with our discussion so far, from which we concluded that only three gauge modes are present in the field equations. The compatibility equation $\bar{\alpha}^{(4)}_2 = \alpha^{(4)}_2 - \sum_\ell \gamma_\ell = 0$ is satisfied and from~\eqref{eq:FinalFormula} we find one degree of freedom.

This is the same number reported in~\cite{Aoki:2022}, which was found using the following argument: The $2$-form $B_{\mu\nu}$ has six components and we have the gauge freedom~\eqref{eq:B-Gauge}. Given that $\theta_\mu$ has four components, one would expect to be able to fix four components of $B_{\mu\nu}$. However, due to residual gauge freedom of the form $\theta_\mu \mapsto \theta_\mu + \partial_\mu \Phi$, we can always transform away one of the components of $\theta_\mu$. Hence, only three remain to fix components of $B_{\mu\nu}$. Next, the attention turns to the field equations. Only the components $B_{xy}$, $B_{xz}$, and $B_{yz}$ enter with time derivatives. Thus, only these components can be dynamical. However, three of the six equations are constraints and, as it turns out, if two of these constraints are satisfied, the third one is automatically satisfied. A closer inspection then reveals these equations constrain two of the components $B_{xy}$, $B_{xz}$, $B_{yz}$. This leaves us with only $6-3-2 = 1$ degree of freedom. 
\newpage

\subsubsection{Massive \texorpdfstring{$2$}{2}-Form}\label{sssec:Massive2Form}
\begin{table}[H] 
    \centering
    \begin{tabular}{ll}
        \multicolumn{2}{c}{\textbf{Input}}  \\ \toprule
        \textbf{Field content} $\boldsymbol{v^{A}}$ & $B_{\mu\nu}$\\[5pt]
        \textbf{Number of field components} $\boldsymbol{m}$ & $6$ \\[5pt]
        \textbf{Field equations} $\boldsymbol{\R_q}$ &  $\mathbb{E}^B_{\mu\nu} - 9 m^2 B_{\mu\nu}= 0$ \\[5pt]
         $\boldsymbol{\gamma_\ell}$ & $\gamma_\ell = 0$ $\forall \ell$ \\ \bottomrule
         \\
         \multicolumn{2}{c}{\textbf{Output}}  \\ \toprule
         \textbf{Involutive after $\boldsymbol{s}$ projections} & $s=2$ \\[5pt]
         \textbf{Involutive after $\boldsymbol{r}$ prolongations} & $r=0$ \\[5pt]
         $\boldsymbol{\beta^{(k)}_q}$ & $\beta^{(1)}_2 = 4$, $\beta^{(2)}_2 = 5$, $\beta^{(3)}_2 = 6$, $\beta^{(4)}_2 = 6$ \\[5pt]
         $\boldsymbol{\alpha^{(k)}_q}$ & $\alpha^{(1)}_2 = 20$, $\alpha^{(2)}_2 = 13$, $\alpha^{(3)}_2 = 6$, $\alpha^{(4)}_2 = 0$ \\[5pt]
         $\boldsymbol{H_q(r)}$ & $39 + 22 r + 3 r^2$\\[5pt]
         \textbf{Degrees of freedom} & $3$ \\ \bottomrule
    \end{tabular}
\end{table}
It comes as no surprise that the theory of the massless $2$-form described in the previous example can be modified to include a mass term. The field equations are given by
\begin{align}
    \mathbb{E}^B_{\mu\nu} - 9 m^2 B_{\mu\nu} = 0\,,
\end{align}
where $\mathbb{E}^B_{\mu\nu}$ is a short hand notation for the left hand side of the field equations for the massless $B$-field. 

Just as in Proca's theory or the massive Fierz-Pauli theory, we find that introducing a mass term spoils the gauge freedom present in the massless theory. Thus, $\gamma_\ell = 0$.

In contrast to the massless case, we now find that the Cartan-Kuranishi algorithm requires us to perform two prolongations followed by two projections (i.e., $s=2$, $r=0$), in order to obtain an involutive system. From the computed values of the Cartan characters we infer the validity of the compatibility condition $\alpha^{(4)}_2 = 0$ and we find that the massive theory propagates three physical degrees of freedom.

This number was also derived in, for instance,~\cite{Aoki:2022} using different methods.

\newpage

\subsubsection{Massive \texorpdfstring{$2$}{2}-Form \`a la St\"uckelberg}\label{sssec:Massive2FormStuckelberg}
\begin{table}[H] 
    \centering
    \begin{tabular}{ll}
        \multicolumn{2}{c}{\textbf{Input}}  \\ \toprule
        \textbf{Field content} $\boldsymbol{v^{A}}$ & $(B_{\mu\nu}, A_\mu)$\\[5pt]
        \textbf{Number of field components} $\boldsymbol{m}$ & $6 + 4 = 10$ \\[5pt]
        \textbf{Field equations} $\boldsymbol{\R_q}$ &  $\mathbb{E}^B_{\mu\nu} - 9 m^2 B_{\mu\nu} - 9 m \left(\partial_\mu A_\nu - \partial_\nu A_\mu\right) = 0$ \\[5pt]
        & $\eta^{\alpha\beta}\left(\partial_\alpha \partial_\beta A_\mu - \partial_\beta \partial_\mu A_\alpha - m \partial_\beta B_{\mu\alpha} \right) = 0$\\[5pt]
         $\boldsymbol{\gamma_\ell}$ & $(\gamma_0, \gamma_1) = (1,3)$  \\ \bottomrule
         \\
         \multicolumn{2}{c}{\textbf{Output}}  \\ \toprule
         \textbf{Involutive after $\boldsymbol{s}$ projections} & $s=0$ \\[5pt]
         \textbf{Involutive after $\boldsymbol{r}$ prolongations} & $r=0$ \\[5pt]
         $\boldsymbol{\beta^{(k)}_q}$ & $\beta^{(1)}_2 = 0$, $\beta^{(2)}_2 = 1$, $\beta^{(3)}_2 = 3$, $\beta^{(4)}_2 = 6$ \\[5pt]
         $\boldsymbol{\alpha^{(k)}_q}$ & $\alpha^{(1)}_2 = 40$, $\alpha^{(2)}_2 = 29$, $\alpha^{(3)}_2 = 17$, $\alpha^{(4)}_2 = 4$ \\[5pt]
         $\boldsymbol{H_q(r)}$ & $90 + \frac{371}{6} r + \frac{25}{2} r^2 + \frac23 r^3$\\[5pt]
         $\boldsymbol{\bar{H}_q(r)}$ & $20 + 17 r + 3 r^2$\\[5pt]
         $\boldsymbol{\bar{\alpha}^{(k)}_q}$ & $\bar{\alpha}^{(1)}_2 = 6$, $\bar{\alpha}^{(2)}_2 = 8$, $\bar{\alpha}^{(3)}_2 = 6$, $\bar{\alpha}^{(4)}_2 = 0$ \\[5pt]
         \textbf{Degrees of freedom} & $3$ \\ \bottomrule
    \end{tabular}
\end{table}
In our last example, we apply the St\"uckelberg trick to the massive $2$-form. In order to restore gauge symmetry, we introduce the $1$-form $A_\mu$. The gauge transformations are then given by
\begin{align}
    B_{\mu\nu} &\mapsto B_{\mu\nu} + \partial_{\mu}\theta_\nu - \partial_\nu \theta_\mu  \notag\\
    A_{\mu} &\mapsto A_{\mu} - m \theta_\mu\,,
\end{align}
for some arbitrary $1$-form $\theta_\mu$. To determine the $\gamma_\ell$'s, it is again convenient to decompose $B_{\mu\nu}$ into $\vec{B}$ and $\vec{C}$, which themselves are then decomposed into longitudinal and transversal parts. This was already done in~\eqref{eq:TLDecomp} and the resulting gauge-transformed longitudinal and transversal components are given by~\eqref{eq:TLGauge}.

Because of the mass terms in the field equations, the longitudinal part of $\vec{B}$ no longer drops out. Thus, we no longer have $(\gamma_0, \gamma_1) = (1,2)$. Rather, we now find
\begin{align}
    (\gamma_0, \gamma_1) = (1,3)\,.
\end{align}
The increase of $\gamma_1$ from $2$ to $3$ is justified by noting that the longitudinal part of $\vec{B}$ is given by $\nabla B$, rather than $B$ alone. Thus, $\theta_0$ enters the field equations with a derivative acting on it, and not just algebraically. 

Running through the steps of the Cartan-Kuranishi algorithm presents no problems and we find that the field equations are already involutive with highest-order Cartan character $\alpha^{(4)}_2 = 4$. Hence, the compatibility condition is satisfied, since gauge correction yields $\bar{\alpha}^{(4)}_2 = 0$. 

From the values listed in the output table we finally infer that the massive $2$-form, reformulated \`a la St\"uckelberg, still propagates the expected three degrees of freedom.
\newpage

\subsubsection{Observations and Insights from the Examples}\label{sssec:ObservationsAndInsights}
In this subsection we provided $14$ examples of field theories and how the methods discussed in this work can be applied to them. Since we collected all important outcomes in tables, we can now perform a comparative analysis and extract certain patterns. In particular, we observe the following:

\begin{enumerate}
    \item \emph{Highest order (gauge-corrected) Cartan characters:} In all theories without gauge symmetry we found $\alpha^{(4)}_2 = 0$, which is precisely what we expected based on our discussion of compatible equations and classical determinism.

    In the case of gauge theories, we always found that $\alpha^{(4)}_2$ equals the number of gauge modes that enter in the field equations. This was also expected. Moreover, the gauge-corrected Cartan character $\bar{\alpha}^{(4)}_2$ was zero in every single case, as it should be in order to have  equations compatible with classical determinism.
    \item \emph{Ordering of $\beta^{(k)}_q$, $\alpha^{(k)}_q$, and $\bar{\alpha}^{(k)}_q$:} In every example we studied we observe that
    \begin{align}
        \beta^{(1)}_2 \leq \beta^{(2)}_2 \leq \beta^{(3)}_2 \leq \beta^{(4)}_2
    \end{align}
    and
    \begin{align}\label{eq:alphaOrder}
        \alpha^{(1)}_2 \geq \alpha^{(2)}_2 \geq \alpha^{(3)}_2 \geq \alpha^{(4)}_2\,.
    \end{align}
    This is a pattern that is true in any dimension $n$ and for any order $q$. One can show (see for instance~\cite{SeilerBook}) that for an involutive equation $\R_q$ the characters satisfy
    \begin{align}
        \beta^{(1)}_q \leq \beta^{(2)}_q \leq \dots \leq \beta^{(n-1)}_q \leq \beta^{(n)}_q \notag\\
        \alpha^{(1)}_q \geq \alpha^{(2)}_q \geq \dots \geq \alpha^{(n-1)}_q \geq \alpha^{(n)}_q\,.
    \end{align}
    This property is useful for cross-checking whether the computed $\alpha$'s and $\beta$'s are reasonable, or whether an error occurred.
    
    Unfortunately, no such order exists for the gauge-corrected Cartan characters $\bar{\alpha}^{(k)}_q$. In most gauge theories we studied, $\bar{\alpha}^{(1)}_2$ is smaller or equal to $\bar{\alpha}^{(2)}_2$, which itself is larger or equal to $\bar{\alpha}^{(3)}_2$. The only exception, and the only instance in which the order~\eqref{eq:alphaOrder} holds for gauge-corrected Cartan characters, is Proca \`a la St\"uckelberg~\ref{sssec:ProcaStuckelberg}.

    \item \emph{Value of $\alpha^{(k)}_q + \beta^{(k)}_q$:} In every example we can observe that the sum $\alpha^{(k)}_2 + \beta^{(k)}_2$ is a multiple of the number of field components. Even in theories where several fields are present (see for instance the example on the massive $2$-form \`a la St\"uckelberg~\ref{sssec:Massive2FormStuckelberg}). More precisely, we find
    \begin{align}
        \alpha^{(4)}_2 + \beta^{(4)}_2 &= m \notag\\
        \alpha^{(3)}_2 + \beta^{(3)}_2 &= 2m \notag\\
        \alpha^{(2)}_2 + \beta^{(2)}_2 &= 3m \notag\\
        \alpha^{(1)}_2 + \beta^{(1)}_2 &= 4m\,.
    \end{align}
    This pattern is a peculiarity of second order field equations and it is easy to explain. Recall that the Cartan characters are defined as (cf. Definition~\ref{def:CartanCharacters})
    \begin{align}
        \alpha^{(k)}_q = m\binom{n + q - k - 1}{n - k} - \beta^{(k)}_q\,.
    \end{align}
    Thus, the sum of $\alpha^{(k)}_q + \beta^{(k)}_q$ is always equal to the binomial on the right hand side. For equations of order 
    $q=1,2,3$ in any dimension $n$ we find the following values:
    \begin{align}
        \alpha^{(k)}_q + \beta^{(k)}_q = m\binom{n + q - k - 1}{n - k} =
        \begin{cases}
            m & \text{for } q=1 \\
            m(1+n-k) & \text{for } q=2 \\
            \frac12 m (1+n-k)(2+n-k) & \text{for } q = 3
        \end{cases}\,.
    \end{align}
    This reproduces precisely our observations. It is nevertheless an important observation, since this is a simple cross-check one can perform to verify whether the computed $\alpha$ and $\beta$ characters are correct. 

    \item \emph{A pattern which relates the $\gamma_\ell$'s to the $\beta$'s:} This pattern is more interesting than the previous one, but we cannot offer an explanation for it. In all gauge theories, we can observe that $\beta^{(3)}_2$ is equal to the highest non-zero $\gamma_\ell$. If this is the only $\gamma_\ell$ that appears, then $\beta^{(2)}_2 = \beta^{(1)}_2 = 0$. In some cases there is a second $\gamma_\ell$. Then $\beta^{(2)}_2$ is equal to the second $\gamma_\ell$ and $\beta^{(1)}_2 = 0$.
    \begin{align}
        &\text{Maxwell's equations:} & \phantom{\gamma_0} & \phantom{= 0} & \gamma_1 &= 1\,, & \beta^{(1)}_2 &= 0\,, & \beta^{(2)}_2 &= 0\,, & \beta^{(3)}_2 &= \gamma_1 \notag\\
        &\text{Proca-St\"uckelberg:} & \phantom{\gamma_0} & \phantom{= 0} & \gamma_1 &= 1\,, & \beta^{(1)}_2 &= 0\,, & \beta^{(2)}_2 &= 0\,, & \beta^{(3)}_2 &= \gamma_1 \notag\\
        &\text{Einstein's equations:} & \phantom{\gamma_0} & \phantom{= 0} & \gamma_1 &= 4\,, & \beta^{(1)}_2 &= 0\,, & \beta^{(2)}_2 &= 0\,, & \beta^{(3)}_2 &= \gamma_1 \notag\\
        &\text{Coincident GR:} & \phantom{\gamma_0} & \phantom{= 0} & \gamma_1 &= 4\,, & \beta^{(1)}_2 &= 0\,, & \beta^{(2)}_2 &= 0\,, & \beta^{(3)}_2 &= \gamma_1 \notag\\
        &\text{Massless Fierz-Pauli:} & \phantom{\gamma_0} & \phantom{= 0} & \gamma_1 &= 4\,, & \beta^{(1)}_2 &= 0\,, & \beta^{(2)}_2 &= 0\,, & \beta^{(3)}_2 &= \gamma_1 \notag\\
        &\text{Fierz-Pauli-St\"uckelberg:} & \phantom{\gamma_0} & \phantom{= 0} & \gamma_1 &= 5\,, & \beta^{(1)}_2 &= 0\,, & \beta^{(2)}_2 &= 0\,, & \beta^{(3)}_2 &= \gamma_1 \notag\\
        &\text{Massless $2$-form:} & \gamma_0 &= 1 & \gamma_1 &= 2\,, & \beta^{(1)}_2 &= 0\,, & \beta^{(2)}_2 &= \gamma_0\,, & \beta^{(3)}_2 &= \gamma_1 \notag\\
        &\text{$2$-form \`a la St\"uckelberg:} & \gamma_0 &=1 & \gamma_1 &= 3\,, & \beta^{(1)}_2 &= 0\,, & \beta^{(2)}_2 &= \gamma_0\,, & \beta^{(3)}_2 &= \gamma_1
    \end{align}

    This is a remarkable pattern and relation because the $\beta$'s are computed from the symbol of $\R_q$, which knows nothing of the $\gamma_\ell$'s. Moreover, in the absence of gauge symmetry all the $\beta$'s are generally different from zero. 
    
    \item \emph{Number of prolongations $r$ and number of projections $s$:} Recall that the method for counting degrees of freedom discussed here is based on involutive equations $\R_q$. The Cartan-Kuranishi algorithm allows us to turn every equation into an involutive one. To achieve that, a certain number of prolongations and projections might be necessary.
    However, in all but four examples we found $r=s=0$. This means that in the majority of the examples the equations were involutive to begin with. The four examples which make up the exceptions are all massive theories: Proca's theory~\ref{sssec:Proca}, the massive Fierz-Pauli equation~\ref{sssec:MassiveFP}, the detuned massive Fierz-Pauli equation~\ref{sssec:DetunedMassiveFP}, and the massive $2$-form~\ref{sssec:Massive2Form}. In these theories we found $s\geq 2$ and $r=0$. At first one might hypothesize that $s\geq 2$ is a consequence of breaking gauge symmetry by the introduction of a mass term. This hypothesis seems to be supported by the fact that in the St\"uckelberg formulation of these theories, where gauge symmetry is restored by the introduction of appropriate St\"uckelberg fields, one finds again $s=0$. However, the breaking of gauge symmetry cannot fully account for why $s>0$. For instance, in the case of the (massless) detuned Fierz-Pauli equations, gauge symmetry is spoiled by ill-adjusted coefficients in $\mathcal{L}$, but the Cartan-Kuranishi algorithm still produces $r=s=0$.

    Rather than in the breaking of gauge symmetry, the reason for $s>0$ lays in the particular way in which this symmetry is broken. To understand this, let us denote the field equations of the massless theories by
    \begin{align}
        \R_q: \left\{\ \E^\bullet_0 = 0\right.\,,
    \end{align}
    where $\bullet$ stands for an unspecified number of indices, in order to account for the different tensorial structures of the equations. Because of gauge symmetry, these equations satisfy contracted Bianchi identities:
    \begin{align}
        \partial_\alpha\E^{\alpha \bar{\bullet}}_0 = 0\,,
    \end{align}
    where $\bar{\bullet}$ stands for the uncontracted indices. When we promote a gauge theory to a massive theory, we add a mass term to the Lagrangian. This term enters without derivatives and thus the massive field equations take the form
    \begin{align}
        \R_q: \left\{\ \E^\bullet_0 - m^2 \Psi^\bullet = 0\right.\,,
    \end{align}
    where $m$ is the mass and $\Psi^\bullet$ the field in question (i.e., it could be a vector, a $2$-form, a $(0,2)$ tensor or any other kind of tensor).

    Next, the Cartan-Kuranishi algorithm demands that we compute the prolongation of these massive field equations. If we prolong and simultaneously trace over the index of the derivative operator $\partial_\alpha$ and one of the free indices of $\E^\bullet$, which we can always do since it merely corresponds to taking linear combinations of equations, we obtain
    \begin{align}
        \partial_\alpha \E^{\alpha\bar{\bullet}}_0 - m^2 \partial_\alpha \Psi^{\alpha\bar{\bullet}} = 0\,.
    \end{align}
    Since $\E^\bullet_0$ by itself is still gauge-invariant, the Bianchi identities for this tensor are still true. Thus, the above equation reduces to
    \begin{align}
        \partial_\alpha \Psi^{\alpha\bar{\bullet}} = 0\,.
    \end{align}
    In other words, we find that the prolongation of massive field equations always produces a lower order equation. Thus, $\R^{(1)}_q \neq \R_q$ and this implies $s>0$.

    It is also interesting to consider what happens in the St\"uckelberg formulation of these theories. Because of the introduction of St\"uckelberg fields, we now have more equations to deal with. Moreover, the equation $\E^\bullet_0 - m^2 \Psi^\bullet = 0$ gets modified and the St\"uckelberg fields appear in it. However, we can always choose a gauge in which the St\"uckelberg fields vanish, then the massive equation $\E^\bullet_0 - m^2 \Psi^\bullet = 0$ is restored (i.e., the additional terms drop out without affecting the structure of $\E^\bullet_0 - m^2 \Psi^\bullet = 0$) and the additional equations, which were obtained by taking variations of the St\"uckelberg Lagrangian with respect to the St\"uckelberg fields, reduce to constraint equations of the form $\partial_\alpha \Psi^{\alpha\bar{\bullet}} =0$. Hence, the prolongation no longer produces a new equation and one thus finds $\R^{(1)}_q = \R_q$. This implies $s=0$, in agreement with what we found in the three relevant examples: Proca \`a la St\"uckelberg~\ref{sssec:ProcaStuckelberg}, massive Fierz-Pauli \`a la St\"uckelberg~\ref{sssec:MassiveFPStuckelberg}, and the massive $2$-form \`a la St\"uckelberg~\ref{sssec:Massive2FormStuckelberg}.

    \item \emph{Reading off the degrees of freedom from the (gauge-corrected) Hilbert polynomial:}
    The two Hilbert polynomials can be written as $H_q(r) = \sum_{i=0}^{n-1} h_i r^{i}$ and $\bar{H}_q(r) = \sum_{i=0}^{n-1} \bar{h}_i r^{i}$, respectively. As we know, for compatible equations we always have $h_{n-1} = \bar{h}_{n-1} = 0$. Curiously, in all examples we studied we also found that 
    \begin{align}
        \text{DOFs} = 
        \begin{cases}
            h_2 & \text{for non-gauge theories} \\
            \bar{h}_2 & \text{for gauge theories}
        \end{cases}\,.
    \end{align}
    This can be understood analytically. As we will now show, it is true in general that $h_{n-2}$ and $\bar{h}_{n-2}$ are proportional to the number of degrees of freedom. The exact equality we found in all examples is a peculiarity of working with second order equations in four dimensions.

    In what follows we focus on $\bar{h}_{n-2}$, since analogous statements for $h_{n-2}$ follow from it as special cases. To begin with, we use~\eqref{eq:hbar} to write $\bar{h}_{n-2}$ as \begin{align}
        \bar{h}_{n-2} &= h_{n-2} - \frac{1}{(n-1)!} \sum_{\ell = 0}^p \gamma_\ell s^{(n-1)}_1(q+\ell) \notag\\
        &= h_{n-2} - \frac{1}{(n-1)!} \sum_{\ell = 0}^p \gamma_\ell \left[\frac12 n (n-1) + (n-1)(q+\ell)\right] \notag\\
        &= h_{n-2} - \frac{1}{(n-2)!} \sum_{\ell = 0}^p \gamma_\ell \left[\frac12 n + q + \ell\right]\,,
    \end{align}
    where we used that $s^{(n-1)}_1(q+\ell) = \sum_{i=1}^{n-1}(i+q+\ell) = \frac12 n (n-1) + (n-1)(q+\ell)$. Next, we use~\eqref{eq:hi} to express $h_{n-2}$ as
    \begin{align}
        h_{n-2} &= \sum_{k=n-2}^{n-1} \frac{\alpha^{(k+1)}_q}{k!}s^{(k)}_{k-n+2}(0) \notag\\
        &= \frac{\alpha^{(n-1)}_q}{(n-2)!}\underbrace{s^{(n-2)}_0(0)}_{=1} + \frac{\alpha^{(n)}_q}{(n-1)}\underbrace{s^{(n-1)}_1(0)}_{=\frac12 n (n-1)} \notag\\
        &= \frac{\alpha^{(n-1)}_q}{(n-2)!} + \frac12 n \frac{\alpha^{(n)}_q}{(n-2)!}\,.
    \end{align}
    After plugging this into the expression for $\bar{h}_{n-2}$, we obtain
    \begin{align}
        \bar{h}_{n-2} &= \frac{\alpha^{(n-1)}_q}{(n-2)!} + \frac12 \frac{n}{(n-2)!}\left[\alpha^{(n)}_q - \sum_{\ell = 0}^p \gamma_\ell\right] - \frac{1}{(n-2)!} \sum_{\ell = 0}^p \gamma_\ell (q+\ell)\,.
    \end{align}
    For compatible theories, the square bracket vanishes. We are then left with
    \begin{align}
        \bar{h}_{n-2} &= \frac{1}{(n-2)!}\left(\alpha^{(n-1)}_q - q \sum_{\ell = 0}^p \gamma_\ell - \sum_{\ell = 0}^p \ell \gamma_\ell\right)\,.
    \end{align}
    From~\eqref{eq:Z1q} we recognize the term in the round bracket to be the strength divided by $n-1$, i.e. $\frac{Z^{(1)}_q}{(n-1)}$. According to~\eqref{eq:FinalFormula}, the degrees of freedom are given by
    \begin{align}
        \text{DOFs} &= \frac{Z^{(1)}_q}{(n-1)q}\,.
    \end{align}
    We finally conclude that $\bar{h}_{n-2}$ can be written as
    \begin{align}
        \bar{h}_{n-2} = \frac{1}{(n-2)!} \frac{Z^{(1)}_q}{n-1} = \frac{q}{(n-2)!} \text{DOFs}\,.
    \end{align}
    The coefficient $\bar{h}_{n-2}$ is thus always proportional to the degrees of freedom. However, only in cases where $q = (n-2)!$ is it exactly equal to the $\text{DOFs}$. In particular, this happens for $n=4$ and $q=2$, which are the parameters we used in every example studied in this subsection. 

    \item \emph{Degrees of freedom from $\alpha^{(3)}_2$, $\bar{\alpha}^{(3)}_2$, and $\beta^{(4)}_2 - \beta^{(3)}_2$:} In all examples we can simply read of the number of phase space degrees of freedom either from $\alpha^{(3)}_2$ or $\bar{\alpha}^{(3)}_2$. The former encodes the phase space degrees of freedom for non-gauge theories, while the latter encodes them for gauge theories. This is easy to understand. From the definition of the Cartan characters~\ref{def:CartanCharacters} we obtain
    \begin{align}
        \alpha^{(n-1)}_q = m \binom{n+q-(n-1)-1}{n-(n-1)} - \beta^{(n-1)}_q = m \, q - \beta^{(n-1)}_q\,.
    \end{align}
    A comparison with~\eqref{eq:FinalFormula} shows that this is exactly $q$ times the degrees of freedom in the case where all the $\gamma_\ell$'s vanish.

    For the gauge theory case, we need to use the recursion relations~\eqref{eq:RecRelGaugeCorrCartanChar} and~\eqref{eq:hbar}. For $k=n-1$ we obtain
    \begin{align}
        \bar{\alpha}^{(n-1)}_q &= (n-2)! \bar{h}_{n-2} - \frac{(n-2)!}{n!} \underbrace{\bar{\alpha}^{(n)}_q}_{=0}s^{(n-1)}_{1}(0) \notag\\
        &= (n-2)!\left(h_{n-2} - \frac{1}{(n-1)!} \sum_{\ell=0}^p \gamma_\ell \underbrace{s^{(n-1)}_{1}(q+\ell)}_{=(n-1)(q+\ell+1)}\right) \notag\\
        &= (n-2)!\left(\frac{\alpha^{(n-1)}_q}{(n-2)!} + \frac{\alpha^{(n)}_q}{(n-1)!}\underbrace{s^{(n-1)}_1(0)}_{=n-1} - \frac{1}{(n-2)!} \sum_{\ell=0}\gamma_\ell\left\{q+\ell+1\right\}\right) \notag\\
        &= \alpha^{(n-1)}_q + \underbrace{\alpha^{(n)}_q - \sum_{\ell=0}^p \gamma_\ell }_{=0} - q\sum_{\ell = 0}^p \gamma_\ell - \sum_{\ell=1}^p \ell \gamma_\ell\,.
    \end{align}
    Using $\alpha^{(n-1)}_q = q\,m - \beta^{(n-1)}_q$, we deduce that
    \begin{align}
        \bar{\alpha}^{(n-1)}_q = m\, q - q\sum_{\ell = 0}^p\gamma_\ell - \beta^{(n-1)}_q - \sum_{\ell=0}^p \ell\gamma_\ell\,,
    \end{align}
    which is exactly equal to $q$ times~\eqref{eq:FinalFormula}.

    In almost all cases, the configuration space degrees of freedom can also be directly computed from $\beta^{(4)}_2 - \beta^{(3)}_2$. This is true in all examples we studied, except in the four massive theories. In particular this holds irrespective of whether the corresponding theory is a gauge theory or not and it also holds when there are ghost instabilities.

    Notice that we encountered the difference $\beta^{(4)}_2 - \beta^{(3)}_2$ in equation~\eqref{eq:PositiveDOFs}, which we obtained by postulating the upper bound~\eqref{eq:PostulatedUpperBound} for gauge theories. Recall that \eqref{eq:PositiveDOFs} is obtained by saturating the upper bound. Interestingly, all gauge and non-gauge theories seem to obey this equation. The pattern is only broken by massive theories where gauge symmetry was not restored using the St\"uckelberg trick. At this stage we cannot offer any explanation for this pattern. 
    
    \item \emph{Structure of the (gauge-corrected) Hilbert polynomial:} 
    In all examples we studied, we found that the Hilbert polynomial of non-gauge theories possess non-negative, integer coefficients. For gauge theories, it is the gauge-corrected Hilbert polynomial which has this property.

    This pattern can easily be understood for non-gauge theories. We only need to assume compatible equations. From the recursion relation for the $h_k$ (see equations~\eqref{eq:hn-1h0}and~\eqref{eq:hcoeff}) we obtain
    \begin{align}
        h_{n-1} &= \frac{\alpha^{(n)}_q}{(n-1)!} = 0\notag\\
        h_0 &= \sum_{k=1}^n \alpha^{(k)}_q\,.
    \end{align}
    On the first line we used the compatibility condition. The coefficient $h_0$ is clearly a non-negative integer since every $\alpha^{(k)}_q$ in the sum can only attain non-negative integer values. These two relations are true for any dimension $n$ and any order $q$.

    Next we consider $h_{n-2}$. From bullet point 6 we already know that
    \begin{align}
        h_{n-2} = \frac{q}{(n-1)!} \text{DOFs}\,.
    \end{align}
    As we showed in~\eqref{eq:PositiveDOFsNonGauge}, the number of degrees of freedom in the absence of gauge symmetry is always positive. Thus, $h_{n-2}$ is a positive number for any dimension $n\geq 1$ and any order $q\geq 1$. However, it is not an integer for any value of $n$. From now on we shall therefore focus on $n=4$ and $q=2$, which covers all examples we studied.

    In the $n=4$ case, the only remaining coefficient to consider is $h_1$. We begin with arbitrary $n$ and then specialize to $n=4$. The relevant coefficient to compute is $h_{n-3}$. From~\eqref{eq:hcoeff} we obtain
    \begin{align}
        h_{n-3} &= \sum_{k=n-3}^{n-1} \frac{\alpha^{(k+1)}_q}{k!} s^{(k)}_{k-n-3}(0) \notag\\
        &= \frac{\alpha^{(n-2)}_q}{(n-3)!} s^{(n-3)}_0(0) + \frac{\alpha^{(n-1)}_q}{(n-2)!} s^{(n-2)}_1(0) + \frac{\alpha^{(n)}_q}{(n-1)!} s^{(n-1)}_2(0) \notag\\
        &= \frac{1}{(n-3)!} \left[\alpha^{(n-2)}_q + \frac12 (n-1)\alpha^{(n-1)}_q + \frac{n (3n-1)}{4!} \alpha^{(n)}_q\right]\,,
    \end{align}
    where we used the definitions~\eqref{eq:ModifiedStirlingNumbers} and~\eqref{eq:ESPSpecialCases} to evaluate the modified Stirling numbers:
    \begin{align}
        s^{(n-3)}_0(0) &= 1 \notag\\
        s^{(n-2)}_1(0) &= \sum_{i=1}^{n-2} i = \frac12 (n-2)(n-1) \notag\\
        s^{(n-1)}_2(0) &= \sum_{i=1}^{n-2}\sum_{j=i+1}^{n-1} i j = \frac{1}{4!} (n-2)(n-1)n(3n-1)\,.
    \end{align}
    For compatible equations, and using $\alpha^{(n-1)}_q = m \, q - \beta^{(n-1)}_q = q \text{DOFs}$, the expression for $h_{n-3}$ simplifies to
    \begin{align}
        h_{n-3} = \frac{1}{(n-3)!} \left[\alpha^{(n-2)}_q + \frac12 (n-1) q \text{DOFs}\right]\,.
    \end{align}
    This is again a non-negative number for any $n$ and any $q$. Also, $q \text{DOFs}$ is always an integer. However, $h_{n-3}$ is only an integer if $q$ or $\text{DOFs}$ is divisible by $2$ and the bracket is divisible by $(n-3)!$. This is the case for $n=4$ and $q=2$. Thus, $h_1$ is a positive integer.

    We also see that in the case of non-gauge theories the coefficients of the Hilbert polynomial can be ordered as
    \begin{align}
        h_0 \geq h_1 \geq h_2 \geq h_3 \geq 0\,.
    \end{align}
    This is indeed confirmed in the relevant examples we studied. For gauge theories, however, no such ordering exists. It is true in most examples we considered, but we also found counter-examples to this ordering.

    Moreover, it is not easy to verify analytically that all coefficients are non-negative integers. For $\bar{h}_{n-1}$ it is still true that
    \begin{align}
        \bar{h}_{n-1} = \frac{\bar{\alpha}^{(n)}_q}{(n-1)!} = 0\,,
    \end{align}
    when compatibility is employed. For $\bar{h}_{n-2}$ we already saw in bullet point 6 that
    \begin{align}
        \bar{h}_{n-2} = \frac{q}{(n-2)!} \text{DOFs}\,.
    \end{align}
    By using the upper bound~\eqref{eq:UpperBoundSum} we can again conclude that this is a positive integer for $n$ and $q$ chosen accordingly. However, the remaining coefficients are not easy to analyze. For instance, $\bar{h}_0$ is now given by
    \begin{align}
        \bar{h}_0 = \sum_{k=1}^n \alpha^{(k)}_q - \sum_{\ell=0}^p \gamma_\ell \binom{q+\ell + n-1}{n-1}\,.
    \end{align}
    It is not immediately clear that the second sum is smaller than the first one. It is not even clear that the second sum is always an integer!

    Similarly, we encounter difficulties when analyzing $\bar{h}_{n-3}$. From~\eqref{eq:hbar} we obtain
    \begin{align}\label{eq:Intermedhbar}
        \bar{h}_{n-3} &= h_{n-3} - \frac{1}{(n-1)!} \sum_{\ell = 0}^p \gamma_\ell s^{(n-1)}_2(q+\ell) \notag\\
        &= \frac{1}{(n-3)!} \left[\alpha^{(n-2)}_q + \frac12 (n-1)\alpha^{(n-1)}_q + \frac{n (3n-1)}{4!} \alpha^{(n)}_q\right]\notag\\
        &\phantom{=} - \frac{1}{(n-1)!} \sum_{\ell = 0}^p \gamma_\ell s^{(n-1)}_2(q+\ell)\,.
    \end{align}
    To proceed with the computations we need to determine the modified Stirling number $s^{(n-1)}_2(q+\ell)$. Using~\eqref{eq:ESPSpecialCases} we find
    \begin{align}
        s^{(n-1)}_2(q+\ell) &= \sum_{i=1}^{n-2} \sum_{j=i+1}^{n-1} (i+q+\ell)(j+q+\ell) \notag\\
        &= \sum_{i=1}^{n-2} \sum_{j=i+1}^{n-1}\left\{i j + i (q+\ell) + j (q+\ell) + (q+\ell)^2\right\}\,.
    \end{align}
    It is convenient to evaluate this expression term by term:
    \begin{align}
        \sum_{i=1}^{n-2} \sum_{j=i+1}^{n-1} i j &= \frac{1}{4!}(n-2)(n-1)n(3n-1)  \notag\\
        \sum_{i=1}^{n-2} \sum_{j=i+1}^{n-1} i (q+\ell) &= \frac16 (n-2)(n-1)n (q+\ell) \notag\\
        \sum_{i=1}^{n-2} \sum_{j=i+1}^{n-1} j (q+\ell) &= \frac13 (n-2)(n-1) n (q+\ell) \notag\\
        \sum_{i=1}^{n-2} \sum_{j=i+1}^{n-1} (q+\ell)^2 &= \frac12 (n-2)(n-1) (q+\ell)^2\,.
    \end{align}
    Notice that every term is proportional to $(n-2)(n-1)$. Thus, when plugging the modified Stirling number back into~\eqref{eq:Intermedhbar}, we find
    \begin{align}
        \bar{h}_{n-3} &= \frac{1}{(n-3)!} \left[\alpha^{(n-2)}_q + \frac12 (n-1)\alpha^{(n-1)}_q + \frac{n (3n-1)}{4!} \alpha^{(n)}_q\right] \notag \\
        &\phantom{=} - \frac{1}{(n-3)!} \sum_{\ell=0}^p \gamma_\ell \left\{\frac{n(3n-1)}{4!} + \frac{n}{2}(q+\ell) + \frac12 (q+\ell)^2\right\}\,.
    \end{align}
    Observe that for compatible equations, the two terms proportional to $\frac{1}{4!}n (3n-1)$ cancel each other. Moreover, we can use the formula for the degrees of freedom to write
    \begin{align}
        \bar{h}_{n-3} &= \frac{1}{(n-3)!}\left[\alpha^{(n-2)}_q + \frac12 (n-1) q \text{DOFs} - \sum_{\ell=0}^p \gamma_\ell\left\{\frac{q(q+1)}{2} + \frac{\ell (\ell+1)}{2} + q\ell\right\} \right]\,.
    \end{align}
    Unsurprisingly, the first two terms in the square bracket are the same as the ones for $h_{n-3}$. It is the third term which introduces problems. In this case it is clear that the sum is always an integer, but it is difficult to see under which condition it is smaller than the two other terms in the square bracket. Thus we can only conclude that $\bar{h}_{n-3}$ is an integer, but we cannot decide whether it is always non-negative. 
\end{enumerate}

\subsection{Application to \texorpdfstring{$f(\bbQ)$}{f(Q)} Gravity}\label{ssec:fQGravity}
In recent years, $f(\bbQ)$ gravity has been a very active field of research. In particular, it has sparked a lot of activities on black hole physics~\cite{Wang:2021, DAmbrosio:2021b, Bahamonde:2022, Javed:2023, Junior:2023}, cosmology~\cite{BeltranJimenez:2019c, Capozziello:2022, DAmbrosio:2020c, DAmbrosio:2021, Dimakis:2022, Frusciante:2021, Heisenberg:2022}, wormholes~\cite{Banerjee:2021, Parsaei:2022, Mustafa:2023} and exotic stars~\cite{Maurya:2022, Sokoliuk:2022}. However, analyzing the structure of the theory itself proved to be rather difficult. In particular the question how many degrees of freedom are present in $f(\bbQ)$ gravity could only be gradually answered~\cite{Hu:2022, Tomonari:2023, DAmbrosio:2023, Heisenberg:2023b} and the final answer involved very complicated perturbative techniques. 

Here, we provide an alternative derivation of the number of propagating degrees of freedom based on the Cartan-Kuranishi algorithm and formula~\eqref{eq:FinalFormula}. We begin by introducing the necessary terminology.

Let $(\M, g_{\mu\nu}, \Gamma\ud{\alpha}{\mu\nu})$ be a \textbf{metric-affine geometry} consisting of a $4$-dimensional, connected manifold $\M$, a metric $g_{\mu\nu}$, and an \textbf{affine connection} $\Gamma\ud{\alpha}{\mu\nu}$. The latter is used to define the covariant derivative $\nabla_\mu$. Its action on vectors $V^\mu$ and $1$-forms $\omega_\mu$ is
\begin{align}
    \nabla_\mu V^\nu &= \partial_\mu V^\nu + \Gamma\ud{\nu}{\mu\lambda} V^\lambda \notag\\
    \nabla_\mu \omega_\nu &= \partial_\mu \omega_\nu - \Gamma\ud{\lambda}{\mu\nu} \omega_\lambda\,.
\end{align}
Furthermore, we can construct the following three tensors, which characterize any metric-affine geometry~\cite{BeltranJimenez:2017, BeltranJimenez:2018, BeltranJimenez:2019, Heisenberg:2023a}:
\begin{align}
    &\text{Curvature tensor:} & R\ud{\alpha}{\beta\mu\nu} &\ce 2\partial_{[\mu}\Gamma\ud{\alpha}{\nu]\beta} + 2 \Gamma\ud{\alpha}{[\mu|\lambda} \Gamma\ud{\lambda}{\nu]\beta} \notag\\
    &\text{Torsion tensor:} & T\ud{\alpha}{\mu\nu} &\ce 2 \Gamma\ud{\alpha}{[\mu\nu]} \notag\\
    &\text{Non-metricity tensor:} & Q_{\alpha\mu\nu} &\ce \nabla_\alpha g_{\mu\nu} = \partial_\alpha g_{\mu\nu} - 2\Gamma\ud{\lambda}{\alpha(\mu} g_{\nu)\lambda}\,.
\end{align}
It is then postulated that curvature and torsion both vanish: 
\begin{align}
    R\ud{\alpha}{\beta\mu\nu} &\overset{!}{=} 0 &&\text{and} & T\ud{\alpha}{\mu\nu} & \overset{!}{=} 0\,.
\end{align}
These are two conditions on the connection $\Gamma\ud{\alpha}{\mu\nu}$, which imply that it has to be given by~\cite{BeltranJimenez:2017, BeltranJimenez:2018, BeltranJimenez:2019, Heisenberg:2023a}
\begin{align}
    \Gamma\ud{\alpha}{\mu\nu} &= \PD{x^\alpha}{\xi^\lambda} \partial_\mu\partial_\nu \xi^\lambda\,,
\end{align}
where $x^\alpha$ are coordinates and $\xi^\lambda$ are four arbitrary functions only subjected to the condition that $\PD{\xi^\lambda}{x^\alpha}$ is a non-degenerate matrix. Remarkably, the connection can be transformed away by a change of coordinates. In fact, when the coordinates are chosen such that $x^\mu = \xi^\mu$, one obtains $\Gamma\ud{\alpha}{\mu\nu}$, since $\partial_\mu\partial_\nu \xi^\lambda=0$. This is known as the \textbf{coincident gauge}.

Using quadratic contractions of the non-metricity tensor, it is possible to construct five distinct scalars. However, only four are needed to construct the so-called \textbf{non-metricity scalar} $\bbQ$:
\begin{align}
    \bbQ \ce -\frac14\, Q_{\alpha\beta\gamma}Q^{\alpha\beta\gamma} + \frac12\, Q_{\alpha\beta\gamma}Q^{\beta\alpha\gamma}	 + \frac14\, Q_\alpha Q^{\alpha} - \frac12 c_5\, Q_\alpha \bar{Q}^\alpha\,.
\end{align}
Here, $Q_\alpha$ and $\bar{Q}_\alpha$ refer to the two independent traces of the non-metricity tensor:
\begin{align}
    Q_\alpha &\ce Q_{\alpha\nu}{}^{\nu} &&\text{and} & \bar{Q}_\alpha &\ce Q^\nu{}_{\nu\alpha}\,.
\end{align}
The non-metricity scalar is related to the Ricci scalar $\R$ of the metric $g_{\mu\nu}$ through the identity
\begin{align}\label{eq:QRId}
    \bbQ = \R + \mathcal{D}_\mu(Q^\mu-\bar Q^\mu)\,,
\end{align}
where $\mathcal{D}_\mu$ is the covariant derivative with respect to the Christoffel symbols of the metric.  This identity is instrumental in constructing STEGR---the symmetric teleparallel equivalent of GR~\cite{BeltranJimenez:2017, BeltranJimenez:2018, BeltranJimenez:2019}. This is a reformulation of GR based on non-metricity, rather than curvature. Coincident GR, which we studied in Subsection~\ref{ssec:Examples} as an example, is a special case of STEGR, obtained by employing the coincident gauge. 

One can introduce modified non-metricity scalars, for which identity~\eqref{eq:QRId} no longer holds, to study modifications of GR~\cite{BeltranJimenez:2017, BeltranJimenez:2018, BeltranJimenez:2019, Dambrosio:2020b}. Alternatively, one can consider non-linear modifications where the Lagrangian of STEGR, given by $\sqrt{-g}\, \bbQ$, is replaced by $\sqrt{-g} f(\bbQ)$. The function $f$ is required to have a non-vanishing first derivative, denoted by $f'(\bbQ)\ce \frac{\dd f(\bbQ)}{\dd \bbQ}$, but is otherwise arbitrary (see \cite{Heisenberg:2023a} for a review). This gives rise to $f(\bbQ)$ gravity, whose vacuum field equations read
\begin{align}
    \M_{\mu\nu} \ce f'(\bbQ) G_{\mu\nu}  - \frac12 g_{\mu\nu} (f(\bbQ)-f'(\bbQ)\bbQ) + 2 f''(\bbQ) P^\alpha{}_{\mu\nu} \partial_\alpha\bbQ &= 0\notag\\
	\mathcal C_\alpha \ce \frac{1}{\sqrt{-g}}\nabla_\mu\nabla_\nu\left(\sqrt{-g}\,f'(\bbQ) P^{\mu\nu}{}_{\alpha}\right) &= 0\,.
\end{align}
We refer to $\M_{\mu\nu}$ and $\mathcal C_\alpha$ as metric field equations and connection field equations, respectively. The tensor $P\ud{\alpha}{\mu\nu}$ is called the \emph{non-metricity conjugate} and it is given by
\begin{align}
    P^\alpha{}_{\mu\nu} &\ce \frac12 \PD{\bbQ}{Q_\alpha{}^{\mu\nu}} = -\frac14 Q^\alpha{}_{\mu\nu} + \frac12 Q_{(\mu}{}^{\alpha}{}_{\nu)} +\frac14 g_{\mu\nu}Q^\alpha -\frac14 \left(g_{\mu\nu} \bar{Q}^\alpha + \delta^\alpha{}_{(\mu} Q_{\nu)}\right)\,,
\end{align}
while $G_{\mu\nu}$ is the Einstein tensor. Observe that $Q_{\alpha\mu\nu}$, and consequently also $P\ud{\alpha}{\mu\nu}$, only contains first order derivatives of the metric and no derivatives of the connection. Therefore, $\M_{\mu\nu}$ contains second order derivatives of the metric through $G_{\mu\nu}$ and $P\ud{\alpha}{\mu\nu}\partial_\alpha \bbQ$ and at most first order derivatives of the connection. The connection field equations, on the other hand, contain third order derivatives of the metric and second order derivatives of the connection. 

This is an important observation when it comes to applying the Cartan-Kuranishi algorithm. It is also important to realize that the metric and connection field equations are not independent. They are related by the contracted Bianchi identities~\cite{BeltranJimenez:2020, DAmbrosio:2023}:
\begin{align}\label{eq:ContractedBianchi}
    \mathcal{D}^\mu \M_{\mu\nu} + \mathcal{C}_\nu = 0\,.
\end{align}
To simplify the computations necessary for the execution of the Cartan-Kuranishi algorithm, we make use of the coincident gauge. This has the advantage that the connection drops out of all equations and any degrees of freedom it might have propagated are transferred to the metric. The form of the equations $\M_{\mu\nu}$ and $\mathcal{C}_\nu$ is not affected by this gauge choice. 

Next, we perform a simple count: The metric is composed of ten components $g_{\mu\nu}$, while the connection is parametrized by the four functions $\xi^\mu(x)$. This gives a total of $14$ field components which have to be solved for. To do so, we have $10+4$ equations at our disposal---ten from the metric field equations and another four from the connection field equations. However, since we fixed the coincident gauge, the four function $\xi^\mu(x)$ are already specified, leaving us with ten unknown metric components and $14$ equations. It seems that the system is overdetermined.

The resolution to this apparent tension lays in the contracted Bianchi identities~\eqref{eq:ContractedBianchi}: The ten second order metric field equations in coincident gauge depend only on the metric and nothing else. Thus, they are sufficient to determine the metric---at least in principle. Once the metric has been determined, we can assume $\M_{\mu\nu} = 0$. This implies that also $\mathcal{D}^\mu \M_{\mu\nu} = 0$. The contracted Bianchi identities then tell us that $\mathcal{C}_\nu$ is trivially equal to zero. In other words, the connection field equations do not provide any new information. They simply reduce to identities. 

This observation allows us to focus our attention exclusively on the metric field equations. Moreover, we make use of the fact that in coincident gauge we always have the freedom of choosing $\left.g_{\mu\nu}\right|_p = \eta_{\mu\nu}$, where $p$ is some arbitrary point in $\M$ and $\eta_{\mu\nu}$ denotes the Minkowski metric. 

To see why this is true, we first observe that any coordinate transformation of the form
\begin{align}
    x^\mu \mapsto \tilde{x}^\mu = M\ud{\mu}{\nu} x^\nu + x^\mu_0\,,
\end{align}
where $M\ud{\mu}{\nu}$ is a non-degenerate $4\times 4$ matrix with constant entries and $x^\mu_0$ is a constant displacement vector, \emph{preserves the coincident gauge}. Preserving the coincident gauge means that a change of coordinates leaves $\Gamma\ud{\alpha}{\mu\nu} = 0$. Indeed, we find that the above coordinate transformation has this property:
\begin{align}
    \Gamma\ud{\alpha}{\mu\nu}\quad \mapsto\quad \tilde{\Gamma}\ud{\alpha}{\mu\nu} = \PD{\tilde{x}^\alpha}{x^\beta} \PD{x^\rho}{\tilde{x}^\mu} \PD{x^\sigma}{\tilde{x}^\nu} \underbrace{\Gamma\ud{\beta}{\rho\sigma}}_{=0} + \PD{\tilde{x}^\alpha}{x^\lambda}\underbrace{\PD{^2 x^\lambda}{\tilde{x}^\mu \partial \tilde{x}^\nu }}_{=0} = 0\,.
\end{align}
The metric, on the other hand, transforms as
\begin{align}
    g_{\mu\nu}\quad \mapsto\quad \tilde{g}_{\mu\nu} = \PD{x^\alpha}{\tilde{x}^\mu} \PD{x^\beta}{\tilde{x}^\nu} g_{\alpha\beta} = \left(M^{-1}\right)\ud{\alpha}{\mu} \left(M^{-1}\right)\ud{\beta}{\nu} g_{\alpha\beta}\,,
\end{align}
where $\left(M^{-1}\right)\ud{\alpha}{\mu}$ are the components of the inverse matrix $M^{-1}$, which is guaranteed to exist due to the non-degeneracy condition on $M$.

We now demand that at $p$ the transformed metric $\tilde{g}$ equals the Minkowski metric:
\begin{align}\label{eq:ConditionOnM}
    \left(M^{-1}\right)\ud{\alpha}{\mu} \left(M^{-1}\right)\ud{\beta}{\nu} \left.g_{\alpha\beta}\right|_p = \eta_{\mu\nu}\,.
\end{align}
At any given point $p$, $\left.g_{\mu\nu}\right|_p$ is simply a symmetric matrix with constant entries. From linear algebra we know that any symmetric matrix can be diagonalized by a non-degenerate matrix constructed from the eigenvectors. We therefore conclude that we can always find a matrix $M$ which satisfies~\eqref{eq:ConditionOnM}. Our claim follows: At any point $p$, $M$ is determined by~\eqref{eq:ConditionOnM} which diagonalizes $g_{\mu\nu}$ into the Minkowski metric while preserving the coincident gauge.

Note, however, that $\partial_\alpha g_{\mu\nu}$ does not transform in a tensorial fashion and therefore 
\begin{align}
    \left.\partial_\alpha g_{\mu\nu}\right|_p \neq 0
\end{align}
in general, which in turn implies $Q_{\alpha\mu\nu} \neq 0$. From now on, we assume that we fixed the coincident gauge in this fashion. The advantage is that the field equations simplify without trivializing. In fact, if we had used Riemannian normal coordinates, as we did in the GR example~\ref{sssec:GR} or the CGR example~\ref{sssec:CGR}, then we would obtain $Q_{\alpha\mu\nu} = 0$ which implies $P\ud{\alpha}{\mu\nu} = 0$ and consequently $f''(\bbQ)P\ud{\alpha}{\mu\nu} \partial_\alpha \bbQ = 0$. In other words, we would ``turn off'' the second order terms in $\M_{\mu\nu}$ which stem from non-trivial functions $f(\bbQ)$, leaving us only with the GR part.

It should also be noted that, in general, it is not possible to fix the coincident gauge \emph{and} Riemannian normal coordinates \emph{simultaneously}. Thus, the issue of ``turning off'' the modifications $f''(\bbQ)P\ud{\alpha}{\mu\nu} \partial_\alpha \bbQ$ to the GR dynamics does not occur in practice.

With these considerations out of the way, we are now in a position for executing the Cartan-Kuranishi Algorithm~\ref{alg:CK}. The first step is to prolong the metric field equations, extract the symbols $\S_2$ and $\S_3$, and to check whether $\S_2$ is involutive. We find that this is the case. 

Next, we check for integrability conditions. There are four such integrability conditions, which prompts us to perform a prolongation with a subsequent projection. The resulting system $\R^{(1)}_2$ has still an involutive symbol and no additional integrability conditions emerge. Thus, $\R^{(1)}_2$ is involutive and the algorithm terminates.

Finally, we extract the $\beta$'s from $\S^{(1)}_2$, from which we also obtain the Cartan characters:
\begin{align}
    \beta^{(1)}_2 &= 3\,, & \beta^{(2)}_2 &= 4\,, & \beta^{(3)}_2 &= 6 \,, & \beta^{(4)}_2 &= 10 \notag\\
    \alpha^{(1)}_2 &= 37\,, & \alpha^{(2)}_2 &= 26\,, & \alpha^{(3)}_2 &= 14 \,, & \alpha^{(4)}_2 &= 0\,.
\end{align}
The associated Hilbert polynomial reads
\begin{align}
    H_2(r) = 77 + 47 r + 7 r^2\,.
\end{align}
To compute the Cartan characters, we used $m=10$, since we fixed the four functions parametrizing the connection. Also, all the $\gamma_\ell$'s vanish, since we work in a completely gauge-fixed setting. Indeed, we found that $\alpha^{(4)}_2 = 0$, which means that the compatibility condition is satisfied. From~\eqref{eq:FinalFormula} we then conclude that
\begin{align}
    \text{DOFs} = 10 - \frac12 6 = 7\,.
\end{align}
This is in agreement with the upper bound reported in~\cite{DAmbrosio:2023} and, more importantly, with the number of degrees of freedom computed in~\cite{Heisenberg:2023b} using perturbation theory.

\section{Conclusion}\label{sec:Conclusion}
At the beginning of this work, we introduced a simple, qualitative definition of degrees of freedom: the number of independent variables needed to fully specify the state of a dynamical system. In practice, determining this number is often challenging due to the presence of gauge freedom and constraints. Systematic methods have been developed to address this issue, not only enabling the counting of degrees of freedom but also uncovering hidden constraints and offering deeper insight into the underlying physics. Among the most prominent are the Dirac-Bergmann algorithm and the covariant phase space approach. However, both methods have limitations and can be difficult to apply in concrete settings. Common challenges arise in topological field theories and in models where primary constraints involve spatial derivatives of configuration space variables~\cite{SundermeyerBook, Seiler:1995, Seiler:2000, Blagojevic:2020,DAmbrosio:2023}.

In this work, we approached the problem from a different and less commonly explored perspective---one grounded in the formal theory of systems of partial differential equations. To motivate this approach, we revisited Einstein’s attempts to classify how strongly a given set of field equations constrains the fields of a theory (cf.~Subsection~\ref{ssec:EinsteinsMethod}). The notions of compatibility and strength introduced there are closely connected to the modern understanding of degrees of freedom in theories that respect classical determinism.

On its own, Einstein’s method is insufficient for analyzing broad classes of field theories or for determining the number of degrees of freedom they propagate. Its computational complexity grows rapidly, making it impractical in general settings. Nonetheless, the core idea underlies a powerful and rigorous framework for studying any field theory: analyzing the solution space of the field equations via formal power series expansions.

This is precisely where the formal theory of systems of partial differential equations becomes indispensable. Section~\ref{sec:JetBundleApproach} was devoted to developing the foundational concepts and results of this theory. In particular, we introduced jet bundles and the accompanying shift in perspective: instead of treating PDEs as equations to be solved by integration, we interpret them as (potentially non-linear) algebraic relations among field variables and their derivatives.

To realize this perspective, we introduced the notions of \emph{prolongations} and \emph{projections} in Subsection~\ref{ssec:ProlongationsProjections}. Prolongations generate higher-order equations by differentiating the original PDEs, while projections serve to reveal hidden constraint equations. Both concepts are central to constructing formal power series solutions. Projections are especially important for determining how the field equations constrain the higher-order Taylor coefficients in the expansion. At the same time, identifying constraints is essential for fixing the lower-order coefficients. To ensure consistency, all such constraints must be systematically uncovered and incorporated. Only then can the power series expansion be analyzed coherently, order by order.

In Subsection~\ref{ssec:Symbol}, we introduced the concept of the \emph{symbol} of a partial differential equation. The symbol captures information about the highest-order derivatives appearing in the equations and plays a central role in the initial value formulation of PDEs. It also serves as a diagnostic tool for identifying constraint and integrability conditions (see Theorem~\ref{thm:DimR1q} and Corollary~\ref{cor:CriterionForIntCond}). The proof of Theorem~\ref{thm:DimR1q} illustrates how such constraints and conditions can be explicitly constructed through a detailed analysis of the symbol.

Building on this, we were led to the concept of involutive equations. An equation is said to be involutive if it contains all its integrability conditions (i.e., it is formally integrable) and allows for the systematic prediction of all higher-order principal derivatives. Involutive equations are essential to our discussion of degrees of freedom. Their structural properties enable a consistent, order-by-order analysis and ultimately make it possible to count the degrees of freedom in a rigorous manner.

The Cartan-Kuranishi algorithm~\ref{alg:CK}, discussed in Subsection~\ref{ssec:CKAlgorithm}, asserts that any equation\footnote{Under mild technical assumptions, as outlined in Subsection~\ref{ssec:CKAlgorithm}.} can be converted into an equivalent involutive system through a finite sequence of prolongations and projections. We demonstrated this procedure explicitly using the Proca equation.

In Section~\ref{sec:FormalPSSExtnsionToGT}, we showed how this order-by-order construction of a formal power series solution, grounded in the Cartan-Kuranishi algorithm, is carried out. We also explained how gauge theories are accommodated within the formal theory of PDEs. In particular, we introduced the Hilbert polynomial and its gauge-corrected counterpart. These tools culminated in a general formula for counting degrees of freedom (cf.~formula~\eqref{eq:FinalFormula}).

To validate this framework, Subsection~\ref{ssec:Examples} presented 14 examples illustrating the full methodology. These examples also served to test formula~\eqref{eq:FinalFormula}, which successfully reproduced the expected results in every case.

The broad range of examples---including gauge theories, massive and massless models without gauge symmetry, systems with restored gauge symmetry via the St\"uckelberg trick, and topological theories---allowed us to identify recurring patterns and derive broader insights. These observations were summarized in Subsection~\ref{sssec:ObservationsAndInsights}.

Finally, in Subsection~\ref{ssec:fQGravity}, we applied the Cartan-Kuranishi algorithm and formula~\eqref{eq:FinalFormula} to $f(\mathbb{Q})$ gravity. By exploiting the coincident gauge and its properties, we obtained a much more concise degree-of-freedom count than the one presented in~\cite{Heisenberg:2023b}. The result derived here matches the upper bound established in~\cite{DAmbrosio:2023} and agrees with the value obtained in~\cite{Heisenberg:2023b}.

The method developed in this work is both powerful and robust. Nonetheless, it raises several open questions, reveals limitations to be addressed, and suggests new directions for exploration. Some of the unresolved issues discussed include: under what conditions does formula~\eqref{eq:FinalFormula} yield an integer number of degrees of freedom? Is the upper bound~\eqref{eq:UpperBoundSum} physically justified? Under which assumptions can one prove that the gauge-corrected Cartan characters are positive integers? Notably, all these questions center around the interplay between gauge symmetry and constraint equations.

The greatest practical challenge in applying the techniques discussed in this work is the extraction of the $\beta$ coefficients from the symbol. To obtain these crucial numbers, one must first bring the symbol into solved form. This involves applying the Gauss algorithm in conjunction with a careful analysis of the dynamical and constraint equations. For example, during Gaussian elimination, one must ensure that division by expressions which vanish on the constraint surface is avoided. As the order of differentiation or the number of fields increases, the symbol grows rapidly in complexity, rendering these computations time-consuming and technically demanding.

One potential strategy to mitigate this difficulty is to reduce the differentiation order of the equation $\R_q$. For instance, instead of formulating Maxwell's equations as a second-order system governing the vector field $A^\mu$, one can equivalently describe them as a first-order system: the dynamical equation $\partial_\mu F^{\mu\nu} = 0$, accompanied by the definition $F_{\mu\nu} = \partial_\mu A_\nu - \partial_\nu A_\mu$. This reformulation alleviates computational complexity because the size of the symbol grows linearly with the number of fields but nonlinearly with the order of differentiation.

Indeed, any equation $\R_q$ can, in principle, be converted into an equivalent first-order system through the introduction of auxiliary fields. Determining under what conditions this reduction simplifies the analysis, how it impacts the performance of the Cartan-Kuranishi algorithm, and how it influences formula~\eqref{eq:FinalFormula} is a subject we leave for future work.

A related avenue for future investigation is the treatment of Hamiltonian systems. For second-order systems $\R_2$, transitioning to a Hamiltonian formulation lowers the order of differentiation by one while doubling the number of field variables. It remains unclear how this affects the efficiency of the Cartan-Kuranishi algorithm. However, in this context, we expect formula~\eqref{eq:FinalFormula} to directly yield the number of phase space degrees of freedom, rather than the configuration space degrees of freedom.

Ultimately, the formal theory of PDEs provides not just a practical tool for degree-of-freedom counting, but a conceptual lens through which the foundations of field theory can be better understood and explored.

\section{Dedication}\label{sec:dedication}
Du w\"ahltest selbst die Stunde,\\
wann dein erstes Licht die Welt ber\"uhrte.\\
Feuerrot umh\"ullte dich,\\
wie ein Schleier aus Sehnsucht und Schmerz.\\

Schon da,\\
und doch so fern,\\
als wolltest du die Tore\\
gleich wieder schliessen.\\

Ich hielt deine Hand,\\
atmete den Atem,\\
den du nicht finden wolltest,\\
liess mein Herz für zwei schlagen,\\
liess meine Gebete taumeln –\\
wild, verzweifelt,\\
zwischen Himmel und Erde.\\

„Bleib“, flehte ich,\\
„Bleib, mein Engel aus Unschuld,\\
mein kaum gesprochener Traum.“\\

Die Stadt legte sich ein Kleid aus Farben an,\\
jubelte dein Ankommen,\\
als w\"are der Morgen selbst\\
für dich allein erwacht.\\
Die B\"aume fl\"usterten in Rosaschattierungen:\\
„Verweile.“\\
Und meine Tr\"anen,\\
sie wurden zu Str\"omen,\\
bereit dich zu tragen,\\
dorthin, wo Leben wohnt.\\

Meine Lieder z\"ahlten die Sekunden,\\
wurden zum Takt der Zeit.\\
Und dann –\\
ein Atem, der dir geh\"orte,\\
ein Blick, der die Dunkelheit zerbrach.\\

Und so erhob sich die Sonne neu,\\
aus deiner Brust,\\
aus meinen Tr\"anen,\\
aus einem Augenaufschlag,\\
der die Ewigkeit\\
für einen Moment\\
festhielt.

\newpage
\bibliographystyle{JHEP}
\bibliography{Bibliography}

\end{document}